\DeclareMathOperator{\diag}{diag}
\DeclareMathOperator*{\Res}{Res}
\DeclareMathOperator{\CHF}{CHF}
\DeclareMathOperator{\tr}{Tr}
\DeclareMathOperator{\Var}{Var}
\newcommand{\Bes}{\mathrm{Bes}}
\newcommand{\C}{\mathbb{C}}
\renewcommand{\Re}{\mathrm{Re}\,}
\renewcommand{\Im}{\mathrm{Im}\,}
\renewcommand{\vec}{\mathbf}
\newcommand{\ud}{\,\mathrm{d}}
\newcommand{\what}{\widehat}
\newcommand{\msf}{\mathsf}
\newcommand{\ii}{\mathrm{i}}
\newcommand{\Boh}{\mathcal{O}}
\newtheorem{theorem}{Theorem}[section]
\newtheorem{lemma}[theorem]{Lemma}
\newtheorem{proposition}[theorem]{Proposition}
\newtheorem{corollary}[theorem]{Corollary}
\newtheorem{rhp}[theorem]{RH problem}
\newcommand{\tac}{\mathrm{tac}}
\theoremstyle{definition}
\theoremstyle{remark}
\newtheorem{remark}[theorem]{Remark}
\numberwithin{equation}{section}
\begin{document}

\title{On the gap probability of the tacnode process}

\author{Luming Yao\footnotemark[1] ~~and ~Lun Zhang\footnotemark[1]}

\renewcommand{\thefootnote}{\fnsymbol{footnote}}
\footnotetext[1]{School of Mathematical Sciences, Fudan University, Shanghai 200433, China. E-mail: \{lumingyao, lunzhang\}\symbol{'100}fudan.edu.cn.}

\date{\today}

\maketitle

\begin{abstract}
The tacnode process is a universal determinantal point process arising from non-intersecting particle systems and tiling problems. It is the aim of this work to explore the integrable structure and large gap asymptotics for the gap probability of the thinned/unthinned tacnode process over $(-s,s)$. We establish an integral representation of the gap probability in terms of the Hamiltonian associated with a system of differential equations. With the aids of some remarkable differential identities for the Hamiltonian, we also compute large gap asymptotics, up to and including the constant term in the thinned case. As direct applications, we obtain expectation, variance and  a central limit theorem  for the associated counting function.
\end{abstract}

\tableofcontents

\section{Introduction}

Since the seminal work of Dyson on the Brownian motion model for eigenvalues of Gaussian unitary ensemble \cite{Dyson}, there has been significant interest in ensembles of non-intersecting paths. Among them, non-intersecting Brownian motion models and their variants have been most studied over the last few decades. Besides their intimate connections with a variety of physical, combinatorial and probabilistic models \cite{Fisher,For11,GOV,John05,John02,KT07,KT04,WFS}, the long-standing interest in non-intersecting Brownian motions is due in large part to the fact that the scaling limits lead to universal determinantal point processes related to random matrix theory and the KPZ universality class.

In a typical case, we consider $n$ 1D non-intersecting Brownian motion paths with several prescribed starting and ending points. For any fixed time, the positions of these paths form a determinantal point process. As the number of paths tends to infinity, these paths will, after proper scalings, fill out a region in the time-space plane with a deterministic limit shape. It comes out that the local statistics of this model is governed by sine process in the interior of the shape \cite{ABK05,BK07,DKV,Joh01,LSY19,LY17}, by the Airy process at the edge of the limit shape \cite{ABK05,BK07,DKV,Huang,LY17b}, and by the Pearcey process at the cusp \cite{AOV10,AV07,BK07,BH98a,BH98b,TW06}; see also \cite{ADV11,ADV10,CNV20,NV22} for relevant studies.

In this paper, we focus on a critical process arising from non-intersecting Brownian motions and random walk paths called \textit{tacnode process}.
This process appears in the case of critical separation, that is, two groups of Brownian motions are asymptotically  distributed in two ellipses in the time-space plane which are tangent to each other (critical separation) and create a tacnode point; see Figure \ref{fig:bm} for an illustration. The tacnode process then describes local correlations of the paths around this point. As a determinantal process, the tacnode process is characterized by a two-variable correlation function $K_{\tac}(x,y)$ called the tacnode kernel and the kernel also depends on some extra parameters relevant to the scalings. The tacnode process was studied by different groups of authors using different techniques. Adler, Ferrari and van Moerbeke \cite{AFV13} resolved the tacnode problem for non-intersecting random walks on $\mathbb{Z}$ (discrete space and continuous time). Johansson \cite{John13} gave an integral representation of the tacnode kernel in the continuous time-space setting. Ferrari and Vet\H{o} \cite{FV12} extended the results of Johansson to the non-symmetric case when the two touching groups of Brownian motions may have different sizes. In all these studies, the tacnode kernel is expressed using resolvents and Fredholm determinants of the Airy integral operator. An alternative expression of the tacnode kernel is given by Delvaux, Kuijlaars and the second author \cite{DKZ11} with the aid of a new $4\times 4$ matrix-valued Riemann-Hilbert (RH) problem. This RH problem has a remarkable connection with the Hastings-McLeod solution \cite{HM} of the homogeneous Painlev\'{e} II equation
\begin{align}\label{eq:PII}
q''(x)=2q(x)^3+xq(x).
\end{align}
A natural question is then to ask whether all these formulas for the tacnode kernel lead to the same process, although it is generally believed to be the case. The equivalence of the RH formulation of the tacnode kernel in \cite{DKZ11} and the Airy resolvent type formula of Johansson \cite{John13} was later established in \cite{Del}, while the two different Airy type formulas obtained in \cite{AFV13} and \cite{John13} was proved to be equivalent in \cite{AJV14} based on an indirect way.

\begin{figure}[t]
 \centering
  \includegraphics[scale=.55]{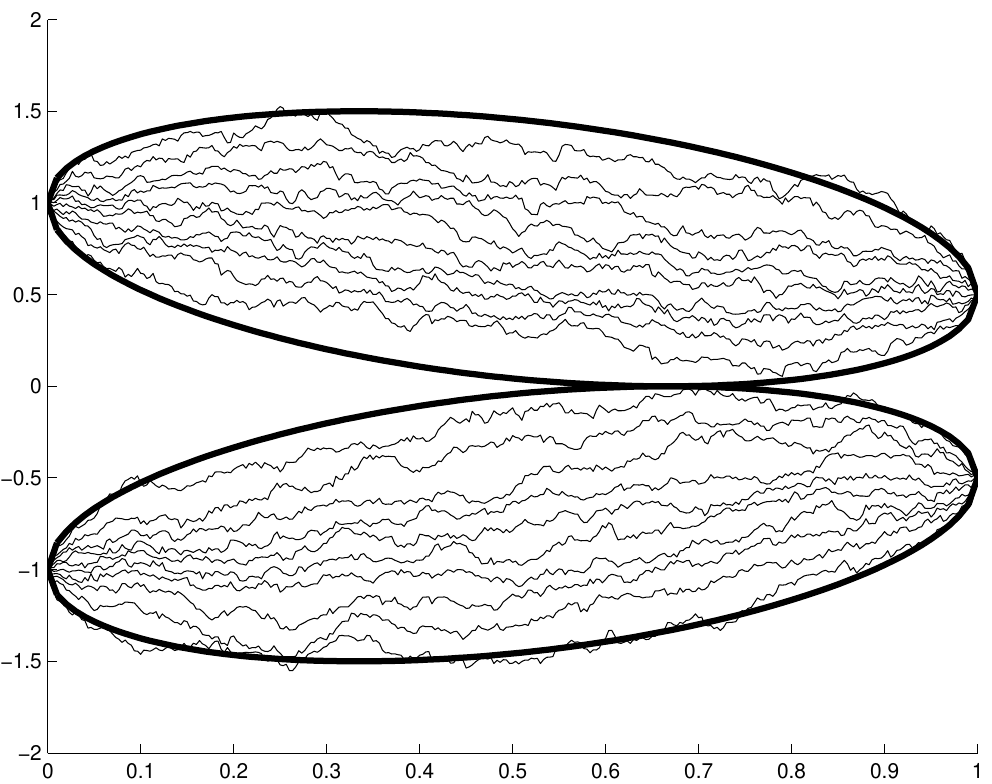}
  \caption{Simulation picture of 20 rescaled non-intersecting Brownian motion paths in the case of critical separation.}
  \label{fig:bm}
\end{figure}

Similar to the canonical Sine, Airy and Pearcey point processes, the tacnode process (or its variant) represents a universality class in a wide range of problems in probability and mathematical physics. Some concrete examples include non-intersecting Brownian motions on the unit circle \cite{BL19,LW16} and various random tilling models \cite{AJV22,AJV14,AV23}, among others.

We intend to investigate the gap probability of the tacnode process -- a basic object in the theory of point processes. More precisely, let $\mathcal{K}_{\tac}$ be the integral operator acting on $L^2\left(-s, s\right)$, $s\geq 0$, with the tacnode kernel $K_{\tac}$ and consider the associated Fredholm determinant $D(s;\gamma):=\det\left(I-\gamma  \mathcal K_{\tac}\right)$, where $0< \gamma \leq 1$ is a real parameter. The determinantal structure implies that $D(s; 1)$ can be interpreted as the probability of finding no particles (a.k.a. the gap probability) on the interval $(-s,s)$ for the tacnode process, while the deformed determinant $D(s; \gamma)$, $0<\gamma<1$, gives us gap probability for the thinned tacnode process. The thinned process is related to the original one by removing each particle independently with probability $1-\gamma$ (cf. \cite{IllianBook}), and according to a general result in \cite{CC21}, the information of $D(s; \gamma)$ is essential in establishing global rigidity result for the tacnode process.

Beyond the fundamental meaning of $D(s;\gamma)$ just described, our study is also highly motivated by rich structures of Fredholm determinants for canonical universality classes. For instance, the celebrated Tracy-Widom distribution established in \cite{TW94} shows that the Airy-kernel determinant admits an integral representation via the Hastings-McLeod solution of \eqref{eq:PII}, which also leads to a conjecture of the large gap asymptotic formula. This conjecture was rigorously proved in \cite{BRD08,DIK2008,DIKZ} using different approaches. The same integral formula holds for the deformed Airy-kernel determinant but in terms of the Ablowitz-Segur solution \cite{AS76,SA81} of \eqref{eq:PII} instead; cf. \cite{BCP,BB18}. Large gap asymptotics in this case, however, exhibits a significantly different behavior from the undeformed case, as conjectured in \cite{BCI} and lately proved in \cite{BB18,BIP}. Analogous results can be found in \cite{BW83,DIKZ,Dyson76,Eh06,JMMS,Krasovksy04,Widom94} for the sine-kernel determinant, and in \cite{BH98a,CM,DXZ22,DXZ21} for the Pearcey-kernel determinants.

Due to the highly transcendental form of the tacnode kernel, it remains an intriguing open problem to explore the integrable structure and large $s$ asymptotics of $D(s;\gamma)$; see however \cite{BC13,GZ,Gir14} for the transitions between the tacnode process and the Airy, Pearcey processes. It is the aim of this paper to resolve these problems and our results are stated in the next section.

\paragraph{Notations} Throughout this paper, the following notations are frequently used.
\begin{itemize}
  \item If $A$ is a matrix, then $(A)_{ij}$ stands for its $(i,j)$-th entry and $A^{\msf T}$ stands for its transpose. An unimportant entry of $A$ is denoted by $\ast$. We use  $I$ to denote an identity matrix, and the size might differ in different contexts. To emphasize a $k\times k$ identity matrix, we also use the notation $I_k$.
  \item  It is notationally convenient to denote by $E_{j,k}$ the $4\times 4$ elementary matrix
whose entries are all $0$, except for the $(j,k)$-entry, which is $1$, that is,
\begin{equation}\label{def:Eij}
E_{j,k}=\left( \delta_{l,j}\delta_{k,m} \right)_{l,m=1}^4,
\end{equation}
where $\delta_{j,k}$ is the Kronecker delta.

\item We denote by $D(z_0, r)$ the open disc centred at $z_0$ with radius $r > 0$, i.e.,
  \begin{equation}\label{def:dz0r}
   D(z_0, r) := \{ z\in \mathbb{C} \mid |z-z_0|<r \},
   \end{equation}
  and by $\partial D(z_0, r)$ its boundary. The orientation of $\partial D(z_0, r)$ is taken in a clockwise manner.
  \item As usual, the three Pauli matrices $\{\sigma_j\}_{j=1}^3$ are defined by
\begin{equation}\label{def:Pauli}
\sigma_1=\begin{pmatrix}
           0 & 1 \\
           1 & 0
        \end{pmatrix},
        \qquad
        \sigma_2=\begin{pmatrix}
        0 & -\ii \\
        \ii & 0
        \end{pmatrix},
        \qquad
        \sigma_3=
        \begin{pmatrix}
        1 & 0 \\
         0 & -1
         \end{pmatrix}.
\end{equation}

\item From time to time, we will encounter some functions that depend on the real parameters $r_1, r_2, s_1, s_2$ and $\tau$. If $x(\cdot; r_1, r_2, s_1, s_2, \tau)$ is such a function, we set
    \begin{align}
\widetilde x(\cdot; r_1, r_2, s_1, s_2, \tau) & = x(\cdot; r_2, r_1, s_2, s_1, \tau), \label{def:tildeX}
\\
\dot x(\cdot; r_1, r_2, s_1, s_2, \tau) &= x(\cdot; r_1, r_2, s_1, s_2, -\tau). \label{def:dotX}
\end{align}
Clearly, one has $x=\widetilde x$ if $r_1=r_2$ and $s_1=s_2$, and $x=\dot x$ if $\tau=0$.
\end{itemize}

\section{Main results}
\subsection{Definition of the tacnode kernel}
As mentioned previously, there exist several equivalent formulas of the tacnode kernel. We use the one that is defined through the following $4 \times 4$ tacnode RH problem \cite{DKZ11,DG13}.
\begin{rhp}\label{rhp:tac}
\hfill
\begin{itemize}
\item[\rm (a)]  $M(z)=M(z; r_1, r_2, s_1, s_2,\tau)$ is analytic for $ z \in \C \setminus \Gamma_M $, where the parameters $r_1,r_2,s_1,s_2,\tau$ are real with $r_i>0$, $i=1,2$, and
    \begin{equation}
    \Gamma_M:=\cup_{k=0}^5\Gamma_k \cup \{0\}
    \end{equation}
    with
    \begin{equation}\label{phi}
    \begin{aligned}
     &\Gamma_0= (0,+\infty),&& \Gamma_1=e^{\varphi \ii}(0,+\infty), &&\Gamma_2=e^{-\varphi \ii}(-\infty,0),\\
     &\Gamma_3= (-\infty,0),&& \Gamma_4=e^{\varphi \ii}(-\infty,0), &&\Gamma_5=e^{-\varphi \ii}(0,+\infty), \quad 0<\varphi<\frac{\pi}{3};
    \end{aligned}
    \end{equation}
    see Figure \ref{fig:tacnode} for an illustration of the contour $\Gamma_M$.
\item[\rm (b)]  For $z\in\Gamma_k$, $k=0,1,\ldots,5$, the limiting values
\[ M_+(z) = \lim_{\substack{\zeta \to z \\\zeta\textrm{ on $+$-side of }\Gamma_k}}M(\zeta), \qquad
   M_-(z) = \lim_{\substack{\zeta \to z \\\zeta\textrm{ on $-$-side of }\Gamma_k}}M(\zeta), \]
exist, where the $+$-side and $-$-side of $\Gamma_k$ are the sides
which lie on the left and right of $\Gamma_k$, respectively, when
traversing $\Gamma_k$ according to its orientation. These limiting
values satisfy the jump relation
\begin{equation}\label{jumps:M}
M_{+}(z) = M_{-}(z)J_k(z),\qquad k=0,\ldots,5,
\end{equation}
where the jump matrix $J_k(z)$ for each ray $\Gamma_k$ is shown in Figure \ref{fig:tacnode}.
\item[\rm (c)] As $z \to \infty$ with $z \in \C \setminus \Gamma_M$, we have
\begin{align}\label{eq:asy:M}
M(z)&=\left( I+\frac{M^{(1)}}{z}+ \Boh(z^{-2}) \right) \diag \left((-z)^{-\frac14},z^{-\frac14},(-z)^{\frac14},z^{\frac14} \right)
\nonumber  \\
& \quad \times A \diag \left(e^{-\theta_1(z)+\tau z}, e^{-\theta_2(z)- \tau z}, e^{\theta_1(z)+\tau z},e^{\theta_2(z)- \tau z} \right),
\end{align}
where the matrix $M^{(1)}$ is independent of $z$ but depends on the parameters,
\begin{align} \label{def:A}
A&:=\frac{1}{\sqrt 2} \begin{pmatrix} 1 & 0 & -\ii & 0 \\ 0 & 1& 0& \ii \\
-\ii & 0& 1& 0
\\
0 & \ii & 0 & 1 \end{pmatrix},
\\
\theta_1(z)&= \frac23 r_1(-z)^{\frac 32} +2 s_1 (-z)^{\frac 12}, \qquad z\in \mathbb{C}\setminus [0,\infty), \label{def:theta1}
\\
\theta_2(z)&=\frac23r_2z^{\frac 32} +2 s_2 z^{\frac 12}, \qquad z\in \mathbb{C}\setminus (-\infty,0]. \label{def:theta2}
\end{align}
\item[\rm (d)] $M(z)$ is bounded near $z=0$.
\end{itemize}
\end{rhp}

\begin{figure}[t]
\begin{center}
   \setlength{\unitlength}{1truemm}
   \begin{picture}(100,70)(-5,2)
       \put(40,40){\line(-1,-1){20}}
       \put(40,40){\line(-1,1){20}}
       \put(40,40){\line(-1,0){30}}
       \put(40,40){\line(1,0){30}}
       \put(40,40){\line(1,1){20}}
       \put(40,40){\line(1,-1){20}}

       \put(30,50){\thicklines\vector(1,-1){1}}
       \put(30,40){\thicklines\vector(1,0){1}}
       \put(30,30){\thicklines\vector(1,1){1}}
       \put(50,50){\thicklines\vector(1,1){1}}
       \put(50,40){\thicklines\vector(1,0){1}}
       \put(50,30){\thicklines\vector(1,-1){1}}

       \put(39,36.3){$0$}

       \put(27,22){$\Gamma_4$}
       \put(-10,18) {$\begin{pmatrix} 1&1&0&0\\0&1&0&0\\0&0&1&0\\0&1&-1&1 \end{pmatrix}$}

       \put(27,55){$\Gamma_2$}
       \put(-10,66){$\begin{pmatrix} 1&-1&0&0\\0&1&0&0\\0&0&1&0\\0&1&1&1 \end{pmatrix}$}

       \put(17,42){$\Gamma_3$}
       \put(-21,40){$\begin{pmatrix}1&0&0&0\\0&0&0&1\\0&0&1&0\\0&-1&0&0 \end{pmatrix}$}

       \put(48,22){$\Gamma_5$}
       \put(60,18){$\begin{pmatrix} 1&0&0&0\\1&1&0&0\\1&0&1&-1\\0&0&0&1 \end{pmatrix}$}

       \put(48,55){$\Gamma_1$}
       \put(60,66) {$\begin{pmatrix} 1&0&0&0 \\ -1 &1&0&0\\1&0&1&1\\0&0&0&1 \end{pmatrix}$}

       \put(58,42){$\Gamma_0$}
       \put(72,40){ $\begin{pmatrix}
0&0&1&0\\0&1&0&0\\-1&0&0&0\\0&0&0&1 \end{pmatrix}$ }

       \put(40,40){\thicklines\circle*{1}}

 \end{picture}

   \caption{The jump contours $\Gamma_k$ and the corresponding jump matrices $J_{k}$, $k=0,\ldots,5$, in the RH problem for $M$.}
   \label{fig:tacnode}
\end{center}
\end{figure}
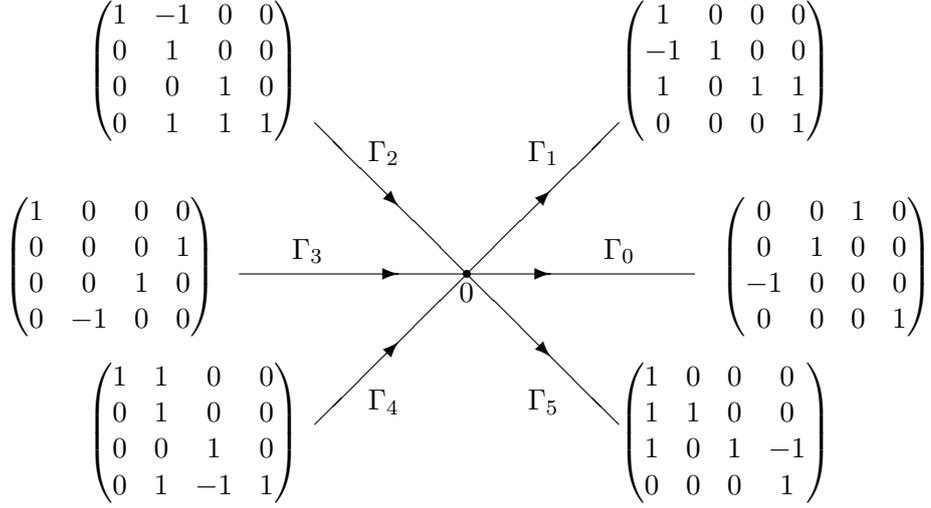

The jump contour of the original tacnode RH problem consists of ten rays emanating from the origin. Here, as in \cite{Kuij}, we reduce the number of rays to six by combining the two jumps in each of the open quadrants. The existence of a unique solution to the tacnode RH problem was proved for $\tau=0$ by Delvaux, Kuijlaars and the second author \cite{DKZ11}, for the symmetric case $r_1=r_2=1$, $s_1=s_2$ with general $\tau$ by Duits and Geudens \cite{DG13}, for the non-symmetric case by Delvaux \cite{Del}.

\begin{remark}
The RH problem for $X$ is related to the Hastings-McLeod solution of the Painlev\'{e} II equation \eqref{eq:PII} through the ``residue'' term $M^{(1)}$ in \eqref{eq:asy:M}. More precisely, the Hastings-McLeod solution and associated Hamiltonian appear in the top right $2 \times 2$ block of the matrix $M^{(1)}$; see \cite{Del,DKZ11,DG13}. Note that $M$ satisfies the following symmetric relations (see \cite{Del,DKZ11}):
\begin{align} \label{eq:symmM1}
  M (-z;r_1,r_2,s_1,s_2,\tau) &=
  \begin{pmatrix}
  J & 0
  \\
  0 & -J
  \end{pmatrix}\widetilde  M(z)
  \begin{pmatrix}
  J & 0
  \\
  0 & -J
  \end{pmatrix},
\\
  M (z;r_1,r_2,s_1,s_2,\tau)^{- \msf T} &=
  \begin{pmatrix}
  0 & -I_2
  \\
  I_2 & 0
  \end{pmatrix} \dot M(z)
  \begin{pmatrix}
  0 & I_2
  \\
  -I_2 & 0
  \end{pmatrix}, \label{eq:symmM2}
\end{align}
where
\begin{equation}\label{def-J}
J =  \begin{pmatrix}
  0 & 1
  \\
  1 & 0
  \end{pmatrix},
\end{equation}
$\widetilde M$ and $\dot M$ are defined through \eqref{def:tildeX} and \eqref{def:dotX}.  It is then readily seen that $M^{(1)}=M^{(1)}(r_1,r_2,s_1,s_2,\tau)$ satisfies the symmetric relations
\begin{align}
 M^{(1)} &=
  -\begin{pmatrix}
  J & 0
  \\
  0 & -J
  \end{pmatrix}\widetilde M^{(1)}
  \begin{pmatrix}
  J & 0
  \\
  0 & -J
  \end{pmatrix},\label{eq:symmM11}
\\
 ( M^{(1)})^{\msf T} &=
  -\begin{pmatrix}
  0 & -I_2
  \\
  I_2 & 0
  \end{pmatrix}\dot M^{(1)}
  \begin{pmatrix}
  0 & I_2
  \\
  -I_2 & 0
  \end{pmatrix}. \label{eq:symmM12}
\end{align}
As a consequence, we have
\begin{align}
M^{(1)}_{11}&= -\widetilde M^{(1)}_{22}=-\dot M^{(1)}_{33}=\dot{\widetilde{M}}^{(1)}_{44},\label{M11}\\
M^{(1)}_{13}&= \widetilde M^{(1)}_{24}=\dot M^{(1)}_{13},\label{M13}\\
M^{(1)}_{23}&= \widetilde M^{(1)}_{14}=\dot M^{(1)}_{14}.\label{M14}
\end{align}
\end{remark}

Let $\widehat M $ be the analytic continuation of the restriction of $M$ in the sector bounded by the rays $\Gamma_1$ and $\Gamma_2$ to the whole complex plane. The tacnode kernel $K_{\tac}(x,y):=K_{\tac}(x,y;r_1,r_2,s_1,s_2,\tau)$ is then given in terms of $\widehat M$ by \cite[Definition 2.6]{DKZ11} \footnote{There is a misprint in \cite[Definition 2.6]{DKZ11}. It should be `$\widehat M(v)^{-1}\widehat M(u)$' instead of `$\widehat M(u)^{-1}\widehat M(v)$'.}

\begin{equation} \label{def:tacnode kernel}
    K_{\tac}(x,y)
     = \frac{1}{2\pi \ii (x-y)} \begin{pmatrix} 0 & 0 & 1 & 1 \end{pmatrix}
    \widehat M(y)^{-1}
    \widehat M(x) \begin{pmatrix} 1 \\ 1 \\ 0 \\ 0
    \end{pmatrix}.
\end{equation}

Define
\begin{equation}\label{def:Fnotation}
F(s;\gamma)=F(s;\gamma, r_1, r_2,s_1,s_2, \tau):=\ln (D(s;\gamma))=\ln \det\left(I-\gamma  \mathcal K_{\tac}\right), \quad 0< \gamma \leq 1.
\end{equation}
Our first result is an integral representation of $F$ as stated in what follows.

\subsection{An integral representation of $F$}
The integral representation of $F$ involves the Hamiltonian of a system of coupled differential equations.
These differential equations are given as follows:
\begin{equation}\label{def:pq's}
\left\{
\begin{aligned}
p_1'(s)&=-\ii r_1 s p_3(s) -p_1(s)p_5(s)-\ii r_1 p_2(s)\widetilde q_6(s)-\ii r_1 p_3(s)q_5(s)+p_4(s)p_6(s)-\tau p_1(s)\\
&\quad+\ii s_1 p_3(s)-\frac{\widetilde p_2(s)}{s} (p_1(s) \widetilde q_2(s)+p_2(s) \widetilde q_1(s) - p_3(s) \widetilde q_4(s)-p_4(s) \widetilde q_3(s)),\\
p_2'(s)&=\ii r_2 s p_4(s) +\ii r_2 p_1(s) q_6(s)+p_2(s)\widetilde p_5(s) +p_3(s)\widetilde p_6(s)-\ii r_2 p_4(s)\widetilde q_5(s) +\tau p_2(s)\\
&\quad+\ii s_2 p_4(s)-\frac{\widetilde p_1(s)}{s} (p_1(s) \widetilde q_2(s)+p_2(s) \widetilde q_1(s) - p_3(s) \widetilde q_4(s)-p_4(s) \widetilde q_3(s)),\\
p_3'(s)&=-\ii r_1 p_1(s) +p_3(s)p_5(s)-\ii r_2p_4(s)\widetilde q_6(s)-\tau p_3(s) \\
&\quad+\frac{\widetilde p_4(s)}{s} (p_1(s) \widetilde q_2(s)+p_2(s) \widetilde q_1(s) - p_3(s) \widetilde q_4(s)-p_4(s) \widetilde q_3(s)),\\
p_4'(s)&= -\ii r_2 p_2(s) + \ii r_1 p_3(s)q_6(s) -p_4(s)\widetilde p_6(s)+\tau p_4(s)\\
&\quad+\frac{\widetilde p_3 (s)}{s} (p_1(s) \widetilde q_2(s)+p_2(s) \widetilde q_1(s) - p_3(s) \widetilde q_4(s)-p_4(s) \widetilde q_3(s)),\\
p_5'(s)&=-\ii r_1(p_3(s)q_1(s)+\widetilde p_4(s)\widetilde q_2(s)),\\
p_6'(s)&=\ii r_1 (p_3(s)q_4(s)-\widetilde p_4(s)\widetilde q_3(s))+\ii r_2 (p_1(s)q_2(s)-\widetilde p_2(s)\widetilde q_1(s)),\\
q_1'(s)&=p_5(s)q_1(s)-\ii r_2 q_2(s)q_6(s)+\ii r_1 q_3(s)+\tau q_1(s)\\
&\quad+\frac{\widetilde q_2(s)}{s}(\widetilde p_2(s)q_1(s)+\widetilde p_1(s)q_2(s) -\widetilde p_4(s) q_3(s) -\widetilde p_3(s)q_4(s)),\\
q_2'(s)&=\ii r_1 q_1(s)\widetilde{q}_6(s)-\widetilde{p}_5(s)q_2(s)+\ii r_2q_4(s)-\tau q_2(s)\\
&\quad+\frac{\widetilde q_1(s)}{s}(\widetilde p_2(s)q_1(s)+\widetilde p_1(s)q_2(s) -\widetilde p_4(s) q_3(s) -\widetilde p_3(s)q_4(s)),\\
q_3'(s)&= \ii r_1 sq_1(s)+\ii r_1 q_1(s)q_5(s) -\widetilde p_6(s) q_2(s)-p_5(s)q_3(s)-\ii r_1q_4(s)q_6(s)-\ii s_1 q_1(s)\\
&\quad+\tau q_3(s)-\frac{\widetilde q_4(s)}{s}(\widetilde p_2(s)q_1(s)+\widetilde p_1(s)q_2(s) -\widetilde p_4(s) q_3(s) -\widetilde p_3(s)q_4(s)),\\
q_4'(s)&=-\ii r_2sq_2(s)-p_6(s)q_1(s)+\ii r_2q_2(s)\widetilde q_5(s) + \ii r_2 q_3(s)\widetilde q_6(s)+\widetilde p_5(s)q_4(s)-\ii s_2q_2(s)\\
&\quad-\tau q_4(s)-\frac{\widetilde q_3(s)}{s}(\widetilde p_2(s)q_1(s)+\widetilde p_1(s)q_2(s) -\widetilde p_4(s) q_3(s) -\widetilde p_3(s)q_4(s)),\\
q_5'(s)&=p_1(s)q_1(s)-p_3(s)q_3(s)-\widetilde p_2(s) \widetilde q_2(s)+\widetilde p_4(s) \widetilde q_4(s),\\
q_6'(s)&=-p_4(s)q_1(s)-\widetilde p_3(s) \widetilde q_2(s),
\end{aligned}\right.
\end{equation}
where
$$
p_k(s)=p_k(s;\gamma, r_1, r_2,s_1,s_2, \tau), \qquad q_k(s)=q_k(s;\gamma, r_1, r_2,s_1,s_2, \tau), \qquad k=1,\ldots,6,
$$
are 12 unknown functions, $\widetilde p_k(s)$ and $\widetilde q_k(s)$ are related to $p_k(s)$ and  $q_k(s)$ by swapping the parameters $r_1\leftrightarrow r_2$ and $s_1\leftrightarrow s_2$; see the definition \eqref{def:tildeX}. By introducing the matrix-valued functions
\begin{equation}\label{def:A0}
A_0(s) = \begin{pmatrix}
p_5(s)+\tau & -\ii r_2 q_6(s) & \ii r_1 & 0\\
\ii r_1 \widetilde q_6(s) & -\widetilde p_5(s)-\tau & 0 & \ii r_2\\
\ii r_1 q_5(s)-\ii s_1 & -\widetilde p_6(s) & -p_5(s)+\tau & -\ii r_1 q_6(s)\\
-p_6(s) & \ii r_2 \widetilde q_5(s)-\ii s_2 & \ii r_2 \widetilde q_6(s) & \widetilde p_5(s)-\tau
\end{pmatrix},
\end{equation}
\begin{equation}\label{def:A1}
A_1(s) = \begin{pmatrix}
q_1(s)\\q_2(s)\\q_3(s)\\q_4(s)
\end{pmatrix}\begin{pmatrix}
p_1(s) & p_2(s)&p_3(s)&p_4(s)
\end{pmatrix},
\end{equation}
and
\begin{equation}\label{def:A2}
A_2(s) = \begin{pmatrix}
\widetilde q_2(s)\\\widetilde q_1(s)\\-\widetilde q_4(s)\\-\widetilde q_3(s)
\end{pmatrix}\begin{pmatrix}
\widetilde p_2(s) & \widetilde p_1(s)&-\widetilde p_4(s)&-\widetilde p_3(s)
\end{pmatrix},
\end{equation}
one can check
\begin{align}\label{def:H}
& H(s)=H(s;\gamma, r_1, r_2,s_1,s_2, \tau)
\nonumber
\\
&:= \begin{pmatrix}
p_1(s) & p_2(s) & p_3(s) & p_4(s)
\end{pmatrix}\left(\ii (r_1E_{3,1}-r_2 E_{4,2})s + A_0(s)+\frac{A_2(s)}{2s}\right)\begin{pmatrix}
q_1(s)\\q_2(s)\\q_3(s)\\q_4(s)
\end{pmatrix}\nonumber\\
&\quad+\begin{pmatrix}
\widetilde p_2(s) & \widetilde p_1(s) & -\widetilde p_4(s) & -\widetilde p_3(s)
\end{pmatrix} \left(\ii (r_1E_{3,1}-r_2 E_{4,2})s - A_0(s)+\frac{A_1(s)}{2s}\right) \begin{pmatrix}
\widetilde q_2(s)\\ \widetilde q_1(s)\\ -\widetilde q_4(s)\\ -\widetilde q_3(s)
\end{pmatrix},
\end{align}
with $E_{i,j}$, $i,j =1,\ldots,4$, being the matrices defined in \eqref{def:Eij},
is the Hamiltonian for the above system of differential equations, under
the extra condition
\begin{equation}\label{eq:sumpq}
\sum_{k=1}^4 p_k(s)q_k(s)=0.
\end{equation}
That is, we have
\begin{equation}\label{pq}
q_k'(s)=\frac{\partial H}{\partial p_k}, \qquad p_k'(s)=-\frac{\partial H}{\partial q_k}, \qquad k=1,...,6.
\end{equation}
\begin{theorem}\label{th:F-H}
For $\gamma \in [0, 1]$, with the function $F(s;\gamma)$ defined in \eqref{def:Fnotation}, we have,
\begin{align}\label{eq:F-H2}
F(s;\gamma) = \int_0^s H(t; \gamma) \ud t, \qquad s \in (0, \infty),
\end{align}
where $H$ is the Hamiltonian \eqref{def:H} associated with a family of special solutions to the system of differential equations \eqref{def:pq's}. Moreover, $H(s)$ satisfies the following asymptotic behaviors: as $s \to 0^+$,
\begin{equation}\label{eq:H-0}
H(s)=\Boh(1),
\end{equation}
and as $s \to +\infty$,
\begin{align}\label{asy:H}
H(s)=\begin{cases}
-\frac{r_1^2 + r_2^2}{4} s^2 + (r_1s_1 + r_2 s_2) s - s_1^2-s_2^2 - \frac{1}{4s} +\Boh(s^{-2}), &\qquad \gamma=1,\\
2 \ii \beta(r_1 + r_2) s^{\frac 12} - 2 \ii \beta (s_1+s_2) s^{-\frac 12} \\
~- (3 \beta^2 + \frac{\ii \beta}{2} \cos{(2 \vartheta(s))}+\frac{\ii \beta}{2} \cos{(2 \widetilde{\vartheta}(s))})s^{-1}+ \Boh(s^{-\frac 32}),& \qquad 0 \le \gamma <1,
\end{cases}
\end{align}
where
\begin{equation}\label{def:beta}
\beta := \frac{1}{2 \pi \ii} \ln (1-\gamma), \qquad \gamma \in [0, 1),
\end{equation}
and
\begin{align}\label{def:theta}
\vartheta (s) &= \vartheta (s;r_1,s_1)= \frac{2 r_1}{3} s^{\frac 32}-2s_1 s + \frac{3 \ii \beta}{2} \ln s + \ii \beta \ln (8(r_1 - \frac{s_1}{s})) + \arg \Gamma (1+\beta),
\\
\widetilde{\vartheta}(s) &= \vartheta (s;r_2,s_2), \label{def:tildetheta}
\end{align}
with $\Gamma$ being the Euler's gamma function.
\end{theorem}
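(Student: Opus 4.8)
The natural strategy is the standard Riemann--Hilbert/Deift--Its--Zhou route used for the Airy, sine, and Pearcey gap probabilities. First I would encode the thinned tacnode determinant $D(s;\gamma)$ as the Fredholm determinant of an operator with an \emph{integrable} kernel (in the sense of Its--Izergin--Korepin--Slavnov), then build from it a $4\times 4$ RH problem $X=X(z;s,\gamma,\dots)$ whose jumps live on a suitable contour and whose large-$z$ expansion produces a matrix residue $X^{(1)}(s)$. The tacnode kernel \eqref{def:tacnode kernel} already carries the natural $4\times 4$ structure through $\widehat M$, so the associated RH problem for $X$ should be obtained by conjugating $\widehat M$ by the ``model'' behaviour on $(-s,s)$ together with the thinning factor $\gamma$ entering the jump on the real segment. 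The Its et al.\ theory then yields the differential-in-$s$ identity
\begin{equation*}
\frac{\partial}{\partial s}\ln D(s;\gamma)=H(s),
\end{equation*}
where $H(s)$ is expressed through the entries of $X^{(1)}(s)$ and the boundary values of $X$ at $z=\pm s$. Unwinding this expression in terms of the scalar unknowns should reproduce exactly the Hamiltonian \eqref{def:H}; defining $p_k,q_k$ as appropriate linear combinations of entries of $X$ evaluated at $\pm s$ (with the $r_1\leftrightarrow r_2$, $s_1\leftrightarrow s_2$ swap reflecting the symmetry \eqref{eq:symmM1}), the Lax-pair compatibility (the $z$-equation coming from the rational-in-$z$ structure of $X_z X^{-1}$, and the $s$-equation) forces the system \eqref{def:pq's} and the Hamiltonian relations \eqref{pq}, with the constraint \eqref{eq:sumpq} being the conserved quantity coming from $\det X\equiv 1$ and the specific rank-one form of the $s$-jump. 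Integrating the differential identity from $0$ to $s$ and using $D(0;\gamma)=1$ (equivalently $F(0;\gamma)=0$) gives \eqref{eq:F-H2}, provided we justify integrability of $H$ near $t=0$.

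For the small-$s$ behaviour \eqref{eq:H-0}, I would argue that as $s\to0^+$ the operator $\gamma\mathcal K_{\tac}$ on $L^2(-s,s)$ is trace-class with trace-norm $\to0$, so $D(s;\gamma)\to1$ smoothly; concretely the RH problem for $X$ limits to a problem with no jump on the (shrinking) segment, hence $X^{(1)}(s)$ stays bounded and the potentially singular $1/s$ and $1/(2s)$ terms in \eqref{def:H} must cancel against each other — this is precisely what the constraint \eqref{eq:sumpq} and the structure of $A_1,A_2$ are engineered to ensure. This boundedness also legitimises the integral in \eqref{eq:F-H2}.

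For the large-$s$ asymptotics \eqref{asy:H} I would perform the Deift--Zhou nonlinear steepest-descent analysis of the RH problem for $X$ as $s\to+\infty$. In the unthinned case $\gamma=1$ the relevant local parametrices are built from the Hastings--McLeod Painlev\'e II connection already present in $M$ (cf.\ the Remark and references \cite{Del,DKZ11,DG13}); the quadratic-in-$s$ terms $-\tfrac{r_1^2+r_2^2}{4}s^2+(r_1s_1+r_2s_2)s$ come from the exponents $\theta_1,\theta_2$ in \eqref{def:theta1}--\eqref{def:theta2}, the $-s_1^2-s_2^2$ constant from the subleading exponential contributions, and the $-1/(4s)$ from the first correction term of the outer parametrix. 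In the thinned case $0\le\gamma<1$ the jump on $(-s,s)$ has a nontrivial constant ($\gamma$-dependent) factor, so the local analysis near the endpoints $\pm s$ is governed by confluent-hypergeometric (Bessel-type) parametrices, producing the oscillatory $\beta$-dependent terms through $\vartheta(s),\widetilde\vartheta(s)$ in \eqref{def:theta}--\eqref{def:tildetheta}; the $2\pi\ii\beta=\ln(1-\gamma)$ normalisation and the $\arg\Gamma(1+\beta)$ phase are the usual signatures of such parametrices. The leading $s^{1/2}$ and $s^{-1/2}$ terms then come from matching the CHF parametrices against the outer solution along $\partial D(\pm s,\delta)$.

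\textbf{Main obstacle.} The hardest part is constructing and controlling the RH problem for $X$ and carrying out its steepest-descent analysis in the \emph{thinned} regime: the $4\times 4$ matrix size, the six-ray contour inherited from $M$ together with the extra jump on $(-s,s)$, and the need to track the boundary values of $X$ at $\pm s$ with enough precision to extract the constant and $s^{-1}$ oscillatory terms in \eqref{asy:H} all make the endpoint local-parametrix construction delicate; in particular, verifying that the various $1/s$ contributions assemble into exactly the stated $\cos(2\vartheta)$, $\cos(2\widetilde\vartheta)$ combination — and, at the level of the differential identity, that the derived system closes to \eqref{def:pq's} with Hamiltonian \eqref{def:H} under \eqref{eq:sumpq} — will require careful bookkeeping that the symmetry relations \eqref{eq:symmM1}--\eqref{eq:symmM2} should be exploited to streamline.
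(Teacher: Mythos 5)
Your overall route coincides with the paper's: encode $\gamma K_{\tac}$ as an IIKS-integrable kernel, pass to the $4\times 4$ RH problem for $X$ with the $\gamma$-dependent jump on $(-s,s)$, identify $\tfrac{\ud}{\ud s}\ln D(s;\gamma)$ with the Jimbo--Miwa--Ueno Hamiltonian built from the boundary data of $X$ at $z=\pm s$, derive the Lax pair whose compatibility gives \eqref{def:pq's} with the constraint \eqref{eq:sumpq} (which indeed comes from the rank-one nilpotent structure of the residue matrices, i.e.\ $\tr A_1=0$), integrate from $0$ to $s$, and obtain \eqref{eq:H-0} and \eqref{asy:H} by Deift--Zhou analysis; your small-$s$ mechanism (symmetry relations plus \eqref{eq:sumpq} forcing the $1/s$ bilinear term in \eqref{def:H2} to be $\Boh(s)$) is essentially the paper's argument.

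There is, however, a concrete misstep in your plan for the hardest computational part, the $\gamma=1$ expansion in \eqref{asy:H}. You propose to build the local parametrices out of the Hastings--McLeod/Painlev\'e II structure ``already present in $M$''. For $\gamma=1$ the jump on $(-s,s)$ is trivial, so after rescaling no local parametrix at the origin (hence no tacnode/PII input) is needed at all; the only nontrivial local analysis is at the rescaled endpoints $z=\pm 1$, where the $g$-functions \eqref{def:g1}--\eqref{def:g2} vanish like a square root and the correct local model is the \emph{Bessel} parametrix, not a PII one — a PII-based construction there would not match the outer parametrix. Relatedly, the $-\tfrac{1}{4s}$ term does not come from ``the first correction term of the outer parametrix'': it is produced by the Bessel local parametrices (the $\tfrac{\ii}{8}$ and $\tfrac{\pi\ii}{4}$ contributions in the limits of $(S^{-1}S')_{31}$, $(S^{-1}S')_{42}$ at $\pm1$), while the $s^2$, $s$ and constant terms arise from $g_1(-1)$ and $g_2(1)$. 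The tacnode parametrix $M$ (and hence PII) enters only in the thinned case $0\le\gamma<1$, as the local parametrix at $z=0$ after opening lenses on $(-1,0)\cup(0,1)$ and using a $\beta$-dependent global parametrix, and again in the small-$s$ analysis; your sketch omits this origin parametrix and the lens opening, although for \eqref{asy:H} itself they only affect the error term, with the $\cos(2\vartheta)$, $\cos(2\widetilde\vartheta)$ oscillations coming, as you say, from the confluent hypergeometric parametrices at $\pm1$.
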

The local behavior of $H$ near the origin in \eqref{eq:H-0} ensures the integral \eqref{eq:F-H2} is well-defined. For $0<\gamma<1$, we also derive asymptotics of the family of special solutions in Proposition \ref{th:pq} below, which plays an important role in asymptotic studies of $F$.

\subsection{Large gap asymptotics and applications}
A direct application of Theorem \ref{th:F-H} is that we can obtain the first few terms in the asymptotic expansion of $F(s;\gamma)$ as $s \to +\infty$ except for the constant term by inserting \eqref{asy:H} into \eqref{eq:F-H2}. For $0<\gamma<1$, we are also able to determine the notoriously difficult constant term.

\begin{theorem}\label{th:1}
With the function $F(s;\gamma)$ defined in \eqref{def:Fnotation}, we have, as $s \to +\infty$,
\begin{equation}\label{ds-F}
F(s;\gamma) = \begin{cases}
-\frac{r_1^2 + r_2^2}{12} s^3 + \frac{r_1s_1 + r_2 s_2}{2} s^2 - (s_1^2+s_2^2) s - \frac{\ln{s}}{4} + C + \Boh(s^{-1}), & \gamma=1,\\
\frac{4 \beta \ii (r_1 + r_2)}{3} s^{\frac 32} - 4 \beta \ii (s_1 + s_2) s^{\frac 12} - 3 \beta^2 \ln{s} + 2 \ln{(G(1+\beta)G(1-\beta))}\\
~ -\beta^2 \ln{(64 r_1 r_2)}+ \Boh(s^{-\frac 12}), & 0 \le \gamma <1,
 \end{cases}
\end{equation}
uniformly for $r_i>0, i=1,2$ and $s_i \in \mathbb{R}, i=1,2$, where $C$ is an undetermined constant independent of $s$, $\beta$ is given in \eqref{def:beta} and $G(z)$ is the Barnes G-function.
\end{theorem}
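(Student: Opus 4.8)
The plan is to obtain \eqref{ds-F} by integrating the Hamiltonian asymptotics of Theorem \ref{th:F-H}. Starting from the identity $F(s;\gamma)=\int_0^s H(t;\gamma)\ud t$, I would split the integral as $\int_0^{s_0}+\int_{s_0}^{s}$ for some fixed $s_0>0$; the first piece is a finite constant by \eqref{eq:H-0}, and for the second piece I insert the large-$t$ expansion \eqref{asy:H}. In the case $\gamma=1$ this is immediate: integrating $-\frac{r_1^2+r_2^2}{4}t^2+(r_1s_1+r_2s_2)t-s_1^2-s_2^2-\frac{1}{4t}+\Boh(t^{-2})$ from $s_0$ to $s$ yields precisely the polynomial terms $-\frac{r_1^2+r_2^2}{12}s^3+\frac{r_1s_1+r_2s_2}{2}s^2-(s_1^2+s_2^2)s$, the logarithm $-\frac14\ln s$, an $\Boh(s^{-1})$ remainder, and a leftover constant that is absorbed into the undetermined $C$. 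Since the constant term is not claimed in this case, nothing more is needed for $\gamma=1$.

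For $0\le\gamma<1$ the same mechanism produces the $s^{3/2}$, $s^{1/2}$ and $\ln s$ terms: integrating $2\ii\beta(r_1+r_2)t^{1/2}$ gives $\frac{4\beta\ii(r_1+r_2)}{3}s^{3/2}$, integrating $-2\ii\beta(s_1+s_2)t^{-1/2}$ gives $-4\beta\ii(s_1+s_2)s^{1/2}$, and the $-3\beta^2 t^{-1}$ contribution gives $-3\beta^2\ln s$. The oscillatory pieces $\frac{\ii\beta}{2}\cos(2\vartheta(t))t^{-1}$ and its tilde analogue, together with the $\Boh(t^{-3/2})$ error, integrate to $\Boh(s^{-1/2})$: one integrates by parts using $\vartheta'(t)\sim r_1 t^{1/2}$, so $\int^s \cos(2\vartheta(t))t^{-1}\ud t$ gains a factor $t^{-1/2}$ and converges, contributing only to the error term and to the constant. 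Collecting everything, $F(s;\gamma)$ equals the three explicit $s$-dependent terms plus a constant $C(\gamma,r_1,r_2,s_1,s_2)$ plus $\Boh(s^{-1/2})$, uniformly in the parameters as claimed.

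The genuinely hard part is identifying that constant as $2\ln(G(1+\beta)G(1-\beta))-\beta^2\ln(64r_1r_2)$. The integration argument above only shows $C$ exists; it does not evaluate it, because the constant picks up an unknown contribution from the region $t=\Boh(1)$ where \eqref{asy:H} fails. The standard route, which I would follow, is a differential-identity/deformation argument: one uses a differential identity for $F$ (or for the Hamiltonian) with respect to one of the parameters — most naturally $\gamma$ (equivalently $\beta$), via $\partial_\gamma F$, or with respect to a scaling parameter — to express the $s\to\infty$ constant as an integral of a quantity whose asymptotics are controlled by Proposition \ref{th:pq}, and then match at a solvable reference point. Concretely, I expect to differentiate in $\beta$, integrate the resulting expression (which should telescope into a logarithmic derivative of a Barnes $G$-function via the classical identity $\frac{d}{dz}\ln G(1+z)=\frac12\ln(2\pi)+\frac12-z+\int\ldots$, or more directly via $\ln G(1+z)+\ln G(1-z)$ having a known expansion), and fix the $\beta$-independent constant of integration at $\beta=0$, where $\gamma=0$, $F\equiv 0$, and $G(1)=1$. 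The $\ln(64r_1r_2)$ term should emerge from tracking the $r_1,r_2$-dependence through the phase $\vartheta$ in \eqref{def:theta}, specifically the $\ii\beta\ln(8r_1)$ contribution, squared up through the oscillatory-integral analysis. This last step — rigorously extracting the constant rather than just its existence — is where the real work lies, and it is presumably carried out in the body of the paper using the special-solution asymptotics of Proposition \ref{th:pq} together with a carefully chosen differential identity.
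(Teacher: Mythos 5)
Your treatment of the $\gamma=1$ case and of the $s$-dependent terms for $0\le\gamma<1$ is exactly what the paper does: integrate \eqref{asy:H} via \eqref{eq:F-H2}, with the region $t=\Boh(1)$ handled by \eqref{eq:H-0}. However, for $0\le\gamma<1$ the entire content of Theorem \ref{th:1} beyond that is the evaluation of the constant, and here your proposal stops at a plan (``I expect to differentiate in $\beta$ \dots this last step is where the real work lies''). That is a genuine gap, and moreover the plan as stated --- differentiate $F$ (equivalently $\int_0^s H$) in $\beta$ and use Proposition \ref{th:pq} --- does not work directly: $\partial_\beta H$ along the flow is \emph{not} a total $s$-derivative, so $\partial_\beta\int_0^s H\,\ud t$ is not reducible to boundary data at $t=0,s$. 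What makes the computation close in the paper is a specific combination of three ingredients you do not supply: (i) the observation \eqref{eq:F-tau} that the constant is independent of $\tau$, permitting the reduction to $\tau=0$; (ii) the $\tau=0$ differential identity \eqref{eq:differentialH1}, which rewrites $\int_0^s H$ as $\int_0^s\bigl(\sum_k(p_kq_k'+\widetilde p_k\widetilde q_k')-H\bigr)\ud t$ plus an explicit boundary expression in the $p_k,q_k$ (see \eqref{eq:int0sH}); and (iii) the exactness identity \eqref{eq:differential:gamma}, which says that the $\beta$-derivative of the \emph{Lagrangian} integral is a total derivative, so that integral can be expressed through $\sum_k p_k\partial_\beta q_k$ evaluated at $t=s$ only (the $t=0$ contribution vanishing by the small-$s$ asymptotics in Proposition \ref{th:pq}).

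Even granting that structure, the remaining work is not a routine bookkeeping of error terms: the boundary expression in \eqref{eq:int0sH} involves products such as $p_5q_5$, $\widetilde p_5\widetilde q_5$, $2sH(s)$ whose individual factors grow like $s^{1/2}$, $s$ and $s^2$, and the constant emerges only after nontrivial cancellations computed from the detailed expansions \eqref{p1}--\eqref{q6} (see \eqref{eq:asypq5}--\eqref{eq:asy:H4}), together with the $\beta'$-integration of $2\ii\beta'\partial_{\beta'}\vartheta(s)$ in \eqref{eq:intetheta}, which is where $\ln\bigl(G(1+\beta)G(1-\beta)\bigr)$, the extra $-\tfrac32\beta^2\ln s$, and the $\beta^2\ln(8r_1)$, $\beta^2\ln(8r_2)$ pieces actually come from. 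Your intuition about matching at $\beta=0$ and about the $\ii\beta\ln(8r_1)$ term in \eqref{def:theta} is correct in spirit, but without the identities (i)--(iii) and the explicit cancellation analysis the constant term is not proved, so the proposal as written establishes only the first line of \eqref{ds-F} and the non-constant part of the second.
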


If $\gamma =0$, we have that $\beta =0$ and $G(1+\beta)=G(1)=1$. It is then straightforward to see $F(s;0)= \Boh(s^{-1/2})$, which matches the fact that $F(s;0)=0$. If $\gamma=1$, our result supports the so-called Forrester-Chen-Eriksen-Tracy conjecture \cite{chen, F1993}. This conjecture claims that the probability $E(s)$ of emptiness over the interval $(x^*-s,x^*+s)$ behaves like $\exp(-Ks^{2\alpha+2})$ for large positive $s$ with $K$ being some constant, provided the density of state behaves as $|x-x^*|^{\alpha}$ as $x\to x^*$. Since the limiting mean density for the non-intersecting Brownian paths at the time of tangency consists of two touching semicircles, it follows that $\alpha=1/2$ for the tacnode process. Thus, one should have $F(s;1)=\Boh(s^3)$, as confirmed in \eqref{ds-F}. Also, we cannot evaluate explicitly the constant $C$ therein with our method, which in general is a challenging task; cf. \cite{Kra09}.


Our final result is about counting statistics of the tacnode process. To proceed, we denote by $N(s)$ the random variable that counts the number of points in the tacnode process falling into the interval $(-s,s)$, $s\geq 0$. It is well known that the following generating functionx
\begin{equation}
\mathbb{E} \left(e^{-2\pi \nu N(s)}\right) = \sum_{k=0}^{\infty} \mathbb{P} (N(s)=k)e^{-2\pi \nu k}, \qquad \nu \ge 0,
\end{equation}
is equal to the deformed Fredholm determinant $\det \left(I - (1-e^{-2\pi \nu})\mathcal K_{\tac}\right)$. This, together with Theorem \ref{th:1}, allows us to establish various asymptotic statistical properties of $N$; see also \cite{Charlier21,CC21,CM,DXZ22,Sosh} for relevant results about the sine, Airy and Pearcey point determinantal processes.

\begin{corollary}\label{coro1}
As $s \to +\infty$, we have
\begin{align}
\mathbb{E}(N(s)) &=\mu (s) + \Boh \left(\frac{\ln s}{s^{1/2}}\right),\label{def:EN}\\
\Var (N(s)) & = \sigma (s)^2 + \frac{2 + 2\gamma_E + \ln(64 r_1 r_2)}{2 \pi^2} + \Boh \left(\frac{(\ln s)^2}{s^{1/2}}\right),\label{def:VarN}
\end{align}
where $\gamma_E=-\Gamma'(1)\approx 0.57721$ is Euler’s constant,
\begin{align}\label{def:mu-sigma}
\mu (s) = \frac{2(r_1+r_2)}{3 \pi} s^{\frac 32} - \frac{2(s_1 + s_2)}{\pi} s^{\frac 12}, \qquad \sigma (s)^2 = \frac{3}{2 \pi^2} \ln s.
\end{align}
Furthermore, the random variable $\frac{N(s) - \mu (s)}{\sqrt{\sigma (s)^2}}$ converges in distribution to the normal law $\mathcal{N} (0,1)$as $s \to +\infty$, and for any $\epsilon > 0$, we have
\begin{align}\label{ub}
\lim_{a \to \infty} \mathbb{P} \left(\sup_{s>a} \left|\frac{N(s) - \mu(s)}{\ln s}\right| \le \frac{3 \sqrt{2}}{2 \pi} + \epsilon\right)=1.
\end{align}
\end{corollary}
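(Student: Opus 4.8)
The plan is to derive Corollary \ref{coro1} from Theorem \ref{th:1} by the now-standard route connecting Fredholm-determinant asymptotics to counting statistics via the Laplace-transform identity $\mathbb{E}(e^{-2\pi\nu N(s)})=\det(I-(1-e^{-2\pi\nu})\mathcal K_{\tac})=D(s;\gamma)$ with $\gamma=\gamma(\nu):=1-e^{-2\pi\nu}$. First I would substitute $\gamma=1-e^{-2\pi\nu}$ into \eqref{def:beta}, so that $\beta=\beta(\nu)=\frac{1}{2\pi\ii}\ln(1-\gamma)=\frac{1}{2\pi\ii}(-2\pi\nu)=-\ii\nu$, hence $\ii\beta=\nu$ and $\beta^2=-\nu^2$. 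Plugging this into the $0\le\gamma<1$ branch of \eqref{ds-F} gives
\begin{equation*}
\ln\mathbb{E}(e^{-2\pi\nu N(s)})=F(s;\gamma(\nu))=-\tfrac{4\nu}{3}(r_1+r_2)s^{3/2}+4\nu(s_1+s_2)s^{1/2}+3\nu^2\ln s+2\ln(G(1+\beta)G(1-\beta))-\beta^2\ln(64r_1r_2)+\Boh(s^{-1/2}),
\end{equation*}
where the error is uniform for $\nu$ in compact subsets of $[0,\infty)$ (this uniformity is exactly what the ``uniformly for $r_i,s_i$'' clause in Theorem \ref{th:1}, together with analyticity in $\gamma$, should supply; I would state this as the precise input needed). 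Writing $\mu(s)$ and $\sigma(s)^2$ as in \eqref{def:mu-sigma}, the first three terms are $-2\pi\nu\,\mu(s)+(2\pi\nu)^2\sigma(s)^2/2$, so the cumulant generating function of $N(s)$ has the shape of a Gaussian plus a convergent remainder.

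Next, for the mean and variance I would differentiate in $\nu$ at $\nu=0$. Since $\mathbb{E}(N(s))=-\frac{1}{2\pi}\frac{d}{d\nu}\ln\mathbb{E}(e^{-2\pi\nu N(s)})\big|_{\nu=0}$ and $\Var(N(s))=\frac{1}{4\pi^2}\frac{d^2}{d\nu^2}\ln\mathbb{E}(e^{-2\pi\nu N(s)})\big|_{\nu=0}$, differentiating the displayed expansion term by term (legitimate because the expansion is, by the uniformity, analytic in $\nu$ near $0$ with controlled derivatives) yields $\mathbb{E}(N(s))=\mu(s)+\Boh(s^{-1/2}\cdot\text{something})$ and $\Var(N(s))=\sigma(s)^2+\frac{1}{4\pi^2}\frac{d^2}{d\nu^2}\big[2\ln(G(1+\beta)G(1-\beta))-\beta^2\ln(64r_1r_2)\big]_{\nu=0}+\Boh(\cdots)$. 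Here I would use the Taylor expansion of the Barnes $G$-function: $\ln G(1+z)=\frac{z}{2}\ln(2\pi)-\frac{z(z+1)}{2}-\frac{\gamma_E z^2}{2}+\Boh(z^3)$, so that $\ln(G(1+\beta)G(1-\beta))=-\frac{(1+\gamma_E)\beta^2}{2}+\Boh(\beta^4)$; combined with $\beta=-\ii\nu$ this gives the $\nu$-quadratic coefficient $-(1+\gamma_E)$ from the $G$ terms and $+\ln(64r_1r_2)$ from the last term, producing exactly $\frac{2+2\gamma_E+\ln(64r_1r_2)}{2\pi^2}$ after dividing by $4\pi^2$ and the factor-of-$2$ bookkeeping. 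The error terms $\Boh(\frac{\ln s}{s^{1/2}})$ and $\Boh(\frac{(\ln s)^2}{s^{1/2}})$ arise because each $\nu$-derivative of the $3\nu^2\ln s$ term and of the error $\Boh(s^{-1/2})$ contributes powers of $\ln s$ through the $\nu$-dependence hidden in the argument scalings (the $\ln(8(r_1-s_1/s))$-type pieces); I would track these carefully but they are routine.

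For the central limit theorem I would apply the standard moment/cumulant criterion (as in \cite{CM,Sosh}): set $X_s:=(N(s)-\mu(s))/\sigma(s)$. From the expansion above, $\mathbb{E}(e^{-2\pi\nu N(s)})=\exp(-2\pi\nu\mu(s)+2\pi^2\nu^2\sigma(s)^2+R(s,\nu))$ with $R(s,\nu)\to 2\ln(G(1+\beta)G(1-\beta))-\beta^2\ln(64r_1r_2)$ bounded as $s\to\infty$ for fixed $\nu$; replacing $\nu$ by $t/(2\pi\sigma(s))$ and using $\sigma(s)^2\to\infty$, the rescaled Laplace transform $\mathbb{E}(e^{-t X_s})=e^{t^2/2}(1+o(1))$ for each fixed real $t$, which identifies the limiting law as $\mathcal{N}(0,1)$ (one passes from the Laplace transform on $[0,\infty)$ to the full characteristic function by analyticity, or equivalently checks all cumulants of order $\ge 3$ of $X_s$ tend to $0$ since the $k$-th cumulant of $N(s)$ is $\Boh(\ln s)$ for $k=2$ and $\Boh(1)$ for $k\ge 3$, while $\sigma(s)^k\to\infty$ faster). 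Finally, for the uniform-in-$s$ bound \eqref{ub} I would invoke the general rigidity machinery of \cite{CC21}, which converts precisely the type of asymptotics $\Var(N(s))\sim\frac{3}{2\pi^2}\ln s$ together with the Laplace-transform control into a law-of-iterated-logarithm-type statement; the constant $\frac{3\sqrt2}{2\pi}$ is $\limsup$ of $\sqrt{2\sigma(s)^2}/\ln s=\sqrt{3/(\pi^2\ln s)}\cdot\sqrt{\ln s}$... more precisely it is $\sqrt{2\cdot\frac{3}{2\pi^2}}=\frac{\sqrt3}{\pi}$ times a further $\sqrt2$ coming from the iterated-logarithm normalization, and I would cite \cite[Theorem ...]{CC21} to get \eqref{ub} verbatim.

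\textbf{Main obstacle.} The delicate point is not the algebra but the \emph{uniformity} of the error terms: Theorem \ref{th:1} is stated uniformly in $r_i,s_i$ but for \emph{fixed} $\gamma$, and to differentiate in $\nu$ (equivalently in $\gamma$) I need the $\Boh(s^{-1/2})$ remainder — and the $s$-dependent coefficients such as $\ln(8(r_1-s_1/s))$ buried in \eqref{def:theta} — to be analytic in $\gamma$ on a neighborhood of $[0,1)$ with derivative bounds locally uniform in $\gamma$. Establishing this requires either revisiting the proof of Theorem \ref{th:1} to extract the $\gamma$-dependence of the error, or an a~priori analyticity argument (the Fredholm determinant $D(s;\gamma)$ is entire in $\gamma$, and one combines this with the pointwise asymptotics via a Cauchy-estimate/Vitali argument on contours in the $\gamma$-plane). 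Handling this cleanly, and tracking exactly which powers of $\ln s$ survive each $\nu$-differentiation to land on the claimed $\Boh(\frac{\ln s}{s^{1/2}})$ and $\Boh(\frac{(\ln s)^2}{s^{1/2}})$ error bounds, is where the real work lies; the CLT and the \cite{CC21} input are then essentially black-box.
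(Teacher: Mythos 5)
Your proposal follows essentially the same route as the paper: identify $\mathbb{E}(e^{-2\pi\nu N(s)})$ with $D(s;1-e^{-2\pi\nu})$, insert the $0\le\gamma<1$ branch of \eqref{ds-F}, match the $\nu$- and $\nu^2$-coefficients against $-2\pi\mathbb{E}(N(s))\nu+2\pi^2\Var(N(s))\nu^2+\Boh(\nu^3)$ using the Barnes $G$ expansion, prove the CLT by letting $\nu\asymp\sigma(s)^{-1}$ in the rescaled Laplace transform, and obtain \eqref{ub} by citing \cite[Lemma 2.1 and Theorem 1.2]{CC21} — exactly what the paper does, and your flagged ``main obstacle'' (uniformity/analyticity in $\nu$ near $0$, and the $\ln s$ factors generated by each $\nu$-differentiation) is precisely the point the paper handles only implicitly, so your treatment is if anything more careful there. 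Two computational slips should be fixed. First, $\beta=\frac{1}{2\pi\ii}\ln(e^{-2\pi\nu})=\ii\nu$, not $-\ii\nu$ (harmless: only $\beta^2=-\nu^2$ and the signs in your displayed expansion matter, and that display is already consistent with the correct sign). Second, summing the series you quote gives
\begin{equation*}
\ln\bigl(G(1+\beta)G(1-\beta)\bigr)=-(1+\gamma_E)\beta^2+\Boh(\beta^4),
\end{equation*}
not $-\tfrac{1}{2}(1+\gamma_E)\beta^2$; with your value the variance constant would come out as $\frac{1+\gamma_E+\ln(64r_1r_2)}{2\pi^2}$, whereas the correct sum yields exactly the claimed $\frac{2+2\gamma_E+\ln(64r_1r_2)}{2\pi^2}$. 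Finally, your heuristic for the constant $\frac{3\sqrt{2}}{2\pi}$ in \eqref{ub} does not actually reproduce it ($\sqrt{2\cdot\tfrac{3}{2\pi^2}}\cdot\sqrt{2}=\sqrt{6}/\pi\neq\tfrac{3\sqrt{2}}{2\pi}$); this is fine only because, like the paper, you ultimately take the bound verbatim from \cite{CC21} rather than derive it.
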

The probabilistic bound \eqref{ub} particularly implies that, for large positive $s$, the
counting function of the tacnode process lies in the interval $(\mu(s)-(3\sqrt{2}/(2\pi)+\epsilon) \ln s, \mu(s)+
(3\sqrt{2}/(2\pi)+\epsilon) \ln s)$ with high probability.

\paragraph{Organization of the rest of the paper}
The rest of this paper is devoted to the proofs of our main results. The idea is to relate various derivatives of $F$ to a $4 \times 4$ RH problem under the general framework \cite{BD02, DIKZ}. In Section \ref{sec:RHP}, we connect $\ud F/\ud s$ to a $4 \times 4$ RH problem for $X$ with constant jumps. We then derive a lax pair for $X$ in Section \ref{sec:Lax}, and some useful differential identities for the Hamiltonian will also be included for later calculation. We perform a Deift-Zhou steepest descent analysis \cite{DIZ97} on the RH problem for $X$ as $s \to +\infty$ in Sections \ref{sec:AsyX1} and \ref{sec:AsyXgamma} for the cases $\gamma=1$ and $0 \leq \gamma < 1$, respectively, and deal with the small positive $s$ case in Section \ref{sec:AsyX0}. After computing the asymptotics of a family of special solutions to the system of differential equations \eqref{def:pq's} and \eqref{eq:sumpq} in Section \ref{sec:pq}, we finally present the proofs of our main results in Section \ref{sec:proof}.

\section{Preliminaries} \label{sec:RHP}
We intend to establish a relation between $\ud F / \ud s$ and an RH problem with constant jumps. To proceed, we note that
\begin{align}
\frac{\ud}{\ud s}F(s;\gamma)= -\textrm{tr}\left((I-\gamma \mathcal{K}_{\tac})^{-1} \gamma
\frac{\ud}{\ud s}\mathcal{K}_{\tac}\right) = -(R(s,s)+R(-s,-s)), \label{eq:derivatives}
\end{align}
where $R(u,v)$ stands for the kernel of the resolvent operator.

By \eqref{def:tacnode kernel}, one readily sees that
\begin{equation}\label{eq:tildeKdef}
\gamma K_\tac (x,y) = \frac{\vec{f}(x)^{\msf T}\vec{h}(y)}{x-y},
\end{equation}
where
\begin{equation}\label{def:fh}
\vec{f}(x)=\begin{pmatrix}
f_1
\\
f_2
\\
f_3
\\
f_4
\end{pmatrix}:= \widehat M (x)
\begin{pmatrix}
1
\\
1
\\
0
\\
0
\end{pmatrix}, \qquad
\vec{h}(y)=\begin{pmatrix}
h_1
\\
h_2
\\
h_3
\\
h_4
\end{pmatrix}
:=
\frac{\gamma}{2 \pi \ii}
\widehat M(y)^{-\msf T}
\begin{pmatrix}
0
\\
0
\\
1
\\
1
\end{pmatrix}.
\end{equation}
This integrable structure of kernel $K_\tac$  in the sense of Its et al. \cite{IIKS90} implies that the resolvent kernel $R(u,v)$ is integrable as well; cf. \cite{DIZ97,IIKS90}. Indeed, by setting
\begin{equation}\label{def:FH}
\vec{F}(u)=
\begin{pmatrix}
F_1 \\
F_2 \\
F_3 \\
F_4
\end{pmatrix}:=\left(I-\gamma \mathcal{K}_{\tac}\right)^{-1}\vec{f}, \qquad \vec{H}(v)=\begin{pmatrix}
H_1 \\
H_2 \\
H_3 \\
H_4
\end{pmatrix}
:=\left(I-\gamma \mathcal{K}_{\tac}\right)^{-1}\vec{h},
\end{equation}
we have
\begin{equation}\label{eq:resolventexpli}
R(u,v)=\frac{\vec{F}(u)^{\msf T}\vec{H}(v)}{u-v}.
\end{equation}
Moreover, the functions $\vec{F}(u)$ and $\vec{H}(u)$ are closely related to the following RH problem.
\begin{rhp} \label{rhp:Y}
\hfill
\begin{enumerate}
\item[\rm (a)] $Y(z)$ is a $ 4\times 4$ matrix-valued function defined and analytic in $\mathbb{C}\setminus [-s,s]$, where the orientation is taken from the left to the right.

\item[\rm (b)] For $x\in(-s,s)$, we have
\begin{equation}\label{eq:Y-jump}
 Y_+(x)=Y_-(x)(I-2\pi \ii \vec{f}(x)\vec{h}(x)^{\msf T}),
 \end{equation}
where the functions $\vec{f}$ and $\vec{h}$ are defined in \eqref{def:fh}.
\item[\rm (c)] As $z \to \infty$, we have
\begin{equation}\label{eq:Y-infty}
 Y(z)=I+\frac{Y^{(1)}}{z}+ \Boh(z^{-2}).
 \end{equation}
where the function $Y^{(1)}$ is independent of $z$.
\item[\rm (d)] As $z \to  \pm s$, we have $Y(z) = \mathcal \Boh(\ln(z \mp s))$.
\end{enumerate}
\end{rhp}
By \cite{DIZ97}, it follows that
\begin{equation}\label{eq:Yexpli}
Y(z)=I-\int_{-s}^s\frac{\vec{F}(w)\vec{h}(w)^{\msf T}}{w-z}\ud w
\end{equation}
and
\begin{equation}\label{def:FH2}
\vec{F}(z)=Y(z)\vec{f}(z), \qquad \vec{H}(z)=Y(z)^{-\msf T}\vec{h}(z).
\end{equation}


We now make an undressing transformation to arrive at an RH problem that is related to $\ud F/ \ud s$ with the aid of the RH problems for $M$ and $Y$. We start with definitions
\begin{equation}\label{def:Gammajs}
 \begin{aligned}
     &\Gamma_0^{(s)}:=(s,+\infty), &&  \Gamma_1^{(s)}:=s+e^{\varphi \ii }(0,+\infty),  && \Gamma_2^{(s)}:=-s+e^{-\varphi \ii}(-\infty,0),\\
     &\Gamma_3^{(s)}:= (-\infty,-s),&& \Gamma_4^{(s)}:=-s+e^{\varphi \ii}(-\infty,0), &&\Gamma_5^{(s)}:=s+e^{-\varphi \ii}(0,+\infty), \quad 0<\varphi<\frac{\pi}{3}.
    \end{aligned}
    \end{equation}
Clearly, the rays $\Gamma_k^{(s)}$, $k=1,2,4,5$, and $\mathbb{R}$ divide the whole complex plane into 6 regions $\Omega_j^{(s)}$, $j=1,\ldots,6$; see Figure \ref{fig:X} for an illustration. The transformation is defined by
\begin{align}\label{eq:YtoX}
 &X(z) =  X(z; s,\gamma, r_1,r_2,s_1,s_2,\tau)
 \nonumber \\
 &= \begin{cases}
   Y(z)M(z), & \hbox{ $z \in \Omega_1^{(s)}\cup \Omega_3^{(s)} \cup \Omega_4^{(s)} \cup \Omega_6^{(s)} $,} \\
   Y(z)\widehat M(z), & \hbox{ $z \in \Omega_2^{(s)}$,} \\
   Y(z)\widehat M(z)\begin{pmatrix} 1 &0 &-1 &-1 \\0 &1 &-1 &-1 \\ 0&0&1&0\\0&0&0&1 \end{pmatrix}, & \hbox{ $z \in \Omega_5^{(s)}$.}
 \end{cases}
\end{align}
On account of the RH problems \ref{rhp:tac} and \ref{rhp:Y} for $M$ and $Y$, it is straightforward to check that $X$ satisfies the following RH problem.
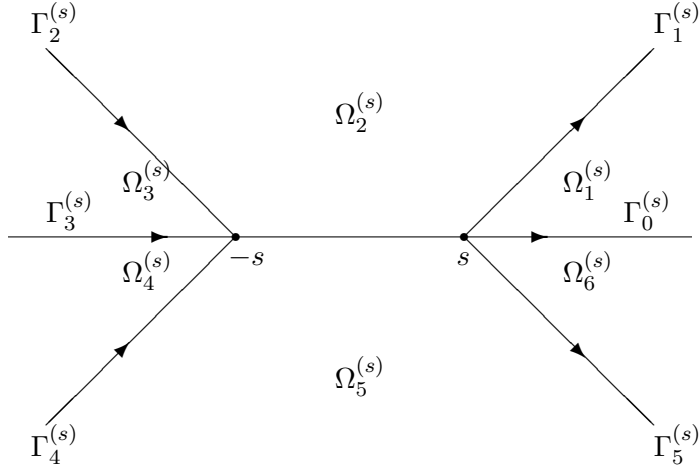
\begin{figure}[t]
\begin{center}
   \setlength{\unitlength}{1truemm}
   \begin{picture}(100,70)(-5,2)
       \put(25,40){\line(-1,0){30}}
       \put(55,40){\line(1,0){30}}

       \put(25,40){\line(1,0){30}}

       \put(25,40){\line(-1,-1){25}}
       \put(25,40){\line(-1,1){25}}

       \put(55,40){\line(1,1){25}}
       \put(55,40){\line(1,-1){25}}

       \put(15,40){\thicklines\vector(1,0){1}}
       \put(65,40){\thicklines\vector(1,0){1}}

       \put(10,55){\thicklines\vector(1,-1){1}}
       \put(10,25){\thicklines\vector(1,1){1}}
       \put(70,25){\thicklines\vector(1,-1){1}}
       \put(70,55){\thicklines\vector(1,1){1}}

       \put(-2,11){$\Gamma_4^{(s)}$}

       \put(-2,67){$\Gamma_2^{(s)}$}
       \put(0,42){$\Gamma_3^{(s)}$}
       \put(80,11){$\Gamma_5^{(s)}$}
       \put(80,67){$\Gamma_1^{(s)}$}
       \put(76,42){$\Gamma_0^{(s)}$}

       \put(10,46){$\Omega_3^{(s)}$}
       \put(10,34){$\Omega_4^{(s)}$}
       \put(68,46){$\Omega_1^{(s)}$}
       \put(68,34){$\Omega_6^{(s)}$}
       \put(38,55){$\Omega_2^{(s)}$}
       \put(38,20){$\Omega_5^{(s)}$}

       \put(25,40){\thicklines\circle*{1}}
       \put(55,40){\thicklines\circle*{1}}

       \put(24,36.3){$-s$}
       \put(54,36.3){$s$}

   \end{picture}
   \caption{Regions $\Omega_j^{(s)}$, $j=1,\ldots,6$, and the contours $\Gamma_{k}^{(s)}$, $k=0,1,\ldots,5$, for the RH problem for $\Phi$.}
   \label{fig:X}
\end{center}
\end{figure}

\begin{rhp}\label{rhp:X}
\hfill
\begin{enumerate}
\item[\rm (a)] $X(z)$ is defined and analytic in $\mathbb{C}\setminus \Gamma_X$,
where
\begin{equation}\label{def:gammaX}
\Gamma_X:=\cup^5_{j=0}\Gamma_j^{(s)} \cup[-s,s]
\end{equation}
with the rays $\Gamma_j^{(s)}$, $j=0,1,\ldots,5$, defined in \eqref{def:Gammajs} and \eqref{def:Gammajs2}; see Figure \ref{fig:X} for the orientations of $\Gamma_X$.

\item[\rm (b)] For $z \in \Gamma_X$, $X$ satisfies the jump condition
\begin{equation}\label{eq:X-jump}
 X_+(z)=X_-(z)J_X(z),
\end{equation}
where
\begin{equation}\label{def:JX}
J_X(z):=\left\{
 \begin{array}{ll}
          \begin{pmatrix}0&0&1&0\\0&1&0&0\\-1&0&0&0\\0&0&0&1 \end{pmatrix}, & \qquad \hbox{$z\in \Gamma_0^{(s)}$,} \\
          I-E_{2,1}+E_{3,1}+E_{3,4},  & \qquad  \hbox{$z\in \Gamma_1^{(s)}$,} \\
          I-E_{1,2}+E_{4,2}+E_{4,3},  & \qquad \hbox{$z\in \Gamma_2^{(s)}$,} \\
          \begin{pmatrix}1&0&0&0\\0&0&0&1\\0&0&1&0\\0&-1&0&0 \end{pmatrix}, & \qquad  \hbox{$z\in \Gamma_3^{(s)}$,} \\
          I+E_{1,2}+E_{4,2}-E_{4,3}, & \qquad  \hbox{$z\in \Gamma_4^{(s)}$,} \\
          I+E_{2,1}+E_{3,1}-E_{3,4}, & \qquad  \hbox{$z\in \Gamma_5^{(s)}$,} \\
          \begin{pmatrix}1&0&1-\gamma&1-\gamma\\0&1&1-\gamma&1-\gamma\\0&0&1&0\\0&0&0&1 \end{pmatrix}, & \qquad  \hbox{$z\in (-s,s)$.}
        \end{array}
      \right.
 \end{equation}

\item[\rm (c)]As $z \to \infty$ with $z\in \mathbb{C} \setminus \Gamma_X$, we have
\begin{align}\label{eq:asyX}
X(z)&=\left( I+\frac{X^{(1)}}{z}+ \Boh(z^{-2}) \right) \diag \left((-z)^{-\frac14},z^{-\frac14},(-z)^{\frac14},z^{\frac14} \right)
\nonumber  \\
& \quad \times A \diag \left(
e^{-\theta_1(z)+\tau z}, e^{-\theta_2(z)- \tau z}, e^{\theta_1(z)+\tau
z},e^{\theta_2(z)- \tau z} \right),
\end{align}
where $A$, $\theta_1$ and $\theta_2$ are defined in \eqref{def:A}--\eqref{def:theta2}, respectively and
\begin{equation}\label{def:X1}
X^{(1)} = Y^{(1)} + M^{(1)}
\end{equation}
with $Y^{(1)}$ and $M^{(1)}$ given in \eqref{eq:Y-infty} and \eqref{eq:asy:M}.

\item[\rm (d)] As $z \to s$, we have
\begin{align} \label{eq:X-near-s}
	X(z) &=  X_{R}(z) \begin{pmatrix}
	1 & 0 & -\frac{\gamma}{2\pi \ii} \ln(z-s) & -\frac{\gamma}{2\pi \ii} \ln(z-s)
   \\
	0 & 1 & -\frac{\gamma}{2\pi \ii} \ln(z-s) & -\frac{\gamma}{2\pi \ii} \ln(z-s)
    \\
	0 & 0 & 1 & 0
    \\
    0 & 0 & 0 & 1	
\end{pmatrix}
\nonumber
\\
& \quad \times
\begin{cases}
	I, & z \in \Omega_2^{(s)}, \\
	\begin{pmatrix}
    1&0& -1 & -1
    \\
    0& 1 & -1 & -1
    \\
    0&0&1&0
    \\
    0&0&0&1
    \end{pmatrix}, & z \in \Omega_5^{(s)},
	\end{cases}
\end{align}
where the principal branch is taken for $\ln(z-s)$, and $X_R(z)$ is analytic at $z=s$ satisfying
\begin{equation}\label{eq: Phi-expand-s}
  X_R(z) = X_{R,0}(s)\left(I+X_{R,1}(s)(z-s)+\Boh((z-s)^2)\right ),\qquad z\to s,
 \end{equation}
for some functions $X_{R,0}(s)$ and $ X_{R,1}(s)$ depending on the parameters $r_1$,$r_2$,$s_1$,$s_2$,$\tau$ and $\gamma$.

\item[\rm (e)] As $z \to -s$,
we have
\begin{align} \label{eq:X-near--s}
	X(z) &=  X_{L}(z) \begin{pmatrix}
	1 & 0 & -\frac{\gamma}{2\pi \ii} \ln(-z-s) & -\frac{\gamma}{2\pi \ii} \ln(-z-s)
   \\
	0 & 1 & -\frac{\gamma}{2\pi \ii} \ln(-z-s) & -\frac{\gamma}{2\pi \ii} \ln(-z-s)
    \\
	0 & 0 & 1 & 0
    \\
    0 & 0 & 0 & 1	
\end{pmatrix}
\nonumber
\\
& \quad \times
\begin{cases}
	 \begin{pmatrix}
    0&1& 0 & 0
    \\
    1& 0 & 0 & 0
    \\
    0&0&0&-1
    \\
    0&0&-1&0
    \end{pmatrix}, & z \in \Omega_2^{(s)}, \\
	\begin{pmatrix}
    0&1& 1 & 1
    \\
    1& 0 & 1 & 1
    \\
    0&0&0&-1
    \\
    0&0&-1&0
    \end{pmatrix}, & z \in \Omega_5^{(s)},
	\end{cases}
\end{align}
where $\ln(-z-s)$ is analytic for $z\in \mathbb{C}\setminus [-s,+\infty)$ and $X_L(z)$ is analytic at $z=-s$ satisfying
\begin{equation}\label{eq: Phi-expand--s}
  X_L(z) = X_{L,0}(s)\left(I+X_{L,1}(s)(z+s)+\Boh((z+s)^2)\right ),\qquad z\to -s,
 \end{equation}
for some functions $X_{L,0}(s)$ and $ X_{L,1}(s)$ depending on the parameters $r_1$,$r_2$,$s_1$,$s_2$,$\tau$ and $\gamma$.

\end{enumerate}
\end{rhp}
%
\begin{proposition}\label{prop:X1}
The matrix-valued function
$X(z)$ satisfies the following symmetric relations:
\begin{align} \label{eq:symm}
  X (-z) &=
  \begin{pmatrix}
  J & 0
  \\
  0 & -J
  \end{pmatrix}
  \widetilde X(z)
  \begin{pmatrix}
  J & 0
  \\
  0 & -J
  \end{pmatrix},
\\
  X (z)^{- \msf T} &=
  \begin{pmatrix}
  0 & -I_2
  \\
  I_2 & 0
  \end{pmatrix}\dot X(z)
  \begin{pmatrix}
  0 & I_2
  \\
  -I_2 & 0
  \end{pmatrix},\label{eq:symmX2}
\end{align}
where $J$ is given in \eqref{def-J}, $\widetilde X$ and $\dot X$ are defined through \eqref{def:tildeX} and \eqref{def:dotX}. Moreover, the matrix
$X^{(1)} = X^{(1)}(\gamma, r_1,r_2,s_1,s_2,\tau)$ in \eqref{eq:asyX} satisfies
\begin{align}
 X^{(1)}  &=
  -\begin{pmatrix}
  J & 0
  \\
  0 & -J
  \end{pmatrix} \widetilde X^{(1)}
  \begin{pmatrix}
  J & 0
  \\
  0 & -J
  \end{pmatrix},\label{eq:symmX11}
\\
  (X^{(1)})^{\msf T} &=
  -\begin{pmatrix}
  0 & -I_2
  \\
  I_2 & 0
  \end{pmatrix}\dot X^{(1)}
  \begin{pmatrix}
  0 & I_2
  \\
  -I_2 & 0
  \end{pmatrix}.\label{eq:symmX12}
\end{align}
\end{proposition}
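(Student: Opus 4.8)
The plan is to derive both families of symmetric relations directly from the corresponding relations for $M$ (namely \eqref{eq:symmM1}--\eqref{eq:symmM2}) together with analogous symmetry properties of $Y$, propagated through the explicit undressing transformation \eqref{eq:YtoX}. First I would establish the symmetry relations for $Y$. Observe that the jump matrix in \eqref{eq:Y-jump} is built from the vectors $\vec f$ and $\vec h$ defined in \eqref{def:fh}, which are themselves expressed through $\widehat M$. Using the symmetry \eqref{eq:symmM1} of $M$ (hence of $\widehat M$ on the relevant sector), one checks that under $z \mapsto -z$ the rank-one jump $I - 2\pi\ii\,\vec f(x)\vec h(x)^{\msf T}$ transforms into the conjugated jump for $\widetilde Y$; since the interval $[-s,s]$ is symmetric under $z\mapsto -z$ and the normalization \eqref{eq:Y-infty} is preserved under conjugation by the constant matrix $\operatorname{diag}(J,-J)$, uniqueness of the solution to RH problem \ref{rhp:Y} forces $Y(-z) = \operatorname{diag}(J,-J)\,\widetilde Y(z)\,\operatorname{diag}(J,-J)$. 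An entirely parallel argument using \eqref{eq:symmM2}, the identity $(I - 2\pi\ii\,\vec f\vec h^{\msf T})^{-\msf T} = I + 2\pi\ii\,\vec h\vec f^{\msf T}$, and the fact that $\dot{}$ corresponds to $\tau\mapsto-\tau$ yields $Y(z)^{-\msf T} = \begin{pmatrix}0 & -I_2\\ I_2 & 0\end{pmatrix}\dot Y(z)\begin{pmatrix}0 & I_2\\ -I_2 & 0\end{pmatrix}$.

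Next I would combine these with the transformation \eqref{eq:YtoX}. In the regions $\Omega_1^{(s)}\cup\Omega_3^{(s)}\cup\Omega_4^{(s)}\cup\Omega_6^{(s)}$ we simply have $X = YM$, and the four regions are permuted appropriately under $z\mapsto -z$, so $X(-z) = Y(-z)M(-z) = \operatorname{diag}(J,-J)\widetilde Y(z)\widetilde M(z)\operatorname{diag}(J,-J) = \operatorname{diag}(J,-J)\widetilde X(z)\operatorname{diag}(J,-J)$, using that $\operatorname{diag}(J,-J)$ is an involution. One then has to check the definition is consistent on the remaining sectors $\Omega_2^{(s)}$ and $\Omega_5^{(s)}$, where $X$ involves $\widehat M$ and, in $\Omega_5^{(s)}$, an extra constant triangular factor; here the point is that the map $z\mapsto -z$ sends $\Omega_2^{(s)}$ to itself and $\Omega_5^{(s)}$ to itself, and one verifies the constant factor in \eqref{eq:YtoX} intertwines correctly with $\operatorname{diag}(J,-J)$-conjugation. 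Rather than doing this sector by sector, the cleaner route is to note that $X$ is itself characterized by RH problem \ref{rhp:X}, whose jump contour $\Gamma_X$ and jump matrices $J_X$ in \eqref{def:JX} are symmetric under $z\mapsto -z$ up to exactly the conjugation by $\operatorname{diag}(J,-J)$ (and, for \eqref{eq:symmX2}, up to the transpose-inverse conjugation by $\begin{pmatrix}0 & -I_2\\ I_2 & 0\end{pmatrix}$), while the normalization \eqref{eq:asyX} is preserved because of the algebraic identities satisfied by $A$ and the functions $\theta_1,\theta_2$ (note $\theta_1(z)$ and $\theta_2(z)$ swap under the combined operation $z\mapsto -z$, $r_1\leftrightarrow r_2$, $s_1\leftrightarrow s_2$). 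Then uniqueness of the solution to RH problem \ref{rhp:X} gives \eqref{eq:symm} and \eqref{eq:symmX2} at once.

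Finally, \eqref{eq:symmX11} and \eqref{eq:symmX12} follow by substituting the large-$z$ expansion \eqref{eq:asyX} into \eqref{eq:symm} and \eqref{eq:symmX2} and matching the coefficient of $z^{-1}$. For \eqref{eq:symm}, writing $X(z) = (I + X^{(1)}z^{-1} + \cdots)\Delta(z)$ with $\Delta$ the explicit diagonal--times--$A$--times--exponential prefactor, the prefactor $\Delta(-z)$ relates to $\widetilde\Delta(z)$ by conjugation with $\operatorname{diag}(J,-J)$ up to a sign in the relevant blocks (this is exactly the mechanism that turns the $+$ sign in \eqref{eq:symm} into the $-$ sign in \eqref{eq:symmX11}, mirroring the passage from \eqref{eq:symmM1} to \eqref{eq:symmM11}), and comparing the $z^{-1}$ terms gives the stated relation; \eqref{eq:symmX12} is obtained the same way from \eqref{eq:symmX2}, using $(I + X^{(1)}z^{-1}+\cdots)^{-\msf T} = I - (X^{(1)})^{\msf T}z^{-1} + \cdots$. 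The main obstacle is bookkeeping rather than conceptual: one must carefully track how the branch cuts of $(-z)^{\pm 1/4}$ and of $\theta_1,\theta_2$ behave under $z\mapsto -z$, and how the constant triangular factors glued into $X$ on $\Omega_5^{(s)}$ (and the differing factors at $z=s$ versus $z=-s$ in \eqref{eq:X-near-s}--\eqref{eq:X-near--s}) interact with the two conjugations; getting the signs in the block-diagonal matrices $\operatorname{diag}(J,-J)$ and in the symplectic-type matrix $\begin{pmatrix}0 & -I_2\\ I_2 & 0\end{pmatrix}$ exactly right is where the care is needed. Since $M$ is already known to satisfy \eqref{eq:symmM1}--\eqref{eq:symmM12}, these verifications reduce to checking that the finitely many constant matrices appearing in \eqref{eq:YtoX} and \eqref{def:JX} respect the two symmetries, which is a routine—if slightly tedious—linear-algebra check.
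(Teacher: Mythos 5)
Your main (``cleaner'') route---checking that both sides of \eqref{eq:symm} and \eqref{eq:symmX2} solve the same RH problem, invoking uniqueness of its solution, and then matching the $z^{-1}$ coefficient in \eqref{eq:asyX} to obtain \eqref{eq:symmX11}--\eqref{eq:symmX12}---is exactly the paper's proof, so the proposal is correct and takes essentially the same approach. Two harmless slips worth noting: in the discarded $Y$-based detour, the map $z\mapsto -z$ swaps $\Omega_2^{(s)}$ and $\Omega_5^{(s)}$ rather than fixing them, and the minus sign in \eqref{eq:symmX11} arises from $X^{(1)}/(-z)=-X^{(1)}/z$ in the expansion (the prefactor relation is exact), not from a sign in the prefactor blocks---neither affects the argument.
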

\begin{proof}
One can check that the left and right hand sides of \eqref{eq:symm} satisfy the same RH problem. Then \eqref{eq:symm} follows from the uniqueness of
the solution to this RH problem. The same argument applies to \eqref{eq:symmX2}. By substituting the asymptotic behavior of $X(z)$ in \eqref{eq:asyX} into equations \eqref{eq:symm} and \eqref{eq:symmX2}, the symmetric relations \eqref{eq:symmX11} and \eqref{eq:symmX12} follow from a straightforward calculation.

This finishes the proof of Proposition \ref{prop:X1}.
\end{proof}

The connection between the derivatives of $F$ and the RH problem for $X$ is revealed in the following proposition.
\begin{proposition}\label{prop:derivativeandX}
With $F$ defined in \eqref{def:Fnotation}, we have
\begin{align}
\frac{\ud}{\ud s} F(s;\gamma, r_1, r_2,s_1,s_2, \tau) =-\frac{\gamma}{2 \pi \ii}\left[ \lim_{z \to s} \sum_{i=3}^4\sum_{j=1}^2 \left(X(z)^{-1}X'(z)\right)_{ij}+ \lim_{z \to -s} \sum_{i=3}^4\sum_{j=1}^2 \left(X(z)^{-1} X'(z)\right)_{ij} \right] ,
\label{eq:derivativeinsX-2}
\end{align}
where the limit is taken from $\Omega_2^{(s)}$ and
\begin{align}
\frac{\partial}{\partial s_{1}} F(s;\gamma, r_1, r_2,s_1,s_2, \tau) &= 2\ii\left(X^{(1)}_{13}-M^{(1)}_{13}\right),
\label{eq:derivativeFs12}\\
\frac{\partial}{\partial s_{2}} F(s;\gamma, r_1, r_2,s_1,s_2, \tau) &= 2\ii\left(\widetilde X^{(1)}_{13}-\widetilde M^{(1)}_{13}\right),
\label{eq:derivativeFs122}\\
\frac{\partial}{\partial \tau} F(s;\gamma, r_1, r_2,s_1,s_2, \tau) &= -X^{(1)}_{11}- \widetilde X^{(1)}_{11}+\dot X^{(1)}_{11}+\dot{\widetilde{X}}^{(1)}_{11}\nonumber\\
&\quad+M^{(1)}_{11}+ \widetilde M^{(1)}_{11}-\dot M^{(1)}_{11}-\dot{\widetilde{M}}^{(1)}_{11},
\label{eq:derivativeFtau}
\end{align}
where $X^{(1)}$ and $M^{(1)}$ are given in \eqref{eq:asyX} and \eqref{eq:asy:M}. Here, for a function $x=x(\cdot; r_1, r_2,s_1,s_2, \tau)$, we have $\dot{\widetilde{x}}=x(\cdot; r_2, r_1,s_2,s_1, -\tau))$; see the notations \eqref{def:tildeX} and \eqref{def:dotX}.
\end{proposition}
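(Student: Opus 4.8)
The plan is to extract each derivative formula from the known explicit representations of the resolvent kernel and of $Y$, transported into the language of the undressing variable $X$. First I would treat \eqref{eq:derivativeinsX-2}. Starting from \eqref{eq:derivatives}, the $s$-derivative of $F$ equals $-(R(s,s)+R(-s,-s))$, and by \eqref{eq:resolventexpli} each diagonal value of the resolvent is the regularized limit $R(u,u)=\lim_{v\to u}\vec F(u)^{\msf T}\vec H(v)/(u-v)$, which by \eqref{def:FH2} can be rewritten using $\vec F = Y\vec f$, $\vec H = Y^{-\msf T}\vec h$ and the jump relation \eqref{eq:Y-jump}. The standard manipulation (cf.\ \cite{DIZ97,IIKS90}) turns $R(u,u)$ into an entry of $Y^{-1}Y'$ contracted against the rank-structure of $\vec f\vec h^{\msf T}$. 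I would then substitute the transformation \eqref{eq:YtoX}: in $\Omega_2^{(s)}$ one has $X = Y\widehat M$ with $\widehat M$ analytic across $z=s$, so $X^{-1}X' = \widehat M^{-1}(Y^{-1}Y')\widehat M + \widehat M^{-1}\widehat M'$; the analytic conjugating factor $\widehat M$ only rotates the relevant rank-one data, and since the columns $(1,1,0,0)^{\msf T}$ and the row $(0,0,1,1)$ in \eqref{def:fh} are precisely the vectors appearing in the local behavior \eqref{eq:X-near-s}, the double sum $\sum_{i=3}^4\sum_{j=1}^2(X^{-1}X')_{ij}$ is exactly the combination that reproduces $R(\pm s,\pm s)$ up to the factor $-\gamma/(2\pi\ii)$ coming from the normalization of $\vec h$. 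The logarithmic prefactor in \eqref{eq:X-near-s} is what makes this limit finite: its derivative contributes the $1/(z-s)$ singular part that is cancelled against the singular part of $Y^{-1}Y'$, leaving the regularized diagonal resolvent.

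For the parameter derivatives \eqref{eq:derivativeFs12}--\eqref{eq:derivativeFtau} I would instead differentiate $F = \ln\det(I-\gamma\mathcal K_{\tac})$ directly in $s_1$, $s_2$, $\tau$, obtaining $\partial_{s_1}F = -\gamma\,\tr\big((I-\gamma\mathcal K_{\tac})^{-1}\partial_{s_1}\mathcal K_{\tac}\big)$ and likewise for the others. The kernel $\gamma K_{\tac}$ has the integrable form \eqref{eq:tildeKdef}, and $\partial_{s_1}$ acts on $\widehat M$ through the exponential factors $e^{\pm\theta_1(z)+\tau z}$ in \eqref{eq:asy:M}, whose $s_1$-dependence is the explicit term $2s_1(-z)^{1/2}$ in \eqref{def:theta1}. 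A trace identity of the type used throughout this circle of ideas (see \cite{BD02,DIKZ}) converts such a derivative of a Fredholm determinant with integrable kernel into a contour integral of $\tr\big((\partial_{s_1}M)M^{-1} \cdot (\text{resolvent data})\big)$, which collapses to a residue at infinity and hence to an entry of $X^{(1)}$; subtracting the ``free'' contribution (the piece already present before dressing by $Y$) produces $X^{(1)}_{13}-M^{(1)}_{13}$, explaining the precise shape of \eqref{eq:derivativeFs12}. For \eqref{eq:derivativeFs122} the same computation with $s_2$ in place of $s_1$ gives the tilded version by the definition \eqref{def:tildeX}. For \eqref{eq:derivativeFtau} the variable $\tau$ enters \emph{all four} exponentials $e^{\pm\theta_j(z)\pm\tau z}$ with signs $(+,-,+,-)$ on the diagonal, so the residue computation yields a signed sum of $(1,1)$-entries; the symmetry relations of Proposition \ref{prop:X1}, in particular \eqref{eq:symmX11}--\eqref{eq:symmX12} together with the corresponding identities \eqref{M11}--\eqref{M14} for $M^{(1)}$, are then used to repackage those four entries into the symmetric combination $-X^{(1)}_{11}-\widetilde X^{(1)}_{11}+\dot X^{(1)}_{11}+\dot{\widetilde X}^{(1)}_{11}$ and its $M$-counterpart.

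The main obstacle I anticipate is the bookkeeping in \eqref{eq:derivativeinsX-2}: one must track exactly how the rank-one matrices $\vec f(x)\vec h(x)^{\msf T}$ and the resolvent vectors $\vec F$, $\vec H$ transform under the piecewise definition \eqref{eq:YtoX}, verify that the limit is to be taken from $\Omega_2^{(s)}$ (so that the analytic factor is $\widehat M$ rather than $\widehat M$ times the triangular matrix appearing in $\Omega_5^{(s)}$), and confirm that only the $(i,j)$ with $i\in\{3,4\}$, $j\in\{1,2\}$ survive after contracting with the specific constant vectors in \eqref{def:fh}. Ensuring the spurious logarithmic singularities cancel — i.e.\ that $X(z)^{-1}X'(z)$ has the claimed finite limit despite each factor being $\Boh(\ln(z\mp s))$ — requires using the refined local expansions \eqref{eq:X-near-s}, \eqref{eq: Phi-expand-s} (and their $-s$ analogues) rather than just the rough bound in RH problem \ref{rhp:X}(d). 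The parameter-derivative formulas, by contrast, are more routine once the trace/residue identity is set up, the only subtlety being the correct assignment of signs and tildes, which the symmetry relations of Proposition \ref{prop:X1} handle cleanly.
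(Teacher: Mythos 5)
Your proposal is correct and follows essentially the paper's own route: for \eqref{eq:derivativeinsX-2} your conjugation of $Y^{-1}Y'$ by $\widehat M$ amounts to the paper's direct identification $\vec{F}(z)=X(z)(1,1,0,0)^{\msf T}$ and $\vec{H}(z)=\tfrac{\gamma}{2\pi \ii}X(z)^{-\msf T}(0,0,1,1)^{\msf T}$ in $\Omega_2^{(s)}$, which combined with \eqref{eq:resolventexpli} and \eqref{eq:derivatives} gives the formula. For the parameter derivatives you use exactly the paper's mechanism: the constant-jump deformation equations for $M$ (Lemma \ref{prop:M}), the resulting identity $\partial_{s_1}(\gamma K_{\tac})=-2\ii f_1(x)h_3(y)$, the trace formula giving $2\ii Y^{(1)}_{13}$ via \eqref{eq:Yexpli}, and then \eqref{def:X1} together with the symmetry relations \eqref{M11}--\eqref{M14} and \eqref{eq:symmX11}--\eqref{eq:symmX12} to produce the tilded and dotted combinations.
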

To prove Proposition \ref{prop:derivativeandX}, we need the following lemma.
\begin{lemma}\label{prop:M}
Let $M$ be the unique solution to the tacnode RH problem \ref{rhp:tac}, we have
\begin{align}
\frac{\partial M}{\partial s_{1}} &= -2\ii\left( 
E_{1,3}+zE_{3,1}+\left[M^{(1)}, E_{3,1}\right]\right) M, \label{eq:lax:M1}\\
\frac{\partial M}{\partial s_{2}} &= -2\ii\left( 
-E_{2,4}+zE_{4,2}+\left[M^{(1)}, E_{4,2}\right]\right) M, \label{eq:lax:M3}\\
\frac{\partial M}{\partial \tau} &= \left( \begin{pmatrix}
z & 0 & 0 & 0\\
0 & -z & 0 & 0\\
0 & 0 & z & 0\\
0 & 0 & 0 & -z
\end{pmatrix}+\left[M^{(1)}, \begin{pmatrix}
1 & 0 & 0 & 0\\
0 & -1 & 0 & 0\\
0 & 0 & 1 & 0\\
0 & 0 & 0 & -1
\end{pmatrix}\right]\right)M, \label{eq:lax:M2}
\end{align}
where $[A,B]$ denotes the commutation of two matrices, i.e., $[A,B]=AB-BA$.
\end{lemma}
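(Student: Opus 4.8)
The plan is to show that each of the three identities in Lemma~\ref{prop:M} follows from the standard ``deformation equation'' mechanism for Riemann--Hilbert problems with constant jumps. The key observation is that the jump matrices $J_k$ in RH problem~\ref{rhp:tac} are \emph{independent} of the parameters $s_1, s_2, \tau$; all the parameter dependence sits in the contour $\Gamma_M$ (which is fixed) and in the asymptotic normalization \eqref{eq:asy:M} through $\theta_1, \theta_2$ and the factor $e^{\pm\tau z}$. Consequently, for $\ast \in \{s_1, s_2, \tau\}$, the logarithmic derivative $\left(\partial_\ast M\right) M^{-1}$ has no jump across $\Gamma_M$: since $M_+ = M_- J_k$ with $J_k$ constant in $\ast$, one has $\partial_\ast M_+ = (\partial_\ast M_-) J_k$, hence $(\partial_\ast M_+)(M_+)^{-1} = (\partial_\ast M_-)(M_-)^{-1}$. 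Therefore $(\partial_\ast M) M^{-1}$ extends to an entire function of $z$ (the logarithmic singularity at $z=0$ is removable because $M$ is bounded there and the boundedness is preserved under $\partial_\ast$), and by Liouville's theorem it is a polynomial in $z$ determined entirely by its behavior as $z\to\infty$.

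The main computation, then, is to extract this polynomial from \eqref{eq:asy:M}. First I would write $M(z) = \bigl(I + M^{(1)} z^{-1} + \Boh(z^{-2})\bigr) D(z) A \, E_\ast(z)$, where $D(z) = \diag((-z)^{-1/4}, z^{-1/4}, (-z)^{1/4}, z^{1/4})$ and $E_\ast(z)$ is the exponential diagonal factor. Since $D$ and $A$ are independent of $s_1, s_2, \tau$, differentiating gives
\begin{align*}
(\partial_\ast M) M^{-1} &= \bigl(\partial_\ast M^{(1)}\bigr) z^{-1} \bigl(I + \Boh(z^{-1})\bigr) + \bigl(I + M^{(1)} z^{-1} + \cdots\bigr) D A (\partial_\ast E_\ast) E_\ast^{-1} A^{-1} D^{-1} \bigl(I + M^{(1)} z^{-1} + \cdots\bigr)^{-1}.
\end{align*}
The factor $\partial_\ast E_\ast \cdot E_\ast^{-1}$ is an explicit diagonal matrix: for $\ast = s_1$ it is $\diag(-2(-z)^{1/2}, 0, 2(-z)^{1/2}, 0)$ since $\theta_1$ contains $2s_1(-z)^{1/2}$; for $\ast = s_2$ it is $\diag(0, -2z^{1/2}, 0, 2z^{1/2})$; for $\ast = \tau$ it is $\diag(z, -z, z, -z)$. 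One then conjugates by $DA$ and tracks the powers of $z$: the leading $z^{1/2}$ (resp. $z$) term, after conjugation by $D$ which contributes $z^{\pm 1/4}$, produces the $z^1$ term of the polynomial, namely $z E_{3,1}$ for $s_1$, $z E_{4,2}$ for $s_2$ (the $E_{3,1}$, $E_{4,2}$ pattern is forced by which diagonal entries of $\partial_\ast E_\ast E_\ast^{-1}$ carry the $(-z)^{1/2}$ and which carry $+z^{1/2}$, combined with $A$ mixing entries $1\leftrightarrow 3$ and $2\leftrightarrow 4$), and the diagonal $\diag(z,-z,z,-z)$ for $\tau$. The $\Boh(z^0)$ term of the polynomial then picks up a contribution $[M^{(1)}, \cdot]$ from the cross terms between the $z^1$ piece and the $M^{(1)} z^{-1}$ correction, plus the constant pieces $E_{1,3}$ (for $s_1$) and $-E_{2,4}$ (for $s_2$) coming from the subleading conjugation; matching the $\Boh(z^{-1})$ terms would also re-express $\partial_\ast M^{(1)}$ but we only need the polynomial part. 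The overall factor $-2\ii$ in \eqref{eq:lax:M1}, \eqref{eq:lax:M3} comes from $\partial_{s_1}(2 s_1 (-z)^{1/2}) = 2(-z)^{1/2}$ together with the sign conventions in $\theta_1$ and the $\ii$ that appears when one simplifies $(-z)^{1/2}$ against $(-z)^{\pm 1/4}$ and the structure of $A$.

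I expect the main obstacle to be the bookkeeping in the conjugation by $DA$: one must carefully handle the branch cuts of $(-z)^{1/4}$ and $z^{1/4}$ (which disagree with those of $(-z)^{1/2}$, $z^{1/2}$) to confirm that the a priori multivalued-looking expression $D A (\partial_\ast E_\ast) E_\ast^{-1} A^{-1} D^{-1}$ is in fact a single-valued polynomial in $z$ — this is guaranteed abstractly by the jump argument above, but verifying it by direct computation requires care, and it is the step where the specific off-diagonal placements $E_{1,3}, E_{3,1}$ versus $E_{2,4}, E_{4,2}$ and the precise numerical constant $-2\ii$ get pinned down. A clean way to organize this is to note that $A^{-1} = \tfrac{1}{\sqrt 2}\begin{pmatrix} 1 & 0 & \ii & 0 \\ 0 & 1 & 0 & -\ii \\ \ii & 0 & 1 & 0 \\ 0 & -\ii & 0 & 1\end{pmatrix}$ and compute $A\, \diag(a,b,a,b)\, A^{-1}$ and $A\,\diag(c,d,-c,-d)\,A^{-1}$ symbolically once and for all; the first is again diagonal (entries $a,b,a,b$), while the second becomes off-diagonal with entries in positions $(1,3),(3,1)$ and $(2,4),(4,2)$, which is exactly the source of the $E_{3,1}$-type terms. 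Once these two $4\times 4$ identities are in hand, the three formulas follow by substitution and collecting powers of $z$, and one finishes by invoking Liouville and boundedness at the origin as above.
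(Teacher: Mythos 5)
Your proposal is correct and follows essentially the same route as the paper: the constancy of the jump matrices makes $(\partial_\ast M)M^{-1}$ entire (with the origin handled by boundedness of $M$), and the polynomial part is read off by differentiating the exponential factor in \eqref{eq:asy:M} and conjugating by $\diag((-z)^{-1/4},z^{-1/4},(-z)^{1/4},z^{1/4})\,A$, the $[M^{(1)},\cdot]$ terms arising from the cross terms with $M^{(1)}/z$ exactly as you describe. The paper performs the same expansion and simply keeps the polynomial terms in $z$, so no further comparison is needed.
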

\begin{proof}
Since the RH problem for $M$ has constant jumps, we obtain from the local behavior of $M$ near the origin that $(\partial M / \partial s_{1}) M^{-1}$ is entire.
As $z \to \infty$, we find from \eqref{eq:asy:M} that
\begin{align}
\frac{\partial M}{\partial s_{1}} M^{-1} &= \left( I+\frac{M^{(1)}}{z}+ \Boh(z^{-2}) \right) \diag \left((-z)^{-\frac14},z^{-\frac14},(-z)^{\frac14},z^{\frac14} \right)A \diag \left(-2(-z)^{\frac 12},0,2(-z)^{\frac 12},0 \right)
\nonumber  \\
& \quad \times A^{-1} \diag \left((-z)^{\frac14},z^{\frac14},(-z)^{-\frac14},z^{-\frac14} \right) \left( I-\frac{M^{(1)}}{z}+ \Boh(z^{-2}) \right) + \Boh(z^{-1}) \nonumber\\
&= -2\ii\left( E_{1,3}+zE_{3,1}+\left[M^{(1)}, E_{3,1}\right]\right) + \Boh(z^{-1})
\end{align}
Keeping only the polynomial terms in $z$, we obtain \eqref{eq:lax:M1}. A similar argument yields \eqref{eq:lax:M3} and \eqref{eq:lax:M2}.

This finishes the proof of Lemma \ref{prop:M}.
\end{proof}
\begin{proof}[Proof of Proposition \ref{prop:derivativeandX}]
For $z\in \Omega_2^{(s)}$, we see from \eqref{def:fh}, \eqref{def:FH2} and \eqref{eq:YtoX} that
\begin{equation}\label{eq:FHinX}
\vec{F}(z)=Y(z)\vec{f}(z)=Y(z)\widehat M(z)
\begin{pmatrix}
1
\\
1
\\
0
\\
0
\end{pmatrix}= X(z)
\begin{pmatrix}
1
\\
1
\\
0
\\
0
\end{pmatrix}
\end{equation}
and
\begin{align}
\vec{H}(z)&=Y(z)^{-\msf T}\vec{h}(z)=  X(z)^{-\msf T}\widehat M(z)^{\msf T} \cdot \frac{\gamma }{2 \pi \ii} \widehat M(z)^{-\msf T}
\begin{pmatrix}
0
\\
0
\\
1
\\
1
\end{pmatrix}
=\frac{\gamma }{2 \pi \ii}  X(z)^{-\msf T}
\begin{pmatrix}
0
\\
0
\\
1
\\
1
\end{pmatrix}.
\end{align}
Combining the above formulas and \eqref{eq:resolventexpli}, we obtain
\begin{equation} \label{eq: R-Xij}
	R(z,z) = \frac{\gamma}{2 \pi \ii}  \sum_{i=3}^4\sum_{j=1}^2 \left(X(z)^{-1}X'(z)\right)_{ij}, \qquad  z\in \Omega_2^{(s)}.
\end{equation}
This, together with \eqref{eq:derivatives}, gives us \eqref{eq:derivativeinsX-2}.

To show \eqref{eq:derivativeFs12}, we note from \eqref{def:fh} and \eqref{eq:lax:M1} that
\begin{align}
\frac{\partial \vec{f}}{\partial s_1}(x) = -2\ii\left( E_{1,3}+xE_{3,1}+\left[M^{(1)}, E_{3,1}\right]\right)\vec{f}(x),\\
\frac{\partial \vec{h}}{\partial s_1}(y) = 2\ii\left(  E_{1,3}+yE_{3,1}+\left[M^{(1)}, E_{3,1}\right]\right)^{\msf T}\vec{h}(y).
\end{align}
The above equations, together with \eqref{eq:tildeKdef}, imply that
\begin{align}
\frac{\partial}{\partial s_1} \left(\gamma K_\tac (x,y)\right)& = \frac{\frac{\partial \vec{f}^{\msf T}}{\partial s_1}(x)\vec{h}(y)+ \vec{f}(x)^{\msf T}\frac{\partial \vec{h}}{\partial s_1}(y)}{x-y} \nonumber\\
&=-2 \ii \vec{f}(x)^{\msf T} E_{1,3}\vec{h}(y) = -2 \ii f_1(x)h_3(y).
\end{align}
Thus, it is readily seen from \eqref{def:Fnotation} and \eqref{def:FH} that
\begin{align}
\frac{\partial}{\partial s_1}F(s;\gamma, r_1, r_2,s_1,s_2, \tau) &=\frac{\partial}{\partial s_1} \ln \det\left(I-\gamma  \mathcal K_{\tac}\right) = -\tr{\left((I - \gamma  \mathcal K_{\tac})^{-1} \frac{\partial}{\partial s_1} \left(\gamma \mathcal K_\tac \right)\right)}\nonumber\\
&=2\ii \int_{-s}^s F_1(v) h_3(v) \ud v
\end{align}
On the other hand, from \eqref{eq:Y-infty} and \eqref{eq:Yexpli} we have
\begin{align}
Y^{(1)} = \int_{-s}^s \vec{F}(v) \vec{h}(v)^{\msf T} \ud v.
\end{align}
A combination of the above two equations gives us
\begin{equation}
\frac{\partial}{\partial s_{1}} F(s;\gamma, r_1, r_2,s_1,s_2, \tau) = 2\ii Y^{(1)}_{13}.
\end{equation}
We thus obtain \eqref{eq:derivativeFs12} by applying \eqref{def:X1} to the above formula.

Through a similar calculation, we have 
\begin{align}
\frac{\partial}{\partial s_{2}} F(s;\gamma, r_1, r_2,s_1,s_2, \tau) &= 2\ii Y^{(1)}_{24},\\
\frac{\partial}{\partial \tau} F(s;\gamma, r_1, r_2,s_1,s_2, \tau) &= -Y^{(1)}_{11}+Y^{(1)}_{22}-Y^{(1)}_{33}+Y^{(1)}_{44}.
\end{align}
Substituting \eqref{def:X1} into the above equations and using the symmetric relations \eqref{M13}, \eqref{M11}, \eqref{eq:symmX11} and \eqref{eq:symmX12}, we arrive at \eqref{eq:derivativeFs122} and \eqref{eq:derivativeFtau}.

This completes the proof of Proposition \ref{prop:derivativeandX}.
\end{proof}

\section{Lax pair equations and differential identities for the Hamiltonian}\label{sec:Lax}
In this section, we will derive a Lax pair for $X(z;s)$ from RH problem \ref{rhp:X}. Several useful differential identities for the associated Hamiltonian will also be presented for later use.

\subsection{The Lax system for $X$}

\begin{proposition}\label{pro:lax}
For the matrix-valued function $X(z) = X(z;s)$ defined in \eqref{eq:YtoX}, we have
\begin{equation}\label{eq:lax}
\frac{\partial}{\partial z} X(z;s) = L(z;s) X(z;s), \qquad \frac{\partial}{\partial s} X(z;s) = U(z;s) X(z;s),
\end{equation}
where
\begin{equation}\label{eq:L}
L(z;s) =\ii(r_1 E_{3,1}- r_2 E_{4,2})z+A_0(s)+\frac{A_1(s)}{z-s}+\frac{A_2(s)}{z+s},
\end{equation}
and
\begin{equation}\label{eq:U}
U(z;s) = -\frac{A_1(s)}{z-s}+\frac{A_2(s)}{z+s}
\end{equation}
with the functions $A_k(s)$, $k=0,1,2$, given in \eqref{def:A0}--\eqref{def:A2}, respectively. Moreover, the functions $p_i (s)$ and $q_i(s)$, $i=1,\ldots,6$ in the definitions of $A_k(s)$ satisfy the equations \eqref{def:pq's}, \eqref{eq:sumpq} and
\begin{align}\label{def:q52}
p_3(s)q_1(s)-\widetilde p_4(s) \widetilde q_2(s) = -\ii r_1 q_5(s)-\frac{p_5(s)^2}{\ii r_1} + \ii r_2 q_6(s) \widetilde q_6(s)-\ii s_1.
\end{align}
\end{proposition}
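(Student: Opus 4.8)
The plan is to derive the Lax pair by the standard argument for Riemann--Hilbert problems with constant jumps, then read off the system of ODEs by matching Laurent expansions at the singular points. First I would observe that, since all the jump matrices $J_X(z)$ in RH problem \ref{rhp:X} are independent of $z$ and of $s$, the logarithmic derivatives $L(z;s):=\bigl(\partial_z X\bigr)X^{-1}$ and $U(z;s):=\bigl(\partial_s X\bigr)X^{-1}$ are single-valued and meromorphic in $z\in\mathbb{C}$, with possible singularities only at $z=\pm s$ and at $z=\infty$. The endpoint behavior \eqref{eq:X-near-s}--\eqref{eq:X-near--s} shows that $X$ has at worst logarithmic singularities at $z=\pm s$; a short computation with the explicit triangular prefactors there shows that $L$ has a simple pole at each of $z=\pm s$ (the $\ln$ terms contribute a simple pole upon differentiation and the rank-one structure survives conjugation), while $U$ has simple poles at $\pm s$ with no constant term because $\partial_s$ acting on the logarithms and on $\pm s$ produces exactly the two rank-one residues with opposite signs. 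The asymptotics \eqref{eq:asyX} at $z=\infty$, after conjugating the diagonal/$A$ factors as in the proof of Lemma \ref{prop:M}, show that $L(z;s)$ grows linearly, $L(z;s)=\ii(r_1E_{3,1}-r_2E_{4,2})z+A_0(s)+O(1/z)$, and that $U(z;s)=O(1/z)$. By Liouville's theorem this pins down the forms
\begin{equation*}
L(z;s)=\ii(r_1E_{3,1}-r_2E_{4,2})z+A_0(s)+\frac{A_1(s)}{z-s}+\frac{A_2(s)}{z+s},\qquad
U(z;s)=-\frac{A_1(s)}{z-s}+\frac{A_2(s)}{z+s},
\end{equation*}
with $A_1,A_2$ the residues of $L$ at $\pm s$ (the sign pattern in $U$ coming from $\partial_s(z\mp s)^{-1}=\pm(z\mp s)^{-2}$ balanced against the $z$-derivative, exactly as in the Jimbo--Miwa--Ueno theory).

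Next I would identify the residue matrices. Because $X$ near $z=\pm s$ factors as an analytic prefactor times an explicit unipotent matrix of rank-one type (the block $\begin{psmallmatrix}0&C\\0&0\end{psmallmatrix}$ with $C$ a multiple of the all-ones $2\times2$ block), the residues $A_1$ and $A_2$ are necessarily rank one, say $A_1(s)=\mathbf{q}^{(1)}(s)\,\mathbf{p}^{(1)}(s)^{\msf T}$ and $A_2(s)=\mathbf{q}^{(2)}(s)\,\mathbf{p}^{(2)}(s)^{\msf T}$ for column vectors depending on $s$. I then name the components of the first vector pair $\mathbf{q}^{(1)}=(q_1,\dots,q_4)^{\msf T}$, $\mathbf{p}^{(1)}=(p_1,\dots,p_4)^{\msf T}$; the symmetry relation \eqref{eq:symm} for $X$, together with $(-z)\mp s$ swapping the roles of the two endpoints, forces $A_2(s)$ to be the corresponding $\widetilde{\cdot}$-transform of $A_1$ with the conjugating matrices of \eqref{eq:symm} inserted, which is precisely the structure displayed in \eqref{def:A2} (hence the entries of $\mathbf{q}^{(2)},\mathbf{p}^{(2)}$ are $\widetilde q_2,\widetilde q_1,-\widetilde q_4,-\widetilde q_3$ and $\widetilde p_2,\widetilde p_1,-\widetilde p_4,-\widetilde p_3$). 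The entries of $A_0(s)$ are likewise determined: the $O(1)$ term of $(\partial_zX)X^{-1}$ at infinity is, as in Lemma \ref{prop:M}, a combination of the subleading coefficient $X^{(1)}$ commuted with fixed matrices plus the fixed matrices $s_1,s_2,\tau$-pieces coming from $\theta_1,\theta_2$ and the $e^{\pm\tau z}$ factors; I would introduce $p_5,p_6,q_5,q_6$ (and their tildes) as names for the relevant entries of $X^{(1)}$, producing exactly \eqref{def:A0}. The constraint \eqref{eq:sumpq}, $\sum_{k=1}^4 p_kq_k=0$, is the statement that $A_1$ (and $A_2$) is nilpotent, i.e. $\tr A_1=0$, which follows because the rank-one residue of $X$ at a logarithmic branch point of this unipotent type always has zero trace; and \eqref{def:q52} is one more scalar relation obtained by comparing the $z^0$ and $z^{-1}$ coefficients in $\partial_z X = L X$ near infinity (it expresses one combination of the $q_i$'s, $p_5$, and $q_5,q_6$ in terms of the parameters), exactly the kind of identity that drops out of the $O(1/z)$ matching.

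Finally, the system \eqref{def:pq's} is the compatibility (zero-curvature) condition $\partial_s L-\partial_z U+[L,U]=0$ of the Lax pair, evaluated entrywise. Concretely I would plug the explicit $L$ and $U$ into $\partial_s L-\partial_z U=[U,L]$, expand both sides as rational functions of $z$ with poles at $z=\pm s$ of orders up to two, and equate coefficients: the double-pole terms at $z=s$ give $\partial_s A_1 = [\,\text{(value of the regular part of }L\text{ at }s),A_1\,]$ up to the explicit $A_2/(2s)$ cross term from the other pole evaluated at $z=s$, and similarly at $z=-s$; the simple-pole and regular parts give the evolution of $A_0$. Writing $A_1=\mathbf q^{(1)}\mathbf p^{(1)\msf T}$ and peeling off the rank-one structure converts each matrix ODE into the vector ODEs $\mathbf q^{(1)\prime}=(\cdots)\mathbf q^{(1)}$, $\mathbf p^{(1)\prime}=-(\cdots)^{\msf T}\mathbf p^{(1)}$ listed for $q_1,\dots,q_4$ and $p_1,\dots,p_4$ in \eqref{def:pq's}, while the $A_0$-equation yields the four equations for $p_5,p_6,q_5,q_6$; the $\widetilde{\cdot}$-equations then come for free from the symmetry \eqref{eq:symm}. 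I expect the main obstacle to be purely bookkeeping: correctly extracting the value at $z=\pm s$ of the ``other'' simple-pole term (this is what produces the $A_2/(2s)$ and $A_1/(2s)$ pieces and the $1/s$ terms throughout \eqref{def:pq's}), and keeping the rank-one factorization consistent so that the ambiguity $\mathbf q^{(1)}\to c\,\mathbf q^{(1)}$, $\mathbf p^{(1)}\to c^{-1}\mathbf p^{(1)}$ is fixed by a normalization compatible with the symmetry relations — once the gauge is fixed, verifying that the resulting scalar ODEs are exactly \eqref{def:pq's} and that \eqref{eq:sumpq}, \eqref{def:q52} are preserved by the flow is a (lengthy but) routine check, and the Hamiltonian identity \eqref{pq} with $H$ as in \eqref{def:H} follows by directly differentiating $H$ and matching against \eqref{def:pq's}.
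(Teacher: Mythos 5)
Your overall route coincides with the paper's: constant jumps plus Liouville give the forms \eqref{eq:L}--\eqref{eq:U}, the residues at $z=\pm s$ are rank one by the local structure \eqref{eq:X-near-s}--\eqref{eq:X-near--s}, the symmetry \eqref{eq:symm} converts the residue at $-s$ into the tilde-transform of the one at $s$, $\tr A_1=0$ gives \eqref{eq:sumpq}, the $\Boh(z^{-1})$ matching at infinity gives \eqref{def:q52}, and the zero-curvature relation produces the differential equations. Two points in your sketch need repair, and one of them is the only genuinely non-routine step of the proof. First, a coefficient: the residue of $[U,L]$ at $z=s$ is $-[A_1,\,\ii(r_1E_{3,1}-r_2E_{4,2})s+A_0(s)+A_2(s)/s]$, with $A_2/s$ and not $A_2/(2s)$, because two cross contributions add — the pole of $U$ at $s$ against the value $A_2/(2s)$ of the regular part of $L$, and the value $A_2/(2s)$ of $U$ against the pole of $L$. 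The $2s$ denominators belong to the Hamiltonian \eqref{def:H} only, not to the Lax coefficients or to \eqref{def:pq's}; also, $A_1'$ comes from the simple-pole matching, since the double poles cancel identically on both sides.

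Second, and more importantly, the commutator equation $A_1'(s)=-[A_1(s),\cdot]$ determines the rank-one factors only up to the gauge $\vec{q}\mapsto c(s)\vec{q}$, $\vec{p}\mapsto c(s)^{-1}\vec{p}$ with an arbitrary scalar function $c(s)$, and no symmetry consideration removes this freedom; as written, your argument stops at the matrix level and does not yet yield the specific scalar equations \eqref{def:pq's} for the individual $p_k,q_k$. The paper closes this gap by fixing the vectors concretely through $X_{R,0}$ as in \eqref{def:qkpk} and then observing that $\partial_z+\partial_s$ annihilates the logarithmic singularity of $X$ at $z=s$, so that
\begin{equation*}
\frac{\ud}{\ud s}X_{R,0}(s)\cdot X_{R,0}(s)^{-1}=\lim_{z\to s}\bigl(L(z;s)+U(z;s)\bigr)=\ii(r_1E_{3,1}-r_2E_{4,2})s+A_0(s)+\frac{A_2(s)}{s};
\end{equation*}
this gives the $q_k'$ equations directly, after which the $p_k'$ equations follow from the commutator identity for $A_1'$. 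You would need this observation (or an equivalent explicit normalization argument) to legitimately peel the matrix ODE into the vector ODEs; the rest of your outline then matches the paper's proof.
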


\begin{proof}
The proof is based on the RH problem \ref{rhp:X} for $X$. Since the jump matrices for $X$ are all independent of $z$ and $s$, it's easily seen that
\begin{equation}
L(z;s)=\frac{\partial}{\partial z} X(z;s) X(z;s)^{-1}, \quad  U(z;s)=\frac{\partial}{\partial s} X(z;s) X(z;s)^{-1}
\end{equation}
are analytic in the complex $z$ plane except for possible isolated singularities at $z=\pm s$ and $z=\infty$. We next calculate the functions $L(z;s)$ and $U(z;s)$ one by one.

From the large $z$ behavior of $X$ given in \eqref{eq:asyX}, we have, as $z \to \infty$,
\begin{multline}\label{eq:inftyL}
L(z;s)=\ii(r_1 E_{3,1}- r_2 E_{4,2})z+\begin{pmatrix}
\tau & 0 & \ii r_1 & 0\\
0 & -\tau & 0 & \ii r_2\\
-\ii s_1 & 0 & \tau & 0\\
0 & -\ii s_2 & 0 & -\tau
\end{pmatrix}
\\
+ \left[X^{(1)},\ii(r_1 E_{3,1}- r_2 E_{4,2})\right]+\Boh(z^{-1}),
\end{multline}
where $X^{(1)}$ is given in \eqref{eq:asyX}. This gives us the leading term in \eqref{eq:L} and
\begin{equation}\label{A01}
A_0(s)=\begin{pmatrix}
\tau & 0 & \ii r_1 & 0\\
0 & -\tau & 0 & \ii r_2\\
-\ii s_1 & 0 & \tau & 0\\
0 & -\ii s_2 & 0 & -\tau
\end{pmatrix} + \left[X^{(1)},\ii(r_1 E_{3,1}- r_2 E_{4,2})\right].
\end{equation}
According to the symmetric relation of $X^{(1)}$ established in \eqref{eq:symmX11}, it follows that
\begin{equation}
\widetilde A_0(s)=-\begin{pmatrix}
J&0\\0&-J
\end{pmatrix}A_0(s)
\begin{pmatrix}
J&0\\0&-J
\end{pmatrix},
\end{equation}
where $J$ is given in \eqref{def-J}. Thus, if we define \begin{align}
p_5(s) & = \ii r_1 X^{(1)}_{13}, & p_6(s)&  = -\ii r_1 X^{(1)}_{43} -\ii r_2 X^{(1)}_{21},\label{def:p56}\\
q_5(s) & = X^{(1)}_{33}-X^{(1)}_{11}, & q_6(s)&  = X^{(1)}_{14},\label{def:q56}
\end{align}
the formula of $A_0(s)$ in \eqref{def:A0} follows directly by combining \eqref{A01}--\eqref{def:q56}.

If $z \to s$, it follows from \eqref{eq:X-near-s} that
\begin{equation}
L(z;s) \sim \frac{A_1(s)}{z-s},
\end{equation}
where
\begin{equation}\label{def:A1-1}
A_1(s) = -\frac{\gamma}{2 \pi \ii} X_{R,0}(s) \begin{pmatrix}
0 & 0 & 1 & 1\\
0 & 0 & 1 & 1\\
0 & 0 & 0 & 0\\
0 & 0 & 0 & 0
\end{pmatrix}X_{R,0}(s)^{-1}
\end{equation}
with $X_{R,0}(s)$ given in \eqref{eq: Phi-expand-s}. By setting
\begin{equation}\label{def:qkpk}
\begin{pmatrix}
q_1(s)\\q_2(s)\\q_3(s)\\q_4(s)
\end{pmatrix}=X_{R,0}(s) \begin{pmatrix}
1\\1\\0\\0
\end{pmatrix} \quad \textrm{and} \quad \begin{pmatrix}
p_1(s)\\p_2(s)\\p_3(s)\\p_4(s)
\end{pmatrix}=-\frac{\gamma}{2 \pi \ii}X_{R,0}(s)^{-\msf T} \begin{pmatrix}
0\\0\\1\\1
\end{pmatrix},
\end{equation}
we obtain the expression of $A_1(s)$ in \eqref{def:A1}. From \eqref{def:A1-1}, we also note that
\begin{equation}\label{eq:trA1}
\tr A_1(s) = \sum_{k=1}^4 q_k(s)p_k(s)=0.
\end{equation}
If $z \to -s$, we have from \eqref{eq:X-near--s} that
\begin{equation}
L(z;s) \sim \frac{A_2(s)}{z+s},
\end{equation}
where
\begin{equation}\label{def:A2-1}
A_2(s) = -\frac{\gamma}{2 \pi \ii} X_{L,0}(s) \begin{pmatrix}
0 & 0 & 1 & 1\\
0 & 0 & 1 & 1\\
0 & 0 & 0 & 0\\
0 & 0 & 0 & 0
\end{pmatrix}X_{L,0}(s)^{-1}
\end{equation}
with $X_{L,0}(s)$ given in \eqref{eq: Phi-expand--s}. On account of the symmetric relation \eqref{eq:symm}, it is readily seen that
\begin{equation}
X_{L,0}(s;r_1, r_2, s_1, s_2, \tau)=\begin{pmatrix}
J & 0\\
0 & -J
\end{pmatrix}X_{R,0}(s; r_2, r_1, s_2, s_1, \tau).
\end{equation}
This, together with \eqref{def:A1-1} and \eqref{def:A2-1}, implies that
\begin{equation}
A_2(s)=\begin{pmatrix}
J&0\\
0&-J
\end{pmatrix}
\widetilde A_1(s)\begin{pmatrix}
J&0\\
0&-J
\end{pmatrix},
\end{equation}
which leads to the expression for $A_2(s)$ in \eqref{def:A2}.

The computation of $U(z;s)$ is similar. It is easy to check that
\begin{equation}
U(z;s)=\Boh(z^{-1}), \qquad z \to \infty,
\end{equation}
and
\begin{equation}
U(z;s) \sim -\frac{A_1(s)}{z-s}, \quad z \to s; \quad U(z;s) \sim \frac{A_2(s)}{z+s}, \quad z \to -s,
\end{equation}
where $A_1(s)$ and $A_2(s)$ are given in \eqref{def:A1-1} and \eqref{def:A2-1}. The above equations imply the expression of $U(z;s)$ in \eqref{eq:U}.

It remains to establish the various relations satisfied by the functions $p_i (s)$ and $q_i(s)$, $i=1,\ldots,6$, in the definitions of $A_k(s)$, $k=0,1,2$. By \eqref{eq:trA1}, we have proved \eqref{eq:sumpq}.  The relation \eqref{def:q52} follows by computing the $(1,3)$ entry of the $\Boh(z^{-1})$ term on both sides of \eqref{eq:inftyL}. To show the differential equations in \eqref{def:pq's}, we recall that
the compatibility condition
\begin{equation}
\frac{\partial^2}{\partial z \partial s} X(z;s) = \frac{\partial^2}{\partial s \partial z} X(z;s)
\end{equation}
for the Lax pair \eqref{eq:lax} is the zero curvature relation
\begin{equation}\label{eq:compatibility}
\frac{\partial}{\partial s}L(s;z) - \frac{\partial}{\partial z}U(s;z) = A_0'(s)+\frac{A_1'(s)}{z-s}+\frac{A_2'(s)}{z+s}=[U,L].
\end{equation}
Inserting \eqref{eq:L} and \eqref{eq:U} into the above equation and taking $z \to \infty$, we get
\begin{equation}
A_0'(s) = \left[A_2(s)-A_1(s),\ii (r_1 E_{3,1}-r_2 E_{4,2})\right],
\end{equation}
which leads to
\begin{equation}
\begin{cases}
p_5'(s)=-\ii r_1(p_3(s)q_1(s)+\widetilde p_4(s)\widetilde q_2(s)),\\
p_6'(s)=\ii r_1 (p_3(s)q_4(s)-\widetilde p_4(s)\widetilde q_3(s))+\ii r_2 (p_1(s)q_2(s)-\widetilde p_2(s)\widetilde q_1(s)),\\
q_5'(s)=p_1(s)q_1(s)-p_3(s)q_3(s)-\widetilde p_2(s) \widetilde q_2(s)+\widetilde p_4(s) \widetilde q_4(s),\\
q_6'(s)=-p_4(s)q_1(s)-\widetilde p_3(s) \widetilde q_2(s).
\end{cases}
\end{equation}

If we calculate the residue at $z=s$ on the both sides of \eqref{eq:compatibility}, it is easily seen that
\begin{equation}\label{def:A1's}
A_1'(s)=-\left[A_1(s),\ii (r_1 E_{3,1}-r_2 E_{4,2})s+A_0(s)+\frac{A_2(s)}{s}\right].
\end{equation}
On account of \eqref{eq:lax}--\eqref{eq:U}, we obtain
\begin{align}
&\left(\frac{\partial}{\partial z}X(z;s)+\frac{\partial}{\partial s}X(z;s)\right)X(z;s)^{-1}=L(z;s)+U(z;s)\nonumber\\
&\sim \ii (r_1 E_{3,1}-r_2 E_{4,2})s+A_0(s)+\frac{A_2(s)}{s}, \quad z \to s.
\end{align}
On the other hand, substituting \eqref{eq:X-near-s} into the left hand side of the above equation gives us
\begin{equation}
\left(\frac{\partial}{\partial z}X(z;s)+\frac{\partial}{\partial s}X(z;s)\right)X(z;s)^{-1} \sim \frac{\ud}{\ud s} X_{R,0}(s) \cdot X_{R,0}(s)^{-1}, \quad z \to s.
\end{equation}
Therefore, we have from the above two formulas that
\begin{equation}
\frac{\ud}{\ud s} X_{R,0}(s)=\left(\ii (r_1 E_{3,1}-r_2 E_{4,2})s+A_0(s)+\frac{A_2(s)}{s}\right)X_{R,0}(s).
\end{equation}
Recall the definitions of $q_k(s)$, $k=1,\ldots,4$, given in \eqref{def:qkpk}, it is readily seen that
\begin{equation}\label{eq:q's}
\begin{pmatrix}
q_1'(s)\\q_2'(s)\\q_3'(s)\\q_4'(s)
\end{pmatrix}=\left(\ii (r_1 E_{3,1}-r_2 E_{4,2})s+A_0(s)+\frac{A_2(s)}{s}\right)\begin{pmatrix}
q_1(s)\\q_2(s)\\q_3(s)\\q_4(s)
\end{pmatrix}.
\end{equation}
It then follows from \eqref{def:A0}, \eqref{def:A2} and straightforward calculations that
\begin{equation}
 \left\{\begin{aligned}
  q_1'(s)&=p_5(s)q_1(s)-\ii r_2 q_2(s)q_6(s)+\ii r_1 q_3(s)+\tau q_1(s)\\
&\quad +\frac{\widetilde q_2(s)}{s}(\widetilde p_2(s)q_1(s)+\widetilde p_1(s)q_2(s) -\widetilde p_4(s) q_3(s) -\widetilde p_3(s)q_4(s)),\\
q_2'(s)&=\ii r_1 q_1(s)\widetilde{q}_6(s)-\widetilde{p}_5(s)q_2(s)+\ii r_2q_4(s)-\tau q_2(s)\\
&\quad+\frac{\widetilde q_1(s)}{s}(\widetilde p_2(s)q_1(s)+\widetilde p_1(s)q_2(s) -\widetilde p_4(s) q_3(s) -\widetilde p_3(s)q_4(s)),\\
q_3'(s)&= \ii r_1 sq_1(s)+\ii r_1 q_1(s)q_5(s) -\widetilde p_6(s) q_2(s)-p_5(s)q_3(s)-\ii r_1q_4(s)q_6(s)-\ii s_1 q_1(s)\\
&\quad+\tau q_3(s)-\frac{\widetilde q_4(s)}{s}(\widetilde p_2(s)q_1(s)+\widetilde p_1(s)q_2(s) -\widetilde p_4(s) q_3(s) -\widetilde p_3(s)q_4(s)),\\
q_4'(s)&=-\ii r_2sq_2(s)-p_6(s)q_1(s)+\ii r_2q_2(s)\widetilde q_5(s) + \ii r_2 q_3(s)\widetilde q_6(s)+\widetilde p_5(s)q_4(s)-\ii s_2 q_2(s)\\
&\quad-\tau q_4(s)-\frac{\widetilde q_3(s)}{s}(\widetilde p_2(s)q_1(s)+\widetilde p_1(s)q_2(s) -\widetilde p_4(s) q_3(s) -\widetilde p_3(s)q_4(s)).
 \end{aligned}\right.
\end{equation}
To show the derivatives of $p_i(s)$, $i=1,\ldots,4$, we see from \eqref{def:A1} and \eqref{def:A1's} that
\begin{align}
A_1'(s)=& \begin{pmatrix}
q_1'(s)\\q_2'(s)\\q_3'(s)\\q_4'(s)
\end{pmatrix}\begin{pmatrix}
p_1(s) & p_2(s)&p_3(s)&p_4(s)
\end{pmatrix}+ \begin{pmatrix}
q_1(s)\\q_2(s)\\q_3(s)\\q_4(s)
\end{pmatrix}\begin{pmatrix}
p_1'(s) & p_2'(s)&p_3'(s)&p_4'(s)
\end{pmatrix}\nonumber\\
=&-\left[A_1(s),\ii (r_1 E_{3,1}-r_2 E_{4,2})s+A_0(s)+\frac{A_2(s)}{s}\right]\nonumber\\
=&-\begin{pmatrix}
q_1(s)\\q_2(s)\\q_3(s)\\q_4(s)
\end{pmatrix}\begin{pmatrix}
p_1(s) & p_2(s)&p_3(s)&p_4(s)
\end{pmatrix}\left(\ii (r_1 E_{3,1}-r_2 E_{4,2})s+A_0(s)+\frac{A_2(s)}{s}\right)\nonumber\\
&+\left(\ii (r_1 E_{3,1}-r_2 E_{4,2})s+A_0(s)+\frac{A_2(s)}{s}\right)\begin{pmatrix}
q_1(s)\\q_2(s)\\q_3(s)\\q_4(s)
\end{pmatrix}\begin{pmatrix}
p_1(s) & p_2(s)&p_3(s)&p_4(s)
\end{pmatrix}
\end{align}
A combination of this formula and \eqref{eq:q's} gives us
\begin{align}
&\begin{pmatrix}
p_1'(s) & p_2'(s)&p_3'(s)&p_4'(s)
\end{pmatrix}\nonumber\\
=&-\begin{pmatrix}
p_1(s) & p_2(s)&p_3(s)&p_4(s)
\end{pmatrix}\left(\ii (r_1 E_{3,1}-r_2 E_{4,2})s+A_0(s)+\frac{A_2(s)}{s}\right),
\end{align}
or equivalently, by using \eqref{def:A0}, \eqref{def:A2},
\begin{equation}
\left\{
\begin{aligned}
p_1'(s)&=-\ii r_1 s p_3(s) -p_1(s)p_5(s)-\ii r_1 p_2(s)\widetilde q_6(s)-\ii r_1 p_3(s)q_5(s)+p_4(s)p_6(s)-\tau p_1(s)\\
&\quad+\ii s_1 p_3(s)-\frac{\widetilde p_2(s)}{s} (p_1(s) \widetilde q_2(s)+p_2(s) \widetilde q_1(s) - p_3(s) \widetilde q_4(s)-p_4(s) \widetilde q_3(s)),\\
p_2'(s)&=\ii r_2 s p_4(s) +\ii r_2 p_1(s) q_6(s)+p_2(s)\widetilde p_5(s) +p_3(s)\widetilde p_6(s)-\ii r_2 p_4(s)\widetilde q_5(s) +\tau p_2(s)\\
&\quad+\ii s_2 p_4(s)+\frac{\widetilde p_1(s)}{s} (p_1(s) \widetilde q_2(s)+p_2(s) \widetilde q_1(s) - p_3(s) \widetilde q_4(s)-p_4(s) \widetilde q_3(s)),\\
p_3'(s)&=-\ii r_1 p_1(s) +p_3(s)p_5(s)-\ii r_2p_4(s)\widetilde q_6(s) -\tau p_3(s)\\
&\quad+\frac{\widetilde p_4(s)}{s} (p_1(s) \widetilde q_2(s)+p_2(s) \widetilde q_1(s) - p_3(s) \widetilde q_4(s)-p_4(s) \widetilde q_3(s)),\\
p_4'(s) &= -\ii r_2 p_2(s) + \ii r_1 p_3(s)q_6(s) -p_4(s)\widetilde p_6(s)+\tau p_4(s)\\
&\quad+\frac{\widetilde p_3 (s)}{s} (p_1(s) \widetilde q_2(s)+p_2(s) \widetilde q_1(s) - p_3(s) \widetilde q_4(s)-p_4(s) \widetilde q_3(s)).
\end{aligned}\right.
\end{equation}

This completes the proof of Proposition \ref{pro:lax}.
\end{proof}

From the general theory of Jimbo-Miwa-Ueno \cite{JMU81}, we have the Hamiltonian associated with the Lax system \eqref{eq:lax} is given by
\begin{align}\label{eq:JMU}
H(s) &= \frac{\gamma}{2 \pi \ii} \tr{\left( \left(X_{L,1}(s)-X_{R,1}(s)\right)\begin{pmatrix}
0 & 0 & 1 & 1\\
0 & 0 & 1 & 1\\
0 & 0 & 0 & 0\\
0 & 0 & 0 & 0
\end{pmatrix}\right)}\nonumber\\
& = \frac{\gamma}{2 \pi \ii} \sum_{i=3}^4 \sum_{j=1}^2 \left(X_{L,1}(s)-X_{R,1}(s)\right)_{ij},
\end{align}
where $X_{R,1}(s)$ and $X_{L,1}(s)$ are given in \eqref{eq: Phi-expand-s} and \eqref{eq: Phi-expand--s}, respectively.
Taking $z \to s$ in the first equation of the Lax pair \eqref{eq:lax}, we obtain from \eqref{eq:L} and \eqref{eq:X-near-s} that the $\Boh(1)$ term gives
\begin{align}
X_{R,1}(s) &= \frac{\gamma}{2 \pi \ii}\left[X_{R,1}(s), \begin{pmatrix}
0 & 0 & 1 & 1\\
0 & 0 & 1 & 1\\
0 & 0 & 0 & 0\\
0 & 0 & 0 & 0
\end{pmatrix}\right] \nonumber\\
& \quad + X_{R,0}(s)^{-1} \left[\ii (r_1 E_{3,1}-r_2 E_{4,2})s + A_0(s)+\frac{A_2(s)}{2s}\right] X_{R,0}(s).
\end{align}
Similarly, by taking $z \to -s$ we see from \eqref{eq:X-near--s} that
\begin{align}
X_{L,1}(s) &= \frac{\gamma}{2 \pi \ii}\left[X_{L,1}(s), \begin{pmatrix}
0 & 0 & 1 & 1\\
0 & 0 & 1 & 1\\
0 & 0 & 0 & 0\\
0 & 0 & 0 & 0
\end{pmatrix}\right] \nonumber\\
& \quad + X_{L,0}(s)^{-1} \left[-\ii (r_1 E_{3,1}-r_2 E_{4,2})s + A_0(s)-\frac{A_1(s)}{2s}\right] X_{L,0}(s).
\end{align}
Inserting the above two equations into \eqref{eq:JMU} yields
\begin{align}
H(s) &= -\frac{\gamma}{2 \pi \ii} \begin{pmatrix}
0 & 0 & 1 & 1
\end{pmatrix}\left(X_{R,0}(s)^{-1} \left[\ii (r_1 E_{3,1}-r_2 E_{4,2})s + A_0(s)+\frac{A_2(s)}{2s}\right] X_{R,0}(s)\right.\nonumber\\
&\left.\quad+X_{L,0}(s)^{-1} \left[\ii (r_1 E_{3,1}-r_2 E_{4,2})s - A_0(s)+\frac{A_1(s)}{2s}\right] X_{L,0}(s)\right) \begin{pmatrix}
1\\1\\0\\0
\end{pmatrix}.
\end{align}
Recall the definitions of $A_k(s)$, $k=0, 1, 2$, and $q_i(s), p_i(s)$, $i=1, \ldots 4$, given in \eqref{def:A0}--\eqref{def:A2} and \eqref{def:qkpk}, we recover the definition of $H$ given in \eqref{def:H}, or equivalently,
\begin{align}\label{def:H2}
H(s)=&~s\left(\ii r_1(p_3(s)q_1(s)-\widetilde p_4(s) \widetilde q_2(s))+\ii r_2(\widetilde p_3(s) \widetilde q_1(s)-p_4(s)q_2(s))\right)+p_5(s)(p_1(s)q_1(s)\nonumber\\
&-p_3(s)q_3(s)-\widetilde p_2(s) \widetilde q_2(s)+\widetilde p_4(s) \widetilde q_4(s))+\widetilde p_5(s)(\widetilde p_1(s) \widetilde q_1(s)-\widetilde p_3(s) \widetilde q_3(s)-p_2(s)q_2(s)\nonumber\\
&+p_4(s)q_4(s))-p_6(s)(\widetilde p_3(s) \widetilde q_2(s) +p_4(s)q_1(s))-\widetilde p_6(s)(p_3(s)q_2(s)+\widetilde p_4(s) \widetilde q_1(s))\nonumber\\
&+\ii r_1 q_5(s)(p_3(s)q_1(s)+\widetilde p_4(s) \widetilde q_2(s)) +\ii r_2  \widetilde q_5(s)(\widetilde p_3(s) \widetilde q_1(s)+p_4(s)q_2(s))\nonumber\\
&+\ii r_1 q_6(s) (\widetilde p_4(s) \widetilde q_3(s)-p_3(s)q_4(s))+\ii r_2 q_6(s)(\widetilde p_2(s) \widetilde q_1(s) -p_1(s)q_2(s))\nonumber\\
&+\ii r_1 \widetilde q_6(s)(p_2(s)q_1(s)-\widetilde p_1(s) \widetilde q_2(s))+\ii r_2 \widetilde q_6(s)(p_4(s)q_3(s) -\widetilde p_3(s) \widetilde q_4(s))+\ii r_1(\widetilde p_2(s) \widetilde q_4(s)\nonumber\\
 &+p_1(s)q_3(s))+\ii r_2(p_2(s)q_4(s)+\widetilde p_1(s) \widetilde q_3(s)) +\tau (p_1(s)q_1(s)+ p_3(s)q_3(s)-p_2(s)q_2(s)\nonumber\\
 &-p_4(s)q_4(s)+\widetilde p_1(s) \widetilde q_1(s)+\widetilde p_3(s) \widetilde q_3(s)-\widetilde p_2(s) \widetilde q_2(s)-\widetilde p_4(s) \widetilde q_4(s))-\ii s_1(p_3(s)q_1(s)\nonumber\\
 &+\widetilde p_4(s) \widetilde q_2(s))-\ii s_2(\widetilde p_3(s) \widetilde q_1(s)+p_4(s)q_2(s))+\frac{1}{s}(\widetilde p_2(s) q_1(s) +\widetilde p_1(s) q_2(s)-\widetilde p_4(s)q_3(s)\nonumber\\
 &-\widetilde p_3(s)q_4(s))(p_2(s)\widetilde q_1(s)+p_1(s) \widetilde q_2(s)-p_4(s)\widetilde q_3(s)-p_3(s) \widetilde q_4(s)).
\end{align}

\subsection{Differential identities for the Hamiltonian}
\begin{proposition}\label{prop:H}
With the Hamiltonian $H$ defined in \eqref{def:H}, we have
\begin{align}\label{eq:differentialH}
\frac{\ud}{\ud s}H(s) &= \ii r_1(p_3(s)q_1(s)-\widetilde p_4(s) \widetilde q_2(s))+\ii r_2(\widetilde p_3(s) \widetilde q_1(s)-p_4(s)q_2(s))-\frac{1}{s^2}(\widetilde p_2(s) q_1(s) \widetilde p_1(s) q_2(s)\nonumber\\
&\quad-\widetilde p_4(s)q_3(s)-\widetilde p_3(s)q_4(s))(p_2(s)\widetilde q_1(s)+p_1(s) \widetilde q_2(s)-p_4(s)\widetilde q_3(s)-p_3(s) \widetilde q_4(s)),
\end{align}
and when $\tau = 0$,
\begin{align}\label{eq:differentialH1}
&\sum_{k=1}^6 \left(p_k(s)q'_k(s) +\widetilde p_k(s)\widetilde q'_k(s) \right) - H(s) \nonumber\\
&=H(s) - \frac 13 \frac{\ud}{\ud s} \left(2s H(s) + p_1(s)q_1(s) + p_2(s)q_2(s) + \widetilde p_1(s)\widetilde q_1(s) + \widetilde p_2(s)\widetilde q_2(s) - 2p_5(s)q_5(s)\right.\nonumber\\
&\quad-2\widetilde p_5(s)\widetilde q_5(s) - p_6(s)q_6(s)-\widetilde p_6(s)\widetilde q_6(s)+2\frac{s_1}{r_1} p_5(s) + 2 \frac{s_2}{r_2} \widetilde p_5(s)).
\end{align}
We also have the following differential identity with respect to the parameter $\gamma$:
\begin{align}\label{eq:differential:gamma}
\frac{\partial}{\partial \gamma}\left(\sum_{k=1}^6 \left(p_k(s)q'_k(s) +\widetilde p_k(s)\widetilde q'_k(s) \right) - H(s)\right) = \frac{\ud}{\ud s}\sum_{k=1}^6\left(p_k(s)\frac{\partial}{\partial \gamma}q_k(s) + \widetilde p_k(s)\frac{\partial}{\partial \gamma}\widetilde q_k(s)\right).
\end{align}
\end{proposition}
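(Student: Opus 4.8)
The plan is to deduce all three identities from the Hamiltonian structure \eqref{pq}, the explicit form \eqref{def:H2} of $H$, the system \eqref{def:pq's}, and the algebraic relations \eqref{eq:sumpq} and \eqref{def:q52}. The fact underlying everything is that the right-hand side of \eqref{def:H2} is invariant under the tilde operation (swap $r_1\leftrightarrow r_2$, $s_1\leftrightarrow s_2$, $p_k\leftrightarrow\widetilde p_k$, $q_k\leftrightarrow\widetilde q_k$), i.e.\ $H=\widetilde H$; combined with \eqref{pq} this gives $\partial H/\partial\widetilde p_k=\widetilde q_k'$ and $\partial H/\partial\widetilde q_k=-\widetilde p_k'$ (the tilde of $\partial H/\partial p_k=q_k'$ and $\partial H/\partial q_k=-p_k'$), so the full collection $(p_k,q_k,\widetilde p_k,\widetilde q_k)_{k=1}^6$ evolves as a Hamiltonian system with Hamiltonian $H$ on the invariant locus \eqref{eq:sumpq}.

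For \eqref{eq:differentialH}: expand $\ud H/\ud s$ by the chain rule as the explicit $s$-derivative of \eqref{def:H} plus $\sum_{k=1}^6\big(\partial_{p_k}H\,p_k'+\partial_{q_k}H\,q_k'+\partial_{\widetilde p_k}H\,\widetilde p_k'+\partial_{\widetilde q_k}H\,\widetilde q_k'\big)$; by the previous paragraph the last sum vanishes termwise, so $\ud H/\ud s$ equals the explicit $s$-derivative of \eqref{def:H}. The only explicit $s$ there sits in the two factors $\ii(r_1E_{3,1}-r_2E_{4,2})s$, whose differentiation produces the first line of \eqref{eq:differentialH}, and in the two $\tfrac{1}{2s}$-terms $A_1(s)/(2s)$, $A_2(s)/(2s)$; since these two terms coincide and their sum is the last term of \eqref{def:H2}, differentiating the explicit $1/s$ yields the final $-1/s^{2}$ contribution of \eqref{eq:differentialH}.

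For \eqref{eq:differential:gamma}: its left side is $\partial_\gamma$ of the ``Lagrangian'' $\mathcal L:=\sum_k(p_kq_k'+\widetilde p_k\widetilde q_k')-H$. Since $\gamma$ appears nowhere explicitly in \eqref{def:H2}, Hamilton's equations \eqref{pq} give $\partial_\gamma H=\sum_k(q_k'\,\partial_\gamma p_k-p_k'\,\partial_\gamma q_k+\widetilde q_k'\,\partial_\gamma\widetilde p_k-\widetilde p_k'\,\partial_\gamma\widetilde q_k)$; substituting this into $\partial_\gamma\mathcal L$ and interchanging $\partial_\gamma$ with $\ud/\ud s$ (legitimate because the special solution depends smoothly on $\gamma$, by the unique solvability and smooth parameter-dependence of the RH problem for $X$), the $\partial_\gamma p_k$- and $\partial_\gamma\widetilde p_k$-terms cancel and the remainder assembles into $\tfrac{\ud}{\ud s}\sum_k(p_k\,\partial_\gamma q_k+\widetilde p_k\,\partial_\gamma\widetilde q_k)$.

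The substantive identity is \eqref{eq:differentialH1}, whose source is a scaling symmetry present only when $\tau=0$. From \eqref{eq:asy:M} one gets, by uniqueness, $M(\mu z;r_i,s_i,\tau)=\diag(\mu^{-1/4},\mu^{-1/4},\mu^{1/4},\mu^{1/4})\,M(z;\mu^{3/2}r_i,\mu^{1/2}s_i,\mu\tau)$, hence $K_{\tac}(\mu x,\mu y;r_i,s_i,\tau)=\mu^{-1}K_{\tac}(x,y;\mu^{3/2}r_i,\mu^{1/2}s_i,\mu\tau)$; at $\tau=0$ this forces $X(\mu z;\mu s;r_i,s_i,0)=D_0(\mu)\,X(z;s;\mu^{3/2}r_i,\mu^{1/2}s_i,0)$ with $D_0(\mu)=\diag(\mu^{-1/4},\mu^{-1/4},\mu^{1/4},\mu^{1/4})$. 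Differentiating in $\mu$ at $\mu=1$ and using the Lax pair \eqref{eq:lax} gives $\big(\tfrac32\sum_i r_i\partial_{r_i}+\tfrac12\sum_i s_i\partial_{s_i}\big)X=\big(zL+sU+\tfrac14\diag(1,1,-1,-1)\big)X$, the right-hand factor being polynomial in $z$ (the two poles cancel), namely $\ii(r_1E_{3,1}-r_2E_{4,2})z^{2}+A_0(s)z+A_1(s)+A_2(s)+\tfrac14\diag(1,1,-1,-1)$. Reading off the action of this weighted Euler operator on $X_{R,0}$, $X_{L,0}$, $X^{(1)}$, and thence on the $p_k,q_k,\widetilde p_k,\widetilde q_k$ of \eqref{def:p56}--\eqref{def:q56} and \eqref{def:qkpk}, one finds in particular $\big(\tfrac32\sum_i r_i\partial_{r_i}+\tfrac12\sum_i s_i\partial_{s_i}\big)H=H+s\,\ud H/\ud s$, and $\Phi$ in \eqref{eq:differentialH1} is precisely the primitive of $3\big(2H-\sum_k(p_kq_k'+\widetilde p_k\widetilde q_k')\big)$ singled out by these weights, each summand of $\Phi$ being a scaling-weight-zero bilinear in the unknowns. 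With $\Phi$ thus identified, the most transparent write-up is to verify $\ud\Phi/\ud s=3\big(2H-\sum_k(p_kq_k'+\widetilde p_k\widetilde q_k')\big)$ by direct substitution of \eqref{def:pq's} at $\tau=0$, \eqref{eq:differentialH} and \eqref{def:q52}: the diagonal entries of $A_0$ cancel out of $\ud(p_kq_k)/\ud s$ by antisymmetry, the terms $\tfrac{2s_1}{r_1}p_5$ and $\tfrac{2s_2}{r_2}\widetilde p_5$ in $\Phi$ absorb --- via $p_5'=-\ii r_1(p_3q_1+\widetilde p_4\widetilde q_2)$ and its tilde --- all contributions carrying an explicit $s_1$ or $s_2$, and $\ud\big(p_5q_5+\widetilde p_5\widetilde q_5+p_6q_6+\widetilde p_6\widetilde q_6\big)/\ud s$ accounts for the part of $H$ linear in $p_5,p_6,\widetilde p_5,\widetilde p_6$. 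The principal obstacle is the volume of bookkeeping in \eqref{eq:differentialH1}: one manipulates a Hamiltonian with on the order of thirty terms and a system of twelve coupled ODEs, doubled under the tilde, so the genuine work is an organized, term-by-term check rather than any point of principle.
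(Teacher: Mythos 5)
Your proposal is correct, and for \eqref{eq:differential:gamma} it coincides exactly with the paper's argument (chain rule plus Hamilton's equations \eqref{pq}, then interchange of $\partial_\gamma$ and $\ud/\ud s$). Where you differ is in how the other two identities are handled, and your route to \eqref{eq:differentialH} is genuinely cleaner than the paper's: the paper obtains both \eqref{eq:differentialH} and \eqref{eq:differentialH1} by brute-force substitution of \eqref{def:pq's} into \eqref{def:H2} (``cumbersome calculations''), whereas you note that, since $H=\widetilde H$ and the tilded variables then also satisfy Hamilton's equations with respect to $H$ (a fact the paper itself invokes when it writes $\partial H/\partial \widetilde p_k=\widetilde q_k'$ in its proof of \eqref{eq:differential:gamma}), the chain-rule terms in $\ud H/\ud s$ cancel pairwise along the solution and only the explicit $s$-dependence of \eqref{def:H} survives; differentiating the $\ii(r_1E_{3,1}-r_2E_{4,2})s$ blocks yields the first line of \eqref{eq:differentialH}, and since the two $1/(2s)$ contributions in \eqref{def:H} are equal, with sum equal to the final $1/s$ term of \eqref{def:H2}, differentiating the explicit $1/s$ yields the $-1/s^2$ term (your computation in fact confirms that the first factor there should read $\widetilde p_2q_1+\widetilde p_1q_2-\widetilde p_4q_3-\widetilde p_3q_4$; a ``$+$'' is missing in the published display). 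For \eqref{eq:differentialH1}, your scaling relation $M(\mu z;r_i,s_i,\tau)=\diag(\mu^{-1/4},\mu^{-1/4},\mu^{1/4},\mu^{1/4})\,M(z;\mu^{3/2}r_i,\mu^{1/2}s_i,\mu\tau)$, which at $\tau=0$ gives $sH=\bigl(\tfrac32\sum_i r_i\partial_{r_i}+\tfrac12\sum_i s_i\partial_{s_i}\bigr)F$, is a genuine addition in that it motivates the otherwise unexplained combination appearing under $\ud/\ud s$ in \eqref{eq:differentialH1}; but, as you acknowledge, the actual proof remains the organized term-by-term verification of $\ud/\ud s$ of that combination being $3\bigl(2H-\sum_k(p_kq_k'+\widetilde p_k\widetilde q_k')\bigr)$ via \eqref{def:pq's} (with \eqref{eq:sumpq}, \eqref{def:q52} available), which you outline plausibly but do not carry out --- exactly the ledger the paper also leaves to the reader, so this is not a gap relative to the published argument.
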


\begin{proof}
The differential indentities \eqref{eq:differentialH} and \eqref{eq:differentialH1} follows directly from \eqref{def:H2}, \eqref{def:pq's} and cumbersome calculations. To see the differential identity with respect to the parameter $\gamma$, we have from \eqref{pq} that
\begin{align}
\frac{\partial}{\partial \gamma}H(s)& = \sum_{k=1}^6\left(\frac{\partial H}{\partial p_k} \frac{\partial}{\partial \gamma} p_k(s) + \frac{\partial H}{\partial q_k} \frac{\partial}{\partial \gamma}q_k(s) +\frac{\partial H}{\partial \widetilde p_k} \frac{\partial}{\partial \gamma} \widetilde p_k(s)+\frac{\partial H}{\partial \widetilde q_k} \frac{\partial}{\partial \gamma} \widetilde q_k(s)\right)\nonumber\\
&=\sum_{k=1}^6\left(q'_k(s) \frac{\partial}{\partial \gamma} p_k(s)  -p'_k(s) \frac{\partial}{\partial \gamma}q_k(s) +\widetilde q'_k(s) \frac{\partial}{\partial \gamma} \widetilde p_k(s)-\widetilde p'_k(s) \frac{\partial}{\partial \gamma} \widetilde q_k(s)\right),
\end{align}
which leads to \eqref{eq:differential:gamma}.

This completes the proof of Proposition \ref{prop:H}.
\end{proof}


\section{Asymptotic analysis of the RH problem for $X$ as $s \to +\infty$ with $\gamma=1$}\label{sec:AsyX1}

In this section, we will analyze the RH problem for $X$ as $s \to +\infty$ with $\gamma=1$. In this case, it is readily seen from \eqref{def:JX} that $X$ has no jump on the interval $(-s,s)$.
	
\subsection{First transformation: $X \to T$}
This transformation is a rescaling of the RH problem for $X$, which is defined by
\begin{equation}\label{def:PhiToT}
T(z)= \diag \left( s^{\frac14},s^{\frac14}, s^{-\frac14},s^{-\frac14} \right) X(sz).
\end{equation}
In view of RH problem \ref{rhp:X} for $X$, it is readily seen that $T$ satisfies the following RH problem.

\begin{rhp}\label{rhp:T}
\hfill
\begin{enumerate}
\item[\rm (a)] $T(z)$ is defined and analytic in $\mathbb{C}\setminus \{\Gamma_T \cup \{-1\} \cup \{1\} \}$, where
\begin{equation}\label{def:gammaT}
\Gamma_T:=\cup^5_{j=0}\Gamma_j^{(1)},
\end{equation}
and where the contours $\Gamma_j^{(1)}$, $j=0,1,\ldots,5$, are defined in \eqref{def:Gammajs} with $s=1$.

\item[\rm (b)] For $z\in \Gamma_T$, $T(z)$ satisfies the jump condition
\begin{equation}\label{eq:T-jump}
 T_+(z)=T_-(z)J_T(z),
\end{equation}
where
\begin{equation}\label{def:JT}
J_T(z):=\left\{
 \begin{array}{ll}
          \begin{pmatrix}0&0&1&0\\0&1&0&0\\-1&0&0&0\\0&0&0&1 \end{pmatrix}, & \qquad \hbox{$z\in \Gamma_0^{(1)}$,} \\
          I-E_{2,1}+E_{3,1}+E_{3,4},  & \qquad  \hbox{$z\in \Gamma_1^{(1)}$,} \\
          I-E_{1,2}+E_{4,2}+E_{4,3},  & \qquad \hbox{$z\in \Gamma_2^{(1)}$,} \\
          \begin{pmatrix}1&0&0&0\\0&0&0&1\\0&0&1&0\\0&-1&0&0 \end{pmatrix}, & \qquad  \hbox{$z\in \Gamma_3^{(1)}$,} \\
          I+E_{1,2}+E_{4,2}-E_{4,3}, & \qquad  \hbox{$z\in \Gamma_4^{(1)}$,} \\
          I+E_{2,1}+E_{3,1}-E_{3,4}, & \qquad  \hbox{$z\in \Gamma_5^{(1)}$.} \\
        \end{array}
      \right.
 \end{equation}

\item[\rm (c)]As $z \to \infty$ with $z\in \mathbb{C} \setminus \Gamma_T$, we have
\begin{align}\label{eq:asyT}
T(z)&=\left( I+\frac{T^{(1)}}{z}+ \Boh(z^{-2}) \right) \diag \left((-z)^{-\frac14},z^{-\frac14},(-z)^{\frac14},z^{\frac14} \right)
\nonumber  \\
& \quad \times A \diag \left(
e^{-s^{\frac32}\what\theta_1(z;s)+\tau s z}, e^{-s^{\frac32}\what\theta_2(z;s)- \tau s z}, e^{s^{\frac32}\what\theta_1(z;s)+\tau
s z},e^{s^{\frac32}\what\theta_2(z;s)- \tau s z} \right),
\end{align}
where $T^{(1)}$ is independent of $z$ and $A$ is defined in \eqref{def:A} and
\begin{align}\label{def:wtilthetai}
\what\theta_1(z)&= \frac23 r_1(-z)^{\frac 32} + \frac{2 s_1}{s} (-z)^{\frac 12}, \qquad z \in \mathbb{C} \setminus [0, \infty),\\
\what\theta_2(z)&=\frac23 r_2z^{\frac 32} +\frac{2 s_2}{s} z^{\frac 12}, \qquad z \in \mathbb{C} \setminus (-\infty, 0].
\end{align}

\item[\rm (d)]
As $z \to  \pm 1$, we have $T(z)=\Boh(\ln(z \mp 1))$.
\end{enumerate}
\end{rhp}

\subsection{Second transformation: $T \to S$}
In this transformation we partially normalize RH problem \ref{rhp:T} for $T$ at infinity. For this purpose, we introduce the following two $g$-functions:
\begin{align}
g_1(z)&=\frac23 r_1(1-z)^{\frac 32}+\left(-r_1+\frac{2s_1}{s}\right)(1-z)^{\frac12}, \qquad z\in \mathbb{C} \setminus [1,+\infty),
\label{def:g1} \\
g_2(z)&=\frac23 r_2(z+1)^{\frac 32}+\left(-r_2+\frac{2s_2}{s}\right)(z+1)^{\frac12}, \qquad z\in \mathbb{C} \setminus (-\infty,-1].
\label{def:g2}
\end{align}
As $z\to \infty$, it is readily seen that
\begin{align}
g_1(z)&=\what\theta_{1}(z;s)+\left(-\frac{r_1}{4}+\frac{s_1}{s}\right)(-z)^{-\frac12}+\Boh(z^{-\frac32}), \label{eq:asyg1}
\\
g_2(z)&=\what\theta_{2}(z;s)+\left(-\frac{r_2}{4}+\frac{s_2}{s}\right)z^{-\frac12}+\Boh(z^{-\frac32}), \label{eq:asyg2}
\end{align}
where $\what \theta_{i}(z;s)$, $i=1,2$, are defined in \eqref{def:wtilthetai}. The second transformation is set to be
\begin{align}\label{def:TtoS}
S(z)&=\left(I+s^{\frac32}\left(-\frac{r_1}{4}+\frac{s_1}{s}\right)\ii E_{3,1}-s^{\frac32}\left(-\frac{r_2}{4}+\frac{s_2}{s}\right)\ii E_{4,2}\right)T(z)
\nonumber
\\
&\quad \times \diag \left(
e^{s^{\frac 32}g_1(z)-\tau s z}, e^{s^{\frac 32}g_2(z)+ \tau s z}, e^{-s^{\frac 32}g_1(z)-\tau
s z}, e^{-s^{\frac 32}g_2(z) + \tau s z} \right).
\end{align}
Then, $S$ satisfies the following RH problem.

\begin{proposition}
The matrix-valued function $S$ defined in \eqref{def:TtoS} has the following properties:
\begin{enumerate}
\item[\rm (a)] $S(z)$ is defined and analytic in $\mathbb{C}\setminus \{\Gamma_T \cup \{-1\} \cup \{1\} \}$, where $\Gamma_T$ is defined in \eqref{def:gammaT}.

\item[\rm (b)] For $z\in \Gamma_T$, $S(z)$ satisfies the jump condition
\begin{equation}\label{eq:S-jump}
 S_+(z)=S_-(z)J_S(z),
\end{equation}
where
\begin{equation}\label{def:JS}
J_S(z):=\left\{
 \begin{array}{ll}
          \begin{pmatrix}0&0&1&0\\0&1&0&0\\-1&0&0&0\\0&0&0&1 \end{pmatrix}, & \qquad \hbox{$z\in \Gamma_0^{(1)}$,} \\
          I-e^{s^{\frac32}(g_1(z)-g_2(z))-2\tau s z}E_{2,1}+e^{2s^{\frac32}g_1(z)}E_{3,1}
          \\
          ~ +e^{s^{\frac32}(g_1(z)-g_2(z))+2\tau s z} E_{3,4},  & \qquad  \hbox{$z\in \Gamma_1^{(1)}$,} \\
          I- e^{-s^{\frac32}(g_1(z)-g_2(z))+2\tau s z} E_{1,2}+ e^{2s^{\frac32}g_2(z)} E_{4,2}
          \\
          ~ + e^{-s^{\frac32}(g_1(z)-g_2(z))-2\tau s z} E_{4,3},  & \qquad \hbox{$z\in \Gamma_2^{(1)}$,} \\
          \begin{pmatrix}1&0&0&0\\0&0&0&1\\0&0&1&0\\0&-1&0&0 \end{pmatrix}, & \qquad  \hbox{$z\in \Gamma_3^{(1)}$,} \\
          I+ e^{-s^{\frac32}(g_1(z)-g_2(z))+2\tau s z} E_{1,2}+ e^{2s^{\frac32}g_2(z)}E_{4,2}
          \\
          ~ - e^{-s^{\frac32}(g_1(z)-g_2(z))-2\tau s z}  E_{4,3}, & \qquad  \hbox{$z\in \Gamma_4^{(1)}$,} \\
          I+e^{s^{\frac32}(g_1(z)-g_2(z))-2\tau s z} E_{2,1}+e^{2s^{\frac32}g_1(z)} E_{3,1}
          \\
          ~ -e^{s^{\frac32}(g_1(z)-g_2(z))+2\tau s z}E_{3,4}, & \qquad  \hbox{$z\in \Gamma_5^{(1)}$.} \\
        \end{array}
      \right.
 \end{equation}

\item[\rm (c)]As $z \to \infty$ with $z\in \mathbb{C} \setminus \Gamma_T$, we have
\begin{align}\label{eq:asyS}
S(z)=\left( I+\frac{S^{(1)}}{z}+ \Boh(z^{-2}) \right) \diag \left((-z)^{-\frac14},z^{-\frac14},(-z)^{\frac14},z^{\frac14} \right)A,
\end{align}
where $S^{(1)}$ is independent of $z$ and $A$ is defined in \eqref{def:A}.

\item[\rm (d)]
As $z \to  \pm 1$, we have $S(z)=\Boh(\ln(z \mp 1))$.
\end{enumerate}
\end{proposition}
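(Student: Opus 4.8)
I would establish the statement by a direct transformation argument, reading off each property of $S$ from the corresponding one for $T$ in RH problem \ref{rhp:T}. It is convenient to write $S(z)=P\,T(z)\,D_g(z)$, where $P:=I+s^{\frac32}(-\tfrac{r_1}{4}+\tfrac{s_1}{s})\ii E_{3,1}-s^{\frac32}(-\tfrac{r_2}{4}+\tfrac{s_2}{s})\ii E_{4,2}$ is independent of $z$ and invertible, and $D_g(z):=\diag\bigl(e^{s^{\frac32}g_1(z)-\tau sz},\,e^{s^{\frac32}g_2(z)+\tau sz},\,e^{-s^{\frac32}g_1(z)-\tau sz},\,e^{-s^{\frac32}g_2(z)+\tau sz}\bigr)$. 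Items (a) and (d) are then immediate: $g_1$ is analytic off $[1,+\infty)$ and $g_2$ off $(-\infty,-1]$, so $D_g$ creates no singularities beyond $\Gamma_T\cup\{-1\}\cup\{1\}$, and near $z=1$ (resp.\ $z=-1$) the matrix $D_g$ is analytic and invertible since $g_1(z)\to0$ (resp.\ $g_2(z)\to0$) while the other $g$-function is analytic there; hence $S$ inherits analyticity off $\Gamma_T\cup\{-1\}\cup\{1\}$ and the bound $S(z)=\Boh(\ln(z\mp1))$ as $z\to\pm1$ from $T$.

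For item (b), from $S=PTD_g$ and $T_+=T_-J_T$ one gets $S_+=S_-\bigl(D_{g,-}^{-1}J_TD_{g,+}\bigr)$ on every ray, and it remains to evaluate this. On $\Gamma_1^{(1)},\Gamma_2^{(1)},\Gamma_4^{(1)},\Gamma_5^{(1)}$ the factor $D_g$ has no jump, so $J_S=D_g^{-1}J_TD_g$; since each $J_T$ there is $I$ plus a sum of elementary matrices $E_{j,k}$, conjugation by the diagonal $D_g$ multiplies the $(j,k)$-entry by $(D_g)_{kk}/(D_g)_{jj}$, which is exactly the exponential weight recorded in \eqref{def:JS}. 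On $\Gamma_0^{(1)}=(1,+\infty)$ and $\Gamma_3^{(1)}=(-\infty,-1)$ one uses in addition that $g_{1,+}=-g_{1,-}$ on $(1,+\infty)$ and $g_{2,+}=-g_{2,-}$ on $(-\infty,-1)$ (the cuts of $g_1,g_2$ lie along these rays); a short computation shows the exponential weights then cancel and $J_S=J_T$ there, consistently with \eqref{def:JS}. The factor $P$ plays no role in the jumps.

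For item (c), insert \eqref{eq:asyT}. The triangular exponential factor in \eqref{eq:asyT} times $D_g$ equals $\diag(f_1,f_2,f_1^{-1},f_2^{-1})$ with $f_j=\exp\!\bigl(s^{\frac32}(g_j-\what\theta_j)\bigr)$, and \eqref{eq:asyg1}--\eqref{eq:asyg2} give $s^{\frac32}(g_1-\what\theta_1)=s^{\frac32}(-\tfrac{r_1}{4}+\tfrac{s_1}{s})(-z)^{-\frac12}+\Boh(z^{-\frac32})$ and similarly for $g_2$, so $\diag(f_1,f_2,f_1^{-1},f_2^{-1})=I+\Delta(z)$ with $\Delta(z)=\Boh(z^{-\frac12})$. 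Commuting $I+\Delta$ to the left of $\diag\bigl((-z)^{-\frac14},z^{-\frac14},(-z)^{\frac14},z^{\frac14}\bigr)A$ (i.e.\ conjugating by $\diag(\cdots)\,A$), the sub-diagonal entries of the resulting correction acquire a factor of order $z^{\frac12}$, so a priori there is an $\Boh(1)$ contribution; one checks that, thanks to $(f_j-1)+(f_j^{-1}-1)=\Boh(z^{-1})$, this $\Boh(1)$ matrix is precisely $-s^{\frac32}(-\tfrac{r_1}{4}+\tfrac{s_1}{s})\ii E_{3,1}+s^{\frac32}(-\tfrac{r_2}{4}+\tfrac{s_2}{s})\ii E_{4,2}$ up to $\Boh(z^{-1})$, and the remaining correction is $\Boh(z^{-1})$. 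Left multiplication by $P$ annihilates this $\Boh(1)$ term, leaving $S(z)=\bigl(I+S^{(1)}/z+\Boh(z^{-2})\bigr)\diag\bigl((-z)^{-\frac14},z^{-\frac14},(-z)^{\frac14},z^{\frac14}\bigr)A$, which is \eqref{eq:asyS}.

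The genuinely non-routine point is this last one: one must verify that the constant matrix generated when partially normalizing $T$ at infinity is exactly the matrix subtracted off by $P$, which is precisely why the coefficients $-\tfrac{r_i}{4}+\tfrac{s_i}{s}$ and the positions $E_{3,1}$, $E_{4,2}$ in the prefactor of \eqref{def:TtoS} are chosen as they are. Items (a), (b) and (d) are bookkeeping once $S$ is written in the factored form above.
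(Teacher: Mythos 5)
Your proposal is correct and follows essentially the same route as the paper: items (a), (b), (d) by direct bookkeeping from $S=P\,T\,D_g$ using $g_{1,+}+g_{1,-}=0$ on $[1,\infty)$ and $g_{2,+}+g_{2,-}=0$ on $(-\infty,-1]$, and item (c) by inserting \eqref{eq:asyT} together with \eqref{eq:asyg1}--\eqref{eq:asyg2} and checking that the $\Boh(1)$ matrix produced at the $E_{3,1}$, $E_{4,2}$ positions is exactly cancelled by the constant prefactor in \eqref{def:TtoS}, which is the computation displayed in the paper's proof. Only two cosmetic slips: the exponential factor you conjugate by is diagonal (not ``triangular''), and near $z=\pm1$ the factor $D_g$ is merely bounded with bounded inverse (it has a branch point there), which is all that is needed for $S(z)=\Boh(\ln(z\mp1))$.
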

\begin{proof}
All the items follow directly from \eqref{def:TtoS} and RH problem \ref{rhp:T} for $T$. In particular, to check the jump condition of $S$ on $\mathbb{R}$, we need the facts that
\begin{align*}
g_{1,+}(x)+g_{1,-}(x)&=0, \qquad x \in [1,+\infty),
\\
g_{2,+}(x)+g_{2,-}(x)&=0, \qquad x \in (-\infty,1].
\end{align*}

To establish the large $z$ behavior of $S$ shown in item (c), we observe from \eqref{eq:asyT}, \eqref{eq:asyg1} and \eqref{eq:asyg2} that, as $z\to \infty$,
\begin{align}
& T(z)\diag \left(
e^{s^{\frac 32}g_1(z)-\tau s z}, e^{s^{\frac 32}g_2(z)+ \tau s z}, e^{-s^{\frac 32}g_1(z)-\tau
s z}, e^{-s^{\frac 32}g_2(z) + \tau s z} \right)
\nonumber
\\
& = \left( I+ \Boh(z^{-1}) \right) \diag \left((-z)^{-\frac14},z^{-\frac14},(-z)^{\frac14},z^{\frac14} \right)A
\nonumber
\\
& \quad \times \Big (I+ s^{\frac32}\left(-\frac{r_1}{4}+\frac{s_1}{s}\right)(-z)^{-\frac12}E_{1,1} + s^{\frac32}\left(-\frac{r_2}{4}+\frac{s_2}{s}\right)z^{\frac12}E_{2,2}
\nonumber
\\
&\quad \quad -s^{\frac32}\left(-\frac{r_1}{4}+\frac{s_1}{s}\right)(-z)^{-\frac12}E_{3,3} -s^{\frac32}\left(-\frac{r_2}{4}+\frac{s_2}{s}\right)z^{-\frac12}E_{4,4}+\Boh (z^{-\frac32}) \Big).
\label{eq:Texp}
\end{align}
By a direct calculation, it follows that
\begin{align}
&\diag \left((-z)^{-\frac14},z^{-\frac14},(-z)^{\frac14},z^{\frac14} \right)A \Big (I+ s^{\frac32}\left(-\frac{r_1}{4}+\frac{s_1}{s}\right)(-z)^{-\frac12}E_{1,1}
\nonumber
\\
& \quad + s^{\frac32}\left(-\frac{r_2}{4}+\frac{s_2}{s}\right)z^{-\frac12}E_{2,2}
-s^{\frac32}\left(-\frac{r_1}{4}+\frac{s_1}{s}\right)(-z)^{-\frac12}E_{3,3} -s^{\frac32}\left(-\frac{r_2}{4}+\frac{s_2}{s}\right)z^{-\frac12}E_{4,4} \Big)
\nonumber
\\
&=\left(I-s^{\frac32}\left(-\frac{r_1}{4}+\frac{s_1}{s}\right)\ii E_{3,1}+s^{\frac32}\left(-\frac{r_2}{4}+\frac{s_2}{s}\right)\ii E_{4,2}
+\Boh(z^{-1})\right)
\nonumber
\\
&\quad \times \diag \left((-z)^{-\frac14},z^{-\frac14},(-z)^{\frac14},z^{\frac14} \right)A.
\end{align}
This, together with \eqref{def:TtoS} and \eqref{eq:Texp}, gives us \eqref{eq:asyS}.
\end{proof}

\subsection{Global parametrix}
As $s\to +\infty$, it comes out that the jump matrix $J_S(z)$ of $S$ given in \eqref{def:JS} tends to the identity matrix exponentially fast except for $z\in \Gamma_0^{(1)} \cup \Gamma_3^{(1)}$. Indeed, by \eqref{def:g1} and \eqref{def:g2}, it is readily seen that as $z \to \infty$,
\begin{equation}\label{eq:largezg1}
\Re g_1(z) \sim \Re \left( \frac23 r_1(1-z)^{\frac 32} \right)\left\{
                                                                \begin{array}{ll}
                                                                  <0, & \hbox{$\arg (z-1) \in (0,\frac{2\pi}{3})\cup (\frac{4\pi}{3},2\pi)$,} \\
                                                                  >0, & \hbox{$\arg (z-1) \in (\frac{2\pi}{3},\frac{4\pi}{3})$,}
                                                                \end{array}
                                                              \right.
\end{equation}
and
\begin{equation}
\Re g_2(z) \sim \Re \left( \frac23 r_2(z+1)^{\frac 32} \right)\left\{
                                                                \begin{array}{ll}
                                                                  <0, & \hbox{$\arg (z+1) \in (\frac{\pi}{3},\pi)\cup (-\pi,-\frac{\pi}{3})$,} \\
                                                                  >0, & \hbox{$\arg (z+1) \in (-\frac{\pi}{3},\frac{\pi}{3})$,}
                                                                \end{array}
                                                              \right.
\end{equation}
where we have made use of the fact that $r_i > 0$, $i=1, 2$.

Moreover, for large positive $s$, we have
\begin{align}
g_1(z)-g_2(z) &= -\frac{\sqrt{2}}{3}r_2+o(1), \qquad z \to 1,
\\
g_1(z)-g_2(z) &= \frac{\sqrt{2}}{3}r_1+o(1), \qquad z \to -1. \label{eq:g1-g2}
\end{align}
A combination of \eqref{eq:largezg1}--\eqref{eq:g1-g2} and \eqref{def:JS} implies that, by deforming the contours if necessary, we may assume that $J_S(z) \to I$ as $s\to +\infty$ for $z\in \Gamma_T\setminus (\Gamma_0^{(1)} \cup \Gamma_3^{(1)})$. As a consequence, away from the points $\pm 1$, we expect that $S$ should be well approximated by the following global parametrix.
\begin{rhp}
\hfill
\begin{enumerate}
\item[\rm (a)] $N(z)$ is defined and analytic in $\mathbb{C}\setminus \{(-\infty,-1] \cup [1,\infty) \}$.

\item[\rm (b)] For $x\in (-\infty,-1) \cup (1,\infty)$, $N(x)$ satisfies the jump condition
\begin{equation}\label{eq:N-jump}
 N_+(x)=N_-(x)\left\{
                \begin{array}{ll}
                  \begin{pmatrix}0&0&1&0\\0&1&0&0\\-1&0&0&0\\0&0&0&1 \end{pmatrix}, & \hbox{$x>1$,} \\
                   \begin{pmatrix}1&0&0&0\\0&0&0&1\\0&0&1&0\\0&-1&0&0 \end{pmatrix}, & \hbox{$x<-1$.}
                \end{array}
              \right.
\end{equation}

\item[\rm (c)]As $z \to \infty$ with $z \in \mathbb{C} \setminus \mathbb{R}$,  we have
\begin{align}\label{eq:asyN}
N(z)=\left(  I+\frac{N^{(1)}}{z}+ \Boh(z^{-2}) \right) \diag \left((-z)^{-\frac14},z^{-\frac14},(-z)^{\frac14},z^{\frac14} \right)A,
\end{align}
where $N^{(1)}$ is independent of $z$ and $A$ is defined in \eqref{def:A}.

\end{enumerate}
\end{rhp}
The above RH problem can be solved explicitly, and its solution is given by
\begin{equation}\label{def:N}
N(z)=\diag \left((1-z)^{-\frac14},(z+1)^{-\frac14},(1-z)^{\frac14},(z+1)^{\frac14} \right)A,
\end{equation}
where we take the branch cuts of $(1-z)^{\frac14}$ and $(z+1)^{\frac14}$ along $[1,\infty)$ and $(-\infty,-1]$, respectively.

\subsection{Local parametrix near $z=-1$}

For $z$ near the endpoints $-1$ and $1$, $S(z)$ and $N(z)$ are not uniformly close to each other, hence, local parametrices need to be constructed near these endpoints. We start with the local parametrix $P^{(-1)}(z)$ near $-1$, which satisfies the following RH problem.

\begin{rhp}\label{rhp:P-1}
\hfill
\begin{itemize}
\item [\rm{(a)}] $P^{(-1)}(z)$ is defined and analytic in $D(-1, \varepsilon) \setminus \Gamma_T$, where $D(z_0, \varepsilon)$ and $\Gamma_T$ are defined in \eqref{def:dz0r} and \eqref{def:gammaT}, respectively.
\item [\rm{(b)}] For $z \in D(-1, \varepsilon) \cap \Gamma_T$, we have
\begin{equation}\label{eq:P-1-jump}
P^{(-1)}_+(z) = P^{(-1)}_-(z) J_{S}(z),
\end{equation}
where $J_{S}(z)$ is defined in \eqref{def:JS}.
\item [\rm{(c)}] As $s \to \infty$, $P^{(-1)}(z)$ satisfies the following matching condition
\begin{equation}\label{eq:P-1:matching}
P^{(-1)}(z) = (I+ \Boh(s^{-\frac{3}{2}})) N(z), \quad z \in \partial D(-1, \varepsilon),
\end{equation}
where $N(z)$ is given in \eqref{def:N}.
\end{itemize}
\end{rhp}

RH problem \ref{rhp:P-1} can be solved explicitly by using the Bessel parametrix $\Phi^{(\Bes)}(z)$ defined in Appendix \ref{sec:Bessel}. To do this, we introduce the function
\begin{align}\label{def:f}
f_{-1}(z): = g_2(z)^2 =\left(r_2-\frac{2s_2}{s}\right)^2(z+1) -\frac43 r_2\left(r_2-\frac{2s_2}{s}\right)(z+1)^{2} +\frac 49 r_2^2 (z+1)^3,
\end{align}
where $g_2$ is given in \eqref{def:g2}. Clearly, $f_{-1}(z)$ is analytic in $D(-1, \varepsilon)$ and is a conformal mapping for large positive $s$.

Let $\Omega_j^{(1)}$, $j=1,\ldots,6$, be the six regions shown in Figure \ref{fig:X} with $s=1$, we now define
\begin{align}\label{def:P-1}
P^{(-1)}(z) & = E_{-1}(z) \begin{pmatrix}
1 & 0 & 0 & 0\\
0 & \Phi^{(\Bes)}_{11}(s^3f_{-1}(z)) & 0 & \Phi^{(\Bes)}_{12}(s^3f_{-1}(z))\\
0 & 0 & 1 & 0\\
0 & \Phi^{(\Bes)}_{21}(s^3f_{-1}(z)) & 0 & \Phi^{(\Bes)}_{22}(s^3f_{-1}(z))
\end{pmatrix} \nonumber\\
&\quad\times\begin{pmatrix}
1 & 0 & 0 & 0\\
0 & e^{s^{\frac32}g_2(z)} & 0 & 0\\
0 & 0 & 1 & 0\\
0 & 0 & 0 & e^{-s^{\frac32}g_2(z)}
\end{pmatrix} \nonumber\\
&\quad\times \begin{cases}
I -e^{-s^{\frac32}(g_1(z)-g_2(z)) + 2 \tau s z} E_{1,2} +e^{-s^{\frac32}(g_1(z)-g_2(z)) - 2 \tau s z}  E_{4,3}, & z \in \Omega_2^{(1)}\cup\Omega_5^{(1)}\cap D(-1, \varepsilon),\\
I, & z \in \Omega_3^{(1)}\cup\Omega_4^{(1)}\cap D(-1, \varepsilon),
\end{cases}
\end{align}
where $f_{-1}(z)$ is defined in \eqref{def:f} and
\begin{equation}\label{def:E-1}
E_{-1}(z) = \frac{1}{\sqrt{2}} N(z) \begin{pmatrix}
\sqrt{2} & 0 & 0 & 0\\
0 & \pi^{\frac 12} s^{\frac 34} f_{-1}(z)^{\frac 14} & 0 & - \ii \pi^{-\frac 12} s^{-\frac 34} f_{-1}(z)^{-\frac 14}\\
0 & 0 & \sqrt{2} & 0\\
0 & -\ii \pi^{\frac 12} s^{\frac 34} f_{-1}(z)^{\frac 14} & 0 & \pi^{-\frac 12} s^{-\frac 34} f_{-1}(z)^{-\frac 14}
\end{pmatrix}.
\end{equation}

\begin{proposition}\label{pro:P-1}
The local parametrix $P^{(-1)}(z)$ defined in \eqref{def:P-1} solves RH problem \ref{rhp:P-1}.
\end{proposition}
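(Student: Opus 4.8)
The plan is to verify, one at a time, the three defining properties in RH problem \ref{rhp:P-1} — analyticity, the jump relation \eqref{eq:P-1-jump}, and the matching \eqref{eq:P-1:matching} — using the analytic structure and large-argument asymptotics of the Bessel parametrix $\Phi^{(\Bes)}$ recorded in Appendix \ref{sec:Bessel}, together with the fact that $f_{-1}(z)=g_2(z)^2$ from \eqref{def:f} is analytic and, for large positive $s$, conformal on $D(-1,\varepsilon)$.

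First I would check that the prefactor $E_{-1}(z)$ in \eqref{def:E-1} is analytic and invertible on all of $D(-1,\varepsilon)$. Both the factor $(z+1)^{\pm 1/4}$ hidden in $N(z)$ and the entries $f_{-1}(z)^{\pm 1/4}$ carry a branch cut along $(-\infty,-1]\cap D(-1,\varepsilon)$; since $f_{-1}$ vanishes to first order at $z=-1$, a short computation of the boundary values on the two sides of this cut shows the jumps cancel, so $E_{-1}$ continues analytically across it and the apparent singularity at $z=-1$ is removable. Invertibility follows from $\det N(z)\equiv\det A\neq 0$ and the explicit triangular/diagonal structure of the remaining factor. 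Hence $P^{(-1)}$ is analytic on $D(-1,\varepsilon)\setminus\Gamma_T$, giving item (a).

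Next I would verify the jump relation \eqref{eq:P-1-jump} ray by ray on $\Gamma_T\cap D(-1,\varepsilon)$. As $E_{-1}$ is analytic it drops out of the jump, so it remains to see that the product of the Bessel block (with argument $s^3 f_{-1}(z)=(s^{3/2}g_2(z))^2$), the diagonal conjugation $\diag(1,e^{s^{3/2}g_2(z)},1,e^{-s^{3/2}g_2(z)})$, and the triangular factor in the last line of \eqref{def:P-1} reproduces $J_S(z)$ from \eqref{def:JS}. The map $z\mapsto s^3 f_{-1}(z)$ sends the local arcs of $\Gamma_1^{(1)},\Gamma_2^{(1)},\Gamma_4^{(1)},\Gamma_5^{(1)}$ onto the rays of the Bessel contour and $\Gamma_3^{(1)}$ onto its branch cut, with matching orientation; conjugating the standard Bessel jumps by the exponential factors then produces precisely the entries $e^{2s^{3/2}g_2}E_{4,2}$, $e^{\mp s^{3/2}(g_1-g_2)\pm 2\tau sz}E_{1,2}$ and $e^{\mp s^{3/2}(g_1-g_2)\mp 2\tau sz}E_{4,3}$ occurring in $J_S$ on those rays, while on $\Gamma_3^{(1)}$ the $2\times2$ Bessel jump embedded in rows and columns $2,4$, together with the change of branch of $g_2$, yields the constant jump matrix listed for $\Gamma_3^{(1)}$ in \eqref{def:JS}. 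The triangular factor that differs between $\Omega_2^{(1)}\cup\Omega_5^{(1)}$ and $\Omega_3^{(1)}\cup\Omega_4^{(1)}$ is exactly what reconciles the purely $2\times2$ Bessel structure with the extra off-diagonal entries of $J_S$.

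Finally I would establish the matching \eqref{eq:P-1:matching}. On $\partial D(-1,\varepsilon)$ the argument $s^3 f_{-1}(z)$ is of order $s^3$, so substituting the large-argument asymptotics of $\Phi^{(\Bes)}$ from Appendix \ref{sec:Bessel}, and using that the diagonal conjugation cancels the $e^{\pm s^{3/2}g_2}$ factors in that expansion while the triangular factor is there exponentially close to $I$, one checks that $E_{-1}(z)$ has been chosen precisely so that the leading term collapses to $N(z)$; the first correction is $\Boh((s^3 f_{-1}(z))^{-1/2})=\Boh(s^{-3/2})$ uniformly on the circle. I expect the most delicate point to be the ray-by-ray jump bookkeeping in the previous step — keeping track of the branches of $g_2$ and of $f_{-1}(z)^{1/4}$ across each ray inside the disc and confirming the two versions of the triangular factor do what is required; the analyticity of $E_{-1}$ and the matching estimate are comparatively routine once the appendix is invoked.
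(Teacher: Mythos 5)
Your proposal is correct and follows essentially the same route as the paper: establish analyticity of $E_{-1}$ on $D(-1,\varepsilon)$ (the branch cuts of $N$ and of $f_{-1}(z)^{\pm 1/4}$ cancel across $(-1-\varepsilon,-1)$ and the point $z=-1$ is removable), deduce the jump relation \eqref{eq:P-1-jump} from the constant Bessel jumps \eqref{eq:jump:Bessel} conjugated by $\diag(1,e^{s^{3/2}g_2},1,e^{-s^{3/2}g_2})$ together with the two versions of the triangular factor, and get the matching \eqref{eq:P-1:matching} from the large-argument expansion \eqref{eq:infty:Bessel}, since $s^3 f_{-1}(z)$ is of order $s^3$ on $\partial D(-1,\varepsilon)$ and the factors $e^{-s^{3/2}(g_1-g_2)\pm 2\tau s z}$ are exponentially small there, yielding the error $\Boh(s^{-3/2})$. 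The only slip is cosmetic: inside $D(-1,\varepsilon)$ the relevant rays are $\Gamma_2^{(1)},\Gamma_3^{(1)},\Gamma_4^{(1)}$ (not $\Gamma_1^{(1)},\Gamma_5^{(1)}$, which meet the disc around $+1$), but your ray-by-ray verification applies verbatim to them.
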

\begin{proof}
First, we show the prefactor $E_{-1}(z)$ is analytic near $z=-1$. To achieve this, we notice that the only possible jump for $E_{-1}(z)$ is on the interval $(-1-\varepsilon, -1)$. For $z \in (-1-\varepsilon, -1)$, we see from \eqref{eq:N-jump} and \eqref{def:E-1} that
\begin{align}
&E_{-1,-}(z)^{-1}E_{-1,+}(z)\nonumber\\
 & = \frac{1}{2} \begin{pmatrix}
\sqrt{2} & 0 & 0 & 0\\
0 & \pi^{-\frac 12} s^{-\frac 34} f_{-1,-}(z)^{-\frac 14} & 0 & \ii \pi^{-\frac 12} s^{-\frac 34} f_{-1,-}(z)^{-\frac 14}\\
0 & 0 & \sqrt{2} & 0\\
0 & \ii \pi^{\frac 12} s^{\frac 34} f_{-1,-}(z)^{\frac 14} & 0 & \pi^{\frac 12} s^{\frac 34} f_{-1,-}(z)^{\frac 14}
\end{pmatrix}\begin{pmatrix}1&0&0&0\\0&0&0&1\\0&0&1&0\\0&-1&0&0 \end{pmatrix}\nonumber \\
& \quad\times\begin{pmatrix}
\sqrt{2} & 0 & 0 & 0\\
0 & \pi^{\frac 12} s^{\frac 34} f_{-1,+}(z)^{\frac 14} & 0 & - \ii \pi^{-\frac 12} s^{-\frac 34} f_{-1,+}(z)^{-\frac 14}\\
0 & 0 & \sqrt{2} & 0\\
0 & -\ii \pi^{\frac 12} s^{\frac 34} f_{-1,+}(z)^{\frac 14} & 0 & \pi^{-\frac 12} s^{-\frac 34} f_{-1,+}(z)^{-\frac 14}
\end{pmatrix}=I.
\end{align}
Moreover, as $z \to -1$, we have
\begin{equation}\label{def:localE-1}
E_{-1}(z) = E_{-1}(-1) + E_{-1}'(-1)(z+1) + \Boh\left((z+1)^2\right),
\end{equation}
where
\begin{equation}
E_{-1}(-1) = \begin{pmatrix}
2^{-\frac 34} & 0 & -\ii 2^{-\frac 34} & 0\\
0 & \pi^{\frac 12} s^{\frac 34} \left(r_2 - \frac{2s_2}{s}\right)^{\frac 12} & 0 & 0\\
-\ii 2^{-\frac 14} & 0 & 2^{-\frac 14} & 0\\
0 & 0 & 0 & \pi^{-\frac 12} s^{-\frac 34} \left(r_2 - \frac{2s_2}{s}\right)^{-\frac 12}
\end{pmatrix}
\end{equation}
and
\begin{equation}
E_{-1}'(-1) = \begin{pmatrix}
\frac{1}{8 \cdot 2^{3/4}} & 0 & -\frac{\ii}{8 \cdot 2^{3/4}} & 0\\
0 & -\frac{\pi^{1/2} s^{3/4}r_2}{3} \left(r_2 - \frac{2s_2}{s}\right)^{-\frac 12} & 0 & 0\\
\frac{\ii}{8 \cdot 2^{1/4}} & 0 & -\frac{1}{8 \cdot 2^{1/4}} & 0\\
0 & 0 & 0 & \frac{\pi^{-1/2} s^{-3/4} r_2}{3} \left(r_2 - \frac{2s_2}{s}\right)^{-\frac 32}
\end{pmatrix}.
\end{equation}
Therefore, $E_{-1}(z)$ is indeed analytic in $D(-1, \varepsilon)$. It is then straightforward to verify the jump condition of $P^{(-1)}$ in \eqref{eq:P-1-jump} by using the analyticity of $E_{-1}$ and \eqref{eq:jump:Bessel}.

Next, we check the matching condition \eqref{eq:P-1:matching}. From the definitions of $g_1(z)$ and $g_2(z)$ in \eqref{def:g1} and \eqref{def:g2}, it is clear that functions $e^{-s^{3/2}(g_1(z)-g_2(z)) + 2 \tau s z}$ and $e^{-s^{3/2}(g_1(z)-g_2(z)) - 2 \tau s z}$ in \eqref{def:P1} are exponentially small as $s \to +\infty$ for $z \in D(-1, \varepsilon)$; cf. \eqref{eq:g1-g2}. Thus, it follows from the asymptotic behavior of the Bessel parametrix at infinity in \eqref{eq:infty:Bessel} that, as $s \to +\infty$,
\begin{equation}
P^{(-1)}(z) N(z)^{-1} = I + \frac{J^{(-1)}_1(z)}{s^{3/2}} + \Boh(s^{-3}), \quad z \in \partial D(-1, \varepsilon),
\end{equation}
where
\begin{equation}\label{def:J-1}
J^{(-1)}_1(z) = \frac{1}{8f_{-1}(z)^{1/2}} N(z) \begin{pmatrix}
0 & 0 & 0 & 0\\
0 & -1 & 0 & -2\ii\\
0 & 0 & 0 & 0\\
0 & -2\ii & 0 & 1
\end{pmatrix}N(z)^{-1}.
\end{equation}

This completes the proof of Proposition \ref{pro:P-1}.
\end{proof}
For later use, we include following local behavior of $J^{(-1)}_1(z)$ near $z = -1$:
\begin{align}\label{J1--1}
J^{(-1)}_1(z)&  =  -\frac{\ii}{8\left(r_2 - 2s_2/s\right)(z+1)} E_{2,4} - \frac{\ii r_2}{12\left(r_2 - 2s_2/s\right)^2} E_{2,4} -\frac{3\ii}{8\left(r_2 - 2s_2/s\right)} E_{4,2}
\nonumber\\
&\quad-\left(\frac{\ii r_2^2}{18\left(r_2 -2s_2/s\right)^3} E_{2,4} +\frac{\ii r_2}{4\left(r_2 - 2s_2/s\right)^2}E_{4,2}\right)(z+1) + \Boh\left((z+1)^2\right), \quad z \to -1.
\end{align}

\subsection{Local parametrix near $z=1$}
In a small disc $D(1, \varepsilon)$ around $z=1$, the local parametrix $P^{(1)}(z)$ reads as follows.
\begin{rhp}\label{rhp:P1}
\hfill
\begin{itemize}
\item [\rm{(a)}] $P^{(1)}(z)$ is defined and analytic in $D(1, \varepsilon) \setminus \Gamma_T$, where $\Gamma_T$ is defined in \eqref{def:gammaT}.
\item [\rm{(b)}] For $z \in D(1, \varepsilon) \cap \Gamma_T$, we have
\begin{equation}\label{eq:P1-jump}
P^{(1)}_+(z) = P^{(1)}_-(z) J_{S}(z),
\end{equation}
where $J_{S}(z)$ is defined in \eqref{def:JS}.
\item [\rm{(c)}] As $s \to \infty$, $P^{(1)}(z)$ satisfies the following matching condition
\begin{equation}\label{eq:P1:matching}
P^{(1)}(z) = (I+ \Boh(s^{-\frac{3}{2}})) N(z), \quad z \in \partial D(1, \varepsilon),
\end{equation}
where $N(z)$ is given in \eqref{def:N}.
\end{itemize}
\end{rhp}

Similar to the construction of $P^{(-1)}$, the above RH problem can be solved again by using
the Bessel parametrix $\Phi^{(\Bes)}(z)$ defined in Appendix \ref{sec:Bessel}. In this case, we need the following function
\begin{equation}\label{def:tildef}
f_1(z) := g_1(z)^2=\left(r_1 - \frac{2s_1}{s}\right)^2 (1-z) -\frac43 r_1\left(r_1-\frac{2s_1}{s}\right)(1-z)^{2} +\frac 49 r_1^2 (1-z)^3,
\end{equation}
where $g_1(z)$ is defined in \eqref{def:g1}. Clearly, $f_1(z)$ is analytic in $D(1, \varepsilon)$ and is a conformal mapping for large positive $s$. We now define
\begin{align}\label{def:P1}
P^{(1)}(z) &= E_1(z) \begin{pmatrix}
\Phi^{(\Bes)}_{11}(s^3f_1(z)) & 0 & -\Phi^{(\Bes)}_{12}(s^3f_1(z)) & 0\\
0 & 1 & 0 & 0\\
-\Phi^{(\Bes)}_{21}(s^3f_1(z)) & 0 & \Phi^{(\Bes)}_{22}(s^3f_1(z)) & 0\\
0 & 0 & 0 & 1
\end{pmatrix}\nonumber\\
&\quad \times \begin{pmatrix}
e^{s^{\frac32}g_1(z)} & 0 & 0 & 0\\
0 & 1 & 0 & 0\\
0 & 0 & e^{-s^{\frac32}g_1(z)} & 0\\
0 & 0 & 0 & 1
\end{pmatrix}\nonumber\\
&\quad \times \begin{cases}
I -e^{s^{\frac32}(g_1(z)-g_2(z)) - 2 \tau s z} E_{2,1}+e^{s^{\frac32}(g_1(z)-g_2(z)) + 2 \tau s z}E_{3,4}, & z \in \Omega_2^{(1)}\cup\Omega_5^{(1)}\cap D(1, \varepsilon),\\
I, & z \in \Omega_1^{(1)}\cup\Omega_6^{(1)}\cap D(1, \varepsilon),
\end{cases}
\end{align}
where $f_1(z)$ is defined in \eqref{def:tildef} and
\begin{equation}\label{def:E1}
E_1(z) = \frac{1}{\sqrt{2}} N(z) \begin{pmatrix}
\pi^{\frac 12} s^{\frac 34} f_1(z)^{\frac 14} & 0 &  \ii \pi^{-\frac 12} s^{-\frac 34} f_1(z)^{-\frac 14} & 0\\
0 & \sqrt{2} & 0 & 0\\
 \ii \pi^{\frac 12} s^{\frac 34} f_1(z)^{\frac 14} & 0 & \pi^{-\frac 12} s^{-\frac 34} f_1(z)^{-\frac 14} & 0\\
 0 & 0 & 0 & \sqrt{2}
\end{pmatrix}.
\end{equation}
\begin{proposition}\label{pro:P1}
$P^{(1)}(z)$ defined in \eqref{def:P1} solves RH problem \ref{rhp:P1}.
\end{proposition}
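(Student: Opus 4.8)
The plan is to mirror, step for step, the proof of Proposition~\ref{pro:P-1} for the local parametrix at $-1$, adapted to the $(1,3)$-block structure of \eqref{def:P1}. There are three things to verify: that the prefactor $E_1(z)$ of \eqref{def:E1} is analytic in $D(1,\varepsilon)$; that $P^{(1)}$ satisfies the jump relation \eqref{eq:P1-jump} on $\Gamma_T\cap D(1,\varepsilon)$; and that it obeys the matching condition \eqref{eq:P1:matching} on $\partial D(1,\varepsilon)$.

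First I would establish the analyticity of $E_1$. Inside $D(1,\varepsilon)$ the matrix $N(z)$ of \eqref{def:N} is analytic except across the segment $(1,1+\varepsilon)\subset\Gamma_0^{(1)}$, and the factor carrying $f_1(z)^{\pm1/4}$ in \eqref{def:E1} has a branch cut only along that same segment (here $f_1=g_1^2$ inherits its branch structure from $g_1$, whose cut is $[1,\infty)$); hence $E_1$ can jump only on $(1,1+\varepsilon)$. There I would insert \eqref{eq:N-jump} (the case $x>1$) together with the relation between the boundary values of $f_1^{1/4}$, which is exactly the one satisfied by the boundary values of $(1-z)^{1/4}$ appearing in $N$; the resulting $4\times4$ product for $E_{1,-}(z)^{-1}E_{1,+}(z)$ collapses to the identity, just as in the displayed computation in the proof of Proposition~\ref{pro:P-1}, now acting on rows and columns $1,3$ rather than $2,4$. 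To see that the apparent singularity of $E_1$ at $z=1$ is removable I would use $f_1(z)=(r_1-2s_1/s)^2(1-z)+\Boh((1-z)^2)$ from \eqref{def:tildef}: the factors $f_1^{\pm1/4}$ combine with the $(1-z)^{\mp1/4}$ in columns $1,3$ of $N(z)$ into $(f_1(z)/(1-z))^{\pm1/4}$, which is analytic and nonvanishing at $z=1$. This yields analyticity of $E_1$ in $D(1,\varepsilon)$ together with a Taylor expansion $E_1(z)=E_1(1)+E_1'(1)(z-1)+\Boh((z-1)^2)$, whose coefficients are computed explicitly as in \eqref{def:localE-1} for use in the subsequent asymptotic analysis.

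With $E_1$ analytic, the jump of $P^{(1)}$ across each ray of $\Gamma_T\cap D(1,\varepsilon)$ comes solely from the Bessel block and the $\diag(e^{s^{3/2}g_1},1,e^{-s^{3/2}g_1},1)$ and triangular factors in \eqref{def:P1}. There are only three rays to consider, namely the portions of $\Gamma_0^{(1)}$, $\Gamma_1^{(1)}$ and $\Gamma_5^{(1)}$ inside the disc ($\Gamma_2^{(1)}$ and $\Gamma_4^{(1)}$ emanate from $-1$ and do not meet $D(1,\varepsilon)$); on each of them I would substitute the Bessel jumps \eqref{eq:jump:Bessel} and check, by a short case-by-case computation, that the conjugated jump reproduces the corresponding entry of $J_S$ in \eqref{def:JS}. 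For the matching, note that $g_1(z)-g_2(z)\to-\tfrac{\sqrt2}{3}r_2$ as $z\to1$ by \eqref{eq:g1-g2}, so for $\varepsilon$ small and $s$ large the functions $e^{\pm(s^{3/2}(g_1(z)-g_2(z))\mp2\tau sz)}$ in the triangular factor of \eqref{def:P1} are exponentially small on $\partial D(1,\varepsilon)$ (the $s^{3/2}$ term dominates the $\tau s z$ term); feeding the large-argument asymptotics \eqref{eq:infty:Bessel} of $\Phi^{(\Bes)}$ into \eqref{def:P1} then gives $P^{(1)}(z)N(z)^{-1}=I+\Boh(s^{-3/2})$ on $\partial D(1,\varepsilon)$, with subleading term of the same Bessel form as $J^{(-1)}_1$ in \eqref{def:J-1}.

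The only genuinely delicate point is the sign and branch-cut bookkeeping: because the Bessel block of \eqref{def:P1} lives in the $(1,3)$-plane and carries the extra minus signs dictated by the form of the jump of $N$ on $(1,\infty)$ in \eqref{eq:N-jump}, one must make sure that the branch of $f_1^{1/4}$ selected through $g_1$ is consistent with the branch of $(1-z)^{1/4}$ built into $N$. A clean way to bypass most of this is to use the symmetry of Proposition~\ref{prop:X1}: under $z\mapsto-z$ combined with $r_1\leftrightarrow r_2$, $s_1\leftrightarrow s_2$ one has $g_1(z)=\widetilde g_2(-z)$, $g_2(z)=\widetilde g_1(-z)$ and hence $f_1(z)=\widetilde f_{-1}(-z)$, so that conjugating $\widetilde P^{(-1)}(-z)$ by the matrix $\left(\begin{smallmatrix} J & 0 \\ 0 & -J \end{smallmatrix}\right)$ produces a solution of RH problem~\ref{rhp:P1}; comparing this with \eqref{def:P1} reduces Proposition~\ref{pro:P1} entirely to Proposition~\ref{pro:P-1}.
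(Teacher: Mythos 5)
Your proposal follows the paper's proof essentially step for step: establish analyticity of $E_1$ by checking that its only possible jump on $(1,1+\varepsilon)$ cancels against the jump of $N$ and that the singularity at $z=1$ is removable (with the Taylor expansion $E_1(1)$, $E_1'(1)$ recorded for later use), verify the jump condition \eqref{eq:P1-jump} from the Bessel jumps \eqref{eq:jump:Bessel}, and deduce the matching \eqref{eq:P1:matching} from \eqref{eq:infty:Bessel} together with the exponential smallness of the $e^{s^{3/2}(g_1(z)-g_2(z))\mp 2\tau sz}$ factors near $z=1$. The appended symmetry reduction to Proposition \ref{pro:P-1} is a reasonable optional shortcut, but the direct verification you describe is exactly the paper's argument.
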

\begin{proof}
First, we show $E_1(z)$ is analytic in $D(1, \varepsilon)$. From \eqref{def:E1}, the only possible jump is on $(1, 1 + \varepsilon)$, and for $z \in (1, 1 + \varepsilon)$,
\begin{align}
E_{1,-}(z)^{-1}E_{1,+}(z)&=\frac 12  \begin{pmatrix}
\pi^{-\frac 12} s^{-\frac 34} f_{1,-}(z)^{-\frac 14} & 0 &  -\ii \pi^{-\frac 12} s^{-\frac 34} f_{1,-}(z)^{-\frac 14} & 0\\
0 & \sqrt{2} & 0 & 0\\
 -\ii \pi^{\frac 12} s^{\frac 34} f_{1,-}(z)^{\frac 14} & 0 & \pi^{\frac 12} s^{\frac 34} f_{1,-}(z)^{\frac 14} & 0\\
 0 & 0 & 0 & \sqrt{2}
\end{pmatrix}
\begin{pmatrix}0&0&1&0\\0&1&0&0\\-1&0&0&0\\0&0&0&1 \end{pmatrix}\nonumber\\
&\quad\times \begin{pmatrix}
\pi^{\frac 12} s^{\frac 34} f_{1,+}(z)^{\frac 14} & 0 &  \ii \pi^{-\frac 12} s^{-\frac 34} f_{1,+}(z)^{-\frac 14} & 0\\
0 & \sqrt{2} & 0 & 0\\
 \ii \pi^{\frac 12} s^{\frac 34} f_{1,+}(z)^{\frac 14} & 0 & \pi^{-\frac 12} s^{-\frac 34} f_{1,+}(z)^{-\frac 14} & 0\\
 0 & 0 & 0 & \sqrt{2}
\end{pmatrix}=I
\end{align}
Moreover, as $z \to 1$, we have
\begin{equation}\label{def:localE1}
E_1(z) = E_1(1) + E_1'(1)(z-1) + \Boh\left((z-1)^2\right),
\end{equation}
where
\begin{equation}
E_1(1) = \begin{pmatrix}
\pi^{\frac 12} s^{\frac 34} \left(r_1 - \frac{2s_1}{s}\right)^{\frac 12} & 0 & 0 & 0\\
0 & 2^{-\frac 34} & 0 & \ii 2^{-\frac 34}\\
0 & 0 & \pi^{-\frac 12} s^{-\frac 34} \left(r_1 - \frac{2s_1}{s}\right)^{-\frac 12} & 0\\
0 & \ii 2^{-\frac 14} & 0 & 2^{-\frac 14}
\end{pmatrix}
\end{equation}
and
\begin{equation}
E_1'(1) = \begin{pmatrix}
\frac{\pi^{1/2} s^{3/4}r_1}{3} \left(r_1 - \frac{2s_1}{s}\right)^{-\frac 12} & 0 & 0 & 0\\
0 & -\frac{1}{8 \cdot 2^{3/4}} & 0 & -\frac{\ii}{8 \cdot 2^{3/4}}\\
0 & 0 & \frac{\pi^{-1/2} s^{-3/4} r_1}{3} \left(r_1 - \frac{2s_1}{s}\right)^{-\frac 32} & 0\\
0 & \frac{\ii}{8 \cdot 2^{1/4}} & 0 & \frac{1}{8 \cdot 2^{1/4}}
\end{pmatrix}.
\end{equation}
Therefore, $E_1(z)$ is indeed analytic in $D(1, \varepsilon)$. The jump condition of $P^{(1)}(z)$ in \eqref{eq:P1-jump} can be verified from the analyticity of $E_1(z)$ and the jump condition in \eqref{eq:jump:Bessel}.

Finally, we check the matching condition in \eqref{eq:P1:matching}, it follows from the asymptotic behavior of the Bessel parametrix at infinity in \eqref{eq:infty:Bessel} that, as $s \to +\infty$,
\begin{equation}
P^{(1)}(z) N(z)^{-1} = I + \frac{J^{(1)}_1(z)}{s^{3/2}} + \Boh(s^{-3}),
\end{equation}
where
\begin{equation}\label{def:J1}
J^{(1)}_1(z) = \frac{1}{8f_1(z)^{1/2}} N(z) \begin{pmatrix}
-1 & 0 & 2\ii & 0\\
0 & 0 & 0 & 0\\
2\ii & 0 & 1 & 0\\
0 & 0 & 0 & 0
\end{pmatrix}N(z)^{-1}.
\end{equation}

This completes the proof of Proposition \ref{pro:P1}.
\end{proof}
For later use, we calculate the behavior of $J^{(1)}_1(z)$ near $z = 1$ as follows,
\begin{align}\label{J1-1}
J^{(1)}_1(z) &=  \frac{\ii}{8\left(r_1 - 2s_1/s\right)(1-z)} E_{1,3} + \frac{\ii r_1}{12\left(r_1 - 2s_1/s\right)^2}E_{1,3}+\frac{3\ii}{8\left(r_1 - 2s_1/s\right)}E_{3,1}\nonumber\\
&\quad+\left(\frac{\ii r_1^2}{18\left(r_1 - 2s_1/s\right)^3}E_{1,3}+\frac{\ii r_1}{4\left(r_1 - 2s_1/s\right)^2}E_{3,1}\right)(1-z) + \Boh\left((1-z)^2\right), \quad z \to 1.
\end{align}

\subsection{Final transformation}
The final transformation is defined by
\begin{equation}\label{def:R}
R(z) = \begin{cases}
S(z) P^{(-1)}(z)^{-1}, & \quad z \in D(-1, \varepsilon),\\
S(z) P^{(1)}(z)^{-1}, & \quad z \in D(1, \varepsilon),\\
S(z) N(z)^{-1}, & \quad \textrm{elsewhere}.
\end{cases}
\end{equation}
From the RH problems for $S$, $N$ and $P^{(\pm1)}$, it follows that $R$ satisfies the following RH problem.
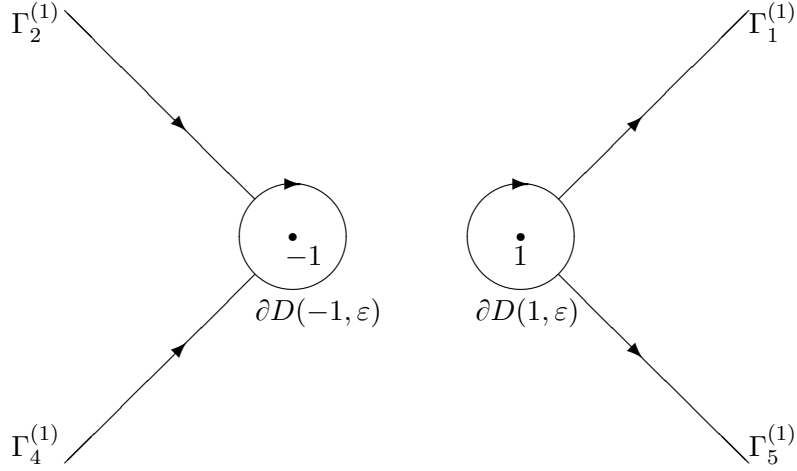
\begin{figure}[t]
\begin{center}
   \setlength{\unitlength}{1truemm}
   \begin{picture}(100,70)(-13,2)


       \put(20,35){\line(-1,-1){25}}
       \put(20,45){\line(-1,1){25}}

       \put(60,45){\line(1,1){25}}
       \put(60,35){\line(1,-1){25}}


       \put(10,55){\thicklines\vector(1,-1){1}}
       \put(10,25){\thicklines\vector(1,1){1}}
       \put(70,25){\thicklines\vector(1,-1){1}}
       \put(70,55){\thicklines\vector(1,1){1}}
        \put(25,47){\thicklines\vector(1,0){1}}
         \put(55,47){\thicklines\vector(1,0){1}}

       \put(-12,11){$\Gamma_4^{(1)}$}
       \put(-12,67){$\Gamma_2^{(1)}$}
       \put(85,11){$\Gamma_5^{(1)}$}
       \put(85,67){$\Gamma_1^{(1)}$}


       \put(25,40){\thicklines\circle*{1}}
       \put(55,40){\thicklines\circle*{1}}

	\put(25,40){\circle{20}}
	 \put(55,40){\circle{20}}
	
       \put(24,36.3){$-1$}
       \put(54,36.3){$1$}
        \put(20,29){$\partial D(-1, \varepsilon)$}
        \put(49,29){$\partial D(1, \varepsilon)$}
   \end{picture}
   \caption{The contour $\Sigma_{R}$ of the RH problem for $R$.}
   \label{fig:R}
\end{center}
\end{figure}
\begin{rhp}
\hfill
\begin{itemize}
\item [\rm{(a)}] $R(z)$ is defined and analytic in $\mathbb{C} \setminus \Gamma_{R}$, where
\begin{equation}
\Sigma_R:=\Gamma_T \cup \partial D(-1,\varepsilon) \cup \partial D(1,\varepsilon) \setminus \{\mathbb{R}
\cup D(-1,\varepsilon) \cup D(1,\varepsilon) \};
\end{equation}
see Figure\ref{fig:R} for an illustration.
\item [\rm{(b)}] For $z \in \Gamma_{R}$, we have
\begin{equation}\label{eq:Rjump}
R_+(z) = R_-(z) J_R (z),
\end{equation}
 where
\begin{equation}
J_R(z) = \begin{cases}
P^{(-1)}(z) N(z)^{-1}, & \quad z \in \partial D(-1, \varepsilon),\\
P^{(1)}(z) N(z)^{-1}, & \quad z \in \partial D(1, \varepsilon),\\
N(z) J_S(z) N(z)^{-1}, & \quad z \in \Gamma_{R} \setminus \partial D(\pm1, \varepsilon),
\end{cases}
\end{equation}
with $J_S(z)$ defined in \eqref{def:JS}.
\item [\rm{(c)}] As $z \to \infty$, we have
\begin{equation}\label{eq:asyR}
R(z) = I + \frac{R^{(1)}}{z} + \Boh (z^{-1}),
\end{equation}
where $R^{(1)}$ is independent of $z$.
\end{itemize}
\end{rhp}
Since the jump matrix $J_S(z)$ of $S$ given in \eqref{def:JS} tends to the identity matrix exponentially fast except for $z\in \Gamma_0^{(1)} \cup \Gamma_3^{(1)}$ as $s \to +\infty$, from the matching condition \eqref{eq:P-1:matching} and \eqref{eq:P1:matching}, we have
\begin{equation}
J_R(z) = I + \Boh (s^{-\frac 32}), \qquad s \to +\infty.
\end{equation}
By a standard argument \cite{Deift1999, Deift1993}, we conclude that
\begin{equation}\label{def:RR}
R(z) = I + \frac{R_1(z)}{s^{3/2}} + \Boh (s^{-3}), \quad s \to +\infty,
\end{equation}
uniformly for $z\in \mathbb{C}\setminus \Gamma_{R}$. Moreover, inserting the above expansion into \eqref{eq:Rjump}, it follows that function $R_1$ is analytic in $\mathbb{C} \setminus (\partial D(-1, \varepsilon) \cup \partial D(1, \varepsilon))$ with asymptotic behavior $\Boh(1/z)$ as $z\to \infty$, and satisfies
$$
R_{1,+}(z)-R_{1,-}(z)= \left\{
                         \begin{array}{ll}
                           J^{(-1)}_1(z), & \hbox{$z\in \partial D(-1,\varepsilon)$,} \\
                           J^{(1)}_1(z), & \hbox{$z\in \partial D(1,\varepsilon)$,}
                         \end{array}
                       \right.
$$
where the functions $J^{(-1)}_1(z)$ and $J^{(1)}_1(z)$ are given in \eqref{def:J-1} and \eqref{def:J1}, respectively.
By Cauchy's residue theorem, we have
\begin{align}
R_1(z) &= \frac{1}{2 \pi \ii} \oint_{\partial D(-1, \varepsilon)} \frac{J_1^{(-1)} (\zeta)}{z-\zeta} \ud \zeta + \frac{1}{2 \pi \ii} \oint_{\partial D(1, \varepsilon)} \frac{J_1^{(1)} (\zeta)}{z-\zeta} \ud \zeta \nonumber\\
&=\begin{cases}
\frac{\Res_{\zeta = -1} J^{(-1)}_1(\zeta)}{z+1} + \frac{\Res_{\zeta = 1} J^{(1)}_1(\zeta)}{z-1}, &\quad z \in \mathbb{C} \setminus \{D(-1, \varepsilon) \cup D(1, \varepsilon)\},\\
\frac{\Res_{\zeta = -1} J^{(-1)}_1(\zeta)}{z+1} + \frac{\Res_{\zeta = 1} J^{(1)}_1(\zeta)}{z-1} - J^{(-1)}_1(z), &\quad z \in D(-1, \varepsilon),\\
\frac{\Res_{\zeta = -1} J^{(-1)}_1(\zeta)}{z+1} + \frac{\Res_{\zeta = 1} J^{(1)}_1(\zeta)}{z-1} - J^{(1)}_1(z), &\quad z \in D(1, \varepsilon).
\end{cases}
\end{align}
In view of \eqref{J1--1} and \eqref{J1-1}, it follows from direct calculations that
\begin{equation}\label{eq:R1'-1}
R_1'(-1) = \frac{\ii}{32\left(r_1 - 2s_1/s \right)} E_{1,3} + \frac{\ii r_2^2}{18\left(r_2 - 2s_2/s\right)^3}E_{2,4} + \frac{\ii r_2}{4\left(r_2 - 2s_2/s\right)^2}E_{4,2}
\end{equation}
and
\begin{equation}\label{eq:R1'1}
R_1'(1) = \frac{\ii r_1^2}{18\left(r_1 - 2s_1/s\right)^3} E_{1,3} + \frac{\ii}{32\left(r_2 - 2s_2/s\right)}E_{2,4} + \frac{\ii r_1}{4\left(r_1 - 2s_1/s\right)^2}E_{3,1}.
\end{equation}

\section{Asymptotic analysis of the RH problem for $X$ as $s \to +\infty$ with $0<\gamma<1$}
\label{sec:AsyXgamma}

If $0< \gamma <1$, the matrix-valued function $X$ has a non-trivial jump
$\begin{pmatrix}
1&0&1-\gamma&1-\gamma\\0&1&1-\gamma&1-\gamma\\0&0&1&0\\0&0&0&1 \end{pmatrix}$ over the interval $(-s,s)$. This extra jump will lead to a completely different asymptotic analysis of the RH problem for $X$ as $s \to +\infty$, which will be carried out in this section.

\subsection{First transformation: $X \mapsto \what T$}
This transformation is a rescaling and normalization of the RH problem for $X$, and it is defined by
\begin{align}\label{def:XToTgamma}
\what T(z) &= \diag \left( s^{\frac14},s^{\frac14}, s^{-\frac14},s^{-\frac14} \right) X(sz)
\nonumber
\\
& \quad \times \diag \left(
e^{\theta_1(sz)-\tau s z}, e^{\theta_2(sz)+ \tau s z}, e^{-\theta_1(sz)-\tau
s z},e^{-\theta_2(sz) + \tau s z} \right),
\end{align}
where the functions $\theta_1$ and $\theta_2$ are given in \eqref{def:theta1} and \eqref{def:theta2}, respectively. In view of the facts that
\begin{equation}\label{eq:thetairelations}
\begin{aligned}
\theta_{1,+}(sx)+\theta_{1,-}(sx)& = 0, \qquad x>0,
\\
\theta_{2,+}(sx)+\theta_{2,-}(sx)&=0, \qquad x<0,
\end{aligned}
\end{equation}
and RH problem \ref{rhp:X} for $X$, it is readily seen that $\what T$ defined in \eqref{def:XToTgamma} satisfies the following RH problem.
\begin{rhp}\label{rhp:hatT}
\hfill
\begin{enumerate}
\item[\rm (a)] $\what T(z)$ is defined and analytic in $\mathbb{C} \setminus \Gamma_{\what T}$, where
\begin{equation}\label{def:gammahatT}
\Gamma_{\what T}:=\cup^5_{j=0}\Gamma_j^{(1)}\cup [-1,1],
\end{equation}
and where the contours $\Gamma_j^{(1)}$, $j=0,1,\ldots,5$, are defined in \eqref{def:Gammajs} with $s=1$.

\item[\rm (b)] For $z\in \Gamma_{\what T}$, $\what T(z)$ satisfies the jump condition
\begin{equation}\label{eq:hatT-jump}
 \what T_+(z)=\what T_-(z)J_{\what T}(z),
\end{equation}
where
\begin{equation}\label{def:JhatT}
J_{\what T}(z):=\left\{
 \begin{array}{ll}
          \begin{pmatrix}0&0&1&0\\0&1&0&0\\-1&0&0&0\\0&0&0&1 \end{pmatrix}, & \qquad \hbox{$z\in \Gamma_0^{(1)}$,} \\
          I-e^{\theta_1(sz)-\theta_2(sz)-2\tau s z}E_{2,1}+e^{2\theta_1(sz)}E_{3,1}
          \\
          ~ +e^{\theta_1(sz)-\theta_2(sz)+2\tau s z} E_{3,4},  & \qquad  \hbox{$z\in \Gamma_1^{(1)}$,} \\
          I- e^{-\theta_1(sz)+\theta_2(sz)+2\tau s z} E_{1,2}+ e^{2\theta_2(sz)} E_{4,2}
          \\
          ~ + e^{-\theta_1(sz)+\theta_2(sz)-2\tau s z} E_{4,3},  & \qquad \hbox{$z\in \Gamma_2^{(1)}$,} \\
          \begin{pmatrix}1&0&0&0\\0&0&0&1\\0&0&1&0\\0&-1&0&0 \end{pmatrix}, & \qquad  \hbox{$z\in \Gamma_3^{(1)}$,} \\
          I+ e^{-\theta_1(sz)+\theta_2(sz)+2\tau s z} E_{1,2}+ e^{2\theta_2(sz)}E_{4,2}
          \\
          ~ - e^{-\theta_1(sz)+\theta_2(sz)-2\tau s z}  E_{4,3}, & \qquad  \hbox{$z\in \Gamma_4^{(1)}$,} \\
          I+e^{\theta_1(sz)-\theta_2(sz)-2\tau s z} E_{2,1}+e^{2\theta_1(sz)} E_{3,1}
          \\
          ~ -e^{\theta_1(sz)-\theta_2(sz)+2\tau s z}E_{3,4}, & \qquad  \hbox{$z\in \Gamma_5^{(1)}$,} \\
         \msf J_{\msf L}(z), & \qquad  \hbox{$z \in (-1,0)$,}
          \\
         \msf J_{\msf R}(z), & \qquad  \hbox{$z \in (0,1)$,}
        \end{array}
      \right.
 \end{equation}
 with
 \begin{align}\label{def:msfJL}
  &\msf J_{\msf L}(z)
  \nonumber
  \\
  &=\begin{pmatrix}
          1&0&(1-\gamma)e^{-2\theta_1(sz)}&(1-\gamma)e^{-\theta_{1}(sz)-\theta_{2,+}(sz)+2\tau sz}
          \\0&e^{\theta_{2,+}(sz)-\theta_{2,-}(sz)}&(1-\gamma)e^{-\theta_1(sz)-\theta_{2,-}(sz)-2\tau sz}&1-\gamma
          \\0&0&1&0
          \\0&0&0&e^{\theta_{2,-}(sz)-\theta_{2,+}(sz)}
  \end{pmatrix},
 \end{align}
  and
\begin{align}\label{def:msfJR}
  &\msf J_{\msf R}(z)
  \nonumber
  \\
  &=\begin{pmatrix}e^{\theta_{1,+}(sz)-\theta_{1,-}(sz)}&0&1-\gamma&(1-\gamma)e^{-\theta_{1,-}(sz)-\theta_2(sz)+2\tau sz}
          \\0&1&(1-\gamma)e^{-\theta_{1,+}(sz)-\theta_2(sz)-2\tau sz}&(1-\gamma)e^{-2\theta_2(sz)}\\0&0&e^{\theta_{1,-}(sz)-\theta_{1,+}(sz)}&0\\0&0&0&1
          \end{pmatrix}.
 \end{align}

\item[\rm (c)]As $z \to \infty$ with $z\in \mathbb{C} \setminus \Gamma_{\what T}$, we have
\begin{align}\label{eq:asyhatT}
\what T(z)=\left( I+ \frac{\what T^{(1)}}{z} + \Boh(z^{-2}) \right) \diag \left((-z)^{-\frac14},z^{-\frac14},(-z)^{\frac14},z^{\frac14} \right)A,
\end{align}
where $\what T^{(1)}$ is independent of $z$ and $A$ is defined in \eqref{def:A}.

\item[\rm (d)]
As $z \to \pm 1$, we have $\what T(z)=\Boh(\ln(z \mp 1))$.
\end{enumerate}
\end{rhp}

\subsection{Second transformation: $\what T \mapsto \what S$}
On account of the definitions of $\theta_1(z)$ and $\theta_2(z)$ given in \eqref{def:theta1} and \eqref{def:theta2}, it is readily seen that $J_{\what T}(z)$ in \eqref{def:JhatT} tends to $I$ exponentially fast as $s\to +\infty$ for $z\in \Gamma_{\what T} \setminus \mathbb{R}$. Moreover, the $(2,2)$, $(4,4)$ entries of $\msf J_L$ in \eqref{def:msfJL} and the $(1,1)$, $(3,3)$ entries of $\msf J_R$ in \eqref{def:msfJR} are highly oscillatory for large positive $s$.

The second transformation then involves the so-called lens opening around the interval $(-1,0)\cup (0,1)$. The idea is to remove the highly oscillatory terms of $J_{\what T}$ with the cost of creating extra jumps that tend to the identity matrices on some new contours. To proceed, we observe from \eqref{def:msfJL}, \eqref{def:msfJR} and \eqref{eq:thetairelations} that
\begin{equation}
\msf J_{\msf L}(z)=\msf J_{1,-}(z)
\begin{pmatrix}
1&0&0&0
\\
0&0&0&1-\gamma
\\
0&0&1&0
\\0&\frac{1}{\gamma-1}&0&0 \end{pmatrix}\msf J_{2,+}(z), \qquad z\in(-1,0),
\end{equation}
where
\begin{align}
\msf J_{1}(z)&=
I+e^{-\theta_1(sz)+\theta_2(sz)+2\tau s z}E_{1,2}+\frac{e^{2\theta_2(sz)}}{1-\gamma}E_{4,2}-e^{-\theta_1(sz)+\theta_2(sz)-2 \tau sz}E_{4,3},
\label{def:msfJ1} \\
\msf J_2(z)&
= I - e^{-\theta_1(sz) + \theta_2(sz) + 2\tau s z}E_{1,2} + \frac{e^{2\theta_2(sz)}}{1-\gamma}E_{4,2}
+ e^{-\theta_1(sz)+\theta_2(sz)-2 \tau sz}E_{4,3},
\label{def:msfJ2}
\end{align}
and
\begin{equation}
\msf J_{\msf R}(z)=\msf J_{3,-}(z)
\begin{pmatrix}
0&0&1-\gamma&0
\\
0&1&0&0
\\
\frac{1}{\gamma-1}&0&0&0
\\0&0&0&1
 \end{pmatrix}\msf J_{4,+}(z), \qquad z\in(0,1),
\end{equation}
where
\begin{align}
\msf J_3(z)&=
I+e^{\theta_1(sz)-\theta_2(sz)-2\tau s z}E_{2,1}+\frac{e^{2\theta_1(sz)}}{1-\gamma}E_{3,1}-e^{\theta_1(sz)-\theta_2(sz)+2\tau sz}E_{3,4},
\label{def:msfJ3}
\\
\msf J_4(z)&= I- e^{\theta_1(sz)-\theta_2(sz)-2\tau s z}E_{2,1}+\frac{e^{2\theta_1(sz)}}{1-\gamma}E_{3,1}+e^{\theta_1(sz)-\theta_2(sz)+2\tau sz}E_{3,4}.
\label{def:msfJ4}
\end{align}
It is easily seen that, for $i=1,\ldots,4$, we have
\begin{equation}\label{eq:inverseJi}
\msf J_i(z)^{-1}= 2I - \msf J_i(z).
\end{equation}

Let $\Omega_{\msf L, \pm}$ and $\Omega_{\msf R, \pm}$ be the lenses on the $\pm$-side of $(-1,0)$ and $(0,1)$, as shown in Figure \ref{fig:lenses}. The second transformation is defined by
\begin{equation}\label{def:hatS}
\what S = \what T\left\{
                   \begin{array}{ll}
                     \msf J_2(z)^{-1}, & \hbox{$z\in \Omega_{\msf L,+}$,}
                     \\
                     \msf J_1(z), & \hbox{$z\in \Omega_{\msf L,-}$,}
                     \\
                     \msf J_4(z)^{-1}, & \hbox{$z\in \Omega_{\msf R,+}$,}
                     \\
                     \msf J_3(z), & \hbox{$z\in \Omega_{\msf R,-}$,}
                     \\
                     I, & \hbox{elsewhere.}
                   \end{array}
                 \right.
\end{equation}
It is then readily seen from RH problem \ref{rhp:hatT} for $\what T$ that $\what S$ satisfies the following RH problem.

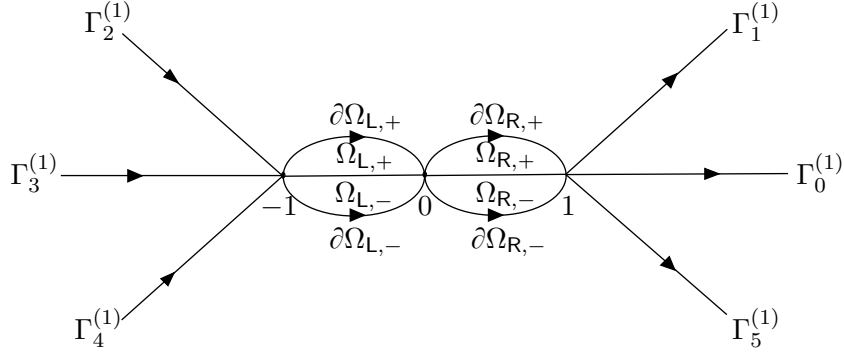
\begin{figure}[t]
\center

\tikzset{every picture/.style={line width=0.5pt}} 

\begin{tikzpicture}[x=0.75pt,y=0.75pt,yscale=-1,xscale=1]


\draw    (139.53,161.2) -- (250.53,161.2) ;
\draw    (170.2,89.76) -- (250.65,161.7) ;
\draw    (169.86,233.83) -- (250.53,161.2) ;
\draw  [fill={rgb, 255:red, 0; green, 0; blue, 0 }  ,fill opacity=1 ] (173,164.33) -- (180,161.33) -- (173,158.33) -- cycle ;
\draw  [fill={rgb, 255:red, 0; green, 0; blue, 0 }  ,fill opacity=1 ] (193,216.5) -- (196,210.13) -- (189.5,213.08)  -- cycle ;
\draw  [fill={rgb, 255:red, 0; green, 0; blue, 0 }  ,fill opacity=1 ] (191,111.59) -- (198,114.5) -- (194,108)  -- cycle ;

\draw  [fill={rgb, 255:red, 0; green, 0; blue, 0 }  ,fill opacity=1 ][line width=1.5]  (250.42,161.03) .. controls (250.42,160.76) and (250.47,160.53) .. (250.53,160.53) .. controls (250.6,160.53) and (250.65,160.76) .. (250.65,161.03) .. controls (250.65,161.31) and (250.6,161.53) .. (250.53,161.53) .. controls (250.47,161.53) and (250.42,161.31) .. (250.42,161.03) -- cycle ;
\draw  [fill={rgb, 255:red, 0; green, 0; blue, 0 }  ,fill opacity=1 ] (284,144.8) -- (291,141.8) -- (284,138.8) -- cycle ;
\draw  [fill={rgb, 255:red, 0; green, 0; blue, 0 }  ,fill opacity=1 ] (284,184.1) -- (291,181.1) -- (284,178.1) -- cycle ;
\draw  [fill={rgb, 255:red, 0; green, 0; blue, 0 }  ,fill opacity=1 ] (353,144.2) -- (360,141.2) -- (353,138.2) -- cycle ;
\draw  [fill={rgb, 255:red, 0; green, 0; blue, 0 }  ,fill opacity=1 ] (353,184.1) -- (360,181.1) -- (353,178.1) -- cycle ;

\draw    (391.42,160.7) -- (428.91,160.53) -- (502.53,160.2) ;
\draw    (391.65,160.7) -- (471.98,231.14) ;
\draw    (391.42,160.7) -- (472.21,87.57) ;
\draw  [fill={rgb, 255:red, 0; green, 0; blue, 0 }  ,fill opacity=1 ] (461,163) -- (468,160.33) -- (461,157.66) -- cycle ;
\draw  [fill={rgb, 255:red, 0; green, 0; blue, 0 }  ,fill opacity=1 ] (444,117) -- (447,110) -- (440,113) -- cycle ;
\draw  [fill={rgb, 255:red, 0; green, 0; blue, 0 }  ,fill opacity=1 ] (439.01,206.25) -- (446,208.7) -- (443.25,202.01) -- cycle ;

\draw  [fill={rgb, 255:red, 0; green, 0; blue, 0 }  ,fill opacity=1 ][line width=1.5]  (321.09,160.7) .. controls (321.09,160.42) and (321.14,160.2) .. (321.2,160.2) .. controls (321.26,160.2) and (321.31,160.42) .. (321.31,160.7) .. controls (321.31,160.98) and (321.26,161.2) .. (321.2,161.2) .. controls (321.14,161.2) and (321.09,160.98) .. (321.09,160.7) -- cycle ;
\draw    (250.53,161.53) -- (321.55,160.91) ;
\draw   (250.53,161.53) .. controls (250.53,150.49) and (266.33,141.53) .. (285.81,141.53) .. controls (305.29,141.53) and (321.09,150.49) .. (321.09,161.53) .. controls (321.09,172.58) and (305.29,181.53) .. (285.81,181.53) .. controls (266.33,181.53) and (250.53,172.58) .. (250.53,161.53) -- cycle ;
\draw    (321.2,161.2) -- (392.21,160.57) ;
\draw   (321.2,161.2) .. controls (321.2,150.15) and (336.99,141.2) .. (356.48,141.2) .. controls (375.96,141.2) and (391.75,150.15) .. (391.75,161.2) .. controls (391.75,172.25) and (375.96,181.2) .. (356.48,181.2) .. controls (336.99,181.2) and (321.2,172.25) .. (321.2,161.2) -- cycle ;

\draw (237.8,170.44) node [anchor=north west][inner sep=0.75pt]   [align=left] {$-1$};
\draw (150,72) node [anchor=north west][inner sep=0.75pt]   [align=left] {$\Gamma_2^{(1)}$};
\draw (113,148.44) node [anchor=north west][inner sep=0.75pt]   [align=left] {$\Gamma_3^{(1)}$};
\draw (145,226.44) node [anchor=north west][inner sep=0.75pt]   [align=left] {$\Gamma_4^{(1)}$};
\draw (316.47,170.44) node [anchor=north west][inner sep=0.75pt]   [align=left] {0};
\draw (473.33,226.44) node [anchor=north west][inner sep=0.75pt]   [align=left] {$\Gamma_5^{(1)}$};
\draw (473.33,72.44) node [anchor=north west][inner sep=0.75pt]   [align=left] {$\Gamma_1^{(1)}$};
\draw (505.33,148.44) node [anchor=north west][inner sep=0.75pt]   [align=left] {$\Gamma_0^{(1)}$};
\draw (275.48,143.6) node [anchor=north west][inner sep=0.75pt]    {$\Omega_{\msf L,+}$};
\draw (275,163.4) node [anchor=north west][inner sep=0.75pt]    {$\Omega_{\msf L,-}$};
\draw (272,123.73) node [anchor=north west][inner sep=0.75pt]    {$\partial \Omega_{\msf L,+}$};
\draw (272,186.07) node [anchor=north west][inner sep=0.75pt]    {$\partial \Omega_{\msf L,-}$};
\draw (387.93,170.44) node [anchor=north west][inner sep=0.75pt]   [align=left] {$1$};
\draw (345.14,143.27) node [anchor=north west][inner sep=0.75pt]    {$\Omega_{\msf R,+}$};
\draw (345.14,163.07) node [anchor=north west][inner sep=0.75pt]    {$\Omega_{\msf R,-}$};
\draw (342,123.73) node [anchor=north west][inner sep=0.75pt]    {$\partial \Omega_{\msf R,+}$};
\draw (342.67,186.07) node [anchor=north west][inner sep=0.75pt]    {$\partial \Omega_{\msf R,-}$};

\end{tikzpicture}
\caption{Regions $\Omega_{\msf L, \pm}$, $\Omega_{\msf R, \pm}$ and the jump contours of RH problem \ref{rhp:hatS} for $\what S$.}
\label{fig:lenses}
\end{figure}

\begin{rhp}\label{rhp:hatS}
\hfill
\begin{enumerate}
\item[\rm (a)] $\what S(z)$ is defined and analytic in $\mathbb{C} \setminus \Gamma_{\what S}$, where
\begin{equation}\label{def:gammahatS}
\Gamma_{\what S}:=\cup^5_{j=0}\Gamma_j^{(1)}\cup [-1,1] \cup \partial \Omega_{\msf L, \pm} \cup \partial \Omega_{\msf R, \pm};
\end{equation}
see Figure \ref{fig:lenses} for an illustration.

\item[\rm (b)] For $z\in \Gamma_{\what S}$, $\what S(z)$ satisfies the jump condition
\begin{equation}\label{eq:hatS-jump}
 \what S_+(z)=\what S_-(z)J_{\what S}(z),
\end{equation}
where
\begin{equation}\label{def:JhatS}
J_{\what S}(z):=\left\{
 \begin{array}{ll}
          J_{\what T}(z), & \qquad \hbox{$z\in \cup^5_{j=0}\Gamma_j^{(1)}$,} \\
          \msf J_2(z),  & \qquad  \hbox{$z\in \partial \Omega_{\msf L,+}$,} \\
          \msf J_1(z),  & \qquad \hbox{$z\in \partial \Omega_{\msf L,-}$,} \\
          \msf J_4(z), & \qquad  \hbox{$z\in \partial \Omega_{\msf R,+}$,} \\
          \msf J_3(z), & \qquad  \hbox{$z\in \partial \Omega_{\msf R,-}$,} \\
           \begin{pmatrix}
             1&0&0&0
             \\
            0&0&0&1-\gamma
             \\
            0&0&1&0
           \\0&\frac{1}{\gamma-1}&0&0
          \end{pmatrix}, & \qquad  \hbox{$z \in (-1,0)$,}
          \\
        \begin{pmatrix}
0&0&1-\gamma&0
\\
0&1&0&0
\\
\frac{1}{\gamma-1}&0&0&0
\\0&0&0&1
 \end{pmatrix}, & \qquad  \hbox{$z \in (0,1)$,}
        \end{array}
      \right.
 \end{equation}
where the functions $J_{\what T}(z)$ and $\msf J_i(z)$, $i=1,\ldots,4$, are defined in \eqref{def:JhatT}, \eqref{def:msfJ1}, \eqref{def:msfJ2}, \eqref{def:msfJ3} and \eqref{def:msfJ4}, respectively.

\item[\rm (c)]As $z \to \infty$ with $z\in \mathbb{C} \setminus \Gamma_{\what S}$, we have
\begin{align}\label{eq:asyhatS}
\what S(z)=\left(I+ \frac{\what T^{(1)}}{z} + \Boh(z^{-2}) \right) \diag \left((-z)^{-\frac14},z^{-\frac14},(-z)^{\frac14},z^{\frac14} \right)A,
\end{align}
where $\what T^{(1)}$ is given in \eqref{eq:asyhatT} and $A$ is defined in \eqref{def:A}.

\item[\rm (d)]
As $z \to \pm 1$, we have $\what S(z)=\Boh(\ln(z \mp 1))$.
\end{enumerate}
\end{rhp}

\subsection{Global parametrix}
As $s\to +\infty$, it is now readily seen that all the jump matrices of $\what S$ tend to the identity matrices exponentially
fast except for those along $\mathbb{R}$ and we are lead to consider the following global parametrix.

\begin{rhp}\label{rhp:hatN}
\hfill
\begin{enumerate}
\item[\rm (a)] $\what N(z)$ is defined and analytic in $\mathbb{C}\setminus \mathbb R$.

\item[\rm (b)] For $x\in \mathbb{R}$, $\what N(x)$ satisfies the jump condition
\begin{equation}\label{eq:hatN-jump}
 \what N_+(x)=\what N_-(x)\left\{ \begin{array}{ll}
 \begin{pmatrix}
1&0&0&0\\0&0&0&1\\0&0&1&0\\0&-1&0&0
\end{pmatrix}, & \quad \hbox{$x<-1$,}
\\
\begin{pmatrix}
             1&0&0&0
             \\
            0&0&0&1-\gamma
             \\
            0&0&1&0
           \\0&\frac{1}{\gamma-1}&0&0
\end{pmatrix}, & \quad  \hbox{$x \in (-1,0)$,}
\\
\begin{pmatrix}
0&0&1-\gamma&0
\\
0&1&0&0
\\
\frac{1}{\gamma-1}&0&0&0
\\0&0&0&1
 \end{pmatrix}, & \quad  \hbox{$x \in (0,1)$,}
\\
\begin{pmatrix}0&0&1&0\\0&1&0&0\\-1&0&0&0\\0&0&0&1 \end{pmatrix}, & \quad  \hbox{$x>1$.}
\end{array}
              \right.
\end{equation}

\item[\rm (c)]As $z \to \infty$,  we have
\begin{align}
\what N(z)=\left( I+ \Boh(z^{-2}) \right) \diag \left((-z)^{-\frac14},z^{-\frac14},(-z)^{\frac14},z^{\frac14} \right)A,
\end{align}
where $A$ is defined in \eqref{def:A}.

\end{enumerate}
\end{rhp}

To solve the above RH problem, we observe from the jump condition for $\what N$ that it is natural to expect $\what N$ should take a chessboard structure, i.e.,
$$
\what N=
\begin{pmatrix}
\star & 0 & \star & 0
\\
0 & \star & 0 & \star
\\
\star & 0 & \star & 0
\\
0 & \star & 0 & \star
\end{pmatrix},
$$
where $\star$ denotes a matrix entry to be specified. This sparsity pattern then implies that we could decompose $\what N$ into two $2\times 2$ RH problems. Indeed, let
\begin{equation}\label{def:hatN1}
\what N_1(z):=
\begin{pmatrix}
\what N_{22}(z) & \what N_{24}(z)
\\
\what N_{42}(z) & \what N_{44}(z)
\end{pmatrix}.
\end{equation}
It is readily seen that $\what N_1$ solves the following RH problem.
\begin{rhp}\label{rhp:hatN1}
\hfill
\begin{enumerate}
\item[\rm (a)] $\what N_1(z)$ is defined and analytic in $\mathbb{C}\setminus (-\infty,0]$.

\item[\rm (b)] For $x\in (-\infty,0)$, $\what N_1(x)$ satisfies the jump condition
\begin{equation}\label{eq:hatN1-jump}
 \what N_{1,+}(x)=\what N_{1,-}(x)\left\{ \begin{array}{ll}
 \begin{pmatrix}
0&1
\\
-1&0
\end{pmatrix}, & \quad \hbox{$x<-1$,}
\\
\begin{pmatrix}
0&1-\gamma
\\
\frac{1}{\gamma-1}&0
\end{pmatrix}, & \quad  \hbox{$x \in (-1,0)$.}
\end{array}
              \right.
\end{equation}

\item[\rm (c)]As $z \to \infty$,  we have
\begin{align}\label{eq:asyhatN1}
\what N_1(z)=\left( I+ \Boh(z^{-1}) \right) \frac{ z^{- \sigma_3/4}}{\sqrt 2}
\begin{pmatrix}
1 & \ii
\\
\ii & 1
\end{pmatrix}.
\end{align}
\end{enumerate}
\end{rhp}

Similarly,
\begin{equation}\label{def:hatN2}
\what N_2(z):=
\begin{pmatrix}
\what N_{11}(z) & \what N_{13}(z)
\\
\what N_{31}(z) & \what N_{33}(z)
\end{pmatrix}
\end{equation}
is a solution of the following RH problem.
\begin{rhp}\label{rhp:hatN2}
\hfill
\begin{enumerate}
\item[\rm (a)] $\what N_2(z)$ is defined and analytic in $\mathbb{C}\setminus [0,\infty)$.

\item[\rm (b)] For $x\in (0,\infty)$, $\what N_2(x)$ satisfies the jump condition
\begin{equation}\label{eq:hatN2-jump}
 \what N_{2,+}(x)=\what N_{2,-}(x)\left\{ \begin{array}{ll}
 \begin{pmatrix}
 0&1-\gamma
 \\
 \frac{1}{\gamma-1}&0
 \end{pmatrix}, & \quad \hbox{$x\in (0,1)$,}
 \\
 \begin{pmatrix}
 0&1
 \\
 -1&0
 \end{pmatrix}, & \quad  \hbox{$x > 1$.}
\end{array}
              \right.
\end{equation}

\item[\rm (c)]As $z \to \infty$,  we have
\begin{align}\label{eq:asyhatN2}
\what N_2(z)=\left( I+ \Boh(z^{-1}) \right)  \frac{(-z)^{- \sigma_3/4}}{\sqrt 2}
\begin{pmatrix}
1 & -\ii
\\
-\ii & 1
\end{pmatrix}.
\end{align}
\end{enumerate}
\end{rhp}
Since it is easily seen that
\begin{equation}\label{eq:hatN2inhatN1}
\what N_2(z)=\sigma_3 \what N_1(-z) \sigma_3,
\end{equation}
it suffices to solve RH problem \ref{rhp:hatN1} for $\what N_1$. For that purpose, we set
\begin{equation}\label{def:lambda}
\lambda(\zeta):= \left(\frac{\zeta-\ii}{\zeta+\ii}\right)^{\beta},\qquad \zeta \in \mathbb{C} \setminus [-\ii,\ii],
\end{equation}
where $\beta=\ln (1-\gamma)/(2\pi \ii)$ (see \eqref{def:beta}) and the branch cut is chosen such that $\lambda(\zeta) \to  1$ as $\zeta \to \infty$ with the orientation from $\ii$ to $-\ii$. With the aid of the function $\lambda$, we define
\begin{equation}\label{def:di}
d_1(z)=\lambda(z^{\frac12}), \qquad d_2(z)=\lambda(-z^{\frac12}).
\end{equation}
Some properties of $d_1$ and $d_2$ are collected in the proposition below.
\begin{proposition}\label{prop:di}
The functions $d_1(z)$ and $d_2(z)$ in \eqref{def:di} satisfy the following properties.
\begin{enumerate}
\item[\rm (i)] $d_1(z)$ and $d_2(z)$ are analytic in $\mathbb{C} \setminus (-\infty,0]$. Moreover, we have
\begin{align}
d_{1,\pm}(x)&=d_{2,\mp}(x), && x<-1, \label{eq:dijump1}
\\
d_{1,\pm}(x)&=d_{2,\mp}(x)e^{-2\beta \pi \ii},  && -1<x<0.
\end{align}
\item[\rm (ii)] As $z\to \infty$, we have
\begin{align}
d_1(z)&=1-\frac{2\beta \ii}{z^{1/2}}-\frac{2\beta^2}{z}+\Boh(z^{-2}),
\\
d_2(z)&=1+\frac{2\beta \ii}{z^{1/2}}-\frac{2\beta^2}{z}+\Boh(z^{-2}).
\end{align}
\item[\rm (iii)] As $z \to 0$, we have
\begin{align}
d_1(z) &= e^{-\beta \pi \ii}\left(1 + 2 \ii \beta \sqrt{z}- 2 \beta^2 z - \frac{2 \ii (\beta + 2 \beta^3)}{3} z^{\frac 32} + \Boh(z^2)\right), \label{eq:d10}
\\
d_2(z) &= e^{\beta \pi \ii}\left(1 - 2 \ii \beta \sqrt{z}- 2 \beta^2 z + \frac{2 \ii (\beta + 2 \beta^3)}{3} z^{\frac 32} + \Boh(z^2)\right). \label{eq:d20}
\end{align}
\item[\rm (iv)] As $ z \to -1$ and $\Im{z}>0$, we have
\begin{align}
d_1(z) & = e^{-\beta \pi \ii} 4^{-\beta} (z+1)^{\beta}\left(1+ \frac{\beta}{2} (z+1) + \Boh\left((z+1)^2\right)\right), \label{eq:d1-1}
\\
d_2(z) & =  e^{\beta \pi \ii} 4^{\beta} (z+1)^{-\beta}\left(1- \frac{\beta}{2} (z+1) + \Boh\left((z+1)^2\right)\right). \label{eq:d2-1}
\end{align}
\item[\rm (v)] As $z \to 1$, we have
\begin{align}
d_1(z) & = e^{-\frac{\beta \pi \ii}{2}}\left(1 + \frac{\beta \ii}{2}(z-1) + \Boh\left((z-1)^2\right)\right), \label{eq:d11}
\\
d_2(z) & = e^{\frac{\beta \pi \ii}{2}}\left(1 - \frac{\beta \ii}{2}(z-1) + \Boh\left((z-1)^2\right)\right). \label{eq:d21}
\end{align}
\end{enumerate}
\end{proposition}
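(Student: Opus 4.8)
The proof is entirely elementary: both $d_1$ and $d_2$ are built from the single function $\lambda(\zeta)=\bigl(\tfrac{\zeta-\ii}{\zeta+\ii}\bigr)^{\beta}$ in \eqref{def:lambda} via the substitutions $\zeta=z^{1/2}$ and $\zeta=-z^{1/2}$ (principal branch of the square root), so every assertion reduces to composing a local expansion of $\lambda$ with a local expansion of $z\mapsto\pm z^{1/2}$ and then reading off the correct branch constant. I will treat the five items in turn.

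For item (i) I would first note that $z\mapsto z^{1/2}$ maps $\mathbb{C}\setminus(-\infty,0]$ conformally onto the open right half-plane $\{\Re\zeta>0\}$ and $z\mapsto -z^{1/2}$ onto the open left half-plane; since the branch cut $[-\ii,\ii]$ of $\lambda$ lies on the imaginary axis, both half-planes are disjoint from it, which gives the analyticity of $d_1$ and $d_2$ in $\mathbb{C}\setminus(-\infty,0]$. For the jump relations I would use the boundary values $z_{+}^{1/2}=\ii\sqrt{|x|}$, $z_{-}^{1/2}=-\ii\sqrt{|x|}$ for $x<0$, so that $d_{1,\pm}(x)$ and $d_{2,\mp}(x)$ are values of $\lambda$ at the same point $\pm\ii\sqrt{|x|}$. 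When $x<-1$ this point lies off the cut and the two values coincide; when $-1<x<0$ it lies on the open segment $(0,\ii)$ (resp. $(-\ii,0)$), and the factor $e^{-2\beta\pi\ii}$ is exactly the multiplicative jump of $\lambda$ across its own cut, read off from the fact that $\tfrac{\zeta-\ii}{\zeta+\ii}$ is a negative real number on $[-\ii,\ii]$ whose boundary argument equals $-\pi$ on the $\Re\zeta>0$ side and $+\pi$ on the $\Re\zeta<0$ side (the orientation from $\ii$ to $-\ii$ fixing the sign of the exponent).

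For items (ii)--(v) the recipe is the same in each case. In (ii) I expand at $\zeta=\infty$: $\tfrac{\zeta-\ii}{\zeta+\ii}=1-\tfrac{2\ii}{\zeta}-\tfrac{2}{\zeta^{2}}+\Boh(\zeta^{-3})$, hence $\lambda(\zeta)=1-\tfrac{2\ii\beta}{\zeta}-\tfrac{2\beta^{2}}{\zeta^{2}}+\Boh(\zeta^{-3})$, and substitute $\zeta=\pm z^{1/2}$. In (iii) I use $\tfrac{\zeta-\ii}{\zeta+\ii}=-\bigl(1+2\ii\zeta-2\zeta^{2}-2\ii\zeta^{3}+\Boh(\zeta^{4})\bigr)$ near $\zeta=0$, raise to the power $\beta$, and put $\zeta=\pm\sqrt z$; the constant $e^{\mp\beta\pi\ii}=\lambda(0^{\pm})$ is obtained by continuing $\arg\tfrac{\zeta-\ii}{\zeta+\ii}$ from $0$ at $\zeta=\infty$ to $\mp\pi$ at $\zeta=0$ along the positive (resp. negative) real $\zeta$-axis, on which $\tfrac{\zeta-\ii}{\zeta+\ii}$ stays in the lower (resp. upper) half-plane. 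In (iv) I use the local models $\lambda(\zeta)=\bigl(\tfrac{\zeta-\ii}{2\ii}\bigr)^{\beta}\bigl(1+\Boh(\zeta-\ii)\bigr)$ near $\zeta=\ii$ and $\lambda(\zeta)=\bigl(\tfrac{-2\ii}{\zeta+\ii}\bigr)^{\beta}\bigl(1+\Boh(\zeta+\ii)\bigr)$ near $\zeta=-\ii$, combine with $z^{1/2}=\ii\bigl(1-\tfrac12(z+1)-\tfrac18(z+1)^{2}+\cdots\bigr)$ and its negative, and track the branch exactly as in (iii); this yields the $(z+1)^{\pm\beta}$ prefactors together with $4^{\mp\beta}e^{\mp\beta\pi\ii}$. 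In (v), $\lambda$ is analytic at $\zeta=\pm1$ with $\lambda'(\zeta)=\lambda(\zeta)\,\beta\,\tfrac{2\ii}{\zeta^{2}+1}$, so a one-term Taylor expansion of $\lambda$ about $\zeta=\pm1$ composed with $z^{1/2}=1+\tfrac12(z-1)+\cdots$ gives the result, the constants $e^{\mp\beta\pi\ii/2}=\lambda(\pm1)$ again being fixed by continuing the argument from infinity.

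The only point requiring genuine care is the consistent choice of the multiplicative branch constants $e^{\mp\beta\pi\ii}$ and $e^{\mp\beta\pi\ii/2}$ (and the sign of the jump in (i)); all of these follow from one computation, namely tracking $\arg\tfrac{\zeta-\ii}{\zeta+\ii}$ along an explicit path from $\zeta=\infty$, where $\lambda=1$, to the relevant point. Everything else is routine power-series bookkeeping, which I would carry out to the stated orders to extract the coefficients displayed in \eqref{def:di} through the end of the proposition.
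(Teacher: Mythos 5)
Your proposal is correct and follows essentially the same route as the paper: one records the jump of $\lambda$ across $[-\ii,\ii]$ and its local expansions at $\zeta=\infty$, $0$, $\pm\ii$, $\pm1$, then composes with $\zeta=\pm z^{1/2}$ and tracks the branch constants, exactly as in the paper's proof of Proposition \ref{prop:di}. Your explicit attention to the branch continuation from $\zeta=\infty$ is the right (and only delicate) point, and your stated expansions are consistent with the coefficients in the proposition (indeed your bookkeeping would reproduce the $\beta+2\beta^3$ coefficient in \eqref{eq:d10}--\eqref{eq:d20}).
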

\begin{proof}
From the definition of $\lambda$ given in \eqref{def:lambda}, it is readily seen that
\begin{equation}
\lambda_{+}(\zeta)=\lambda_{-}(\zeta)\left\{
                                       \begin{array}{ll}
                                         e^{-2\beta \pi \ii}, & \hbox{$\zeta \in (0,\ii)$,} \\
                                         e^{2\beta \pi \ii}, & \hbox{$\zeta \in (-\ii,0)$,}
                                       \end{array}
                                     \right.
\end{equation}
and
\begin{align}
\lambda(\zeta)&=1-\frac{2 \beta \ii}{\zeta}-\frac{2\beta^2}{\zeta^2}+\Boh(\zeta^{-3}), \quad \zeta \to \infty,\\
\lambda(\zeta)&=\begin{cases}
e^{-\beta \pi \ii}\left(1 + 2\ii \beta \zeta-2\beta^2 \zeta^2 -\frac{2 \ii (\beta + \beta^3)}{3} \zeta^3 + \Boh(\zeta^4)\right), & \quad \Re{\zeta}>0,\\
e^{\beta \pi \ii}\left(1 + 2\ii \beta \zeta-2\beta^2 \zeta^2 -\frac{2 \ii (\beta + \beta^3)}{3} \zeta^3 + \Boh(\zeta^4)\right), & \quad \Re{\zeta}<0,
\end{cases} \quad \zeta \to 0,\\
\lambda(\zeta)&= e^{-\frac{\beta \pi \ii}{2}} (\zeta-\ii)^{\beta}\left(2-\frac{\beta}{2}(\zeta-\ii)+\Boh\left((\zeta-\ii)^2\right)\right),\quad \zeta \to \ii,\\
\lambda(\zeta)&= e^{-\frac{\beta \pi \ii}{2}} \left(1 + \beta \ii (\zeta-1) + \Boh\left((\zeta-1)^2\right)\right), \quad \zeta \to 1.
\end{align}
This, together with \eqref{def:di}, gives us the claims in the proposition after straightforward calculations.
\end{proof}

We can now solve RH problem \ref{rhp:hatN1} for $\what N_1$ by using the functions $d_1$ and $d_2$.
\begin{proposition}\label{prop:solhatN1}
Let $d_1$ and $d_2$ be two functions defined in \eqref{def:di}. A solution of RH problem \ref{rhp:hatN1} is given by
\begin{equation}\label{eq:hatN1exp}
\what N_1(z)=
\begin{pmatrix}
1 & 0
\\
-2\beta & 1
\end{pmatrix}\frac{ z^{- \sigma_3/4}}{\sqrt 2}
\begin{pmatrix}
1 & \ii
\\
\ii & 1
\end{pmatrix}\diag (d_1(z),d_2(z)).
\end{equation}
\end{proposition}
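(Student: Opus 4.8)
The plan is to check directly that the matrix on the right of \eqref{eq:hatN1exp} satisfies the three conditions (a)--(c) of RH problem \ref{rhp:hatN1}. It is convenient to abbreviate $P=\left(\begin{smallmatrix}1&0\\-2\beta&1\end{smallmatrix}\right)$ and $B=\frac{1}{\sqrt2}\left(\begin{smallmatrix}1&\ii\\\ii&1\end{smallmatrix}\right)$, so that the proposed solution reads $\what N_1(z)=P\,z^{-\sigma_3/4}B\,\diag(d_1(z),d_2(z))$, where $z^{-\sigma_3/4}=\diag(z^{-1/4},z^{1/4})$ is taken with the principal branch, cut along $(-\infty,0]$. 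Condition (a) is then immediate: $P$ and $B$ are constant, $z^{-\sigma_3/4}$ is analytic in $\mathbb{C}\setminus(-\infty,0]$, and by Proposition \ref{prop:di}(i) so are $d_1$ and $d_2$; hence the product is analytic in $\mathbb{C}\setminus(-\infty,0]$.

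For the jump relation (b), I would note that on $(-\infty,0)$ neither $P$ nor $B$ jumps, whereas $\bigl(z^{-\sigma_3/4}_-\bigr)^{-1}z^{-\sigma_3/4}_+=\diag(-\ii,\ii)=-\ii\sigma_3$. A one-line computation using $B^{-1}=\frac{1}{\sqrt2}\left(\begin{smallmatrix}1&-\ii\\-\ii&1\end{smallmatrix}\right)$ gives $B^{-1}(-\ii\sigma_3)B=\left(\begin{smallmatrix}0&1\\-1&0\end{smallmatrix}\right)$, and therefore $\what N_{1,-}(x)^{-1}\what N_{1,+}(x)=\diag(d_{1,-},d_{2,-})^{-1}\left(\begin{smallmatrix}0&1\\-1&0\end{smallmatrix}\right)\diag(d_{1,+},d_{2,+})$, an anti-diagonal matrix with upper entry $d_{2,+}/d_{1,-}$ and lower entry $-d_{1,+}/d_{2,-}$. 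Inserting the boundary identities from Proposition \ref{prop:di}(i), namely $d_{1,\pm}=d_{2,\mp}$ on $(-\infty,-1)$ and $d_{1,\pm}=d_{2,\mp}e^{-2\beta\pi\ii}$ on $(-1,0)$, together with $e^{2\pi\ii\beta}=1-\gamma$ (immediate from \eqref{def:beta}), reproduces exactly the two jump matrices in \eqref{eq:hatN1-jump}.

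For the normalisation (c) a small cancellation is needed. Writing $\diag(d_1,d_2)=I+E(z)$, Proposition \ref{prop:di}(ii) gives $E(z)=-2\beta\ii z^{-1/2}\sigma_3-2\beta^2z^{-1}I+\Boh(z^{-2})$, so
\[
\what N_1(z)\bigl(z^{-\sigma_3/4}B\bigr)^{-1}=P+P\,z^{-\sigma_3/4}B\,E(z)\,B^{-1}z^{\sigma_3/4}.
\]
Here one uses $B\sigma_3B^{-1}=\sigma_2$ and $z^{-\sigma_3/4}\sigma_2z^{\sigma_3/4}=\left(\begin{smallmatrix}0&-\ii z^{-1/2}\\\ii z^{1/2}&0\end{smallmatrix}\right)$, so that the contribution of the leading $z^{-1/2}$ part of $E$ collapses to $\left(\begin{smallmatrix}0&0\\2\beta&0\end{smallmatrix}\right)+\Boh(z^{-1})$, while the $-2\beta^2z^{-1}I$ and $\Boh(z^{-2})$ pieces, being diagonal, contribute only $\Boh(z^{-1})$. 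Consequently $\what N_1(z)\bigl(z^{-\sigma_3/4}B\bigr)^{-1}=P+P\left(\begin{smallmatrix}0&0\\2\beta&0\end{smallmatrix}\right)+\Boh(z^{-1})=I+\Boh(z^{-1})$, which is \eqref{eq:asyhatN1}.

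I expect this last cancellation to be the only delicate point: taken alone the $\Boh(z^{-1/2})$ terms of $d_1$ and $d_2$ would violate the required normalisation $I+\Boh(z^{-1})$, and it is precisely the constant lower-triangular prefactor $\left(\begin{smallmatrix}1&0\\-2\beta&1\end{smallmatrix}\right)$ in \eqref{eq:hatN1exp} that is tailored to absorb them, so keeping the sign of $\beta$ and the conjugations by $z^{\pm\sigma_3/4}$ straight is where care is required. Everything else reduces to the explicit matrix identities above and to plugging in the expansions of Proposition \ref{prop:di}. Uniqueness of the solution, should it be wanted, follows from a standard Liouville-type argument (note $\det\what N_1\equiv1$), but it is not required for the proposition as stated.
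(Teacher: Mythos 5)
Your verification is correct and takes essentially the same route as the paper's own proof: a direct check of the jump via the conjugation identity $B^{-1}(-\ii\sigma_3)B=\left(\begin{smallmatrix}0&1\\-1&0\end{smallmatrix}\right)$ together with Proposition \ref{prop:di}(i) (you even spell out the $(-1,0)$ case the paper omits), and the same observation that the prefactor $\left(\begin{smallmatrix}1&0\\-2\beta&1\end{smallmatrix}\right)$ absorbs the $\Boh(z^{-1/2})$ terms of $d_1,d_2$ to yield \eqref{eq:asyhatN1}. One cosmetic remark: the $\Boh(z^{-2})$ remainder of $\diag(d_1(z),d_2(z))$ is harmless not simply ``because it is diagonal'' but because the conjugation by $z^{\pm\sigma_3/4}$ amplifies off-diagonal entries by at most $|z|^{1/2}$, leaving an $\Boh(z^{-3/2})$ contribution — the conclusion is unaffected.
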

\begin{proof}
By \eqref{eq:hatN1exp} and \eqref{eq:dijump1}, it follows that for $x<-1$
\begin{align*}
& \what N_{1,-}(z)^{-1}\what N_{1,+}(z)
\nonumber
\\
&=\frac12\diag\left(\frac{1}{d_{1,-}(x)},\frac{1}{d_{2,-}(x)}\right)
\begin{pmatrix}
1 & -\ii
\\
-\ii & 1
\end{pmatrix}\diag(-\ii,\ii)
\begin{pmatrix}
1 & \ii
\\
\ii & 1
\end{pmatrix}\diag (d_{1,+}(x),d_{2,+}(x))
\nonumber
\\
&= \begin{pmatrix}
0 & 1
\\
-1 & 0
\end{pmatrix},
\end{align*}
as required. The jump of $\what N_1$ on $(-1,0)$ can be checked in a similar manner and we omit the details here.

To show the large $z$ behavior of $\what N_1$ in \eqref{eq:hatN1exp}, we observe from item (ii) of Proposition \ref{prop:di} that, as $z\to \infty$,
\begin{align*}
\frac{z^{- \sigma_3/4}}{\sqrt 2}
\begin{pmatrix}
1 & \ii
\\
\ii & 1
\end{pmatrix}\diag (d_1(z),d_2(z))=\left(\begin{pmatrix}
1 & 0
\\
2\beta & 1
\end{pmatrix}+\Boh(z^{-1})\right)\frac{z^{- \sigma_3/4}}{\sqrt 2}
\begin{pmatrix}
1 & \ii
\\
\ii & 1
\end{pmatrix}.
\end{align*}
Thus, $\what N_1$ in \eqref{eq:hatN1exp} indeed satisfies the asymptotic condition \eqref{eq:asyhatN1}.

This completes the proof of Proposition \ref{prop:solhatN1}.
\end{proof}

In view of \eqref{def:hatN1}, \eqref{def:hatN2}, \eqref{eq:hatN2inhatN1} and Proposition \ref{prop:solhatN1}, the following lemma is immediate.
\begin{lemma}
A solution of RH problem \ref{rhp:hatN} is given by
\begin{align}\label{eq:hatNexp}
\what N(z)
&=
(I+2\beta E_{3,1}-2\beta E_{4,2})
\diag \left((-z)^{-\frac14},z^{-\frac14},(-z)^{\frac14},z^{\frac14} \right)A
\nonumber
\\
& \quad \times \diag \left(d_1(-z),d_1(z),d_2(-z),d_2(z) \right),
\end{align}
where the functions $A$, $d_1$ and $d_2$ are defined in \eqref{def:A} and \eqref{def:di}. Moreover, we have
\begin{align}\label{eq:asyhatN}
\what N(z)
&=
\left(I+\frac{\what N^{(1)}}{z}+\Boh(z^{-2})
\right)
\diag \left((-z)^{-\frac14},z^{-\frac14},(-z)^{\frac14},z^{\frac14} \right)A, \quad z \to \infty,
\end{align}
where $A$ is defined in \eqref{def:A} and
\begin{equation}\label{def:hatN1exp}
\what N^{(1)}=\begin{pmatrix}
2 \beta^2 & 0 & -2\beta & 0\\
0 & \ast & 0 & \ast\\
\ast & 0 & \ast & 0\\
0 & \ast & 0 & \ast
\end{pmatrix}.
\end{equation}
\end{lemma}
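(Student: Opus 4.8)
The plan is to assemble $\what N$ from the two $2\times2$ solutions already constructed and then extract its expansion at infinity; everything is bookkeeping. I would first record the chessboard sparsity: the four jump matrices on $\mathbb{R}$ in RH problem \ref{rhp:hatN}, together with the normalization matrix $A$ of \eqref{def:A}, are all supported on the index set $\{(1,1),(1,3),(3,1),(3,3)\}\cup\{(2,2),(2,4),(4,2),(4,4)\}$. It follows that a matrix of the corresponding chessboard shape solves RH problem \ref{rhp:hatN} if and only if its $\{1,3\}$-indexed principal submatrix $\what N_2$ of \eqref{def:hatN2} solves RH problem \ref{rhp:hatN2} and its $\{2,4\}$-indexed principal submatrix $\what N_1$ of \eqref{def:hatN1} solves RH problem \ref{rhp:hatN1}. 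By \eqref{eq:hatN2inhatN1} one has $\what N_2(z)=\sigma_3\what N_1(-z)\sigma_3$, and $\what N_1$ is given explicitly in Proposition \ref{prop:solhatN1}; so RH problem \ref{rhp:hatN} is solvable.

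Next I would check that \eqref{eq:hatNexp} is precisely this reassembly. Setting $D(z)=\diag((-z)^{-1/4},z^{-1/4},(-z)^{1/4},z^{1/4})$ and $\Lambda(z)=\diag(d_1(-z),d_1(z),d_2(-z),d_2(z))$, the right-hand side of \eqref{eq:hatNexp} is chessboard because $D,\Lambda$ are diagonal while $I+2\beta E_{3,1}-2\beta E_{4,2}$ and $A$ are chessboard. Restricting it to rows and columns $\{2,4\}$ collapses the prefactor to $\left(\begin{smallmatrix}1&0\\-2\beta&1\end{smallmatrix}\right)$, $D$ to $z^{-\sigma_3/4}$, $A$ to $\tfrac{1}{\sqrt2}\left(\begin{smallmatrix}1&\ii\\\ii&1\end{smallmatrix}\right)$ and $\Lambda$ to $\diag(d_1(z),d_2(z))$, which reproduces $\what N_1(z)$ as in \eqref{eq:hatN1exp}; restricting to rows and columns $\{1,3\}$ collapses the prefactor to $\left(\begin{smallmatrix}1&0\\2\beta&1\end{smallmatrix}\right)$, $D$ to $(-z)^{-\sigma_3/4}$, $A$ to $\tfrac{1}{\sqrt2}\left(\begin{smallmatrix}1&-\ii\\-\ii&1\end{smallmatrix}\right)$ and $\Lambda$ to $\diag(d_1(-z),d_2(-z))$, which equals $\sigma_3\what N_1(-z)\sigma_3=\what N_2(z)$. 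By the previous paragraph, \eqref{eq:hatNexp} therefore solves RH problem \ref{rhp:hatN}.

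For the expansion \eqref{eq:asyhatN} I would use Proposition \ref{prop:di}(ii): since $\Lambda(z)=I+\Boh(z^{-1/2})$, conjugating $A\Lambda(z)A^{-1}$ by $D(z)$ and combining with the $\Boh(z^{-1/2})$ tail of $(I+2\beta E_{3,1}-2\beta E_{4,2})D(z)$ shows $\what N(z)=(I+\Boh(z^{-1}))D(z)A$, so $\what N^{(1)}$ is well defined. For its first row it suffices, by chessboardness (which forces the $(1,2)$ and $(1,4)$ entries of $\what N^{(1)}$ to vanish), to expand the $\{1,3\}$-block $\what N_2(z)$; using $d_1(w)+d_2(w)=2-4\beta^2w^{-1}+\Boh(w^{-2})$ and $d_1(w)-d_2(w)=-4\beta\ii w^{-1/2}+\Boh(w^{-3/2})$ (with $w=-z$), a short conjugation computation — in which the transient non-decaying $(2,1)$ term $-2\beta$ produced by $(-z)^{-\sigma_3/4}(\,\cdot\,)(-z)^{\sigma_3/4}$ is cancelled by the triangular prefactor $\left(\begin{smallmatrix}1&0\\2\beta&1\end{smallmatrix}\right)$ — gives $(\what N^{(1)})_{11}=2\beta^2$ and $(\what N^{(1)})_{13}=-2\beta$, hence the asserted first row $(2\beta^2,0,-2\beta,0)$; the other entries of $\what N^{(1)}$ are not needed.

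I expect the only genuinely delicate point to be the branch bookkeeping in the last step: the $\{1,3\}$-block is normalized against $(-z)^{-\sigma_3/4}$ rather than $z^{-\sigma_3/4}$, so the branch of $(-z)^{\pm1/4}$ must be carried correctly through the conjugation, and one has to recognize that the transient non-decaying off-diagonal entry is exactly what the $2\beta E_{3,1}-2\beta E_{4,2}$ correction in \eqref{eq:hatNexp} is designed to cancel. The chessboard sparsity check, the submatrix restrictions, and the Taylor expansions are all routine.
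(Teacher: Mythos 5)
Your argument is exactly the route the paper takes: it declares the lemma ``immediate'' from \eqref{def:hatN1}, \eqref{def:hatN2}, \eqref{eq:hatN2inhatN1} and Proposition \ref{prop:solhatN1}, i.e.\ reassembling $\what N$ from the two $2\times 2$ blocks via the chessboard structure and reading off $\what N^{(1)}$ from Proposition \ref{prop:di}(ii), and your restriction checks and the cancellation of the non-decaying $\mp 2\beta$ entries by the prefactor $I+2\beta E_{3,1}-2\beta E_{4,2}$ are correct. You have simply written out the bookkeeping the paper omits, so the proposal is correct and essentially identical in approach.
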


\subsection{Local parametrix near $z=0$}

Since the jump matrices for $\what S$ and $\what N$ are not uniformly close to each other near $z=0$ and $z=\pm1$, we need to construct the local parametrices near these points and start with the local parametrix near the origin.

\begin{rhp}\label{rhp:hatP0}
\hfill
\begin{enumerate}
\item[\rm (a)] $\what P^{(0)}(z)$ is defined and analytic in $D(0, \varepsilon)\setminus \Gamma_{\what S}$, where $\Gamma_{\what S}$ is defined in \eqref{def:gammahatS}.

\item[\rm (b)] For $z \in D(0, \varepsilon) \cap \Gamma_{\what S}$, $\what P^{(0)}(z)$ satisfies the jump condition
\begin{equation}\label{eq:hatP0-jump}
 \what P^{(0)}_+(z)=\what P^{(0)}_-(z)J_{\what S}(z),
\end{equation}
where $J_{\what S}(z)$ is given in \eqref{def:JhatS}.

\item[\rm (c)]As $s \to \infty$,  we have the matching condition
\begin{equation}\label{eq:matchhatP0}
\what P^{(0)}(z)=\left( I+ \Boh(s^{-\frac 12}) \right) \what N(z),\quad z \in \partial D(0, \varepsilon),
\end{equation}
where $\what N(z)$ is given in \eqref{eq:hatNexp}.
\end{enumerate}
\end{rhp}

This RH problem can be solved by using the solution $M(z)$ of the tacnode RH problem \ref{rhp:tac}. More precisely, we define
\begin{align}\label{def:hatP0}
\what P^{(0)}(z) & = \what E_0(z) M(sz) \diag \left((1-\gamma)^{-\frac 12}e^{\theta_1(sz) - \tau sz},(1-\gamma)^{-\frac 12}e^{\theta_2(sz) + \tau sz},(1-\gamma)^{\frac 12}e^{-\theta_1(sz) - \tau sz}\right.,\nonumber\\
&\qquad  \left.(1-\gamma)^{\frac 12}e^{-\theta_2(sz) + \tau sz}\right),
\end{align}
with
\begin{align}\label{def:hatE0}
\what E_0(z)& = \what N(z)\diag\left((1-\gamma)^{\frac 12},(1-\gamma)^{\frac 12},(1-\gamma)^{-\frac 12},(1-\gamma)^{-\frac 12}\right) A^{-1}\nonumber\\
&\quad \times \diag \left((-sz)^{\frac 14}, (sz)^{\frac 14},(-sz)^{-\frac 14},(sz)^{-\frac 14}\right),
\end{align}
where $A$ is defined in \eqref{def:A} and $\what N (z)$ is given in \eqref{eq:hatNexp}.

\begin{proposition}
The local parametrix $\what P^{(0)}(z)$ defined in \eqref{def:hatP0} solves RH problem \ref{rhp:hatP0}.
\end{proposition}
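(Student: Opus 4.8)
The plan is to verify the three defining properties of RH problem~\ref{rhp:hatP0} one at a time, the real work being concentrated in an auxiliary claim: the prefactor $\what E_0$ of~\eqref{def:hatE0} is analytic in the whole disc $D(0,\varepsilon)$. Granting this for the moment, property~(a) is immediate once one observes that, inside $D(0,\varepsilon)$, the six rays $\Gamma_0,\dots,\Gamma_5$ of RH problem~\ref{rhp:tac}, rescaled by $z\mapsto sz$, may be taken to coincide with the six pieces of $\Gamma_{\what S}$ meeting the origin -- namely $(0,\varepsilon)$, the four lens boundaries $\partial\Omega_{\msf R,\pm}$, $\partial\Omega_{\msf L,\pm}$, and $(-\varepsilon,0)$: then $\what P^{(0)}=\what E_0\cdot M(s\,\cdot)\cdot G$, with $G(z)$ the diagonal matrix of entire functions appearing in~\eqref{def:hatP0}, is analytic off these contours because each of its three factors is.

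For the jump condition~(b) I would use $M_+(sz)=M_-(sz)J_k(sz)$ together with the analyticity of $\what E_0$, which gives $\what P^{(0)}_+=\what P^{(0)}_-\,G_-^{-1}J_k(sz)G_+$ on the piece lying along (the rescaling of) $\Gamma_k$; it then remains to match $G_-^{-1}J_k(sz)G_+$ with the corresponding block of $J_{\what S}$ in~\eqref{def:JhatS}. On the four lens boundaries $G$ has no jump, so $G^{-1}J_k(sz)G$ is a plain conjugation: the nonzero off-diagonal entries of $J_k$ pick up the exponentials $e^{\pm2\theta_i(sz)}$ and $e^{\pm(\theta_1-\theta_2)(sz)\pm2\tau sz}$, and those connecting the index pair $(3,1)$ or $(4,2)$ acquire in addition a factor $(1-\gamma)^{-1}$ coming from $G_{33}/G_{11}$ (resp.\ $G_{44}/G_{22}$); comparing with~\eqref{def:msfJ1}--\eqref{def:msfJ4} one reads off exactly $\msf J_1,\dots,\msf J_4$. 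On $(0,\varepsilon)\subset\Gamma_0$ and $(-\varepsilon,0)\subset\Gamma_3$ the matrix $G$ itself jumps; invoking $\theta_{1,+}(sx)+\theta_{1,-}(sx)=0$ for $x>0$ and $\theta_{2,+}(sx)+\theta_{2,-}(sx)=0$ for $x<0$ from~\eqref{eq:thetairelations}, all exponentials collapse and $G_-^{-1}J_0G_+$, $G_-^{-1}J_3G_+$ reduce to the two constant matrices prescribed for $z\in(0,1)$ and $z\in(-1,0)$ in~\eqref{def:JhatS}, the entries $1-\gamma$ and $\frac{1}{\gamma-1}$ again being supplied by the $(1-\gamma)^{\pm1/2}$ in $G$.

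The auxiliary claim is where $d_1,d_2$ of~\eqref{def:di} enter. Write $\what E_0=\what N\cdot\Lambda$, with $\Lambda(z)$ collecting the constant and fractional-power factors on the right of~\eqref{def:hatE0}; the only candidate branch cuts of $\what E_0$ in $D(0,\varepsilon)$ are $(-\varepsilon,0)$ and $(0,\varepsilon)$. On $(0,\varepsilon)$ only $(-sz)^{\pm1/4}$ jumps, and conjugating its (constant) phase jump through $A^{-1}(\cdot)A$ and through the $(1-\gamma)^{\pm1/2}$ diagonal produces precisely the constant matrix $\begin{pmatrix}0&0&1-\gamma&0\\0&1&0&0\\\frac{1}{\gamma-1}&0&0&0\\0&0&0&1\end{pmatrix}$ carried by $\what N$ on $(0,1)$ according to~\eqref{eq:hatN-jump}, so the two jumps cancel; the segment $(-\varepsilon,0)$ is symmetric. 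There remains the origin: by the local expansions~\eqref{eq:d10}--\eqref{eq:d20} of $d_1,d_2$ (and their $z\mapsto -z$ analogues) together with the identity $e^{\mp\beta\pi\ii}=(1-\gamma)^{\mp1/2}$, which is immediate from~\eqref{def:beta}, one finds that $A\,\diag\!\big(d_1(-z),d_1(z),d_2(-z),d_2(z)\big)\diag\!\big((1-\gamma)^{1/2},(1-\gamma)^{1/2},(1-\gamma)^{-1/2},(1-\gamma)^{-1/2}\big)A^{-1}\to I$ as $z\to0$, so that the two fractional-power diagonals in $\what E_0$ telescope to $\diag(s^{1/4},s^{1/4},s^{-1/4},s^{-1/4})$ up to corrections vanishing at $0$; hence $\what E_0$ stays bounded near the origin and the isolated singularity is removable.

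Finally, for the matching condition~(c), on $\partial D(0,\varepsilon)$ one has $|sz|=s\varepsilon\to\infty$, so I would substitute the large-argument expansion~\eqref{eq:asy:M} of $M(sz)$ into~\eqref{def:hatP0}. The exponential factors, the diagonal of fractional powers and the matrix $A$ in~\eqref{eq:asy:M} cancel against those put into $\what E_0$ by~\eqref{def:hatE0}, so that $\what P^{(0)}(z)\what N(z)^{-1}=I+B(z)\,D(z)\big(\tfrac{M^{(1)}}{sz}+\Boh((sz)^{-2})\big)D(z)^{-1}B(z)^{-1}$, where $D(z)=\diag\!\big((-sz)^{1/4},(sz)^{1/4},(-sz)^{-1/4},(sz)^{-1/4}\big)$ and $B(z)$ is a product of $\what N(z)$ with constant matrices, hence bounded and invertible uniformly in $s$ on $|z|=\varepsilon$; conjugating $M^{(1)}/(sz)$ by $D(z)$ enlarges only the $(1,3),(2,4),(1,4),(2,3)$ entries, and only by a factor $\Boh((sz)^{1/2})$, whence the entire correction is $\Boh((sz)^{-1/2})=\Boh(s^{-1/2})$ uniformly, which is~\eqref{eq:matchhatP0}. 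I expect the main obstacle to be, on one hand, the analyticity of $\what E_0$ at the origin -- disentangling the $d_i$ phases, the $(-sz)^{\pm1/4}$ branches and the conjugation by $A$ -- and, on the other, the ray-by-ray bookkeeping in~(b) with the correct orientations and exponential weights; neither step is conceptually deep, but both are calculation-heavy, while the remaining verifications are routine.
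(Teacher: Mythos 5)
Your proposal is correct and follows essentially the same route as the paper: analyticity of the prefactor $\what E_0$ by cancelling the fractional-power jump against the jump \eqref{eq:hatN-jump} of $\what N$ together with removability at the origin, the jump condition by conjugating the constant jumps of $M$ through the exponential/$(1-\gamma)^{\pm 1/2}$ diagonal and invoking \eqref{eq:thetairelations} on the real axis, and the matching condition from the large-argument expansion \eqref{eq:asy:M} of $M$. Two cosmetic points: the ratio producing $(1-\gamma)^{-1}e^{2\theta_1(sz)}$ in the $(3,1)$ entry is $G_{11}/G_{33}$ rather than $G_{33}/G_{11}$, and at the origin one should say explicitly that the $\Boh(\sqrt{z})$ rate in \eqref{eq:d10}--\eqref{eq:d20} is what compensates the $z^{-1/2}$ amplification of the $(1,3),(1,4),(2,3),(2,4)$ entries (mere convergence of the middle factor to $I$ would not suffice); the paper secures this by writing out the Taylor expansion of $\what E_0$ at $0$, which it also reuses later.
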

\begin{proof}
We first show the prefactor $\what E_0(z)$ is analytic in $D(0, \varepsilon)$. From its definition in \eqref{def:hatE0}, the only possible jump is on $(-\varepsilon, \varepsilon)$. For $x \in (-\varepsilon, 0)$, recalling the jump matrix of $\what{N}(z)$ in \eqref{eq:hatN-jump}, we have
\begin{align}
&\what E_{0,-}(x)^{-1} \what E_{0,+}(x)\nonumber\\
&= \diag \left((-sz)^{-\frac 14}, (sz)_-^{-\frac 14},(-sz)^{\frac 14},(sz)_-^{\frac 14}\right) A \diag\left((1-\gamma)^{-\frac 12},(1-\gamma)^{-\frac 12},(1-\gamma)^{\frac 12},(1-\gamma)^{\frac 12}\right)\nonumber\\
&\quad \times \begin{pmatrix}
             1&0&0&0
             \\
            0&0&0&1-\gamma
             \\
            0&0&1&0
           \\0&\frac{1}{\gamma-1}&0&0
\end{pmatrix} \diag\left((1-\gamma)^{\frac 12},(1-\gamma)^{\frac 12},(1-\gamma)^{-\frac 12},(1-\gamma)^{-\frac 12}\right) A^{-1}\nonumber\\
&\quad \times \diag \left((-sz)^{\frac 14}, (sz)_+^{\frac 14},(-sz)^{-\frac 14},(sz)_+^{-\frac 14}\right)\nonumber\\
&=\diag \left((-sz)^{-\frac 14}, (sz)_-^{-\frac 14},(-sz)^{\frac 14},(sz)_-^{\frac 14}\right) \diag(1,-\ii,1,\ii)\nonumber\\
&\quad \times \diag \left((-sz)^{\frac 14}, (sz)_+^{\frac 14},(-sz)^{-\frac 14},(sz)_+^{-\frac 14}\right)=I.
\end{align}
Similarly, for $x \in (0, \varepsilon)$, we have
\begin{align}
&\what E_{0,-}(x)^{-1} \what E_{0,+}(x)\nonumber\\
&= \diag \left((-sz)_-^{-\frac 14}, (sz)^{-\frac 14},(-sz)_-^{\frac 14},(sz)^{\frac 14}\right) A \diag\left((1-\gamma)^{-\frac 12},(1-\gamma)^{-\frac 12},(1-\gamma)^{\frac 12},(1-\gamma)^{\frac 12}\right)\nonumber\\
&\quad \times \begin{pmatrix}
             1&0&1-\gamma&0
             \\
            0&1&0&0
             \\
            \frac{1}{\gamma-1}&0&0&0
           \\0&0&0&1
\end{pmatrix} \diag\left((1-\gamma)^{\frac 12},(1-\gamma)^{\frac 12},(1-\gamma)^{-\frac 12},(1-\gamma)^{-\frac 12}\right) A^{-1}\nonumber\\
&\quad \times \diag \left((-sz)_+^{\frac 14}, (sz)^{\frac 14},(-sz)_+^{-\frac 14},(sz)^{-\frac 14}\right)\nonumber\\
&=\diag \left((-sz)_-^{-\frac 14}, (sz)^{-\frac 14},(-sz)_-^{\frac 14},(sz)^{\frac 14}\right)\diag{(\ii,1,-\ii,1)}\nonumber\\
&\quad \times \diag \left((-sz)_+^{\frac 14}, (sz)^{\frac 14},(-sz)_+^{-\frac 14},(sz)^{-\frac 14}\right)=I.
\end{align}
Moreover, as $z \to 0$, one has
\begin{equation}
\what E_0(z) = \what E_0(0)\left(I + \what E_0(0)^{-1}\what E_0'(0)z + \Boh(z^{-2})\right)\diag \left(s^{\frac 14}, s^{\frac 14},s^{-\frac 14},s^{-\frac 14}\right),
\end{equation}
where
\begin{equation}
\what E_0(0) = \begin{pmatrix}
1 & 0 & -2\beta & 0\\
0 & 1 & 0 & 2 \beta\\
2 \beta & 0 & 1- 4 \beta^2 & 0\\
0 & -2 \beta & 0 & 1-4 \beta^2
\end{pmatrix}
\end{equation}
and
\begin{equation}
\what E_0(0)^{-1}\what E_0'(0) = \begin{pmatrix}
-2 \beta^2 & 0 & \frac{2\beta (4\beta^2-1)}{3} & 0\\
0 & 2 \beta^2 & 0 & \frac{2\beta (4\beta^2-1)}{3}\\
-2 \beta & 0 & 2 \beta^2 & 0\\
0 & -2 \beta & 0 & -2\beta^2
\end{pmatrix}.
\end{equation}
Thus, $\what E_0(z)$ is indeed analytic in $D(0, \varepsilon)$ and the jump condition \eqref{eq:hatP0-jump} can be verified easily from this fact, \eqref{eq:thetairelations} and
the jump condition of $M$ given in \eqref{jumps:M}.

For the matching condition, it follows from the asymptotic behavior of $M$ at infinity given in \eqref{eq:asy:M} that, for $z \in \partial D(0, \varepsilon)$,
\begin{equation}\label{eq:match01}
\what P^{(0)}(z)\what N(z)^{-1} = I + \frac{\what J^{(0)}_1(z)}{s^{1/2}} +  \frac{\what J^{(0)}_2(z)}{s}+\Boh(s^{-\frac 32}),\quad s \to +\infty,
\end{equation}
where
\begin{equation}\label{def:hatJ01}
\what J^{(0)}_1(z) = \frac{1}{z} E_0(z) \begin{pmatrix}
0 & 0 &M^{(1)}_{13} &M^{(1)}_{14}\\
0 & 0 &M^{(1)}_{23} &M^{(1)}_{24}\\
0 & 0 & 0 & 0\\
0 & 0 & 0 &0
\end{pmatrix}E_0(z)^{-1}
\end{equation}
and
\begin{align}\label{def:hatJ02}
\what J^{(0)}_2(z)=
 \frac{1}{z} E_0(z) \begin{pmatrix}
M^{(1)}_{11} & M^{(1)}_{12} &0 &0\\
M^{(1)}_{21} & M^{(1)}_{22} &0 & 0\\
0 & 0 & M^{(1)}_{33} & M^{(1)}_{34}\\
0 & 0 & M^{(1)}_{43} &M^{(1)}_{44}
\end{pmatrix}E_0(z)^{-1}
\end{align}
with $E_0(z) := \what E_0(z) \diag \left(s^{-\frac 14}, s^{-\frac 14},s^{\frac 14},s^{\frac 14}\right)$ and $M^{(1)}$ given in \eqref{eq:asy:M}.
\end{proof}

\subsection{Local parametrix near $z=-1$}
Near $z=-1$, we intend to find an RH problem as follows.
\begin{rhp}\label{rhp:hatP-1}
\hfill
\begin{enumerate}
\item[\rm (a)] $\what P^{(-1)}(z)$ is defined and analytic in $D(-1, \varepsilon)\setminus \Gamma_{\what S}$, where $\Gamma_{\what S}$ is defined in \eqref{def:gammahatS}.

\item[\rm (b)] For $z \in D(-1, \varepsilon) \cap \Gamma_{\what S}$, $\what P^{(-1)}(z)$ satisfies the jump condition
\begin{equation}\label{eq:hatP-1-jump}
 \what P^{(-1)}_+(z)=\what P^{(-1)}_-(z)J_{\what S}(z),
\end{equation}
where $J_{\what S}(z)$ is given in \eqref{def:JhatS}.

\item[\rm (c)]As $s \to +\infty$,  we have the matching condition
\begin{equation}\label{eq:matchhatP-1}
\what P^{(-1)}(z)=\left( I+ \Boh(s^{-\frac 32}) \right) \what N(z), \quad z \in \partial D(-1, \varepsilon),
\end{equation}
where $\what N(z)$ is given in \eqref{eq:hatNexp}.
\end{enumerate}
\end{rhp}
This local parametrix can be constructed by using the confluent hypergeometric parametrix $\Phi^{(\CHF)}$ introduced in Appendix \ref{sec:CHF}. To proceed, we introduce the function
\begin{equation}\label{def:hatf-1}
\what f_{-1}(z) = -2 \ii s^{-\frac 32} \begin{cases}
\theta_2(sz) - \theta_{2,+}(-s), & \quad\Im{z} > 0,\\
-\theta_2(sz) + \theta_{2,-}(-s), & \quad\Im{z} < 0.
\end{cases}
\end{equation}
By \eqref{def:theta2}, it is easily seen that
\begin{equation}\label{eq:hatf-1}
\what f_{-1}(z) = 2\left(r_2 - \frac{s_2}{s}\right)(z+1) + \Boh\left((z+1)^2\right), \quad z \to -1.
\end{equation}
Thus, $\what f_{-1}(z)$ is a conformal mapping near $z=-1$ for large positive $s$.

We set
\begin{align}\label{def:hatP-1}
&\what P^{(-1)}(z)
\nonumber \\
& = \what E_{-1}(z) \begin{pmatrix}
1 & 0 & 0 &0\\
0 & \Phi^{(\CHF)}_{11}(s^{\frac 32} \what f_{-1}(z); -\beta) & 0 & \Phi^{(\CHF)}_{12}(s^{\frac 32} \what f_{-1}(z); -\beta)\\
0 & 0 &1&0\\
0& \Phi^{(\CHF)}_{21}(s^{\frac 32} \what f_{-1}(z); -\beta) & 0 & \Phi^{(\CHF)}_{22}(s^{\frac 32} \what f_{-1}(z); -\beta)
\end{pmatrix}\nonumber\\
&\quad \times \diag \left(1, e^{\theta_2(sz) - \frac{\beta \pi \ii}{2}}, 1, e^{-\theta_2(sz) + \frac{\beta \pi \ii}{2}}\right)\nonumber\\
&\quad \times \begin{cases}
I-e^{-\theta_1(sz)+\theta_2(sz)+2 \tau sz} E_{1,2}+e^{-\theta_1(sz)+\theta_2(sz)-2 \tau sz}E_{4,3}, &\quad z \in (\Omega_2\cup\Omega_5)\cap D(-1, \varepsilon) \setminus \partial \Omega_{L,\pm},\\
I, & \quad z \in (\Omega_3\cup\Omega_4)\cap D(-1, \varepsilon),
\end{cases}
\end{align}
where $\beta$ is given in \eqref{def:beta}, $\what f_{-1}$ is defined in \eqref{def:hatf-1} and
\begin{align}\label{eq:hatE-1}
&\what E_{-1}(z) = \what N(z)\nonumber\\
 &\times \begin{cases}
\diag \left(1, e^{-\theta_{2,+}(-s) + \frac{\beta \pi \ii}{2}}s^{-\frac{3\beta}{2}} \what f_{-1}(z)^{-\beta}, 1, e^{\theta_{2,+}(-s) -\frac{\beta \pi \ii}{2}}s^{\frac{3\beta}{2}} \what f_{-1}(z)^{\beta}\right),
\hfill \Im{z}>0,\\
\begin{pmatrix}
1 & 0 & 0 & 0\\
0 & 0 & 0 & 1\\
0 & 0 & 1 & 0\\
0 & -1 & 0 & 0
\end{pmatrix}\diag \left(1, e^{-\theta_{2,+}(-s) - \frac{3\beta \pi \ii}{2}}s^{-\frac{3\beta}{2}} \what f_{-1}(z)^{-\beta}, 1, e^{\theta_{2,+}(-s) +\frac{3\beta \pi \ii}{2}}s^{\frac{3\beta}{2}} \what f_{-1}(z)^{\beta}\right), \\
\hfill \Im{z}<0.
\end{cases}
\end{align}
\begin{proposition}\label{pro:hatP-1}
The local parametrix $\what P^{(-1)}(z)$ defined in \eqref{def:hatP-1} solves RH problem \ref{rhp:hatP-1}.
\end{proposition}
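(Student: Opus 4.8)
The plan is to follow the standard strategy for building a local parametrix out of a model RH problem, here the confluent hypergeometric parametrix $\Phi^{(\CHF)}(\cdot\,;-\beta)$ of Appendix \ref{sec:CHF}. The argument splits into three parts: (i) showing that the prefactor $\what E_{-1}(z)$ in \eqref{eq:hatE-1} is analytic in the full disc $D(-1,\varepsilon)$; (ii) verifying the jump relation \eqref{eq:hatP-1-jump}; and (iii) verifying the matching condition \eqref{eq:matchhatP-1}.

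For (i), note first that by \eqref{eq:hatf-1} the function $\what f_{-1}$ is, for large positive $s$, a conformal bijection of $D(-1,\varepsilon)$ onto a neighbourhood of the origin that maps the real axis to the real axis and the upper (resp.\ lower) half-disc to the upper (resp.\ lower) half-plane. Since $\what N$ has jumps only on $\R$, the only possible jump of $\what E_{-1}$ inside $D(-1,\varepsilon)$ is on $(-1-\varepsilon,-1)\cup(-1,-1+\varepsilon)$. On $(-1-\varepsilon,-1)$ one combines the $x<-1$ block of the jump \eqref{eq:hatN-jump} for $\what N$ with the branch relation $d_{1,\pm}=d_{2,\mp}$ from \eqref{eq:dijump1} (equivalently, the branch behaviour of $\what f_{-1}^{\pm\beta}$ built into \eqref{eq:hatE-1}) and the anti-diagonal matrix appearing there for $\Im z<0$, and checks $\what E_{-1,-}^{-1}\what E_{-1,+}=I$ by a direct computation. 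On $(-1,-1+\varepsilon)$ one uses instead the $x\in(-1,0)$ block of \eqref{eq:hatN-jump} together with the extra factor $e^{\pm2\beta\pi\ii}$ relating the two branches of $d_1,d_2$ on that interval; again the product telescopes to $I$. Hence $\what E_{-1}$ extends analytically to $D(-1,\varepsilon)$, and one records its Taylor expansion at $z=-1$ for later use.

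For (ii), once $\what E_{-1}$ is analytic the jump of $\what P^{(-1)}$ reduces to that of the embedded $\Phi^{(\CHF)}$ block conjugated by $\diag\left(1,e^{\theta_2(sz)-\beta\pi\ii/2},1,e^{-\theta_2(sz)+\beta\pi\ii/2}\right)$ and by the triangular lens factors in \eqref{def:hatP-1}. One matches, ray by ray, the contours $\Gamma_j^{(1)}$, $\partial\Omega_{\msf L,\pm}$ and the segment $(-1,0)$ inside the disc against the jump contour of $\Phi^{(\CHF)}$: on $\partial\Omega_{\msf L,\pm}$ the jumps $\msf J_1,\msf J_2$ of $J_{\what S}$ in \eqref{def:msfJ1}--\eqref{def:msfJ2} become, after conjugation, the triangular jumps of $\Phi^{(\CHF)}$, the oscillatory entries $e^{2\theta_2(sz)}/(1-\gamma)$ being absorbed into the $e^{\pm\theta_2(sz)}$ factors; on $(-1,0)$ the anti-diagonal block of $J_{\what S}$ in \eqref{def:JhatS} matches the constant jump of $\Phi^{(\CHF)}$ across its cut, the parameter $-\beta$ producing the factor $1-\gamma$ through $e^{2\pi\ii\beta}=1-\gamma$ (cf.\ \eqref{def:beta}); the rays $\Gamma_0^{(1)},\Gamma_3^{(1)}$ do not meet $D(-1,\varepsilon)$ for $\varepsilon$ small. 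This is a term-by-term check using \eqref{def:JhatS}, \eqref{def:msfJ1}--\eqref{def:msfJ2} and the jump data of $\Phi^{(\CHF)}$ from Appendix \ref{sec:CHF}. For (iii), one inserts the large-argument expansion of $\Phi^{(\CHF)}$ into \eqref{def:hatP-1}: the factors $e^{\pm\theta_2(sz)}$ cancel against those generated by the CHF asymptotics, the powers $s^{\pm3\beta/2}\what f_{-1}(z)^{\pm\beta}$ in \eqref{eq:hatE-1} cancel the $(s^{3/2}\what f_{-1}(z))^{\mp\beta}$ prefactors of $\Phi^{(\CHF)}$, and what remains is $\what N(z)$ times $I+\Boh(s^{-3/2})$ uniformly on $\partial D(-1,\varepsilon)$, which is \eqref{eq:matchhatP-1}. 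It is precisely this cancellation that pins down the form of $\what E_{-1}$ and confirms that the $\Im z>0$ and $\Im z<0$ formulas are consistent with the branch of $(z+1)^{\pm\beta}$ carried by $d_1,d_2$; cf.\ \eqref{eq:d1-1}--\eqref{eq:d2-1}.

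The main obstacle is part (ii): keeping track of the many exponential and branch factors so that the conjugated jumps $J_{\what S}$ on the lens boundaries and on $(-1,0)$ reproduce \emph{exactly} the jump data of the CHF model, and in particular getting the sign conventions and the parameter $-\beta$ (rather than $+\beta$) right so that the constant jump yields the factor $1-\gamma$. Parts (i) and (iii) are routine bookkeeping once this matching is correctly set up.
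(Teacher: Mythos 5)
Your overall strategy is exactly the paper's: (i) analyticity of the prefactor $\what E_{-1}$ across $(-1-\varepsilon,-1)\cup(-1,-1+\varepsilon)$, (ii) the jump condition from the constant jump data \eqref{HJumps} of $\Phi^{(\CHF)}$, (iii) the matching condition from the large-$z$ expansion of $\Phi^{(\CHF)}$ after the cancellation of the $e^{\pm\theta_2}$ and $s^{\pm 3\beta/2}\what f_{-1}^{\pm\beta}$ factors. Parts (i) and (iii) are fine as written.

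There is, however, one concrete error in part (ii): you assert that ``the rays $\Gamma_0^{(1)},\Gamma_3^{(1)}$ do not meet $D(-1,\varepsilon)$ for $\varepsilon$ small.'' This is true for $\Gamma_0^{(1)}=(1,\infty)$ but false for $\Gamma_3^{(1)}=(-\infty,-1)$, which emanates from $-1$ itself; by \eqref{def:gammahatS} and \eqref{def:JhatS} the segment $(-1-\varepsilon,-1)$ carries the non-trivial constant jump $J_{\what T}(z)=I-E_{2,2}-E_{4,4}+E_{2,4}-E_{4,2}$ (the third matrix in \eqref{def:JhatT}), so $\what P^{(-1)}$ must jump there and this case cannot be skipped. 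In the construction \eqref{def:hatP-1} this jump is produced by the interplay of three ingredients: the conformal map $\what f_{-1}$ in \eqref{def:hatf-1}--\eqref{eq:hatf-1} sends $(-1-\varepsilon,-1)$ to the negative real axis, so the embedded $\Phi^{(\CHF)}(\,\cdot\,;-\beta)$ block jumps by $\widehat J_4$ there; the factor $\diag(1,e^{\theta_2(sz)-\beta\pi\ii/2},1,e^{-\theta_2(sz)+\beta\pi\ii/2})$ jumps because $\theta_{2,+}(sx)+\theta_{2,-}(sx)=0$ for $sx<0$; and the two branches of $\what E_{-1}$ in \eqref{eq:hatE-1} differ by the constant matrix $I-E_{2,2}-E_{4,4}+E_{2,4}-E_{4,2}$ together with the shift $e^{\pm 3\beta\pi\ii/2}$ versus $e^{\pm\beta\pi\ii/2}$ in the exponents (consistent with the branch cut of $\what f_{-1}^{\pm\beta}$), which is exactly what makes $\what E_{-1}$ analytic while leaving the required jump on $(-1-\varepsilon,-1)$ for $\what P^{(-1)}$. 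Adding this case (and, for completeness, the rays $\Gamma_2^{(1)},\Gamma_4^{(1)}$, which also emanate from $-1$ and whose $E_{1,2},E_{4,3}$ entries are absorbed by the change of the triangular lens factor across them) closes the gap; with that correction your proof coincides with the paper's.
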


\begin{proof}
We first show the analyticity of $\what E_{-1}(z)$. From its definition in \eqref{eq:hatE-1}, the only possible jump is on $(-1-\varepsilon, -1+\varepsilon)$. For $x \in (-1-\varepsilon, -1)$, we have $\what f_{-1,+}(x)^{\beta} = e^{2 \beta \pi \ii}\what f_{-1,-}(x)^{\beta}$, it then follows from \eqref{eq:hatN-jump} that
\begin{equation}
\what E_{-1,-}(x)^{-1}\what E_{-1,+}(x) = I.
\end{equation}
Similarly, for $x \in (-1, -1+\varepsilon)$, we also have
\begin{equation}
\what E_{-1,-}(x)^{-1}\what E_{-1,+}(x) = I.
\end{equation}
Moreover, as $z \to -1$, applying \eqref{eq:d1-1}, \eqref{eq:d2-1} and \eqref{eq:hatf-1}, we have
\begin{equation}
\what E_{-1}(z) = \what E_{-1}(-1)\left(I+ \what E_{-1}(-1)^{-1} \what E_{-1}'(-1)(z+1) +\Boh\left((z+1)^2\right)\right),
\end{equation}
where
\begin{equation}\label{eq:hatE-1-1}
\what E_{-1}(-1) = \frac{1}{\sqrt{2}} (I + 2 \beta E_{3,1}-2 \beta E_{4,2}) \begin{pmatrix}
e^{-\frac{\beta \pi \ii}{2}} & 0 & -\ii e^{\frac{\beta \pi \ii}{2}} & 0\\
0 & e^{-\frac{\pi \ii}{4}}\textsf{b} & 0 & e^{\frac{\pi \ii}{4}}\textsf{b}^{-1}\\
-\ii e^{-\frac{\beta \pi \ii}{2}} & 0 & e^{\frac{\beta \pi \ii}{2}} & 0\\
0 & -e^{-\frac{\pi \ii}{4}}\textsf{b} & 0 & e^{\frac{\pi \ii}{4}}\textsf{b}^{-1}
\end{pmatrix}
\end{equation}
and
\begin{equation}\label{eq:hatE-1'-1}
\what E_{-1}(-1)^{-1}\what E_{-1}'(-1)=\begin{pmatrix}
-\frac{\ii \beta}{2} & 0 & -\frac{\ii}{4}e^{\beta \pi \ii} & 0\\
0 & \frac{\beta}{4}\left(2 + \frac{r_2+s_2/s}{r_2-s_2/s}\right) & 0 & \frac{\ii}{4\textsf{b}^2}\\
\frac{\ii}{4}e^{-\beta \pi \ii} & 0 & \frac{\ii \beta}{2} & 0\\
0 & -\frac{\ii \textsf{b}^2}{4} & 0 & -\frac{\beta}{4}\left(2 + \frac{r_2+s_2/s}{r_2-s_2/s}\right)
\end{pmatrix}
\end{equation}
with $\textsf{b} = e^{-\theta_{2,+}(-s)-\pi \ii \beta/2}\what f_{-1}'(-1)^{-\beta} 4^{-\beta} s^{-3\beta/2}$. Thus, the prefactor $\what E_{-1}(z)$ is indeed analytic in $D(-1, \varepsilon)$, the jump condition \eqref{eq:hatP-1-jump} is satisfied due to this fact and the jump condition of $\Phi^{(\CHF)}$ given in \eqref{HJumps}.

From the definitions of $\theta_1(z)$ and $\theta_2(z)$ in \eqref{def:theta1} and \eqref{def:theta2}, it is clear that functions $e^{-\theta_1(sz)+\theta_2(sz)+2 \tau sz}$ and $e^{-\theta_1(sz)+\theta_2(sz)-2 \tau sz}$ in \eqref{def:hatP-1} are exponentially small for $z \in D(-1, \varepsilon)$ with large positive $s$. As $ s \to +\infty$, the matching condition \eqref{eq:matchhatP-1} follows directly from \eqref{def:hatP-1}, \eqref{eq:hatE-1} and the asymptotic behavior of the confluent hypergeometric parametrix $\Phi^{(\CHF)}(z)$ at infinity in \eqref{H at infinity}.

This completes the proof of Proposition \ref{pro:hatP-1}.
\end{proof}

\subsection{Local parametrix near $z=1$}
Near the endpoint $z=1$, the local parametrix $\what P^{(1)}(z)$ reads as follows.

\begin{rhp}\label{rhp:hatP1}
\hfill
\begin{enumerate}
\item[\rm (a)] $\what P^{(1)}(z)$ is defined and analytic in $D(1, \varepsilon)\setminus \Gamma_{\what S}$, where $\Gamma_{\what S}$ is defined in \eqref{def:gammahatS}.

\item[\rm (b)] For $z \in D(1, \varepsilon) \cap \Gamma_{\what S}$, $\what P^{(1)}(z)$ satisfies the jump condition
\begin{equation}\label{eq:hatP1-jump}
 \what P^{(1)}_+(z)=\what P^{(1)}_-(z)J_{\what S}(z),
\end{equation}
where $J_{\what S}(z)$ is given in \eqref{def:JhatS}.

\item[\rm (c)]As $s \to \infty$,  we have the matching condition
\begin{equation}\label{eq:matchhatP1}
\what P^{(1)}(z)=\left( I+ \Boh(s^{-\frac 32}) \right) \what N(z), \quad z \in \partial D(1, \varepsilon),
\end{equation}
where $\what N(z)$ is given in \eqref{eq:hatNexp}.
\end{enumerate}
\end{rhp}

Again, the above RH problem can be solved by using the confluent hypergeometric parametrix $\Phi^{(\CHF)}$. To do this, we introduce the function
\begin{equation}\label{def:hatf1}
\what f_{1}(z) = -2 \ii s^{-\frac 32} \begin{cases}
\theta_1(sz) - \theta_{1,+}(s), & \quad\Im{z} > 0,\\
-\theta_1(sz) + \theta_{1,-}(s), & \quad\Im{z} < 0.
\end{cases}
\end{equation}
By \eqref{def:theta1}, it is easily seen that
\begin{equation}\label{eq:hatf1}
\what f_{1}(z) = 2\left(r_1 - \frac{s_1}{s}\right)(z-1) + \Boh((z-1)^2), \quad z \to 1.
\end{equation}
Hence, $\what f_{1}$ is a conformal mapping near $z=1$ for large positive $s$.

We then define
\begin{align}\label{def:hatP1}
&\what P^{(1)}(z) \nonumber
\\
& = \what E_{1}(z) \begin{pmatrix}
\Phi^{(\CHF)}_{11}(s^{\frac 32} \what f_{1}(z); \beta) & 0 & \Phi^{(\CHF)}_{12}(s^{\frac 32} \what f_{1}(z); \beta) & 0\\
0 & 1 &0&0\\
\Phi^{(\CHF)}_{21}(s^{\frac 32} \what f_{1}(z); \beta) & 0 & \Phi^{(\CHF)}_{22}(s^{\frac 32} \what f_{1}(z); \beta) & 0\\
0 & 0 & 0 & 1
\end{pmatrix}\nonumber\\
&\quad \times \diag \left(e^{\theta_1(sz) - \frac{\beta \pi \ii}{2}}, 1, e^{-\theta_1(sz) + \frac{\beta \pi \ii}{2}}, 1\right)\nonumber\\
&\quad \times \begin{cases}
I-e^{\theta_1(sz)-\theta_2(sz)-2 \tau sz} E_{2,1}+e^{\theta_1(sz)-\theta_2(sz)+2 \tau sz}E_{3,4}, &\quad z \in (\Omega_2\cup\Omega_5)\cap D(1, \varepsilon) \setminus \partial \Omega_{R,\pm},\\
I, & \quad z \in (\Omega_1\cup\Omega_6)\cap D(1, \varepsilon),
\end{cases}
\end{align}
where $\beta$ is given in \eqref{def:beta}, $\what f_{1}$ is defined in \eqref{def:hatf1} and
\begin{align}\label{def:hatE1}
&\what E_{1}(z) = \what N(z)\nonumber\\
 &\times \begin{cases}
\diag \left(e^{-\theta_{1,+}(s) + \frac{\beta \pi \ii}{2}}s^{\frac{3\beta}{2}} \what f_{1}(z)^{\beta}, 1, e^{\theta_{1,+}(s) -\frac{\beta \pi \ii}{2}}s^{-\frac{3\beta}{2}} \what f_{1}(z)^{-\beta}, 1\right), \hfill \Im{z}>0,\\
\begin{pmatrix}
0 & 0 & 1 & 0\\
0 & 1 & 0 & 0\\
-1 & 0 & 0 & 0\\
0 & 0 & 0 & 1
\end{pmatrix}\diag \left(e^{\theta_{1,-}(s) +\frac{\beta \pi \ii}{2}}s^{\frac{3\beta}{2}} \what f_{1}(z)^{\beta}, 1, e^{-\theta_{1,-}(s) -\frac{\beta \pi \ii}{2}}s^{-\frac{3\beta}{2}} \what f_{1}(z)^{-\beta},1\right), \\
\hfill \Im{z}<0.
\end{cases}
\end{align}

The following proposition can be proved in a manner similar to that of Proposition \ref{pro:hatP-1}, and we omit the details here.
\begin{proposition}
The local parametrix $\what P^{(1)}(z)$ defined in \eqref{def:hatP1} solves RH problem \ref{rhp:hatP1}.
\end{proposition}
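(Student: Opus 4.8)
The plan is to verify, exactly as in the proof of Proposition~\ref{pro:hatP-1}, the three defining properties of RH problem~\ref{rhp:hatP1}: (a) the prefactor $\what E_1(z)$ in \eqref{def:hatE1} is analytic in $D(1,\varepsilon)$, so that $\what P^{(1)}$ in \eqref{def:hatP1} is well defined with a jump structure supported only on $\Gamma_{\what S}$; (b) $\what P^{(1)}$ satisfies the jump relation \eqref{eq:hatP1-jump}; and (c) the matching condition \eqref{eq:matchhatP1} holds on $\partial D(1,\varepsilon)$. One could in principle shortcut the whole argument from the $z=-1$ case, since the symmetry relation \eqref{eq:symm} identifies $\what S(-z)$ with $\widetilde{\what S}(z)$ up to conjugation by the constant matrix $\begin{pmatrix}J&0\\0&-J\end{pmatrix}$, so that $\what P^{(1)}$ is obtained from $\what P^{(-1)}$ by swapping $r_1\leftrightarrow r_2$, $s_1\leftrightarrow s_2$, the index pair $\{2,4\}\leftrightarrow\{1,3\}$ and the CHF parameter $-\beta\leftrightarrow\beta$; but since \eqref{def:hatP1}--\eqref{def:hatE1} are written out explicitly I would check them directly.

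For step (a), the only arc of $\Gamma_{\what S}$ across which $\what E_1$ can jump is the real segment $(1-\varepsilon,1+\varepsilon)$, the CHF and permutation structure in \eqref{def:hatE1} being locally constant off $\mathbb{R}$; I split this segment at $z=1$. On $(1-\varepsilon,1)\subset(0,1)$ one multiplies out $\what E_{1,-}(x)^{-1}\what E_{1,+}(x)$ using the jump of $\what N$ given by the third line of \eqref{eq:hatN-jump}, the identity $\what f_{1,+}(x)^{\beta}=e^{2\pi\ii\beta}\what f_{1,-}(x)^{\beta}$ (valid because $\what f_1(z)\sim 2(r_1-s_1/s)(z-1)$ is negative just left of $1$ for large $s$), and $e^{2\pi\ii\beta}=1-\gamma$; the half-plane-dependent permutation matrix $\begin{pmatrix}0&0&1&0\\0&1&0&0\\-1&0&0&0\\0&0&0&1\end{pmatrix}$ together with the phase/power factors $e^{\pm\theta_{1,\pm}(s)}$, $s^{\pm3\beta/2}\what f_1^{\pm\beta}$, $e^{\pm\beta\pi\ii/2}$ and the relation $\theta_{1,+}(s)+\theta_{1,-}(s)=0$ from \eqref{eq:thetairelations} are chosen precisely so that these jumps cancel and $\what E_{1,-}(x)^{-1}\what E_{1,+}(x)=I$. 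On $(1,1+\varepsilon)\subset(1,\infty)$ the computation is the same, now using the last line of \eqref{eq:hatN-jump}. Finally, expanding $\what E_1$ near $z=1$ via \eqref{eq:d11}, \eqref{eq:d21} and \eqref{eq:hatf1} gives $\what E_1(z)=\what E_1(1)\left(I+\what E_1(1)^{-1}\what E_1'(1)(z-1)+\Boh\left((z-1)^2\right)\right)$ with explicit coefficient matrices, so $\what E_1$ is analytic on all of $D(1,\varepsilon)$.

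For step (b), once $\what E_1$ is analytic the jump of $\what P^{(1)}$ reduces, arc by arc, to that of the middle factor of \eqref{def:hatP1}: on the lens boundaries $\partial\Omega_{\msf R,\pm}$ and on $\Gamma_1^{(1)}\cup\Gamma_5^{(1)}$ inside $D(1,\varepsilon)$ the triangular factor $I\mp e^{\theta_1(sz)-\theta_2(sz)\mp2\tau sz}E_{2,1}+e^{\theta_1(sz)-\theta_2(sz)\pm2\tau sz}E_{3,4}$ reproduces $\msf J_3(z)$, $\msf J_4(z)$ and the corresponding pieces of $J_{\what T}(z)$ from \eqref{def:msfJ3}, \eqref{def:msfJ4}, \eqref{def:JhatT}, while on the two rays issuing from $z=1$ across which the CHF block jumps one invokes the jump relations \eqref{HJumps} of $\Phi^{(\CHF)}(\cdot;\beta)$ conjugated by the $\theta_1$-exponentials and, where relevant, \eqref{eq:thetairelations}. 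For step (c), on $\partial D(1,\varepsilon)$ the off-diagonal exponentials $e^{\theta_1(sz)-\theta_2(sz)\pm2\tau sz}$ are exponentially small as $s\to+\infty$ (just as their $z=-1$ counterparts in the proof of Proposition~\ref{pro:hatP-1}), so the triangular factor in \eqref{def:hatP1} is $I+\Boh(e^{-cs^{3/2}})$; inserting the large-argument asymptotics of $\Phi^{(\CHF)}(\zeta;\beta)$ with $\zeta=s^{3/2}\what f_1(z)\to\infty$ from \eqref{H at infinity}, the CHF block becomes $(I+\Boh(s^{-3/2}))$ times a diagonal/branch factor cancelled by the matching factor in $\what E_1(z)$ and by $\what N(z)^{-1}$, where the $s^{\pm3\beta/2}\what f_1^{\pm\beta}$ and $e^{\pm\beta\pi\ii/2}$ pieces compensate exactly, leaving $\what P^{(1)}(z)\what N(z)^{-1}=I+\Boh(s^{-3/2})$, i.e.\ \eqref{eq:matchhatP1}; as in the $z=0$ matching \eqref{eq:match01} one may write out the $s^{-3/2}$ coefficient explicitly should it be needed later.

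The step I expect to be the main obstacle is (a): it requires carefully tracking the interacting branch cuts — those carried by $\what N$ across $\mathbb{R}$ per \eqref{eq:hatN-jump}, the cut of $\what f_1^\beta$ on $(1-\varepsilon,1)$, and the half-plane- and $\Omega$-dependent permutation and phase factors in \eqref{def:hatE1}--\eqref{def:hatP1} — and confirming that they patch into a single-valued analytic $\what E_1$ and into the correct jump $J_{\what S}$. Everything else is a routine transcription of the $z=-1$ argument with $(\theta_2,\what f_{-1})$ replaced by $(\theta_1,\what f_1)$, the index pair $\{2,4\}$ by $\{1,3\}$, and the CHF parameter $-\beta$ by $\beta$.
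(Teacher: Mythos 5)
Your proposal is correct and follows essentially the same route as the paper, which omits the details precisely because the argument is a transcription of the proof of Proposition \ref{pro:hatP-1}: check analyticity of $\what E_1$ across $(1-\varepsilon,1)\cup(1,1+\varepsilon)$ using \eqref{eq:hatN-jump}, the branch relation for $\what f_1^{\pm\beta}$ and $e^{2\beta\pi\ii}=1-\gamma$, verify the jump \eqref{eq:hatP1-jump} via \eqref{HJumps} and \eqref{eq:thetairelations}, and obtain the matching \eqref{eq:matchhatP1} from \eqref{H at infinity} together with the exponential smallness of $e^{\theta_1(sz)-\theta_2(sz)\pm2\tau sz}$ on $\partial D(1,\varepsilon)$. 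No gaps to report.
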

For later use, we include the following asymptotic behavior of $\what E_{1}(z)$ near $z=1$ by applying \eqref{eq:d11}, \eqref{eq:d21} and \eqref{eq:hatf1}:
\begin{equation}
\what E_{1}(z) = \what E_{1}(1)\left(I + \what E_{1}(1)^{-1}\what E_{1}'(1)(z-1) +\Boh\left((z-1)^2\right)\right), \qquad z\to 1,
\end{equation}
where
\begin{equation}\label{eq:hatE11}
\what E_{1}(1) =  \frac{1}{\sqrt{2}} (I + 2 \beta E_{3,1}-2 \beta E_{4,2}) \begin{pmatrix}
e^{\frac{\pi \ii}{4}}\textsf{a} & 0 & e^{\frac{-\pi \ii}{4}}\textsf{a}^{-1}& 0\\
0 & e^{-\frac{\beta \pi \ii}{2}} & 0 & \ii e^{\frac{\beta \pi \ii}{2}}\\
-e^{\frac{\pi \ii}{4}}\textsf{a} & 0 & e^{\frac{-\pi \ii}{4}}\textsf{a}^{-1} & 0\\
0 & \ii e^{-\frac{\beta \pi \ii}{2}} & 0 &  e^{\frac{\beta \pi \ii}{2}}
\end{pmatrix}
\end{equation}
and
\begin{equation}\label{eq:hatE1'1}
\what E_{1}(1)^{-1}\what E_{1}'(1)=\begin{pmatrix}
\frac{\beta}{4}\left(2 + \frac{r_1+s_1/s}{r_1-s_1/s}\right) & 0 & \frac{\ii}{4\textsf{a}^2} & 0\\
0 & \frac{\ii \beta}{2} & 0 & -\frac{\ii}{4}e^{\beta \pi \ii}\\
-\frac{\ii \textsf{a}^2}{4} & 0 & -\frac{\beta}{4}\left(2 + \frac{r_1+s_1/s}{r_1-s_1/s}\right) & 0\\
0 & \frac{\ii}{4}e^{-\beta \pi \ii} & 0 & -\frac{\ii \beta}{2}
\end{pmatrix}
\end{equation}
with
\begin{align}\label{def:a}
\textsf{a} = e^{-\theta_{1,+}(s)+\pi \ii \beta/2}\what f_{1}'(1)^{-\beta} 4^{\beta} s^{3 \beta /2}.
\end{align}
\subsection{Final transformation}

We define the following final transformation
\begin{equation}\label{def:hatR}
\what R(z) = \begin{cases}
\what S(z) \what P^{(0)}(z)^{-1}, & \quad z \in D(0, \varepsilon),\\
\what S(z) \what P^{(-1)}(z)^{-1}, & \quad z \in D(-1, \varepsilon),\\
\what S(z) \what P^{(1)}(z)^{-1}, & \quad z \in D(1, \varepsilon),\\
\what S(z) \what N(z)^{-1}, & \quad \textrm{elsewhere}.
\end{cases}
\end{equation}
From the RH problems for $\what S$, $\what N$, $\what P^{(0)}$ and $\what P^{(\pm 1)}$, it follows that $\what R$ satisfies the following RH problem.
\begin{figure}[t]
\begin{center}

\tikzset{every picture/.style={line width=0.75pt}} 

\begin{tikzpicture}[x=0.75pt,y=0.75pt,yscale=-1,xscale=1]

\draw   (101,146) .. controls (101,132.19) and (112.19,121) .. (126,121) .. controls (139.81,121) and (151,132.19) .. (151,146) .. controls (151,159.81) and (139.81,171) .. (126,171) .. controls (112.19,171) and (101,159.81) .. (101,146) -- cycle ;
\draw  [fill={rgb, 255:red, 0; green, 0; blue, 0 }  ,fill opacity=1 ] (125,145) .. controls (125,144.45) and (125.45,144) .. (126,144) .. controls (126.55,144) and (127,144.45) .. (127,145) .. controls (127,145.55) and (126.55,146) .. (126,146) .. controls (125.45,146) and (125,145.55) .. (125,145) -- cycle ;

\draw   (190,146) .. controls (190,132.19) and (201.19,121) .. (215,121) .. controls (228.81,121) and (240,132.19) .. (240,146) .. controls (240,159.81) and (228.81,171) .. (215,171) .. controls (201.19,171) and (190,159.81) .. (190,146) -- cycle ;
\draw  [fill={rgb, 255:red, 0; green, 0; blue, 0 }  ,fill opacity=1 ] (214,145) .. controls (214,144.45) and (214.45,144) .. (215,144) .. controls (215.55,144) and (216,144.45) .. (216,145) .. controls (216,145.55) and (215.55,146) .. (215,146) .. controls (214.45,146) and (214,145.55) .. (214,145) -- cycle ;

\draw   (280,145) .. controls (280,131.19) and (291.19,120) .. (305,120) .. controls (318.81,120) and (330,131.19) .. (330,145) .. controls (330,158.81) and (318.81,170) .. (305,170) .. controls (291.19,170) and (280,158.81) .. (280,145) -- cycle ;
\draw  [fill={rgb, 255:red, 0; green, 0; blue, 0 }  ,fill opacity=1 ] (304,144) .. controls (304,143.45) and (304.45,143) .. (305,143) .. controls (305.55,143) and (306,143.45) .. (306,144) .. controls (306,144.55) and (305.55,145) .. (305,145) .. controls (304.45,145) and (304,144.55) .. (304,144) -- cycle ;

\draw    (40.58,58.92) -- (110,126) ;
\draw    (320.58,164.92) -- (390,232) ;
\draw    (40,231) -- (110,166) ;
\draw    (320,125) -- (390,60) ;
\draw  [fill={rgb, 255:red, 0; green, 0; blue, 0 }  ,fill opacity=1 ] (81.11,97.8) -- (75.5,96.45) -- (79.43,92.28) -- cycle ;
\draw  [fill={rgb, 255:red, 0; green, 0; blue, 0 }  ,fill opacity=1 ] (81.19,192.46) -- (79.24,197.88) -- (75.52,193.52) -- cycle ;
\draw  [fill={rgb, 255:red, 0; green, 0; blue, 0 }  ,fill opacity=1 ] (355.29,198.46) -- (349.68,197.11) -- (353.61,192.94) -- cycle ;
\draw  [fill={rgb, 255:red, 0; green, 0; blue, 0 }  ,fill opacity=1 ] (355,92.5) -- (353.05,97.92) -- (349.33,93.56) -- cycle ;
\draw  [draw opacity=0] (136.42,123.12) .. controls (140.68,114.43) and (154.1,108.08) .. (170,108.08) .. controls (185.9,108.08) and (199.32,114.43) .. (203.58,123.12) -- (170,129.04) -- cycle ; \draw   (136.42,123.12) .. controls (140.68,114.43) and (154.1,108.08) .. (170,108.08) .. controls (185.9,108.08) and (199.32,114.43) .. (203.58,123.12) ;
\draw  [fill={rgb, 255:red, 0; green, 0; blue, 0 }  ,fill opacity=1 ] (171.79,108.08) -- (166.78,110.95) -- (166.78,105.21) -- cycle ;

\draw  [draw opacity=0] (226.42,123.12) .. controls (230.68,114.43) and (244.1,108.08) .. (260,108.08) .. controls (275.9,108.08) and (289.32,114.43) .. (293.58,123.12) -- (260,129.04) -- cycle ; \draw   (226.42,123.12) .. controls (230.68,114.43) and (244.1,108.08) .. (260,108.08) .. controls (275.9,108.08) and (289.32,114.43) .. (293.58,123.12) ;
\draw  [fill={rgb, 255:red, 0; green, 0; blue, 0 }  ,fill opacity=1 ] (261.79,108.08) -- (256.78,110.95) -- (256.78,105.21) -- cycle ;

\draw  [draw opacity=0] (203.58,168.1) .. controls (199.32,176.79) and (185.9,183.13) .. (170,183.13) .. controls (154.1,183.13) and (140.68,176.79) .. (136.42,168.1) -- (170,162.17) -- cycle ; \draw   (203.58,168.1) .. controls (199.32,176.79) and (185.9,183.13) .. (170,183.13) .. controls (154.1,183.13) and (140.68,176.79) .. (136.42,168.1) ;
\draw  [fill={rgb, 255:red, 0; green, 0; blue, 0 }  ,fill opacity=1 ] (173.22,183.13) -- (168.21,186) -- (168.21,180.27) -- cycle ;
\draw  [draw opacity=0] (294.58,168.1) .. controls (290.32,176.79) and (276.9,183.13) .. (261,183.13) .. controls (245.1,183.13) and (231.68,176.79) .. (227.42,168.1) -- (261,162.17) -- cycle ; \draw   (294.58,168.1) .. controls (290.32,176.79) and (276.9,183.13) .. (261,183.13) .. controls (245.1,183.13) and (231.68,176.79) .. (227.42,168.1) ;
\draw  [fill={rgb, 255:red, 0; green, 0; blue, 0 }  ,fill opacity=1 ] (264.22,183.13) -- (259.21,186) -- (259.21,180.27) -- cycle ;

\draw  [fill={rgb, 255:red, 0; green, 0; blue, 0 }  ,fill opacity=1 ] (128.5,121) -- (123.5,123.87) -- (123.5,118.13) -- cycle ;
\draw  [fill={rgb, 255:red, 0; green, 0; blue, 0 }  ,fill opacity=1 ] (217.5,121) -- (212.5,123.87) -- (212.5,118.13) -- cycle ;
\draw  [fill={rgb, 255:red, 0; green, 0; blue, 0 }  ,fill opacity=1 ] (308,120) -- (303,122.87) -- (303,117.13) -- cycle ;

\draw (116,145) node [anchor=north west][inner sep=0.75pt]   [align=left] {$-1$};
\draw (210,148) node [anchor=north west][inner sep=0.75pt]   [align=left] {$0$};
\draw (301,147) node [anchor=north west][inner sep=0.75pt]   [align=left] {$1$};
\draw (15,49) node [anchor=north west][inner sep=0.75pt]   [align=left] {$\Gamma_2^{(1)}$};
\draw (15,225) node [anchor=north west][inner sep=0.75pt]   [align=left] {$\Gamma_4^{(1)}$};
\draw (392,49) node [anchor=north west][inner sep=0.75pt]   [align=left] {$\Gamma_1^{(1)}$};
\draw (391,225) node [anchor=north west][inner sep=0.75pt]   [align=left] {$\Gamma_5^{(1)}$};
\draw (155,90) node [anchor=north west][inner sep=0.75pt]   [align=left] {$\partial \Omega_{L,+}$};
\draw (155,190) node [anchor=north west][inner sep=0.75pt]   [align=left] {$\partial \Omega_{L,-}$};
\draw (250,90) node [anchor=north west][inner sep=0.75pt]   [align=left] {$\partial \Omega_{R,+}$};
\draw (250,190) node [anchor=north west][inner sep=0.75pt]   [align=left] {$\partial \Omega_{R,-}$};
\draw (35,140) node [anchor=north west][inner sep=0.75pt]   [align=left] {$\partial D(-1,\varepsilon)$};
\draw (190,177) node [anchor=north west][inner sep=0.75pt]   [align=left] {$\partial D(0,\varepsilon)$};
\draw (335,140) node [anchor=north west][inner sep=0.75pt]   [align=left] {$\partial D(1,\varepsilon)$};

\end{tikzpicture}

   \caption{The contour $\Gamma_{\what R}$ in the RH problem for $\what R$.}
   \label{fig:hatR}
\end{center}
\end{figure}
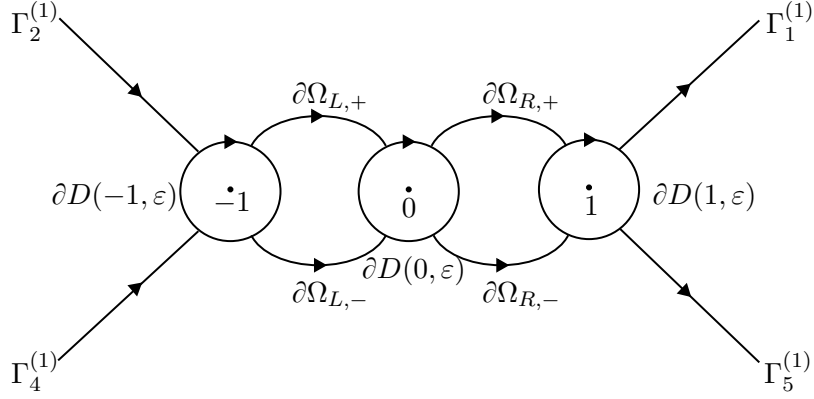

\begin{rhp}
\hfill
\begin{itemize}
\item [\rm{(a)}] $\what R(z)$ is defined and analytic in $\mathbb{C} \setminus \Gamma_{\what R}$, where
\begin{equation}
\Sigma_{\what R}:=\Gamma_{\what S} \cup \partial D(-1,\varepsilon) \cup \partial D(0,\varepsilon) \cup \partial D(1,\varepsilon) \setminus \{\mathbb{R}
\cup D(-1,\varepsilon) \cup D(0,\varepsilon) \cup D(1,\varepsilon) \};
\end{equation}
see Figure \ref{fig:hatR} for an illustration.
\item [\rm{(b)}] For $z \in \Gamma_{\what R}$, we have
\begin{equation}\label{eq:hatRjump}
\what R_+(z) = \what R_-(z) J_{\what R} (z),
\end{equation}
 where
\begin{equation}\label{def:JhatR}
J_{\what R}(z) = \begin{cases}
\what P^{(0)}(z) \what N(z)^{-1}, & \quad z \in \partial D(0, \varepsilon),\\
\what P^{(-1)}(z) \what N(z)^{-1}, & \quad z \in \partial D(-1, \varepsilon),\\
\what P^{(1)}(z) \what N(z)^{-1}, & \quad z \in \partial D(1, \varepsilon),\\
\what N(z) J_{\what S}(z) \what N(z)^{-1}, & \quad \Gamma_{\what R} \setminus \left\{\partial D(0, \varepsilon) \cup \partial D(\pm 1, \varepsilon) \right\},
\end{cases}
\end{equation}
with $J_{\what S}(z)$ defined in \eqref{def:JhatS}.
\item [\rm{(c)}] As $z \to \infty$, we have
\begin{equation}\label{eq:asy:hatR}
\what R(z) = I + \frac{\what R^{(1)}}{z} + \Boh (z^{-2}),
\end{equation}
where $\what R^{(1)}$ is independent of $z$.
\end{itemize}
\end{rhp}
As $s\to +\infty$, we have the following estimate of $J_{\what R}(z)$ in \eqref{def:JhatR}. For $z \in \Gamma_{\what R} \setminus \left\{\partial D(0, \varepsilon) \cup \partial D(\pm 1, \varepsilon) \right\}$, it is readily seen from \eqref{def:JhatS} and \eqref{eq:hatNexp} that there exists a positive constant $c$ such that
\begin{equation}\label{eq:estJhatR1}
J_{\what R}(z) = I + \Boh\left(e^{-c s^{3/2}}\right),
\end{equation}
for $z \in \partial D(\pm 1, \varepsilon)$, it follows from \eqref{eq:matchhatP-1} and \eqref{eq:matchhatP1} that
\begin{equation}
J_{\what R}(z) = I + \Boh\left(s^{-\frac{3}{2}}\right),
\end{equation}
and for $z \in D(0, \varepsilon)$, it follows from \eqref{eq:match01} that
\begin{equation}\label{eq:estJhatR3}
J_{\what R}(z) = I + \frac{\what J^{(0)}_1(z)}{s^{1/2}} + \frac{\what J^{(0)}_2(z)}{s}+\Boh\left(s^{-\frac{3}{2}}\right),
\end{equation}
where $\what J^{(0)}_1(z)$ and $\what J^{(0)}_2(z)$ are given in \eqref{def:hatJ01} and \eqref{def:hatJ02}.

By \cite{Deift1999, Deift1993}, the estimates \eqref{eq:estJhatR1}--\eqref{eq:estJhatR3} imply that
\begin{equation}\label{eq:hatRexpansion}
\what R(z) = I + \frac{\what R_1(z)}{s^{1/2}} + \frac{\what R_2(z)}{s}+\Boh (s^{-\frac{3}{2}}), \qquad s \to +\infty,
\end{equation}
uniformly for $z \in \mathbb{C} \setminus \Gamma_{\what R}$. Moreover, by inserting \eqref{eq:hatRexpansion} into \eqref{eq:hatRexpansion}, it follows from \eqref{eq:estJhatR1}--\eqref{eq:estJhatR3} that $\what R_1$ satisfies the following RH problem.
\begin{rhp}
\hfill
\begin{itemize}
\item [\rm{(a)}] $\what R_1(z)$ is analytic in $\mathbb{C} \setminus \partial D(0, \varepsilon)$.
\item [\rm{(b)}] For $z \in \partial D(0, \varepsilon)$, we have
\begin{equation}
\what R_{1,+}(z)-\what R_{1,-}(z)=\what J^{(0)}_1(z),
\end{equation}
 where $\what J^{(0)}_1(z)$ is given in \eqref{def:hatJ01}.
\item [\rm{(c)}] As $z \to \infty$, we have
\begin{equation}
\what R_1(z) = \Boh (z^{-1}).
\end{equation}
\end{itemize}
\end{rhp}
By Cauchy's residue theorem, we have
\begin{align}\label{eq:hatR1}
\what R_1(z) = \frac{1}{2 \pi \ii} \int_{\partial D(0, \varepsilon)} \frac{\what J^{(0)}_1(\zeta)}{\zeta -z} \ud \zeta
=\begin{cases}
\frac{\Res_{\zeta =0}\what J^{(0)}_1(\zeta)}{z} - \what J^{(0)}_1(z), & \quad z \in D(0, \varepsilon),\\
\frac{\Res_{\zeta =0}\what J^{(0)}_1(\zeta)}{z}, & \quad \textrm{elsewhere}.
\end{cases}
\end{align}
Similarly, we have that $\what R_2$ in \eqref{eq:hatRexpansion} satisfies the following RH problem.
\begin{rhp}
\hfill
\begin{itemize}
\item [\rm{(a)}] $\what R_2(z)$ is analytic in $\mathbb{C} \setminus \partial D(0, \varepsilon)$.
\item [\rm{(b)}] For $z \in \partial D(0, \varepsilon)$, we have
\begin{equation}
\what R_{2,+}(z)-\what R_{2,-}(z)=\what R_{1,-}(z) \what J^{(0)}_1(z)+\what J^{(0)}_2(z),
\end{equation}
 where $\what J^{(0)}_1(z)$ and $\what J^{(0)}_2(z)$ are given in \eqref{def:hatJ01} and \eqref{def:hatJ02}, respectively.
\item [\rm{(c)}] As $z \to \infty$, we have
\begin{equation}
\what R_2(z) = \Boh (z^{-1}).
\end{equation}
\end{itemize}
\end{rhp}
By Cauchy's residue theorem, we have
\begin{align}\label{eq:hatR2}
\what R_2(z) &= \frac{1}{2 \pi \ii} \int_{\partial D(0, \varepsilon)} \frac{\what R_{1,-}(\zeta) \what J^{(0)}_1(\zeta)+\what J^{(0)}_2(\zeta)}{\zeta -z} \ud \zeta
\nonumber
\\
&=\begin{cases}
\frac{\Res_{\zeta =0}(\what R_{1,-}(\zeta) \what J^{(0)}_1(\zeta)+\what J^{(0)}_2(\zeta))}{z} - \what R_{1,-}(z) \what J^{(0)}_1(z)-\what J^{(0)}_2(z), & \quad z \in D(0, \varepsilon),\\
\frac{\Res_{\zeta =0}(\what R_{1,-}(\zeta) \what J^{(0)}_1(\zeta)+\what J^{(0)}_2(\zeta))}{z}, & \quad \textrm{elsewhere}.
\end{cases}
\end{align}

\section{Asymptotic analysis of the RH problem for $X$ as $s \to 0^+$}\label{sec:AsyX0}

In this section, we analyze the asymptotics for $X$ as $s \to 0^+$, which is relatively simpler than the case when $s \to +\infty$. Throughout this section, it is assumed that $0 < \gamma \le 1$.

\subsection{Global parametrix}
As $s \to 0^+$, the interval $(-s, s)$ vanishes, it is then easily seen that the RH problem for $X$ is approximated by following global parametrix $\widecheck{N}$ for $|z|>\delta > s$:
\begin{equation}\label{eq:tildeN}
\widecheck{N}(z) = M(z) \begin{cases}
J_1(z), \qquad & \textrm{$\arg z < \varphi$ and $\arg (z-s) > \varphi$,} \\
J_5(z)^{-1}, \qquad & \textrm{$\arg z >- \varphi$ and $\arg (z-s) <- \varphi$,} \\
J_2(z), \qquad & \textrm{$\arg z > \pi - \varphi$ and $\arg (z+s) < \pi -\varphi$,}\\
J_4(z)^{-1}, \qquad & \textrm{$\arg z <  \varphi-\pi$ and $\arg (z+s) >\varphi-\pi$,}\\
I, \qquad & \textrm{elsewhere,}
\end{cases}
\end{equation}
where $M$ is the solution of the tacnode RH problem \ref{rhp:tac}, $\varphi$ is given in \eqref{phi} and $J_k$, $k=0,\ldots, 5$, denotes the jump matrix of $M$ on the ray $\Gamma_k$ as shown in Figure \ref{fig:tacnode}.

\subsection{Local parametrix}
For $|z|<\delta$, which particularly includes a neighborhood of $(-s, s)$, we approximate $X$ by the following local parametrix.
\begin{rhp}\label{rhp:tildeP0}
\hfill
\begin{itemize}
\item [\rm{(a)}] $\widecheck P^{(0)}(z)$ is defined and analytic in $D(0, \delta) \setminus \Gamma_{X}$, where $\Gamma_{X}$ is defined in \eqref{def:gammaX}.
\item [\rm{(b)}] For $z \in D(0, \delta) \cap \Gamma_{X}$, $\widecheck P^{(0)}$ satisfies the jump condition
\begin{equation}\label{eq:tildeP0-jump}
\widecheck P^{(0)}_+(z) = \widecheck P^{(0)}_-(z) J_{X}(z),
\end{equation}
where $J_{X}(z)$ is given in \eqref{def:JX}.
\item [\rm{(c)}] As $s \to 0^+$, we have the matching condition
\begin{equation}\label{eq:tildeP0:matching}
\widecheck P^{(0)}(z) = (I+ \Boh(s)) \widecheck{N}(z), \qquad z \in \partial D(0, \delta),
\end{equation}
where $\widecheck{N}(z)$ is given in \eqref{eq:tildeN}.
\end{itemize}
\end{rhp}
Recall that $\widehat M $ is the analytic continuation of the restriction of $M$ in the sector bounded by the rays $\Gamma_1$ and $\Gamma_2$ to the whole complex plane and $\Omega_k^{(s)}$, $k=1,\ldots,6$, are six regions as shown in Figure \ref{fig:X}, we look for a solution to the above RH problem of the following form:
\begin{equation}\label{eq:tildeP0}
\widecheck P^{(0)}(z) = \what M(z) \begin{pmatrix}
1 & 0 & \eta \left(\frac{z}{s}\right) & \eta \left(\frac{z}{s}\right)\\
0 & 1 & \eta \left(\frac{z}{s}\right) & \eta \left(\frac{z}{s}\right)\\
0 & 0 & 1 & 0\\
0 & 0 & 0 & 1
\end{pmatrix}
\begin{cases}
J_1(z)^{-1}, \quad & z \in \Omega_1^{(s)},\\
I, \quad & z \in \Omega_2^{(s)},\\
J_2(z)^{-1}, \quad & z \in \Omega_3^{(s)},\\
J_1(z)^{-1}J_0(z)^{-1}J_5(z)^{-1}J_4(z), \quad & z \in \Omega_4^{(s)},\\
J_1(z)^{-1}J_0(z)^{-1}J_5(z)^{-1}, \quad & z \in \Omega_5^{(s)},\\
J_1^{-1}(z)J_0^{-1}(z), \quad & z \in \Omega_6^{(s)},
\end{cases}
\end{equation}
where $\eta $ is a function to be determined later. In view of RH problem \ref{rhp:tildeP0}, it follows that $\eta$ solves the following scalar RH problem.
\begin{rhp}
\hfill
\begin{itemize}
\item [\rm{(a)}] $\eta (z)$ is defined and analytic in $\mathbb{C} \setminus [-1, 1]$.
\item [\rm{(b)}] For $x \in (-1, 1)$, we have $\eta_+(x) = \eta_-(x) -\gamma$.
\item [\rm{(c)}] As $z \to \infty$, we have
\begin{equation}
\eta(z) = \Boh (z^{-1}).
\end{equation}
\end{itemize}
\end{rhp}
By the Sokhotske-Plemelj formula, it is easy to find
\begin{equation}\label{def:eta}
\eta (z) = - \frac{\gamma}{2 \pi \ii} \ln{\left(\frac{z-1}{z+1}\right)}.
\end{equation}

Since $\eta(z/s) = \Boh(s)$ as $ s \to 0^+$ for $z \in \partial D(0, \delta)$, we deduce the matching condition \eqref{eq:tildeP0:matching} from \eqref{eq:tildeN}, \eqref{eq:tildeP0} and the fact that $\what M$ is bounded near the origin.

\subsection{Final transformation}
We define the final transformation as
\begin{equation}\label{def:tildeR}
\widecheck R(z) = \begin{cases}
X(z) \widecheck{N}(z)^{-1}, \qquad & z \in \mathbb{C} \setminus D(0, \delta),\\
X(z) \widecheck P^{(0)}(z)^{-1}, \qquad & z \in D(0, \delta).
\end{cases}
\end{equation}
From the RH problems for $X$ and $\widecheck P^{(0)}$, it is readily seen $\widecheck R$ is analytic in $D(0, \delta) \setminus \{-s, s\}$.
On account of the fact that
\begin{equation}
\eta (z/s) = \begin{cases}
\Boh(\ln{(z+s)}), & \quad z \to -s\\
\Boh(\ln{(z-s)}), & \quad z \to s,
\end{cases}
\end{equation}
we conclude from \eqref{eq:X-near-s}, \eqref{eq:X-near--s} and \eqref{eq:tildeP0} that both $s$ and $-s$ are removable singularities. Moreover, it is readily seen that $\widecheck R$ solves the following RH problem.
\begin{rhp}
\hfill
\begin{itemize}
\item [\rm{(a)}] $\widecheck R(z)$ is defined and analytic in $\mathbb{C} \setminus \partial D(0, \delta)$.
\item [\rm{(b)}] For $z \in \partial D(0, \delta)$, we have
\begin{equation}
\widecheck R_+(z) = \widecheck R_+(z) J_{\widetilde{R}}(z),
\end{equation}
 where
    \begin{equation}
        J_{\widecheck{R}}(z) = \widecheck P^{(0)}(z) \widecheck{N}(z)^{-1}
    \end{equation}
\item [\rm{(c)}] As $z \to \infty$, we have
\begin{equation}
\widecheck R(z) = I + \Boh(z^{-1}).
\end{equation}
\end{itemize}
\end{rhp}

According to \eqref{eq:tildeP0:matching}, it is easily seen that $J_{\widetilde{R}}(z) =I+ \Boh(s)$, as $s \to 0^+$. Therefore, we have
\begin{equation}\label{eq:asytildeR}
\widecheck R(z)=I+ \Boh(s), \qquad \frac{\ud}{\ud z}\widecheck R(z)=\Boh(s), \qquad s \to 0^+,
\end{equation}
uniformly for $z \in \mathbb{C} \setminus \partial D(0, \delta)$.


\section{Asymptotics of $p_k(s)$ and $q_k(s)$ for large and small $s$} \label{sec:pq}
We have defined the functions $p_k(s)$ and $q_k(s)$, $k=1\ldots,6$, in \eqref{def:p56}, \eqref{def:q56} and \eqref{def:qkpk}, which satisfies the equations  \eqref{def:pq's} and \eqref{eq:sumpq}. It is the aim of this section to derive large and small $s$ asymptotics of these functions for
$0< \gamma < 1$. As we will see later, these asymptotic formulas are essential in the proof of large gap asymptotics of $F(s;\gamma)$.


\begin{proposition}\label{th:pq}
For the purely imaginary parameter $\beta$ given in \eqref{def:beta}, there exist a family of special solutions to the system of differential equations \eqref{def:pq's} and \eqref{eq:sumpq} with the following asymptotic behaviors:  as $s \to +\infty$,
\begin{align}
p_1(s) &= e^{-\tau s}\frac{\gamma |\Gamma(1-\beta)| s^{1/4}}{\sqrt{2} \pi}e^{-\frac{\beta \pi \ii}{2}}\left( \sin(\vartheta(s)-\frac{\pi}{4}) - 2 \ii \beta \cos(\vartheta(s)-\frac{\pi}{4})\right)(1+\Boh(s^{-\frac 12})),\label{p1}\\
p_2(s) &= e^{-\tau s}\frac{\sqrt{2} \gamma \beta |\Gamma(1-\beta)| M^{(1)}_{14}}{\pi s^{1/4}} e^{-\frac{\beta \pi \ii}{2}}\sin(\vartheta(s)-\frac{\pi}{4})(1+\Boh(s^{-\frac 12})),\label{p2}\\
p_3(s) &= e^{-\tau s}\frac{\ii \gamma |\Gamma(1-\beta)| }{\sqrt{2} \pi s^{1/4}} e^{-\frac{\beta \pi \ii}{2}}\cos(\vartheta(s)-\frac{\pi}{4})(1+\Boh(s^{-\frac 12})),\label{p3}\\
p_4(s) &= e^{-\tau s} \times \Boh(s^{-\frac 34}),\label{p4}\\
p_5(s) &= \ii r_1\left(-2 \beta s^{\frac 12} + M^{(1)}_{13}+2\beta s^{-\frac 12} \left(M^{(1)}_{11} -\left(M^{(1)}_{13}\right)^2-M^{(1)}_{14}M^{(1)}_{23}-M^{(1)}_{33}\right)\right)
\nonumber
\\
&\quad +\Boh(s^{-1}),\label{p5}\\
p_6(s)&=\ii r_1 \left(2 \beta \dot M^{(1)}_{14} s^{\frac 12} + \dot M^{(1)}_{12}-4\beta^2\left(\dot M^{(1)}_{12}+\dot M^{(1)}_{13}\dot M^{(1)}_{14}+\dot M^{(1)}_{14}\dot M^{(1)}_{24}+\dot M^{(1)}_{34}\right)\right)\nonumber\\
&\quad+ \ii r_2 \left(2 \beta \widetilde M^{(1)}_{14} s^{\frac 12} + \widetilde M^{(1)}_{12}-4\beta^2\left(\widetilde M^{(1)}_{12}+\widetilde M^{(1)}_{13}\widetilde M^{(1)}_{14}+\widetilde M^{(1)}_{14}\widetilde M^{(1)}_{24}+\widetilde M^{(1)}_{34}\right)\right)
\nonumber
\\
&\quad +\Boh(s^{-\frac 12}),\label{p6}\\
q_1(s) &=\sqrt{2} e^{\tau s}|\Gamma(1-\beta)|e^{-\frac{\beta \pi \ii}{2}}\cos(\vartheta(s)-\frac{\pi}{4})s^{-\frac 14}(1+\Boh(s^{-\frac 12})),\label{q1}\\
q_2(s) & = e^{\tau s}\times \Boh(s^{-\frac 34}),\label{q2}\\
q_3(s) & = \sqrt{2} e^{\tau s}|\Gamma(1-\beta)|e^{-\frac{\beta \pi \ii}{2}}s^{\frac 14}\left(2\beta\cos(\vartheta(s)-\frac{\pi}{4})+\ii\sin(\vartheta(s)-\frac{\pi}{4})\right)(1+\Boh(s^{-\frac 12})),\label{q3}\\
q_4(s) & = -2\sqrt{2} \ii \beta e^{\tau s}|\Gamma(1-\beta)|M^{(1)}_{23}e^{-\frac{\beta \pi \ii}{2}}\sin(\vartheta(s)-\frac{\pi}{4})s^{-\frac 14}(1+\Boh(s^{-\frac 12})),\label{q4}\\
q_5(s) & =-4\beta^2 s + 2\beta M^{(1)}_{13} s^{\frac 12}- M^{(1)}_{11} + 4\beta^2\left(M^{(1)}_{11} -\left(M^{(1)}_{13}\right)^2-M^{(1)}_{14}M^{(1)}_{23}-M^{(1)}_{33}\right)\nonumber\\
&\quad +2\beta \dot M^{(1)}_{13} s^{\frac 12}- \dot M^{(1)}_{11} + 4\beta^2\left(\dot M^{(1)}_{11} -\left(\dot M^{(1)}_{13}\right)^2-\dot M^{(1)}_{14}\dot M^{(1)}_{23}-\dot M^{(1)}_{33}\right)+\Boh(s^{-\frac 12}),\label{q5}\\
q_6(s) &=M^{(1)}_{14}-2\beta s^{-\frac 12} \left(M^{(1)}_{12}+M^{(1)}_{13}M^{(1)}_{14}+M^{(1)}_{14}M^{(1)}_{24}+M^{(1)}_{34}\right)+\Boh(s^{-1});\label{q6}
\end{align}
as $s \to 0^+$
\begin{align}
p_1(s)&=\Boh(1), \quad p_2(s)=\Boh(1), \quad p_3(s)=\Boh(1), \quad p_4(s)=\Boh(1),\label{eq:pk0}\\
p_5(s)&=\ii r_1 M^{(1)}_{13}+\Boh(s), \label{eq:p50}\\
p_6(s)&= \ii r_1 \dot M^{(1)}_{12} + \ii r_2 \widetilde M^{(1)}_{12} +\Boh(s),\label{eq:p60}\\
q_1(s)&=\Boh(1), \quad q_2(s)=\Boh(1),\quad q_3(s)=\Boh(1),\quad q_4(s)=\Boh(1),\label{eq:qk0}\\
q_5(s)&=-\dot M^{(1)}_{11}-M^{(1)}_{11}+\Boh(s),\label{eq:q50}\\
q_6(s)&= M^{(1)}_{14}+\Boh(s)\label{eq:q60},
\end{align}
where $\Gamma (z)$ is Euler's gamma function, $\vartheta (s)$ and $M^{(1)}$ are given in \eqref{def:theta} and \eqref{eq:asy:M}, $\widetilde M^{(1)}$ and $\dot M^{(1)}$ are defined through \eqref{def:tildeX} and \eqref{def:dotX}.
\end{proposition}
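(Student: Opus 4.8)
The plan is to exploit the fact that the functions $p_k(s)$ and $q_k(s)$ are \emph{defined} through the RH problem for $X$: by \eqref{def:p56}--\eqref{def:q56} the functions $p_5,p_6,q_5,q_6$ (together with their $\widetilde{\phantom{x}}$- and $\dot{\phantom{x}}$-counterparts) are entries of the residue matrix $X^{(1)}$, while by \eqref{def:qkpk} the functions $p_1,\dots,p_4,q_1,\dots,q_4$ are built from the matrix $X_{R,0}(s)$ encoding the behaviour of $X$ near $z=s$. By Proposition~\ref{pro:lax}, \emph{any} such family automatically satisfies the system \eqref{def:pq's} together with \eqref{eq:sumpq}, and the symmetry relations of Proposition~\ref{prop:X1} guarantee that the $\widetilde{\phantom{x}}$- and $\dot{\phantom{x}}$-structure is respected. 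Hence it suffices to read off the large- and small-$s$ asymptotics of $X^{(1)}$ and $X_{R,0}(s)$ from the Deift--Zhou analyses already performed in Sections~\ref{sec:AsyXgamma} and~\ref{sec:AsyX0}, and to translate the $\widetilde{\phantom{x}}$-, $\dot{\phantom{x}}$-objects into $M^{(1)},\widetilde M^{(1)},\dot M^{(1)}$ via Proposition~\ref{prop:X1}.

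\textbf{Large $s$: the slowly varying functions.} First I would undo the chain $X\mapsto\what T\mapsto\what S\mapsto\what R$ at $z=\infty$. Using \eqref{def:XToTgamma}, the relation $\what S=\what R\,\what N$ away from the discs, the explicit global parametrix \eqref{eq:hatNexp} and the expansions of $d_1,d_2$ from item~(ii) of Proposition~\ref{prop:di}, one expresses $X^{(1)}$ as an explicit conjugation by the rescaling diagonal $\diag(s^{-1/4},s^{-1/4},s^{1/4},s^{1/4})$ of $s\bigl(\what R^{(1)}+\what N^{(1)}\bigr)$, where $\what N^{(1)}$ is given in \eqref{def:hatN1exp} and $\what R^{(1)}=\what R_1^{(1)}/s^{1/2}+\what R_2^{(1)}/s+\Boh(s^{-3/2})$. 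The residue coefficients $\what R_1^{(1)}$, $\what R_2^{(1)}$ are obtained from the Cauchy integral formulas \eqref{eq:hatR1}--\eqref{eq:hatR2}, using $\what J^{(0)}_1$ in \eqref{def:hatJ01}, the explicit prefactor values $\what E_0(0)$ and $\what E_0(0)^{-1}\what E_0'(0)$ displayed in the proof of the local parametrix near $z=0$, and the entries of $M^{(1)}$. Reading off the entries via $p_5=\ii r_1X^{(1)}_{13}$, $p_6=-\ii r_1X^{(1)}_{43}-\ii r_2X^{(1)}_{21}$, $q_5=X^{(1)}_{33}-X^{(1)}_{11}$, $q_6=X^{(1)}_{14}$ then yields \eqref{p5}, \eqref{p6}, \eqref{q5} and \eqref{q6}.

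\textbf{Large $s$: the oscillatory functions.} Here I would express $X_{R,0}(s)$ through the local parametrix $\what P^{(1)}$ at $z=1$. Tracing $X\mapsto\what T\mapsto\what S$ backwards, evaluating near $z/s\to1$ and using $\what P^{(1)}=\what E_1\,\Phi^{(\CHF)}(\cdots)$ from \eqref{def:hatP1}, the value $X_{R,0}(s)$ is governed by: the behaviour of the confluent hypergeometric parametrix $\Phi^{(\CHF)}$ near the origin (this introduces the $\Gamma(1\pm\beta)$ factors and the connection phase), the local expansion of $\what E_1$ at $z=1$ in \eqref{eq:hatE11}, the values of $d_1,d_2$ near $z=1$ from \eqref{eq:d11}--\eqref{eq:d21}, the exponentials $e^{\pm\theta_1(sz)}$, and the $\what R_1$-correction carrying the off-diagonal data $M^{(1)}_{14},M^{(1)}_{23}$. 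Assembling these pieces produces the phase $\vartheta(s)$ of \eqref{def:theta}: the polynomial part $\tfrac{2r_1}{3}s^{3/2}-2s_1s$ from $\theta_1(s\,\cdot\,)$, the terms $\tfrac{3\ii\beta}{2}\ln s+\ii\beta\ln\bigl(8(r_1-s_1/s)\bigr)$ from the scalars $s^{\pm3\beta/2}$, $4^{\pm\beta}$ and the Jacobian $\what f_1'(1)^{\pm\beta}$ (cf.\ \eqref{eq:hatf1}, \eqref{def:a}), and $\arg\Gamma(1+\beta)$ from the CHF connection coefficient, while the amplitude $|\Gamma(1-\beta)|$ comes from the CHF normalization at the origin. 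Finally, inserting $X_{R,0}(s)$ into \eqref{def:qkpk} — the $q_k$ from its first two columns, the $p_k$ from $-\tfrac{\gamma}{2\pi\ii}X_{R,0}(s)^{-\msf T}\bigl(0,0,1,1\bigr)^{\msf T}$ — gives \eqref{p1}--\eqref{p4} and \eqref{q1}--\eqref{q4}.

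\textbf{Small $s$, and the main obstacle.} The case $s\to0^+$ is immediate from Section~\ref{sec:AsyX0}: since $\widecheck N(z)=M(z)$ for large $z$ and $\widecheck R(z)=I+\Boh(s)$, $\widecheck R'(z)=\Boh(s)$ by \eqref{eq:asytildeR}, one has $X^{(1)}=M^{(1)}+\Boh(s)$, and \eqref{def:p56}--\eqref{def:q56} with the symmetry relations give \eqref{eq:p50}--\eqref{eq:p60} and \eqref{eq:q50}--\eqref{eq:q60}; the matrix $X_{R,0}(s)$ is read off from the local parametrix \eqref{eq:tildeP0}, whose building blocks ($\what M$ near the origin and $\eta(z/s)$) remain bounded once the logarithmic part has been absorbed into the triangular factor of \eqref{eq:X-near-s}, so $p_1,\dots,p_4,q_1,\dots,q_4=\Boh(1)$, as in \eqref{eq:pk0}, \eqref{eq:qk0}. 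I expect the delicate part to be the bookkeeping in the oscillatory large-$s$ step: correctly matching the local expansion of $\Phi^{(\CHF)}$ at the origin to the rescaled variable, tracking all branch choices and the $s$-dependent scalars (notably $\textsf{a}$ in \eqref{def:a}), showing that the accumulated phase collapses exactly to the closed form \eqref{def:theta} with the $\arg\Gamma(1+\beta)$ term, and verifying that the $\Boh(s^{-1/2})$ relative errors propagate uniformly in $r_i>0$ and $s_i\in\er$.
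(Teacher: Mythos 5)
Your proposal follows essentially the same route as the paper: the large-$s$ asymptotics of $p_5,p_6,q_5,q_6$ are read off from $X^{(1)}=s\,\diag(s^{-1/4},s^{-1/4},s^{1/4},s^{1/4})(\what R^{(1)}+\what N^{(1)})\diag(s^{1/4},s^{1/4},s^{-1/4},s^{-1/4})$ with $\what R^{(1)}$ computed from the residues of $\what J^{(0)}_1,\what J^{(0)}_2$, the oscillatory $p_k,q_k$ ($k\le 4$) from $X_{R,0}(s)$ via the confluent hypergeometric local parametrix at $z=1$ (with the prefactor $\what E_1(1)$, the scalar $\textsf{a}$, and $\what R(1)$ supplying the $M^{(1)}_{14},M^{(1)}_{23}$ corrections), and the small-$s$ statements from the Section \ref{sec:AsyX0} analysis giving $X^{(1)}=M^{(1)}+\Boh(s)$ and $X_{R,0}(s)=\Boh(1)$ after the logarithm is absorbed. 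This is precisely the paper's argument, so the plan is correct, modulo carrying out the indicated bookkeeping.
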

\begin{proof}
We split the proof into two parts, which deal with the large and small $s$ asymptotics, respectively.

\subsubsection*{Asymptotics of $p_k(s)$ and $q_k(s)$ as $s \to +\infty$}
We first consider the asymptotics of $p_k$ and $q_k$ for $k=1, \ldots 4$.
Recall that
\begin{equation}\label{pq2}
\begin{pmatrix}
q_1(s)\\q_2(s)\\q_3(s)\\q_4(s)
\end{pmatrix}=X_{R,0}(s) \begin{pmatrix}
1\\1\\0\\0
\end{pmatrix} \quad \textrm{and} \quad \begin{pmatrix}
p_1(s)\\p_2(s)\\p_3(s)\\p_4(s)
\end{pmatrix}=-\frac{\gamma}{2 \pi \ii}X_{R,0}(s)^{-\msf T} \begin{pmatrix}
0\\0\\1\\1
\end{pmatrix},
\end{equation}
where
\begin{align}\label{eq:XR0s}
X_{R,0}(s) = \lim_{z \to 1, z \in \Omega_2^{(1)}} X(sz)
\begin{pmatrix}
1 & 0 & \frac{\gamma}{2 \pi \ii} \ln(sz-s) & \frac{\gamma}{2 \pi \ii} \ln(sz-s)\\
0 & 1 & \frac{\gamma}{2 \pi \ii} \ln(sz-s) & \frac{\gamma}{2 \pi \ii} \ln(sz-s)\\
0 & 0 & 1 & 0\\
0 & 0 & 0 & 1
\end{pmatrix}.
\end{align}

Tracing back the transformations $X \to \what{T} \to \what S \to \what R$ in \eqref{def:XToTgamma}, \eqref{def:hatS} and \eqref{def:hatR}, it follows that, for $z \in \Omega_2^{(1)} \setminus \Omega_{R,+}$,
\begin{align}
X(sz) & = \diag \left(s^{-\frac 14}, s^{-\frac 14}, s^{\frac 14}, s^{\frac 14} \right)\what{T}(z) \diag \left(e^{-\theta_1(sz)+\tau sz}, e^{-\theta_2(sz)-\tau sz}, e^{\theta_1(sz)+\tau sz}, e^{\theta_2(sz)-\tau sz} \right)\nonumber\\
&=\diag \left(s^{-\frac 14}, s^{-\frac 14}, s^{\frac 14}, s^{\frac 14} \right)\what{S}(z) \diag \left(e^{-\theta_1(sz)+\tau sz}, e^{-\theta_2(sz)-\tau sz}, e^{\theta_1(sz)+\tau sz}, e^{\theta_2(sz)-\tau sz} \right)\nonumber\\
&=\diag \left(s^{-\frac 14}, s^{-\frac 14}, s^{\frac 14}, s^{\frac 14} \right)\what{R}(z) \what P^{(1)}(z)
\nonumber\\
&\quad \times \diag \left(e^{-\theta_1(sz)+\tau sz}, e^{-\theta_2(sz)-\tau sz}, e^{\theta_1(sz)+\tau sz}, e^{\theta_2(sz)-\tau sz} \right).
\end{align}
This, together with \eqref{eq:XR0s} and \eqref{def:hatP1}, implies that
\begin{align}
X_{R,0}(s) & = \diag \left(s^{-\frac 14}, s^{-\frac 14}, s^{\frac 14}, s^{\frac 14} \right)\what{R}(1) \what E_1(1)\nonumber\\
&\quad \times\lim_{z \to 1, z \in \Omega_2^{(1)}} \left[\begin{pmatrix}
\Phi^{(\CHF)}_{11}(s^{\frac 32} \what f_{1}(z); \beta) & 0 & \Phi^{(\CHF)}_{12}(s^{\frac 32} \what f_{1}(z); \beta) & 0\\
0 & 1 &0&0\\
\Phi^{(\CHF)}_{21}(s^{\frac 32} \what f_{1}(z); \beta) & 0 & \Phi^{(\CHF)}_{22}(s^{\frac 32} \what f_{1}(z); \beta) & 0\\
0 & 0 & 0 & 1
\end{pmatrix}\nonumber\right.\\
&\quad \times \diag \left(e^{\theta_1(sz) - \frac{\beta \pi \ii}{2}}, 1, e^{-\theta_1(sz) + \frac{\beta \pi \ii}{2}}, 1\right)\begin{pmatrix}
1 & 0 & 0 & 0\\
-e^{\theta_1(sz)-\theta_2(sz)-2 \tau sz} & 1 & 0 & 0\\
0 & 0 & 1 & e^{\theta_1(sz)-\theta_2(sz)+2 \tau sz}\\
0 & 0 & 0 & 1
\end{pmatrix}\nonumber\\
&\quad \times  \diag \left(e^{-\theta_1(sz)+\tau sz}, e^{-\theta_2(sz)-\tau sz}, e^{\theta_1(sz)+\tau sz}, e^{\theta_2(sz)-\tau sz} \right)\nonumber\\
&\quad \times \left.\begin{pmatrix}
1 & 0 & \frac{\gamma}{2 \pi \ii} \ln(sz-s) & \frac{\gamma}{2 \pi \ii} \ln(sz-s)\\
0 & 1 & \frac{\gamma}{2 \pi \ii} \ln(sz-s) & \frac{\gamma}{2 \pi \ii} \ln(sz-s)\\
0 & 0 & 1 & 0\\
0 & 0 & 0 & 1
\end{pmatrix}\right],
\end{align}
where $ \what E_1$ and $\what f_{1}$ are defined in \eqref{def:hatE1} and \eqref{def:hatf1}, respectively.  On account of the behavior of $\Phi^{(\CHF)}$ near the origin given in \eqref{eq:H-expand-2}, it is readily seen that
\begin{align}\label{eq:XR0}
&X_{R,0}(s) = \diag \left(s^{-\frac 14}, s^{-\frac 14}, s^{\frac 14}, s^{\frac 14} \right)\what{R}(1) \what E_1(1)\what{\Upsilon}_0\nonumber\\
& \times \begin{pmatrix}
e^{\tau s} & 0 & e^{\tau s} \frac{\gamma}{2 \pi \ii}\left(\ln s - \ln (e^{-\frac{\pi \ii}{2}} s^{\frac 32}\what f_{1}'(1))\right) & e^{\tau s} \frac{\gamma}{2 \pi \ii}\left(\ln s - \ln (e^{-\frac{\pi \ii}{2}} s^{\frac 32}\what f_{1}'(1))\right)\\
-e^{-\theta_2(s)-\tau s} & e^{-\theta_2(s)-\tau s} & 0 & 0\\
0 & 0 & e^{\tau s} & e^{\tau s}\\
0 & 0 & 0 & e^{\theta_2(s)-\tau s}
\end{pmatrix},
\end{align}
where
\begin{equation}\label{def:whatUpsilon}
\what{\Upsilon}_0 = \begin{pmatrix}
\left(\Upsilon_0\right)_{11} & 0 & \left(\Upsilon_0\right)_{12} & 0\\
0 & 1 & 0 & 0\\
\left(\Upsilon_0\right)_{21} & 0 & \left(\Upsilon_0\right)_{22} & 0\\
0 & 0 & 0 & 1
\end{pmatrix}
\end{equation}
with $\Upsilon_0$ defined in \eqref{eq:H-expand-coeff-0}.

Inserting \eqref{eq:XR0} into the first equation of \eqref{pq2}, it follows that
\begin{align}\label{def:q1234}
\begin{pmatrix}
q_1(s)\\q_2(s)\\q_3(s)\\q_4(s)
\end{pmatrix}=e^{\tau s}\diag \left(s^{-\frac 14}, s^{-\frac 14}, s^{\frac 14}, s^{\frac 14} \right)\what{R}(1) \what E_1(1)\what{\Upsilon}_0\begin{pmatrix}
1 \\0\\0\\0
\end{pmatrix}.
\end{align}
We note from \eqref{eq:hatRexpansion}, \eqref{eq:hatR1} and \eqref{def:hatJ01} that
\begin{align}\label{eq:hatR-1}
\what R(1) = I + s^{-\frac 12} (I+2\beta E_{3,1}-2\beta E_{4,2}) \begin{pmatrix}
0 & 0 &M^{(1)}_{13} &M^{(1)}_{14}\\
0 & 0 &M^{(1)}_{23} &M^{(1)}_{24}\\
0 & 0 & 0 & 0\\
0 & 0 & 0 &0
\end{pmatrix}(I+2\beta E_{3,1}-2\beta E_{4,2})^{-1} + \Boh(s^{-1}).
\end{align}
A combination of \eqref{eq:hatE11}, \eqref{def:whatUpsilon}, \eqref{eq:hatR-1} and \eqref{def:q1234} shows
\begin{align}\label{eq:q1intermi}
q_1(s) =\frac{1}{\sqrt{2}} e^{\tau s} s^{-\frac 14} \left(\textsf{a} e^{\frac{\pi \ii}{4}- \beta \pi \ii} \Gamma (1- \beta)+ \textsf{a}^{-1} e^{-\frac{\pi \ii}{4}} \Gamma (1+ \beta)\right)\left(I+\Boh(s^{-\frac 12})\right)
\end{align}
with $\textsf{a}$ defined in \eqref{def:a}. Since $\Re{\beta} =0$, the two term inside the first bracket of \eqref{eq:q1intermi} are complex conjugate of each other, which implies that
\begin{align}
q_1(s) = \sqrt{2} e^{\tau s} s^{-\frac 14} \Re\left(\textsf{a} e^{\frac{\pi \ii}{4}- \beta \pi \ii} \Gamma (1- \beta)\right)\left(I+\Boh(s^{-\frac 12})\right).
\end{align}
The asymptotic formula \eqref{q1} then follows directly from the above equation. The asymptotics of $q_k(s)$ for $k=2, 3, 4$, in \eqref{q2}--\eqref{q4} can be derived in a similar way.

Similarly, by inserting \eqref{eq:XR0} into the second equation of \eqref{pq2}, we have
\begin{align}
\begin{pmatrix}
p_1(s)\\p_2(s)\\p_3(s)\\p_4(s)
\end{pmatrix}=-\frac{\gamma}{2 \pi \ii} e^{-\tau s}\diag \left(s^{\frac 14}, s^{\frac 14}, s^{-\frac 14}, s^{-\frac 14} \right)\what{R}(1)^{-\msf T}\what E_1(1)^{-\msf T}\what{\Upsilon}_0^{-\msf T}\begin{pmatrix}
0\\0\\1\\0
\end{pmatrix}.
\end{align}
The asymptotic formulas \eqref{p1}--\eqref{p4} of $p_k(s)$, $k=1, 2, 3, 4$, then follows from \eqref{eq:hatE11}, \eqref{def:whatUpsilon}, \eqref{eq:hatR-1} and direct calculations.

To derive the asymptotics of $p_5(s)$, $p_6(s)$, $q_5(s)$ and $q_6(s)$ defined in \eqref{def:p56}--\eqref{def:q56}, we trace back the transformations $X \to \what{T} \to \what S \to \what R$ in \eqref{def:XToTgamma}, \eqref{def:hatS}, \eqref{def:hatR}, and obtain that for $z\in \mathbb{C} \setminus \{D(0,\varepsilon) \cup D(\pm 1,\varepsilon)\}$,
\begin{multline}
X(sz)=\diag{\left(s^{-\frac 14}, s^{-\frac 14}, s^{\frac 14}, s^{\frac 14}\right)} \what R(z) \what N(z)
\\
\times \diag{\left(e^{- \theta_1(sz) + \tau s z},e^{- \theta_2(sz) - \tau s z},e^{ \theta_1(sz) + \tau s z},e^{\theta_2(sz) - \tau s z}\right)}.
\end{multline}
Taking $z \to \infty$ and comparing the coefficient of $\Boh (1/z)$ term on both sides of the above formula, we have
\begin{equation}\label{def:X11}
X^{(1)} = s \diag{\left(s^{-\frac 14}, s^{-\frac 14}, s^{\frac 14}, s^{\frac 14}\right)} \left(\what R^{(1)}+\what N^{(1)}\right)\diag{\left(s^{\frac 14}, s^{\frac 14}, s^{-\frac 14}, s^{-\frac 14}\right)},
\end{equation}
where $\what N^{(1)}$ and $\what R^{(1)}$ are given in \eqref{eq:asyhatN} and \eqref{eq:asy:hatR}, respectively. In view of \eqref{def:hatN1exp}, it is readily seen that the first row of $\what N^{(1)}$ is $\left(2 \beta^2, 0, -2 \beta, 0\right)$. By \eqref{eq:asy:hatR}, \eqref{eq:hatRexpansion}, \eqref{eq:hatR1} and \eqref{eq:hatR2}, one has
\begin{align}
\what R^{(1)} = s^{-\frac 12} \Res_{\zeta =0}\what J^{(0)}_1(\zeta) + s^{-1} \Res_{\zeta =0}(\what R_{1,-}(\zeta) \what J^{(0)}_1(\zeta)+\what J^{(0)}_2(\zeta)) +\Boh(s^{-\frac 32}).
\end{align}
Substituting \eqref{eq:hatR1}, the expressions of $\what J^{(0)}_1$ and $\what J^{(0)}_2$ in \eqref{def:hatJ01} and \eqref{def:hatJ02} into the above formula gives us
\begin{align}
\what R^{(1)}_{11} & = -2\beta M^{(1)}_{13} s^{-\frac 12} + s^{-1}\left(M^{(1)}_{11}-4\beta^2\left(M^{(1)}_{11}-\left(M^{(1)}_{13}\right)^2-M^{(1)}_{14}M^{(1)}_{23}-M^{(1)}_{33}\right)\right)+\Boh (s^{-\frac 32} ),\label{eq:hatR111asy}\\
\what R^{(1)}_{12} & = 2\beta M^{(1)}_{14} s^{-\frac 12} + s^{-1}\left(M^{(1)}_{12}-4\beta^2\left(M^{(1)}_{12}+M^{(1)}_{13}M^{(1)}_{14}+M^{(1)}_{14}M^{(1)}_{24}+M^{(1)}_{34}\right)\right)+\Boh (s^{-\frac 32} ),\\
\what R^{(1)}_{13} & =  M^{(1)}_{13} s^{-\frac 12} + 2 \beta s^{-1}\left(M^{(1)}_{11}-\left(M^{(1)}_{13}\right)^2-M^{(1)}_{14}M^{(1)}_{23}-M^{(1)}_{33}\right)+\Boh (s^{-\frac 32} ),\\
\what R^{(1)}_{14} & = M^{(1)}_{14} s^{-\frac 12}  -2 \beta s^{-1}\left(M^{(1)}_{12}+M^{(1)}_{13}M^{(1)}_{14}+M^{(1)}_{14}M^{(1)}_{24}+M^{(1)}_{34}\right)+\Boh (s^{-\frac 32} ),\label{eq:hatR114asy}
\end{align}
where $M^{(1)}$ is given in \eqref{eq:asy:M}. Combining \eqref{def:X11} and \eqref{eq:hatR111asy}--\eqref{eq:hatR114asy}, it follows that
\begin{align}
X^{(1)}_{11} & = 2 \beta^2 s - 2 \beta M^{(1)}_{13} s^{\frac 12} + M^{(1)}_{11} - 4 \beta^2 \left(M^{(1)}_{11} -\left(M^{(1)}_{13}\right)^2-M^{(1)}_{14}M^{(1)}_{23}-M^{(1)}_{33}\right) + \Boh(s^{-\frac 12}), \label{eq:asyX111}\\
X^{(1)}_{12} & =  2 \beta M^{(1)}_{14} s^{\frac 12} + M^{(1)}_{12} - 4 \beta^2 \left(M^{(1)}_{12} +\left(M^{(1)}_{13}\right)^2+M^{(1)}_{14}M^{(1)}_{24}+M^{(1)}_{34}\right) + \Boh(s^{-\frac 12}),\\
X^{(1)}_{13} & = -2 \beta s^{\frac12} +M^{(1)}_{13}+ \Boh(s^{-\frac 12}),\label{eq:asyX113}\\
X^{(1)}_{14} & = M^{(1)}_{14}+ \Boh(s^{-\frac 12}).
\end{align}
Substituting the above equations into \eqref{def:p56} and \eqref{def:q56} yields the asymptotics of $p_5(s)$, $p_6(s)$, $q_5(s)$ and $q_6(s)$ shown in \eqref{p5},\eqref{p6},\eqref{q5} and \eqref{q6}.

\subsubsection*{Asymptotics of $p_k(s)$ and $q_k(s)$ as $s \to 0^+$}
The small $s$  asymptotics of $p_k(s)$ and $q_k(s)$, $k=1, \ldots, 6$, are outcomes of the asymptotic analysis performed in Section \ref{sec:AsyX0}. For $|z|>\delta$, it follows from \eqref{def:tildeR} and \eqref{eq:asytildeR} that
\begin{align}
X(z)=\widecheck R(z) \widecheck{N}(z)=\left(I+\Boh(s)\right)\widecheck{N}(z).
\end{align}
Thus, on account of \eqref{eq:tildeN}, \eqref{eq:asy:M} and \eqref{eq:asyX}, we have
\begin{equation}
X^{(1)}=M^{(1)}\left(I+\Boh(s)\right).
\end{equation}
It is then straightforward to obtain asymptotics of $p_5(s), p_6(s), q_5(s), q_6(s)$ in \eqref{eq:p50}, \eqref{eq:p60}, \eqref{eq:q50} and \eqref{eq:q60} by using \eqref{def:p56}, \eqref{def:q56} and the symmetric relation of $M^{(1)}$ established in \eqref{eq:symmM11} and \eqref{eq:symmM12}.

If $|z| < \delta$, it follows from \eqref{eq:tildeP0}, \eqref{def:eta} and \eqref{def:tildeR} that
\begin{align}
X(z)=\widecheck R(z) \widecheck P^{(0)}(z)
=\widecheck R(z) \what M(z) \left(I+\frac{\gamma}{2 \pi \ii} \ln \left(\frac{z+s}{z-s}\right) \begin{pmatrix}
0 & 0 & 1 & 1\\
0 & 0 & 1 & 1\\
0 & 0 & 0 & 0\\
0 & 0 & 0 & 0
\end{pmatrix}\right), \qquad z \in \Omega_2^{(s)}.
\end{align}
This, together with \eqref{eq:asytildeR}, \eqref{eq:X-near-s} and \eqref{eq: Phi-expand-s}, implies that
\begin{align}
X_{R,0}(s)&=\widecheck R(s) \what M(s) \left(I+\frac{\gamma}{2 \pi \ii} \ln (2s) \begin{pmatrix}
0 & 0 & 1 & 1\\
0 & 0 & 1 & 1\\
0 & 0 & 0 & 0\\
0 & 0 & 0 & 0
\end{pmatrix}\right)\nonumber\\
&=\left(\what M(0)+\Boh(s)\right)\left(I+\frac{\gamma}{2 \pi \ii} \ln (2s) \begin{pmatrix}
0 & 0 & 1 & 1\\
0 & 0 & 1 & 1\\
0 & 0 & 0 & 0\\
0 & 0 & 0 & 0
\end{pmatrix}\right), \qquad s \to 0^+.
\end{align}
As a consequence, it follows from  the definitions of $p_k(s)$ and $q_k(s)$, $k=1,\ldots,4$, in \eqref{def:qkpk} that
\begin{align}\label{eq:pkzero}
\begin{pmatrix}
p_1(s) \\ p_2(s)\\ p_3(s)\\p_4(s)
\end{pmatrix} = -\frac{\gamma}{2 \pi \ii} \left(\what M(0)^{-\msf T} + \Boh(s)\right)\begin{pmatrix}
0\\0\\1\\1
\end{pmatrix}=\Boh(1)
\end{align}
and
\begin{align}\label{eq:qkzero}
\begin{pmatrix}
q_1(s) \\ q_2(s)\\ q_3(s)\\q_4(s)
\end{pmatrix} =  \left(\what M(0) + \Boh(s)\right)\begin{pmatrix}
1\\1\\0\\0
\end{pmatrix}=\Boh(1),
\end{align}
as claimed in \eqref{eq:pk0} and \eqref{eq:qk0}.

This completes the proof of Proposition \ref{th:pq}.
\end{proof}

\begin{remark}
Substituting large $s$ asymptotics of $X^{(1)}_{11}$ and $X^{(1)}_{13}$ given in \eqref{eq:asyX111} and \eqref{eq:asyX113} into \eqref{eq:derivativeFs12}--\eqref{eq:derivativeFtau}, it is readily seen that, as $s \to +\infty$,
\begin{align}
\frac{\partial}{\partial s_1} F(s;\gamma, r_1, r_2,s_1,s_2, \tau) &= - 4 \ii \beta s^{\frac 12} + \Boh(s^{-\frac 12}),\\
\frac{\partial}{\partial s_2} F(s;\gamma, r_1, r_2,s_1,s_2, \tau) &= - 4 \ii \beta s^{\frac 12} + \Boh(s^{-\frac 12}),\\
\frac{\partial}{\partial \tau} F(s;\gamma, r_1, r_2,s_1,s_2, \tau) &= \Boh (s^{-\frac 12}).\label{eq:F-tau}
\end{align}
The above estimates particularly imply that the the non-trivial constant term in the asymptotics of $F(s;\gamma)$, $\gamma\in[0,1)$ is independent of the parameters $s_1, s_2$ and $\tau$.
\end{remark}

\section{Proofs of main results}\label{sec:proof}

\subsection{Proof of Theorem \ref{th:F-H}}
On account of \eqref{eq:derivativeinsX-2} and \eqref{eq:JMU}, it is easily seen that
\begin{equation}
\frac{\ud}{\ud s} F(s;\gamma)=H(s),
\end{equation}
which leads to the integral representation of $F$ claimed in \eqref{eq:F-H2} after integrating with respect to $t$.
It then remains to establish asymptotics of the Hamiltonian $H$, which will be discussed in what follows.

\subsubsection*{Asymptotics of $H(s)$ as $s \to +\infty$ for $\gamma=1$}
We make use of the representation of $H$ in \eqref{eq:JMU}, which reads
\begin{align}
H(s)=-\frac{1}{2 \pi \ii} \sum_{i=3}^4 \sum_{j=1}^2 \left(X_{R,1}(s)-X_{L,1}(s)\right)_{ij},
\end{align}
where $X_{R,1}(s)$ is given in the local behavior of $X$ near $z = s$ in \eqref{eq: Phi-expand-s} and $X_{L,1}(s)$ is given in the local behavior of $X$ near $z = -s$ in \eqref{eq: Phi-expand--s}. Therefore, from \eqref{eq:X-near-s} and \eqref{eq:X-near--s} we obtain
\begin{align}
H(s)=-\frac{1}{2 \pi \ii}\left[ \lim_{z \to s} \sum_{i=3}^4\sum_{j=1}^2 \left(X(z)^{-1}X'(z)\right)_{ij}+ \lim_{z \to -s} \sum_{i=3}^4\sum_{j=1}^2 \left(X(z)^{-1} X'(z)\right)_{ij} \right], \quad z \in \Omega_2^{(s)}.
\end{align}
By inverting the transformations $X \to T \to S$ in \eqref{def:PhiToT} and \eqref{def:TtoS}, it follows from the above formula that
\begin{align}\label{ds-F-1}
H(s) &=-\frac{1}{2 \pi \ii s}\left[ \lim_{z \to 1} \sum_{i=3}^4\sum_{j=1}^2 \left(T(z)^{-1}T'(z)\right)_{ij}+ \lim_{z \to -1} \sum_{i=3}^4\sum_{j=1}^2 \left(T(z)^{-1} T'(z)\right)_{ij} \right]\nonumber\\
&= -\frac{1}{2 \pi \ii s}\left[ \lim_{z \to 1} \sum_{i=3}^4\sum_{j=1}^2 \left(\diag \left(e^{s^{\frac 32}g_1(z)-\tau s z}, e^{s^{\frac 32}g_2(z)+ \tau s z}, e^{-s^{\frac 32}g_1(z)-\tau s z}, e^{-s^{\frac 32}g_2(z) + \tau s z} \right)\right.\right.\nonumber\\
&\quad\left. \times S(z)^{-1}S'(z)\diag \left(e^{s^{-\frac 32}g_1(z)+\tau s z}, e^{-s^{\frac 32}g_2(z)- \tau s z}, e^{s^{\frac 32}g_1(z)+\tau s z}, e^{s^{\frac 32}g_2(z) -\tau s z} \right)\right)_{ij}\nonumber\\
&\quad+\lim_{z \to -1} \sum_{i=3}^4\sum_{j=1}^2 \left(\diag \left(e^{s^{\frac 32}g_1(z)-\tau s z}, e^{s^{\frac 32}g_2(z)+ \tau s z}, e^{-s^{\frac 32}g_1(z)-\tau s z}, e^{-s^{\frac 32}g_2(z) + \tau s z} \right)\right.\nonumber\\
&\quad\left. \left.\times S(z)^{-1}S'(z)\diag \left(e^{s^{-\frac 32}g_1(z)+\tau s z}, e^{-s^{\frac 32}g_2(z)- \tau s z}, e^{s^{\frac 32}g_1(z)+\tau s z}, e^{s^{\frac 32}g_2(z) -\tau s z} \right)\right)_{ij} \right],
\end{align}
where the limits are taken from $\Omega_2^{(1)}$.

We next calculate the limits of $(S(z)^{-1}S'(z))_{ij}$, $i=3,4$, $j=1,2$, as $z\to \pm 1$. If $z$ is close to $-1$, it follows from \eqref{def:R} and \eqref{def:P-1} that
\begin{align}\label{eq:SSinver}
&S(z)^{-1}S'(z) \nonumber\\
&= P^{(-1)}(z)^{-1} R(z)^{-1} R'(z) P^{(-1)}(z) + P^{(-1)}(z)^{-1} (P^{(-1)})'(z)\nonumber\\
&=\mathscr{A}_{-1}(z)^{-1}\mathscr{B}_{-1}(s^3f_{-1}(z))^{-1}E_{-1}(z)^{-1} R(z)^{-1} R'(z) E_{-1}(z)\mathscr{B}_{-1}(s^3f_{-1}(z))\mathscr{A}_{-1}(z)\nonumber\\
&\quad +\mathscr{A}_{-1}(z)^{-1}\mathscr{B}_{-1}(s^3f_{-1}(z))^{-1}E_{-1}(z)^{-1}E_{-1}'(z)\mathscr{B}_{-1}(s^3f_{-1}(z)\mathscr{A}_{-1}(z) \nonumber\\
&\quad+s^3f_{-1}'(z)\mathscr{A}_{-1}(z)^{-1}\mathscr{B}_{-1}(s^3f_{-1}(z))^{-1}\mathscr{B}_{-1}'(s^3f_{-1}(z))\mathscr{A}_{-1}(z) + \mathscr{A}_{-1}(z)^{-1}\mathscr{A}_{-1}'(z),
\end{align}
where $E_{-1}$ is given in \eqref{def:E-1},
\begin{equation}
\mathscr{A}_{-1}(z) =
\begin{pmatrix}
1 & -e^{-s^{\frac32 }(g_1(z)-g_2(z)) + 2 \tau s z} & 0 & 0\\
0 & e^{s^{\frac32}g_2(z)} & 0 & 0\\
0 & 0 & 1 & 0\\
0 & 0 & e^{-s^{\frac32}g_1(z) - 2 \tau s z} & e^{-s^{\frac32}g_2(z)}
\end{pmatrix}
\end{equation}
and
\begin{equation}\label{def:B-1}
\mathscr{B}_{-1}(z) =  \begin{pmatrix}
1 & 0 & 0 & 0\\
0 & \Phi^{(\Bes)}_{11}(z) & 0 & \Phi^{(\Bes)}_{12}(z)\\
0 & 0 & 1 & 0\\
0 & \Phi^{(\Bes)}_{21}(z) & 0 & \Phi^{(\Bes)}_{22}(z)
\end{pmatrix}.
\end{equation}
Recalling the definitions of $g_1(z)$ and $g_2(z)$ in \eqref{def:g1} and \eqref{def:g2}, we have
\begin{equation}\label{g--1}
g_1(-1) = \frac{\sqrt{2}}{3} r_1 + \frac{2^{3/2}s_1}{s}, \qquad g_2(-1) = 0.
\end{equation}
Thus, one can check directly that for an arbitrary $4 \times 4$ matrix $\msf M=(m_{ij})_{i,j=1}^4$, as $s \to +\infty$,
\begin{align}
\lim_{z \to -1} \left(\mathscr{A}_{-1}(z)^{-1} \msf M \mathscr{A}_{-1}(z)\right)_{31}&=m_{31},\label{eq:fact1}\\
\lim_{z \to -1} \left(\mathscr{A}_{-1}(z)^{-1} \msf M \mathscr{A}_{-1}(z)\right)_{32}&=m_{32} + \Boh(e^{-\frac{\sqrt{2}}{3} r_1 s^{\frac 32}}), \\
\lim_{z \to -1} \left(\mathscr{A}_{-1}(z)^{-1} \msf M \mathscr{A}_{-1}(z)\right)_{41}&=m_{41} + \Boh(e^{-\frac{\sqrt{2}}{3} r_1 s^{\frac 32}}),\\
\lim_{z \to -1} \left(\mathscr{A}_{-1}(z)^{-1} \msf M \mathscr{A}_{-1}(z)\right)_{42}&=m_{42} + \Boh(e^{-\frac{\sqrt{2}}{3} r_1 s^{\frac 32}}),
\end{align}
and
\begin{equation}
\lim_{z \to 0} \left(\mathscr{B}_{-1}(z)^{-1} \msf M \mathscr{B}_{-1}(z)\right)_{ij} = m_{ij}, \quad \textrm{for $i = 3,4$, and $j=1,2$.}
\end{equation}
Recall the following properties of the modified Bessel functions $I_0$ and $K_0$ (cf. \cite[Chapter 10]{DLMF}):
\begin{align}
I_0(z) & = \sum_{k=0}^{\infty} \frac{(z/2)^{2k}}{(k!)^2},\label{eq:I0}\\
K_0(z)&=-\left(\ln \left(\frac{z}{2}\right) + \gamma_E\right)I_0(z) + \Boh(z^2), \qquad z \to 0,\label{eq:K0}
\end{align}
where $\gamma_E$ is the Euler's constant, we obtain from \eqref{def:B-1} and \eqref{def:Bespara} that
\begin{equation}
\lim_{z \to 0} \left(\mathscr{B}_{-1}(z)^{-1} \mathscr{B}_{-1}'(z)\right)_{31} = \lim_{z \to 0} \left(\mathscr{B}_{-1}(z)^{-1} \mathscr{B}_{-1}'(z)\right)_{32}=\lim_{z \to 0} \left(\mathscr{B}_{-1}(z)^{-1} \mathscr{B}_{-1}'(z)\right)_{41}=0,
\end{equation}
\begin{equation}
\lim_{z \to 0} \left(\mathscr{B}_{-1}(z)^{-1} \mathscr{B}_{-1}'(z)\right)_{42} = \frac{\pi \ii}{2}.
\end{equation}
With the aid of local behavior of $E_{-1}(z)$ near $z=-1$ in \eqref{def:localE-1} and the explicit expression of $R_1'(-1)$ in \eqref{eq:R1'-1}, we have
\begin{align}
\lim_{z \to -1} E_{-1}(z)^{-1}R(z)^{-1} R'(z) E_{-1}(z) &= \lim_{z \to -1} E_{-1}(z)^{-1}\left(\frac{R_1'(z)}{s^{3/2}}\right) E_{-1}(z)\nonumber\\
& =\frac{\pi \ii r_2}{4\left(r_2 - 2s_2/s\right)}E_{4,2}
 + \Boh(s^{-\frac 32}),
\end{align}
and
\begin{equation}
\lim_{z \to -1} E_{-1}(z)^{-1} E_{-1}'(z) = \begin{pmatrix}
0 & 0 & -\frac{\ii}{8} & 0\\
0 & - \frac{r_2}{3\left(r_2 - 2s_2/s\right)} & 0 & 0\\
\frac{\ii}{8} & 0 & 0 & 0\\
0 & 0 & 0 & \frac{r_2}{3\left(r_2 - 2s_2/s\right)}
\end{pmatrix} + \Boh(s^{-\frac 32}).\label{eq:EEinvlimit}
\end{equation}
A combination of \eqref{eq:SSinver} and \eqref{eq:fact1}--\eqref{eq:EEinvlimit} gives us
\begin{align}
\lim_{z \to -1} \left(S(z)^{-1}S'(z)\right)_{31} &= \frac{\ii}{8} + \Boh(s^{-\frac 32}),\label{eq:SS'-11}\\
\lim_{z \to -1} \left(S(z)^{-1}S'(z)\right)_{32} &= \Boh(e^{-\frac{\sqrt{2}}{3}r_1 s^{\frac 32}}),\\
\lim_{z \to -1} \left(S(z)^{-1}S'(z)\right)_{41} &=\Boh(e^{-\frac{\sqrt{2}}{3}r_1 s^{\frac 32}}),\\
\lim_{z \to -1} \left(S(z)^{-1}S'(z)\right)_{42} &= \frac{\pi \ii}{2}\left(r_2 - \frac{2s_2}{s}\right)^2 s^3 + \frac{\pi \ii}{4} + \Boh(s^{-1}).\label{eq:SS'-14}
\end{align}
Similarly, if $z$ is close to $1$, we use \eqref{def:R} to obtain
\begin{align}
& S(z)^{-1}S'(z)\nonumber\\
&= P^{(1)}(z)^{-1} R(z)^{-1} R'(z) P^{(1)}(z) + P^{(1)}(z)^{-1} (P^{(1)})'(z) \nonumber\\
&=\mathscr{A}_{1}(z)^{-1}\mathscr{B}_{1}(s^3f_1(z))^{-1}E_1(z)^{-1} R(z)^{-1} R'(z) E_1(z)\mathscr{B}_{1}(s^3f_1(z))\mathscr{A}_{1}(z)\nonumber\\
&\quad +\mathscr{A}_{1}(z)^{-1}\mathscr{B}_{1}(s^3f_1(z))^{-1}E_1(z)^{-1} E_1'(z)\mathscr{B}_{1}(s^3f_1(z)\mathscr{A}_{1}(z) \nonumber\\
&\quad+s^3f_1'(z)\mathscr{A}_{1}(z)^{-1}\mathscr{B}_{1}(s^3f_1(z))^{-1}\mathscr{B}_{1}'(s^3f_1(z))\mathscr{A}_{1}(z) + \mathscr{A}_{1}(z)^{-1}\mathscr{A}_{1}'(z),
\end{align}
where $E_1$ is defined in \eqref{def:E1},
\begin{equation}
\mathscr{A}_{1}(z) = \begin{pmatrix}
e^{s^{\frac32}g_1(z)} & 0 & 0 & 0\\
-e^{s^{\frac32}(g_1(z)-g_2(z)) - 2 \tau s z} &1 & 0 & 0\\
0 & 0 & e^{-s^{\frac32}g_1(z)} & e^{-s^{\frac32}g_2(z) + 2 \tau s z}\\
0 & 0 & 0 & 1
\end{pmatrix}
\end{equation}
and
\begin{equation}\label{def:B1}
\mathscr{B}_{1}(z)=
\begin{pmatrix}
\Phi^{(\Bes)}_{11}(z) & 0 & -\Phi^{(\Bes)}_{12}(z) & 0\\
0 & 1 & 0 & 0\\
-\Phi^{(\Bes)}_{21}(z) & 0 & \Phi^{(\Bes)}_{22}(z)\\
0 & 0 & 0 & 1
\end{pmatrix}.
\end{equation}
As $z \to 1$, from \eqref{def:g1} and \eqref{def:g2}, it follows that
\begin{equation}\label{g-1}
g_1(1) =0, \qquad g_2(1) = \frac{\sqrt{2}}{3} r_2 + \frac{2^{3/2}s_2}{s}.
\end{equation}
For an arbitrary $4 \times 4$ matrix $\msf M=(m_{ij})_{i,j=1}^4$, as $s \to +\infty$, we have
\begin{align}
\lim_{z \to 1} \left(\mathscr{A}_{1}(z)^{-1} \msf M \mathscr{A}_{1}(z)\right)_{31} &= m_{31} + \Boh(e^{-\frac{\sqrt{2}}{3}r_2 s^{\frac 32}}),\\
\lim_{z \to 1} \left(\mathscr{A}_{1}(z)^{-1} \msf M \mathscr{A}_{1}(z)\right)_{32} &= m_{32} + \Boh(e^{-\frac{\sqrt{2}}{3}r_2 s^{\frac 32}}),\\
\lim_{z \to 1} \left(\mathscr{A}_{1}(z)^{-1} \msf M \mathscr{A}_{1}(z)\right)_{41} &= m_{41} + \Boh(e^{-\frac{\sqrt{2}}{3}r_2 s^{\frac 32}}),\\
\lim_{z \to 1} \left(\mathscr{A}_{1}(z)^{-1} \msf M \mathscr{A}_{1}(z)\right)_{31} &= m_{42}.
\end{align}
and
\begin{equation}
\lim_{z \to 0} \left(\mathscr{B}_{1}(z)^{-1} \msf M \mathscr{B}_{1}(z)\right)_{ij}  = m_{ij}, \quad \textrm{for $i=3,4$ and $j = 1, 2.$}
\end{equation}
From the properties of the modified Bessel functions in \eqref{eq:I0} and \eqref{eq:K0}, we see from \eqref{def:B1} and \eqref{def:Bespara} that
\begin{equation}
\lim_{z \to 0} \left(\mathscr{B}_{1}(z)^{-1} \mathscr{B}_{1}'(z)\right)_{32} = \lim_{z \to 0} \left(\mathscr{B}_{1}(z)^{-1} \mathscr{B}_{1}'(z)\right)_{41}=\lim_{z \to 0} \left(\mathscr{B}_{1}(z)^{-1} \mathscr{B}_{1}'(z)\right)_{42}=0,
\end{equation}
\begin{equation}
\lim_{z \to 0} \left(\mathscr{B}_{1}(z)^{-1} \mathscr{B}_{1}'(z)\right)_{31} = -\frac{\pi \ii}{2}.
\end{equation}
By using the local behavior of $E_1(z)$ in \eqref{def:localE1} and the explicit expression of $R_1'(1)$ in \eqref{eq:R1'1}, we have
\begin{align}
\lim_{z \to 1} E_1(z)^{-1} R(z)^{-1} R'(z) E_1(z) &= \lim_{z \to 1} E_1(z)^{-1} \left(\frac{R_1'(z)}{s^{3/2}} + \Boh(s^{-3})\right) E_1(z) \nonumber\\
&= \frac{\pi \ii r_1}{4\left(r_1 - 2s_1/s\right)}E_{3,1}
 + \Boh(s^{-\frac 32}),
\end{align}
and
\begin{equation}
\lim_{z \to 1} E_1(z)^{-1} E_1'(z) = \begin{pmatrix}
\frac{r_1}{3\left(r_1 - 2s_1/s\right)} & 0 & 0 & 0\\
0 & 0 & 0 & -\frac{\ii}{8}\\
0 & 0 & -\frac{r_1}{3\left(r_1 - 2s_1/s\right)} & 0\\
0 & \frac{\ii}{8} & 0 & 0
\end{pmatrix} + \Boh(s^{-\frac 32}).
\end{equation}
Thus, we obtain after a direct calculation that
\begin{align}
\lim_{z \to 1} \left(S(z)^{-1}S'(z)\right)_{31} &= \frac{\pi \ii}{2}\left(r_1 - \frac{2s_1}{s}\right)^2 s^3 + \frac{\pi \ii}{4} + \Boh(s^{-1}),\label{eq:SS'11}\\
\lim_{z \to 1} \left(S(z)^{-1}S'(z)\right)_{32} &= \Boh(e^{-\frac{\sqrt{2}}{3}r_2 s^{\frac 32}}),\\
\lim_{z \to 1} \left(S(z)^{-1}S'(z)\right)_{41} &=\Boh(e^{-\frac{\sqrt{2}}{3}r_2 s^{\frac 32}}),\\
\lim_{z \to 1} \left(S(z)^{-1}S'(z)\right)_{42} &= \frac{\ii}{8} + \Boh(s^{-\frac 32}).\label{eq:SS'14}
\end{align}
Substituting equations \eqref{eq:SS'-11}--\eqref{eq:SS'-14} and \eqref{eq:SS'11}--\eqref{eq:SS'14} into \eqref{ds-F-1}, it follows from  \eqref{g--1} and \eqref{g-1} that
\begin{equation}\label{eq:H-asy1}
H(s) = -\frac{r_1^2 + r_2^2}{4}s^2 + (r_1s_1 + r_2s_2)s - s_1^2-s_2^2 - \frac{1}{4s} + \Boh(s^{-2}), \quad s \to +\infty,
\end{equation}
as shown in \eqref{asy:H}.

\subsubsection*{Asymptotics of $H(s)$ as $s \to +\infty$ for $0\leq \gamma<1$}
If $\gamma \in [0, 1)$, one can theoretically obtain asymptotics of $H(s)$ by combining \eqref{def:H} and the large $s$ asymptotics of $p_k(s)$ and $q_k(s)$, $k=1,\ldots,6$, established in Theorem \ref{th:pq}. This approach, however, is too complicated. Alternatively, we again
turn to use the relation \eqref{eq:JMU}, which can be written as
\begin{equation}\label{eq:HXRL}
H(s) = -\frac{\gamma}{2 \pi \ii} \begin{pmatrix}
0 & 0 & 1 & 1
\end{pmatrix} \left( X_{R,1}(s) - X_{L,1}(s)\right)\begin{pmatrix}
1\\1\\0\\0
\end{pmatrix},
\end{equation}
where $X_{R,1}(s)$ and $X_{L,1}(s)$ are given in \eqref{eq: Phi-expand-s} and \eqref{eq: Phi-expand--s}.

From \eqref{eq:X-near-s} and \eqref{eq: Phi-expand-s}, it follows that
\begin{equation}
X_{R,1}(s) = \frac{X_{R,0}(s)^{-1}}{s} \lim_{z \to 1, z \in \Omega_2^{(1)}} \left[X(sz) \begin{pmatrix}
1 & 0 & \frac{\gamma}{2 \pi \ii} \ln{(sz-s)} & \frac{\gamma}{2 \pi \ii} \ln{(sz-s)}\\
0 & 1 & \frac{\gamma}{2 \pi \ii} \ln{(sz-s)} & \frac{\gamma}{2 \pi \ii} \ln{(sz-s)}\\
0 & 0 & 1 & 0\\
0 & 0 & 0 & 1
\end{pmatrix}\right]'
\end{equation}
Note that (see \eqref{eq:XR0})
\begin{equation}
\begin{pmatrix}
0 & 0 & 1 & 1
\end{pmatrix} X_{R,0}(s)^{-1} = \begin{pmatrix}
0 & 0 & e^{-\tau s} & 0
\end{pmatrix} \what{\Upsilon}_0^{-1}\what E_1(1)^{-1} \what R(1)^{-1} \diag (s^{\frac 14}, s^{\frac 14}, s^{-\frac 14}, s^{-\frac 14}),
\end{equation}
and by tracing back the transformations $X \to \what{T} \to \what S \to \what R$ in \eqref{def:XToTgamma}, \eqref{def:hatS} and \eqref{def:hatR},
\begin{align}
&\lim_{z \to 1, z \in \Omega_2^{(1)}} \left[X(sz) \begin{pmatrix}
1 & 0 & \frac{\gamma}{2 \pi \ii} \ln{(z-s)} & \frac{\gamma}{2 \pi \ii} \ln{(z-s)}\\
0 & 1 & \frac{\gamma}{2 \pi \ii} \ln{(z-s)} & \frac{\gamma}{2 \pi \ii} \ln{(z-s)}\\
0 & 0 & 1 & 0\\
0 & 0 & 0 & 1
\end{pmatrix}\right]'\begin{pmatrix}
1\\1\\0\\0
\end{pmatrix} \nonumber\\
&= \diag (s^{-\frac 14}, s^{-\frac 14}, s^{\frac 14}, s^{\frac 14})\nonumber\\
& \quad \times \lim_{z \to 1, z \in \Omega_2} \left[ \what R(z) \what P^{(1)} (z) \diag(e^{- \theta_1(sz) + \tau s z},e^{- \theta_2(sz) - \tau s z},e^{ \theta_1(sz) + \tau s z},e^{\theta_2(sz) - \tau s z}) \right]'\begin{pmatrix}
1\\1\\0\\0
\end{pmatrix}\nonumber\\
&=\diag (s^{-\frac 14}, s^{-\frac 14}, s^{\frac 14}, s^{\frac 14}) \left(\what R'(1) \what E_1 (1)\what{\Upsilon}_0 e^{\tau s} + \what R(1) \what E_1' (1)\what{\Upsilon}_0 e^{\tau s}\right.\nonumber\\
& \quad \left. +s^{\frac 32} \what f_1'(1) \what R(1) \what E_1(1)\what{\Upsilon}_0 \what{\Upsilon}_1 e^{\tau s} + \tau s \what R(1) \what E_1(1)\what{\Upsilon}_0e^{\tau s}\right)\begin{pmatrix}
1\\0\\0\\0
\end{pmatrix},
\end{align}
where $\what{\Upsilon}_0$ is defined in \eqref{def:whatUpsilon} and $\what{\Upsilon}_1$ is a $4\times 4$ matrix related to $\Upsilon_1$ in \eqref{eq:H-expand-2} with $(\what{\Upsilon}_1)_{31}=\frac{ \beta \pi \ii e^{-\beta \pi \ii} }{\sin(\beta \pi )}$. It follows from the above three equations that
\begin{align}\label{HR1}
-\frac{\gamma}{2 \pi \ii} \begin{pmatrix}
0 & 0 & 1 & 1
\end{pmatrix} X_{R,1}(s) \begin{pmatrix}
1\\1\\0\\0
\end{pmatrix}&=-\frac{\gamma}{2 \pi \ii s} \left(\what{\Upsilon}_0^{-1}\what E_1(1)^{-1} \what R(1)^{-1}\what R'(1) \what E_1 (1)\what{\Upsilon}_0\right.\nonumber\\
&\quad\left.+\what{\Upsilon}_0^{-1}\what E_1(1)^{-1}\what E_1' (1)\what{\Upsilon}_0+s^{\frac 32} \what f_1'(1) \what{\Upsilon}_1 + \tau s I \right)_{31}.
\end{align}
To calculate the first term in the bracket, we substitute \eqref{eq:hatE11}, \eqref{def:whatUpsilon} and \eqref{eq:hatR-1} into the above equation and obtain
\begin{align}\label{term1}
\left(\what{\Upsilon}_0^{-1}\what E_1(1)^{-1} \what R(1)^{-1}\what R'(1) \what E_1 (1)\what{\Upsilon}_0\right)_{31}=\Boh(s^{-\frac 12}).
\end{align}
From the explicit expression of $\what{\Upsilon}_0$ in \eqref{def:whatUpsilon} and $\what E_1(1)^{-1}\what E_1' (1)$ in \eqref{eq:hatE1'1}, one has
\begin{align}
\left(\what{\Upsilon}_0^{-1}\what E_1(1)^{-1}\what E_1' (1)\what{\Upsilon}_0\right)_{31}&=-\frac{3 \beta e^{-\beta \pi \ii}}{2} \Gamma (1+\beta) \Gamma (1-\beta) -\frac{\ii}{4} \left(\frac{\Gamma(1+\beta)^2}{\textsf{a}^2} + \textsf{a}^2 \Gamma(1-\beta)^2 e^{-2 \beta \pi \ii} \right)\nonumber\\
&\quad+ \Boh(s^{-1}),
\end{align}
where $\textsf{a}$ is given in \eqref{def:a}. Moreover, from \cite[Chapter 5]{DLMF}, we have
\begin{align}
\Gamma (\beta) \Gamma (1-\beta) = \frac{\pi}{\sin (\beta \pi)},
\end{align}
and hence
\begin{align}
\Gamma (1+\beta) \Gamma (1-\beta) = |\Gamma (1+\beta)|^2=\frac{\beta \pi}{\sin (\beta \pi)}.
\end{align}
Therefore, we get
\begin{align}\label{term2}
\left(\what{\Upsilon}_0^{-1}\what E_1(1)^{-1}\what E_1' (1)\what{\Upsilon}_0\right)_{31}=-\frac{3 \beta^2 \pi e^{-\beta \pi \ii}}{2 \sin (\beta \pi)} -\frac{\beta \pi \ii e^{-\beta \pi \ii}}{2\sin (\beta \pi)} \cos (2 \vartheta (s))+ \Boh(s^{-1}).
\end{align}
where $\vartheta (s)$ is given in \eqref{def:theta}. We see from \eqref{eq:hatf1} that
\begin{align}\label{term3}
\left(s^{\frac 32} \what f_1'(1) \what{\Upsilon}_1\right)_{31} = \frac{2 \beta \pi \ii e^{-\beta \pi \ii}}{\sin (\beta \pi)}\left(r_1 s^{\frac 32}-s_1 s^{\frac 12}\right)
\end{align}
and
\begin{align}\label{term4}
\left(\tau s I \right)_{31} = 0.
\end{align}
It is then straightforward to calculate the following result by substituting \eqref{term1}, \eqref{term2} \eqref{term3}, \eqref{term4} and \eqref{def:beta} into \eqref{HR1}
\begin{multline}\label{eq:asy:H1}
-\frac{\gamma}{2 \pi \ii} \begin{pmatrix}
0 & 0 & 1 & 1
\end{pmatrix} X_{R,1}(s) \begin{pmatrix}
1\\1\\0\\0
\end{pmatrix} =2 \ii \beta r_1 s^{\frac 12} - 2 \ii \beta s_1 s^{-\frac 12} - \frac{3 \beta^2}{2} s^{-1}
\\
- \frac {\ii \beta}{2} \cos {(2 \vartheta (s))} s^{-1} + \Boh(s^{-\frac 32}),
\end{multline}
where $\vartheta (s)$ is given in \eqref{def:theta}.
Similarly, from \eqref{eq:X-near--s} and \eqref{eq: Phi-expand--s}, it follows that
\begin{align}
X_{L,1}(s) &= \frac{X_{L,0}(s)^{-1}}{s}
\nonumber \\
&\quad \times  \lim_{z \to -1, z \in \Omega_2^{(1)}} \left[X(sz)\begin{pmatrix}
0 & 1 & 0 & 0\\
1 & 0 & 0 & 0\\
0 & 0 & 0 & -1\\
0 & 0 & -1 & 0
\end{pmatrix} \begin{pmatrix}
1 & 0 & \frac{\gamma}{2 \pi \ii} \ln{(-sz-s)} & \frac{\gamma}{2 \pi \ii} \ln{(-sz-s)}\\
0 & 1 & \frac{\gamma}{2 \pi \ii} \ln{(-sz-s)} & \frac{\gamma}{2 \pi \ii} \ln{(-sz-s)}\\
0 & 0 & 1 & 0\\
0 & 0 & 0 & 1
\end{pmatrix}\right]'
\end{align}
with
\begin{equation}
\begin{pmatrix}
0 & 0 & 1 & 1
\end{pmatrix} X_{L,0}(s)^{-1} = \begin{pmatrix}
0 & 0 & 0 & e^{-\tau s-\beta \pi \ii}
\end{pmatrix} \widecheck{\Upsilon}_0^{-1}\what E_{-1}(-1)^{-1} \what R(-1)^{-1} \diag (s^{\frac 14}, s^{\frac 14}, s^{-\frac 14}, s^{-\frac 14}),
\end{equation}
where
\begin{align}\label{eq:tildeupsilon}
\widecheck{\Upsilon}_0=\begin{pmatrix}
1 & 0 & 0 & 0\\
0&\left(\Upsilon_0\right)_{11} & 0 & \left(\Upsilon_0\right)_{12}\\
0& 0 & 1 & 0 \\
0 & \left(\Upsilon_0\right)_{21} & 0 & \left(\Upsilon_0\right)_{22}
\end{pmatrix}.
\end{align}
Tracing back the transformations $X \to \what{T} \to \what S \to \what R$ in \eqref{def:XToTgamma}, \eqref{def:hatS} and \eqref{def:hatR} yields
\begin{align}
& \lim_{z \to -1, z \in \Omega_2^{(1)}} \left[X(sz) \begin{pmatrix}
0 & 1 & \frac{\gamma}{2 \pi \ii} \ln{(z-s)} & \frac{\gamma}{2 \pi \ii} \ln{(z-s)}\\
1 & 0 & \frac{\gamma}{2 \pi \ii} \ln{(z-s)} & \frac{\gamma}{2 \pi \ii} \ln{(z-s)}\\
0 & 0 & 0 & -1\\
0 & 0 & -1 & 0
\end{pmatrix}\right]'\begin{pmatrix}
1\\1\\0\\0
\end{pmatrix} \nonumber\\
&= \diag (s^{-\frac 14}, s^{-\frac 14}, s^{\frac 14}, s^{\frac 14})\nonumber\\
&\quad \times \lim_{z \to -1, z \in \Omega_2} \left[ \what R(z) \what P^{(-1)} (z) \diag(e^{- \theta_1(sz) + \tau s z},e^{- \theta_2(sz) - \tau s z},e^{ \theta_1(sz) + \tau s z},e^{\theta_2(zs) - \tau s z}) \right]'\begin{pmatrix}
1\\1\\0\\0
\end{pmatrix}\nonumber\\
&=\diag (s^{-\frac 14}, s^{-\frac 14}, s^{\frac 14}, s^{\frac 14}) \left(\what R'(-1) \what E_{-1} (-1)\widecheck{\Upsilon}_0 e^{\tau s -\beta \pi \ii} + \what R(-1) \what E_{-1}' (-1)\widecheck{\Upsilon}_0 e^{\tau s -\beta \pi \ii}\right.\nonumber\\
&\quad  \left. +s^{\frac 32} \what f_{-1}'(-1) \what R(-1) \what E_{-1}(-1)\widecheck{\Upsilon}_0 \widecheck{\Upsilon}_1 e^{\tau s -\beta \pi \ii} - \tau s \what R(-1) \what E_{-1}(-1)\widecheck{\Upsilon}_0e^{\tau s - \beta \pi \ii}\right)\begin{pmatrix}
0\\1\\0\\0
\end{pmatrix},\label{eq:XL-1}
\end{align}
where $\widecheck{\Upsilon}_0$ is given in \eqref{eq:tildeupsilon} and $\widecheck{\Upsilon}_1$ is a $4\times 4$ matrix related to $\Upsilon_1$ in \eqref{eq:H-expand-2} with $(\widecheck{\Upsilon}_1)_{42}=\frac{ \beta \pi \ii  e^{-\beta \pi \ii} }{\sin(\beta \pi )}$. With the aid of the explicit expressions of $\what E_{-1}(-1)$ and $\what E_{-1}'(-1)$ in \eqref{eq:hatE-1-1} and \eqref{eq:hatE-1'-1}, we obtain from \eqref{eq:hatRexpansion} and \eqref{eq:hatf-1} that
\begin{equation}\label{eq:asy:H2}
\frac{\gamma}{2 \pi \ii} \begin{pmatrix}
0 & 0 & 1 & 1
\end{pmatrix} X_{L,1}(s) \begin{pmatrix}
1\\1\\0\\0
\end{pmatrix} = 2 \ii \beta r_2 s^{\frac 12} - 2 \ii \beta s_2 s^{-\frac 12} - \frac{3 \beta^2}{2} s^{-1} - \frac {\ii \beta}{2} \cos {(2 \widetilde{\vartheta} (s))} s^{-1}+ \Boh(s^{-\frac 32}),
\end{equation}
where $\widetilde \vartheta (s)$ is given in \eqref{def:tildetheta}.

Inserting \eqref{eq:asy:H1} and \eqref{eq:asy:H2} into \eqref{eq:HXRL}, we arrive at
\begin{multline}
H(s) = 2 \ii \beta(r_1 + r_2) s^{\frac 12} - 2 \ii \beta (s_1+s_2) s^{-\frac 12} \\
- (3 \beta^2 + \frac{\ii \beta}{2} \cos{(2 \vartheta(s))}+\frac{\ii \beta}{2} \cos{(2 \widetilde{\vartheta}(s))})s^{-1} + \Boh(s^{-\frac 32}),
\end{multline}
as required.

\subsubsection*{Asymptotics of $H(s)$ as $s \to 0^+$}
Recall the symmetric relation of $M$ in \eqref{eq:symmM1}, it is easily seen from \eqref{eq:pkzero} and \eqref{eq:qkzero} that, as $s \to 0^+$,
\begin{align}
p_1(s) &= -\widetilde p_2(s)+ \Boh(s),\label{p1p2}\\
p_3(s) &=\widetilde p_4(s)+ \Boh(s),\\
q_1(s) &= \widetilde q_2(s) + \Boh(s),\label{q1q2}\\
q_3(s) & = -\widetilde q_4(s)+ \Boh(s).
\end{align}
This, together with \eqref{eq:pk0}, \eqref{eq:qk0} and the relation \eqref{eq:sumpq}, implies that
\begin{align}
&p_2(s)\widetilde q_1(s)+p_1(s) \widetilde q_2(s)-p_4(s)\widetilde q_3(s)-p_3(s) \widetilde q_4(s)\nonumber\\
& =p_2(s)q_2(s)+p_1(s) q_1(s)+p_4(s) q_4(s)+p_3(s) q_3(s)+ \Boh(s)=\Boh(s),
\end{align}
or equivalently,
\begin{align}
\widetilde p_2(s) q_1(s) +\widetilde p_1(s) q_2(s)-\widetilde p_4(s)q_3(s)
 -\widetilde p_3(s)q_4(s)=\Boh(s).
\end{align}
Inserting the above two estimates into the expression of $H(s)$ in \eqref{def:H2}, it then follows from the small $s$ asymptotics of $p_k$ and $q_k$, $k=1\ldots,6$, established in Theorem \ref{th:pq} that
\begin{equation}\label{eq:asyH0}
H(s) = \Boh(1), \qquad s\to 0^+.
\end{equation}

This finishes the proof of Theorem \ref{th:F-H}.
\qed

\subsection{Proof of Theorem \ref{th:1}}
Large $s$ asymptotics of $F(s; 1)$ in \eqref{ds-F} follows directly from \eqref{eq:H-asy1} and the integral representation of $F$ in \eqref{eq:F-H2}.

To establish large $s$ asymptotics of $F(s; \gamma)$ for $\gamma \in [0, 1)$, we note from \eqref{eq:F-tau} that the first few terms till the constant term in the expansion are independent of $\tau$. We thus simply take $\tau=0$ and obtain from \eqref{eq:differentialH1} that
\begin{align}\label{eq:int0sH}
\int_0^s H(t) \ud t & =\int_0^s  \left(\sum_{k=1}^6\left(p_k(t)q_k'(t)+\widetilde p_k(t)\widetilde q_k'(t)\right) -H(t)\right) \ud t + \frac{1}{3} [2tH(t) + p_1(t)q_1(t) \nonumber\\
&\quad + p_2(t)q_2(t)+\widetilde p_1(t)\widetilde q_1(t) + \widetilde p_2(t)\widetilde q_2(t) -2p_5(t)q_5(t)-2\widetilde p_5(t) \widetilde q_5(t)-p_6(t)q_6(t)\nonumber\\
&\quad-\widetilde p_6(t) \widetilde q_6(t)+2\frac{s_1}{r_1} p_5(t) + 2 \frac{s_2}{r_2} \widetilde p_5(t)]_{t=0}^s.
\end{align}

The integral on the right hand side of the above equation can be evaluated with the aid of \eqref{eq:differential:gamma}, where also holds if we replace $\gamma$ by $\beta$. By integrating both sides of \eqref{eq:differential:gamma} with respect to $s$, it is readily seen that
\begin{align}
&\frac{\partial}{\partial \beta} \int_0^s  \left(\sum_{k=1}^6\left(p_k(t)q_k'(t)+\widetilde p_k(t)\widetilde q_k'(t)\right) -H(t)\right) \ud t \nonumber\\
&= \sum_{k=1}^6\left(p_k(s)\frac{\partial}{\partial \beta}q_k(s)+  \widetilde p_k(s)\frac{\partial}{\partial \beta}  \widetilde q_k(s)-p_k(0)\frac{\partial}{\partial \beta}q_k(0)-  \widetilde p_k(0)\frac{\partial}{\partial \beta}  \widetilde q_k(0)\right)\nonumber\\
&=\sum_{k=1}^6\left(p_k(s)\frac{\partial}{\partial \beta}q_k(s)+  \widetilde p_k(s)\frac{\partial}{\partial \beta}  \widetilde q_k(s)\right),
\end{align}
where the second equality follows from small $s$ asymptotics of $p_k$ and $q_k$, $k=1,\ldots,6$, established in Theorem \ref{th:pq}.
Inserting the above formula into \eqref{eq:int0sH}, with the aids of the relations \eqref{p1p2}, \eqref{q1q2} and the small $s$ asymptotics of $p_5$, $p_6$, $q_5$, $q_6$ in \eqref{eq:p50}, \eqref{eq:p60}, \eqref{eq:q50}, \eqref{eq:q60}, we arrive at
\begin{align}\label{eq:asy:H3}
\int_0^s H(t) \ud t &= \int_0^{\beta}\sum_{k=1}^6\left(p_k(s)\frac{\partial}{\partial \beta'}q_k(s)+  \widetilde p_k(s)\frac{\partial}{\partial \beta'}  \widetilde q_k(s)\right) \ud \beta'+\frac{1}{3} [2tH(t) + p_1(t)q_1(t) \nonumber\\
&\quad + p_2(t)q_2(t)+\widetilde p_1(t)\widetilde q_1(t) + \widetilde p_2(t)\widetilde q_2(t) -2p_5(t)q_5(t)-2\widetilde p_5(t) \widetilde q_5(t)-p_6(t)q_6(t)\nonumber\\
&\quad-\widetilde p_6(t) \widetilde q_6(t)+2\frac{s_1}{r_1} p_5(t) + 2 \frac{s_2}{r_2} \widetilde p_5(t)]_{t=0}^s\nonumber\\
&=\int_0^{\beta}\sum_{k=1}^6\left(p_k(s)\frac{\partial}{\partial \beta'}q_k(s)+  \widetilde p_k(s)\frac{\partial}{\partial \beta'}  \widetilde q_k(s)\right) \ud \beta' + \frac{1}{3} [2sH(s) + p_1(s)q_1(s) \nonumber\\
&\quad + p_2(s)q_2(s)+\widetilde p_1(s)\widetilde q_1(s) + \widetilde p_2(s)\widetilde q_2(s) -2p_5(s)q_5(s)-2\widetilde p_5(s) \widetilde q_5(s)-p_6(s)q_6(s)\nonumber\\
&\quad-\widetilde p_6(s) \widetilde q_6(s)+2\frac{s_1}{r_1} p_5(s) + 2 \frac{s_2}{r_2} \widetilde p_5(s)-4\ii r_1 M_{11}^{(1)} M_{13}^{(1)}-4\ii r_2 \widetilde M_{11}^{(1)} \widetilde M_{13}^{(1)}\nonumber\\
&\quad + ( M_{14}^{(1)}+ \widetilde M_{14}^{(1)})(\ii r_1 M_{12}^{(1)} +\ii r_2 \widetilde M_{12}^{(1)})-2\ii s_1 M_{13}^{(1)}-2\ii s_2 \widetilde M_{13}^{(1)}].
\end{align}

We next estimate the terms $p_k(s) \frac{\partial}{\partial \beta}q_k(s)$, $k=1,\ldots,6$, as $s\to +\infty$ by using Theorem \ref{th:pq}.
Substituting \eqref{p1} and \eqref{q1} into $p_1(s) \frac{\partial}{\partial \beta}q_1(s)$, it follows that
\begin{align}
&p_1(s) \frac{\partial}{\partial \beta}q_1(s)=p_1(s) q_1(s)\frac{\partial}{\partial \beta}\ln q_1(s)\nonumber\\
&=\ii \beta \left(\cos (2\vartheta(s))+2 \ii \beta \sin(2\vartheta(s))+2\ii \beta\right)\left(-\frac{\pi \ii}{2} + \frac{\partial}{\partial \beta}\ln |\Gamma(1-\beta)|-\tan(\vartheta(s)-\frac{\pi}{4})\frac{\partial}{\partial \beta}\vartheta(s)\right)\nonumber\\
&\quad+\Boh\left(\frac{\ln s}{s^{1/2}}\right).
\end{align}
Similarly, we have
\begin{align}
p_2(s) \frac{\partial}{\partial \beta}q_2(s) = \Boh\left(\frac{\ln s}{s}\right)
\end{align}
from \eqref{p2} and \eqref{q2}. With the help of \eqref{p3} and \eqref{q3}, we obtain
\begin{align}
&p_3(s) \frac{\partial}{\partial \beta}q_3(s)=p_3(s) q_3(s)\frac{\partial}{\partial \beta}\ln q_3(s)\nonumber\\
&=2 \beta \left(2 \cos^2(\vartheta(s)-\frac{\pi}{4})\left(1-\frac{\beta \pi \ii}{2} + \beta \frac{\partial}{\partial \beta}\ln |\Gamma(1-\beta)|-\beta\tan(\vartheta(s)-\frac{\pi}{4})\frac{\partial}{\partial \beta}\vartheta(s)\right)\right.\nonumber\\
&\quad\left.+\frac{1}{2} \sin(2 \vartheta(s)-\frac{\pi}{2})\left(\frac{\pi}{2}+\ii \frac{\partial}{\partial \beta}\ln |\Gamma(1-\beta)|+\ii \cot(\vartheta(s)-\frac{\pi}{4})\frac{\partial}{\partial \beta}\vartheta(s)\right)\right)+\Boh\left(\frac{\ln s}{s^{1/2}}\right),
\end{align}
and by \eqref{p4} and \eqref{q4},
\begin{align}
p_4(s) \frac{\partial}{\partial \beta}q_4(s) = \Boh\left(\frac{\ln s}{s}\right).
\end{align}
Adding the above four formulas together, it follows from a direct calculation that
\begin{align}\label{eq:asypqk}
\sum_{k=1}^4 p_k(s) \frac{\partial}{\partial \beta}q_k(s)& = -\ii \beta \cos (2\vartheta(s))\tan(\vartheta(s)-\frac{\pi}{4})\frac{\partial}{\partial \beta}\vartheta(s)+2\beta(\sin(2 \vartheta(s))+1)\nonumber\\
&\quad-\ii \beta \cos (2\vartheta(s))\cot(\vartheta(s)-\frac{\pi}{4})\frac{\partial}{\partial \beta}\vartheta(s)+\Boh(s^{-\frac 12})\nonumber\\
&=2\ii \beta\frac{\partial}{\partial \beta}\vartheta(s) + 2 \beta(\sin(2 \vartheta(s))+1)+\Boh\left(\frac{\ln s}{s^{1/2}}\right).
\end{align}
To estimate $p_5(s) \frac{\partial}{\partial \beta}q_5(s)$, we refer to \eqref{def:q52} and rewrite it as
\begin{align}
& p_5(s) \frac{\partial}{\partial \beta}q_5(s)\nonumber\\
&=\frac{1}{\ii r_1}p_5(s)\frac{\partial}{\partial \beta}\left(\ii r_2 q_6(s)\widetilde q_6(s) - \frac{p_5(s)^2}{\ii r_1} - p_3(s)q_1(s)+\widetilde p_4(s) \widetilde q_2(s)-\ii s_1\right)\nonumber\\
&=\frac{r_2}{r_1} p_5(s)\frac{\partial}{\partial \beta}(q_6(s)\widetilde q_6(s))+\frac{2}{3r_1^2}\frac{\partial}{\partial \beta}p_5(s)^3-\frac{1}{\ii r_1} p_5(s)\frac{\partial}{\partial \beta}(p_3(s)q_1(s))+\Boh\left(\frac{\ln s}{s^{3/2}}\right).
\end{align}
Note that, with the aid of \eqref{p5} and \eqref{q6},
\begin{align}
&\quad\frac{r_2}{r_1} \int_0^{\beta}p_5(s)\frac{\partial}{\partial \beta'}(q_6(s)\widetilde q_6(s)) \ud \beta'\nonumber\\
&=2\ii r_2 \beta^2\left(\widetilde M^{(1)}_{14}\left(M^{(1)}_{12} +M^{(1)}_{13} M^{(1)}_{14}+M^{(1)}_{14}M^{(1)}_{24}+M^{(1)}_{34}\right)+M^{(1)}_{14} \left(\widetilde M^{(1)}_{12}+\widetilde M^{(1)}_{13}\widetilde M^{(1)}_{14}\right.\right.\nonumber\\
&\quad\left.\left.+\widetilde M^{(1)}_{14}\widetilde M^{(1)}_{24}+\widetilde M^{(1)}_{34}\right)\right) + \Boh\left(\frac{\ln s}{s^{1/2}}\right),
\end{align}
and
\begin{align}
\frac{2}{3r_1^2}\int_0^{\beta}\frac{\partial}{\partial \beta'}p_5(s)^3 \ud \beta' = \frac{2}{3r_1^2}p_5(s)^3+\frac{\ii r_1}{3} (M^{(1)}_{13})^3 + \Boh\left(\frac{\ln s}{s^{1/2}}\right).
\end{align}
Moreover, we obtain from integration by parts and the asymptotic formulas in \eqref{p5}, \eqref{p3} and \eqref{q1} that
\begin{multline}
-\frac{1}{\ii r_1} \int_0^{\beta}p_5(s)\frac{\partial}{\partial \beta'}(p_3(s)q_1(s)) \ud \beta' =-\frac{1}{\ii r_1}p_5(s)p_3(s)q_1(s)
\\
-\int_0^{\beta}4\beta'\cos^2(\vartheta(s)-\frac{\pi}{4}) \ud \beta' + \Boh\left(\frac{\ln s}{s^{1/2}}\right).
\end{multline}
Therefore, it is readily seen from the above four formulas that
\begin{align}\label{eq:asypq5}
& \int_0^{\beta}p_5(s) \frac{\partial}{\partial \beta'}q_5(s)\ud \beta' \nonumber\\
 &=2 \ii r_2 \beta^2\left(\widetilde M^{(1)}_{14}\left(M^{(1)}_{12} +M^{(1)}_{13} M^{(1)}_{14}+M^{(1)}_{14}M^{(1)}_{24}+M^{(1)}_{34}\right)+M^{(1)}_{14} \left(\widetilde M^{(1)}_{12}+\widetilde M^{(1)}_{13}\widetilde M^{(1)}_{14}+\widetilde M^{(1)}_{14}\widetilde M^{(1)}_{24}\right.\right.\nonumber\\
&\quad\left.\left.+\widetilde M^{(1)}_{34}\right)\right)+ \frac{2}{3r_1^2}p_5(s)^3+\frac{\ii r_1}{3} (M^{(1)}_{13})^3-\frac{1}{\ii r_1}p_5(s)p_3(s)q_1(s)-\int_0^{\beta}4\beta'\cos^2(\vartheta(s)-\frac{\pi}{4}) \ud \beta' \nonumber\\
&\quad+ \Boh\left(\frac{\ln s}{s^{1/2}}\right).
\end{align}
At last, we observe from \eqref{p6} and \eqref{q6} that
\begin{align}\label{eq:asypq6}
 \int_0^{\beta}p_6(s) \frac{\partial}{\partial \beta'}q_6(s)\ud \beta' &=-2\ii \beta^2 r_1 M^{(1)}_{14}\left(M^{(1)}_{12} +M^{(1)}_{13} M^{(1)}_{14}+M^{(1)}_{14}M^{(1)}_{24}+M^{(1)}_{34}\right)\nonumber\\
 &\quad-2\ii \beta^2 r_2\widetilde M^{(1)}_{14}\left(M^{(1)}_{12} +M^{(1)}_{13} M^{(1)}_{14}+M^{(1)}_{14}M^{(1)}_{24}+M^{(1)}_{34}\right) + \Boh\left(\frac{\ln s}{s^{1/2}}\right).
\end{align}
A combination of \eqref{M14}, \eqref{eq:asypqk}, \eqref{eq:asypq5}, \eqref{eq:asypq6} and Theorem \ref{th:pq} gives us that
\begin{multline}\label{eq:asy:H4}
\int_0^s H(t) \ud t =\int_0^{\beta}2 \ii \beta' \frac{\partial}{\partial \beta'}\left(\vartheta(s) + \widetilde \vartheta(s)\right)\ud \beta' + \frac{4}{3} \ii \beta (r_1+r_2)s^{\frac{3}{2}}
\\
-4 \ii \beta (s_1+s_2)s^{\frac{1}{2}}-2\beta^2+\Boh\left(\frac{\ln s}{s^{1/2}}\right).
\end{multline}
Recall the definition of $\vartheta(s)$ in \eqref{def:theta}, we have
\begin{align}\label{eq:intetheta}
\int_0^{\beta}2 \ii \beta' \frac{\partial}{\partial \beta'}\vartheta(s) \ud \beta'&=\int_0^{\beta}2 \ii \beta' \left(\frac{\partial}{\partial \beta'}\arg \Gamma(1+\beta) + \ii \frac{3}{2} \ln s + \ii \ln\left(8r_1-\frac{s_1}{s}\right)\right) \ud \beta'\nonumber\\
&=\ln G(1+\beta)G(1-\beta)-\frac{3}{2}\beta^2 \ln s+\beta^2 -\beta^2 \ln \left(8r_1\right)+\Boh(s^{-1}).
\end{align}
Inserting \eqref{eq:intetheta} into \eqref{eq:asy:H4}, we obtain the large gap asymptotic formula \eqref{ds-F} for $\gamma \in [0, 1)$ with the error term $\Boh\left(\frac{\ln s}{s^{1/2}}\right)$. In fact, integrating on the both sides of \eqref{asy:H}, one can find the error term in \eqref{ds-F} is $\Boh(s^{-\frac 12})$ instead of $\Boh\left(\frac{\ln s}{s^{1/2}}\right)$.

This completes the proof of Theorem \ref{th:1}.
\qed

\subsection{Proof of Corollary \ref{coro1}}
It is readily seen that, as $\nu \to 0$,
\begin{align}
\mathbb{E} \left(e^{-2\pi \nu N(s)}\right) = 1-2 \pi \mathbb{E}(N(s)) \nu + 2 \pi^2 \mathbb{E}(N(s)^2) \nu^2 + \Boh(\nu^3).
\end{align}
Then we have
\begin{align}\label{eq:ENs}
F(s; 1-e^{-2\pi \nu}) &=\ln  \mathbb{E} \left(e^{-2\pi \nu N(s)}\right)\nonumber\\
&=-2 \pi \mathbb{E}(N(s)) \nu + 2 \pi^2 \Var(N(s)) \nu^2 + \Boh(\nu^3), \qquad \nu \to 0,
\end{align}
where $F$ ie defined in \eqref{def:Fnotation}. In view of \eqref{ds-F}, we have
\begin{align}\label{eq:FENs}
F(s; 1-e^{-2\pi \nu}) &= -2 \pi \mu (s) \nu + 2 \pi^2\left(\sigma (s)^2 + \frac{\ln (64r_1 r_2)}{2 \pi^2}\right)\nu^2 \nonumber\\
&\quad+2 \ln (G(1+\ii \nu)G(1-\ii \nu)) + \Boh(s^{-\frac 12}), \quad s\to +\infty,
\end{align}
where the functions $\mu (s)$ and $\sigma (s)^2$ are defined in \eqref{def:mu-sigma}.

Note that
\begin{align}
G(1+z) = 1 + \frac{\ln (2\pi)-1}{2} z+\left(\frac{(\ln (2\pi)-1)^2}{8}-\frac{1 + \gamma_E}{2}\right) z^2 + \Boh(z^3), \qquad z \to 0,
\end{align}
where $\gamma_E$ is Euler's constant, we then obtain \eqref{def:EN} and \eqref{def:VarN} from \eqref{eq:ENs} and \eqref{eq:FENs}. Note that the additional factors $\ln s$ and $(\ln s)^2$ appear in the error terms due to the derivative with respect to $\nu$.

Finally, since $\sigma (s)^2 \to +\infty$ for large positive $s$, it is easily shown that
\begin{align}
\mathbb{E}\left(e^{t \cdot \frac{N(s) - \mu (s)}{\sqrt{\sigma (s)^2}}}\right) \to e^{\frac{t^2}{2}}, \qquad s \to +\infty,
\end{align}
which implies the convergence of $\frac{N(s) - \mu (s)}{\sqrt{\sigma (s)^2}}$ in distribution to the normal law $\mathcal{N} (0,1)$. The upper bound \eqref{ub} follows directly from a combination of \cite[Lemma 2.1 and Theorem 1.2]{CC21}, \eqref{eq:ENs} and \eqref{eq:FENs}.

This completes the proof of Corollary \ref{coro1}.
\qed

\begin{appendices}

\section{The Bessel parametrix}\label{sec:Bessel}
Let I, II, III, be the three regions shown in Figure \ref{fig:jumps:Phi-B}, the Bessel parametrix $\Phi^{(\Bes)}$ is defined as follows:
\begin{equation}\label{def:Bespara}
\Phi^{(\Bes)}(z) = \begin{cases}
\begin{pmatrix}
I_0(z^{\frac12}) & \frac{\ii}{\pi} K_0(z^{\frac12})\\
\pi \ii z^{\frac12}I_0'(z^{\frac12}) & -z^{1/2}K_0'(z^{\frac12})
\end{pmatrix}, &\quad z \in \textrm{I},\\
\begin{pmatrix}
I_0(z^{\frac12}) & \frac{\ii}{\pi} K_0(z^{\frac12})\\
\pi \ii z^{1/2}I_0'(z^{\frac12}) & -z^{1/2}K_0'(z^{\frac12})
\end{pmatrix}\begin{pmatrix}
1 & 0\\
-1 & 1
\end{pmatrix},& \quad z \in \textrm{II},\\
\begin{pmatrix}
I_0(z^{\frac12}) & \frac{\ii}{\pi} K_0(z^{\frac12})\\
\pi \ii z^{\frac12}I_0'(z^{\frac12}) & -z^{1/2}K_0'(z^{\frac12})
\end{pmatrix}\begin{pmatrix}
1 & 0\\
1 & 1
\end{pmatrix}, &\quad z \in \textrm{III},
\end{cases}
\end{equation}
where $I_0(z)$ and $K_0(z)$ denote the modified Bessel function of order $0$ (cf. \cite[Chapter 10]{DLMF}) and the principle branch is taken for $z^{1/2}$. By \cite{KMVV04}, $\Phi^{(\Bes)}$ is a solution of the following RH problem.
\subsection*{RH problem for $\Phi^{(\Bes)}$}
\begin{description}
\item(a) $\Phi^{(\Bes)}(z)$ is analytic in $\mathbb{C} \setminus \{\cup_{j=1}^3\Sigma_j\cup\{0\}\}$; where the contours $\Sigma_j$, $j=1,\ldots,3$, are indicated in Figure \ref{fig:jumps:Phi-B}.
\begin{figure}[h]
\begin{center}
   \setlength{\unitlength}{1truemm}
   \begin{picture}(100,70)(-5,2)
       \put(40,40){\line(-2,-3){18}}
       \put(40,40){\line(-2,3){18}}
       \put(40,40){\line(-1,0){30}}

       \put(30,55){\thicklines\vector(2,-3){1}}
       \put(30,40){\thicklines\vector(1,0){1}}
       \put(30,25){\thicklines\vector(2,3){1}}


       \put(42,36.9){$0$}
             \put(20,69){$\Sigma_1$}
              \put(3,40){$\Sigma_2$}
       \put(18,10){$\Sigma_3$}

         \put(55,48){I}
            \put(22,48){II}
        \put(22,31){III}

       \put(40,40){\thicklines\circle*{1}}
\end{picture}
  \caption{The jump contour for the RH problem for $\Phi^{(\Bes)}(z)$.}
  \label{fig:jumps:Phi-B}
\end{center}
\end{figure}
\item (b) For $z \in \Sigma_j, j=1, 2, 3$,  $\Phi^{(\Bes)}(z)$ satisfies the jump condition
\begin{equation}\label{eq:jump:Bessel}
 \Phi_{+}^{(\Bes)}(z) =  \Phi_{ -}^{(\Bes)}(z) \begin{cases}
 \begin{pmatrix}
 1 & 0\\
1 & 1
 \end{pmatrix}, &\quad z \in \Sigma_1,\\
  \begin{pmatrix}
 0 & 1\\
 -1 & 0
 \end{pmatrix}, &\quad z \in \Sigma_2,\\
  \begin{pmatrix}
 1 & 0\\
 1 & 1
 \end{pmatrix}, &\quad z \in \Sigma_3.
 \end{cases}
\end{equation}
\item (c) As $z \to \infty$, $\Phi^{(\Bes)}(z)$ satisfies the following asymptotic behavior:
\begin{equation}\label{eq:infty:Bessel}
\Phi^{(\Bes)}(z) = \frac{(\pi^2 z)^{-\sigma_3/4}}{\sqrt{2}} \begin{pmatrix} 1 & \ii\\ \ii & 1 \end{pmatrix} \left(I + \frac{1}{8 z^
{1/2}} \begin{pmatrix} -1 & -2\ii \\ -2\ii & 1  \end{pmatrix} + \Boh(z^{-1})\right) e^{z^{\frac12} \sigma_3}.
\end{equation}
\item (d) As $z \to 0$, we have $\Phi^{(\Bes)}(z)=\Boh(\ln{|z|})$.
\end{description}

\section{The confluent hypergeometric parametrix}\label{sec:CHF}
The confluent hypergeometric parametrix $\Phi^{(\CHF)}(z)=\Phi^{(\CHF)}(z;\beta)$ with $\beta$ being a parameter is a solution of the following RH problem.

\subsection*{RH problem for $\Phi^{(\CHF)}$}
\begin{description}
  \item(a)   $\Phi^{(\CHF)}(z)$ is analytic in $\mathbb{C}\setminus \{\cup^6_{j=1}\widehat\Sigma_j\cup\{0\}\}$, where the contours $\widehat\Sigma_j$, $j=1,\ldots,6,$ are indicated in Figure \ref{fig:jumps-Phi-C}.

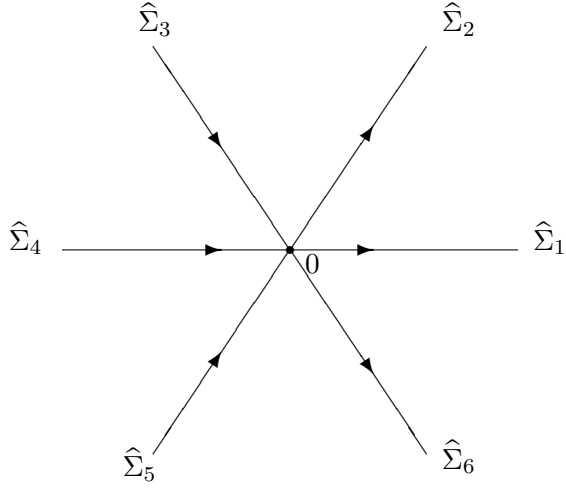
\begin{figure}[h]
\begin{center}
   \setlength{\unitlength}{1truemm}
   \begin{picture}(100,70)(-5,2)
       \put(40,40){\line(-2,-3){18}}
       \put(40,40){\line(-2,3){18}}
       \put(40,40){\line(-1,0){30}}
       \put(40,40){\line(1,0){30}}
      \put(40,40){\line(2,-3){18}}
      \put(40,40){\line(2,3){18}}

       \put(30,55){\thicklines\vector(2,-3){1}}
       \put(30,40){\thicklines\vector(1,0){1}}
       \put(50,40){\thicklines\vector(1,0){1}}
       \put(30,25){\thicklines\vector(2,3){1}}
      \put(50,25){\thicklines\vector(2,-3){1}}
       \put(50,55){\thicklines\vector(2,3){1}}


       \put(42,36.9){$0$}
         \put(72,40){$\widehat \Sigma_1$}
           \put(60,69){$\widehat \Sigma_2$}
             \put(20,69){$\widehat \Sigma_3$}
              \put(3,40){$\widehat \Sigma_4$}
       \put(18,10){$\widehat \Sigma_5$}
          \put(60,11){$ \widehat \Sigma_6$}

%

       \put(40,40){\thicklines\circle*{1}}
\end{picture}
   \caption{The jump contours for the RH problem for $\Phi^{(\CHF)}$.}
   \label{fig:jumps-Phi-C}
\end{center}
\end{figure}

  \item(b) $\Phi^{(\CHF)}$ satisfies the jump condition
  \begin{equation}\label{HJumps}
  \Phi^{(\CHF)}_+(z)=\Phi^{(\CHF)}_-(z) \widehat J_j(z), \quad z \in \widehat\Sigma_j,\quad j=1,\ldots,6,
  \end{equation}
  where
  \begin{equation*}
 \widehat J_1(z) = \begin{pmatrix}
    0 &   e^{-\beta \pi \ii} \\
    -  e^{\beta \pi \ii} &  0
    \end{pmatrix}, \qquad \widehat J_2(z) = \begin{pmatrix}
    1 & 0 \\
    e^{ \beta \pi \ii } & 1
    \end{pmatrix}, \qquad
    \widehat J_3(z) = \begin{pmatrix}
    1 & 0 \\
    e^{ -\beta\pi \ii} & 1
    \end{pmatrix},                                                         
  \end{equation*}
  \begin{equation*}
  \widehat J_4(z) = \begin{pmatrix}
    0 &   e^{\beta\pi \ii} \\
     -  e^{-\beta\pi \ii} &  0
     \end{pmatrix}, \qquad
      \widehat J_5(z) = \begin{pmatrix}
     1 & 0 \\
     e^{- \beta\pi \ii} & 1
     \end{pmatrix},\qquad
     \widehat J_6(z) = \begin{pmatrix}
   1 & 0 \\
   e^{\beta\pi \ii} & 1
   \end{pmatrix}.
  \end{equation*}

  \item(c) As $z\to \infty$, $\Phi^{(\CHF)}(z)$ satisfies the following asymptotic behavior:
  \begin{multline}\label{H at infinity}
 \Phi^{(\CHF)}(z)=(I + \Boh(z^{-1})) z^{-\beta \sigma_3}e^{-\frac{\ii z}{2}\sigma_3}
  \left\{\begin{array}{ll}
                         I, & ~0< \arg z <\pi,
                         \\
                        \begin{pmatrix}
                                                             0 &   -e^{\beta\pi \ii} \\
                                                            e^{-\beta\pi \ii } &  0
                        \end{pmatrix}, &~ \pi< \arg z<\frac{3\pi}{2},
                        \\
                        \begin{pmatrix}
                        0 &   -e^{-\beta\pi \ii} \\
                        e^{\beta\pi \ii} &  0
                        \end{pmatrix}, & -\frac{\pi}{2}<\arg z<0.
 \end{array}\right.
\end{multline}
\item(d) As $z\to 0$, we have $\Phi^{(\CHF)}(z)=\Boh(\ln |z|)$.
\end{description}

It follows from \cite{IK} that the above RH problem can be solved explicitly in terms of the confluent hypergeometric functions. Moreover, as $z \to 0$, we have

\begin{equation}\label{eq:H-expand-2}
\Phi^{(\CHF)}(z) e^{-\frac{\beta \pi \ii}{2} \sigma_3} = \Upsilon_0\left( I+ \Upsilon_1z+\Boh(z^2) \right) \begin{pmatrix} 1 & -\frac{\gamma}{2\pi \ii} \ln (e^{-\frac{\pi \ii}{2}}z) \\
0 & 1  \end{pmatrix},
\end{equation}
for $z$ belonging to the region bounded by the rays $\widehat \Sigma_2$ and $\widehat \Sigma_3$, where $\gamma=1-e^{2\beta \pi \ii}$,
\begin{align}\label{eq:H-expand-coeff-0}
\Upsilon_0
=\begin{pmatrix} \Gamma\left(1-\beta\right) e^{-\beta \pi \ii} &\frac{1}{\Gamma(\beta)} \left( \frac{\Gamma'\left(1-\beta\right)}{\Gamma\left(1-\beta\right)} +2\gamma_{\textrm{E}} \right) \vspace{5pt} \\
\Gamma\left(1+\beta\right) & -\frac{e^{\beta \pi \ii}}{\Gamma(-\beta)} \left( \frac{\Gamma'\left(-\beta\right)}{\Gamma\left(-\beta\right) } +2\gamma_{\textrm{E}}\right) \end{pmatrix}
\end{align}
with $\gamma_{\textrm{E}}$ being the Euler's constant,
and
 \begin{equation}\label{eq:H-expand-coeff-1}
(\Upsilon_1)_{21}=\frac{ \beta \pi \ii \, e^{-\beta \pi \ii} }{\sin(\beta \pi )}.
\end{equation}

\end{appendices}

\section*{Acknowledgements}
This work was partially supported by National Natural Science Foundation of China under grant numbers 12271105, 11822104, and ``Shuguang Program'' supported by Shanghai Education Development Foundation and Shanghai Municipal Education Commission.


\end{document}